\documentclass{lmcs}
\pdfoutput=1

\usepackage{lastpage}
\lmcsdoi{16}{4}{6}
\lmcsheading{}{\pageref{LastPage}}{}{}%
{Mar.~27,~2019}{Oct.~23,~2020}{}

\usepackage[english]{babel}	
\usepackage[utf8]{inputenc}	

\usepackage{booktabs}   

\usepackage{amsmath}
\usepackage{amsthm}
\usepackage{amssymb}
\usepackage{stmaryrd}

\usepackage{cmll}
\usepackage{mathrsfs}

\usepackage{array}
\usepackage[all]{xy}

\usepackage{color}

\usepackage{etex}

\usepackage{bm}

\DeclareOldFontCommand{\sf}{\normalfont\sffamily}{\mathsf}

\usepackage[%
  void
]{optional}

\usepackage{comment}

\usepackage{ifthen}

 
\usepackage{macros}

\ifdef\note
  {\renewcommand\note[1]{\opt{draftnotes}{{\sf \color{orange} #1}}}}
  {\newcommand\note[1]{\opt{draftnotes}{{\sf \color{orange} #1}}}}
\newcommand\cnote[1]{\opt{draftnotes}{\begin{center}\note{#1}\end{center}}}



\newcommand{\fntext}{}
\newcommand{\fn}{\footnote{\fntext}}

\renewcommand\vec[1]{\bm{#1}}

\newcommand\CR{{\color{blue} CR}}

\newcommand\TODO{{\color{red} TODO}}


\opt{fullproofs}{\newcommand\flagfullproofs{}}
\ifdef{\flagfullproofs}{
  \newenvironment{fullproof}{\begin{proof}}{\end{proof}}
}{\excludecomment{fullproof}}

\ifdef{\flagfullproofs}
  {\newcommand\full[1]{#1}}
  {\newcommand\full[1]{}}
\ifdef{\flagfullproofs}
  {\newcommand\short[1]{}}
  {\newcommand\short[1]{#1}}


\theoremstyle{plain}
\newtheorem{subclm}[thm]{Claim}

\newenvironment{subproof}
  {\begin{proof}}
  {\end{proof}}

\begin{document}

\title{A Functional (Monadic) Second-Order Theory of Infinite Trees}

\author[A.~Das]{Anupam Das}	
\address{University of Birmingham}	
\email{anupam.das@di.ku.dk}  

\author[C.~Riba]{Colin Riba}	
\address{LIP -- ENS de Lyon}	
\email{colin.riba@ens-lyon.fr}  

\date{\today}

\begin{abstract}
\noindent
This paper presents a complete axiomatization of Monadic Second-Order
Logic ($\MSO$) over infinite trees.
$\MSO$ on infinite trees is a rich system, 
and its decidability (``Rabin's Tree Theorem'')
is one of the most powerful known results concerning the decidability of logics.

By a complete axiomatization we mean a 
complete deduction system
with a polynomial-time recognizable set of axioms.
By naive enumeration of formal derivations, this formally gives
a proof of Rabin's Tree Theorem.
The deduction system consists of the usual rules
for second-order logic seen as two-sorted first-order logic,
together with the natural adaptation 
to infinite trees of the axioms of $\MSO$ on $\omega$-words.
In addition, it contains an axiom scheme expressing the (positional)
determinacy of certain parity games.

The main difficulty resides in the limited expressive power of the language of $\MSO$.
We actually devise an extension of $\MSO$,
called \emph{Functional (Monadic) Second-Order Logic} ($\FSO$),
which allows us to uniformly manipulate (hereditarily) finite sets
and corresponding labeled trees,
and whose language allows for higher
abstraction than that of $\MSO$.
\end{abstract}


\maketitle

\section{Introduction}

\noindent
This paper presents a complete axiomatization of Monadic Second-Order
Logic ($\MSO$) over infinite trees.
$\MSO$ on infinite trees is a rich system which
contains non trivial mathematical theories (see~\eg~\cite{rabin69tams,bgg97cdp})
and subsumes many logics, 
including modal logics (see \eg~\cite{brv02modal})
and logics for verification (see \eg~\cite{vw08chapter}).
Rabin's Tree Theorem~\cite{rabin69tams}, the decidability of $\MSO$ on infinite trees,
is one of the most powerful known results concerning the decidability of logics
(see \eg~\cite{bgg97cdp}).

The original decidability proof of~\cite{rabin69tams} relied on 
an effective translation
of formulae to finite state automata running on infinite trees.
Since then, there has been considerable work
on Rabin's Tree Theorem, culminating in streamlined decidability proofs,
as presented \eg\@ in ~\cite{thomas97handbook,gtw02alig,pp04book}.
Most current approaches to $\MSO$ on infinite trees
(with the notable exception of~\cite{blumensath13tcs})
are based on translations of formulae to automata.

By a `complete axiomatization' we mean a
complete deduction system
with a \emph{polynomial-time} recognizable set of axioms and rules.
This condition on axiom/rule recognizability is typical in proof theory, where it is known as the \emph{Cook-Reckhow criterion} \cite{CR79}.
The point is that proofs should be `easily checkable', 
which rules out axiomatizations based on enumerations of all true formulae.
In this way, a complete axiomatization not only constitutes an alternative demonstration of Rabin's Tree theorem itself, by naive enumeration of formal derivations, but also yields a meaningful notion of `proof certificate' for theorems.

Our deduction system consists of the usual rules
for second-order logic seen as two-sorted first-order logic
(see \eg~\cite{riba12ifip}), together with the natural adaptation 
to infinite trees of the axioms of $\MSO$ on $\omega$-words~\cite{siefkes70lnm}.
In addition, it contains an axiom scheme expressing the (positional)
determinacy of certain parity games.

We continue a line of work begun by B\"uchi and Siefkes,
who gave axiomatizations of $\MSO$ on various classes of linear orders
(see \eg~\cite{siefkes70lnm,bs73lnm}),
as well as an axiomatization of \emph{Weak} $\MSO$ ($\WMSO$) over infinite 
trees~\cite{siefkes78ijm}
($\WMSO$ is $\MSO$ with set quantifications restricted to \emph{finite} sets).
These works essentially rely on formalizations of automata in the logic.
A major result in the axiomatic treatment of logics over infinite structures
is Walukiewicz's proof of completeness of Kozen's axiomatization
of the modal $\mu$-calculus~\cite{walukiewicz00ic}
(see also~\cite{al17lics} for an alternative recent proof of this result).
Another trend relies on model-theoretic techniques.
For instance~\cite{gf10fossacs,gc12lmcs}
give complete axiomatizations of
$\MSO$ and the modal $\mu$-calculus over finite trees;
a reworking of the completeness of $\MSO$ on $\omega$-words~\cite{siefkes70lnm}
is proposed in~\cite{riba12ifip};
and~\cite{sv10apal} gives a model-theoretic completeness proof 
for a fragment of the modal $\mu$-calculus.
An attractive feature of model-theoretic completeness proofs 
for the aforementioned logics
is that they
allow elegant reformulations of algebraic approaches to these logics.
Unfortunately, in the case of $\MSO$ over infinite trees,
the only known algebraic approach~\cite{blumensath13tcs}
seems too complex to be easily formalized.
We therefore directly formalize a translation
of formulae to automata in the axiomatic theory.


Mirroring usual automata based decidability proofs
(see \eg~\cite{thomas97handbook,gtw02alig,pp04book}),
our method for proving completeness proceeds in two steps.
We first formalize
a translation of $\MSO$-formulae to tree automata
(using the positional determinacy of parity games to prove the complementation lemma),
so that each closed formula is provably equivalent to an automaton over the
singleton alphabet.
The second (and much shorter) step is a variant of the Büchi-Landweber Theorem~\cite{bl69tams}
which states that $\MSO$ decides winning for (definable) games of finite graphs,
and which is obtained thanks to the completeness of $\MSO$ over $\omega$-words.

The main expositional difficulty resides in the limited expressive power of the language of $\MSO$.
To ameliorate this we actually devise an extension of $\MSO$,
called \emph{Functional (Monadic) Second-Order Logic} ($\FSO$),
allowing uniform manipulation of (hereditarily) finite sets
and corresponding labeled infinite trees.
We intuitively see $\FSO$ as providing a language for higher
abstraction than that of $\MSO$,
allowing a uniform formalization of automata and games
which would have been difficult to write down in $\MSO$.
However, since $\FSO$ is interpretable in $\MSO$ (as we show), its language
has the same intrinsic limitations as the language
of $\MSO$. In particular it suffers from the inexpressibility of choice over tree
positions~\cite{gs83choice,cl07csl},
and so predicates such as length comparison
of tree positions are not expressible in $\FSO$.
This implies that only positional strategies 
(\wrt\@ our specific notion of acceptance games),
are expressible in $\FSO$
and moreover that usually unproblematic reasoning on infinite plays
can become cumbersome in this setting.

There are several ways to translate $\MSO$ to tree automata.
We choose to translate formulae to alternating parity automata,
following~\cite{walukiewicz02tcs}.
The two non-trivial steps in the translation 
are negation and (existential) quantification.
Negation requires the complementation of automata,
relying on the determinacy of acceptance games,
while
existential quantifiers
require us
to simulate an alternating automaton by an equivalent
non-deterministic one
(this is the \emph{Simulation Theorem}~\cite{ej91focs,ms95tcs}),
thence obtaining an automaton computing the appropriate projection.


\renewcommand\fntext{The approach of~\cite{ms95tcs}
to the Simulation Theorem actually contains a proof of McNaughton Theorem,
but we do not see how to easily formalize it in our context.}
As usual with translations of $\MSO$ to tree automata,
we rely on McNaughton's Theorem~\cite{mcnaughton66ic}
(see also \eg~\cite{thomas90handbook,pp04book}),
stating that non-deterministic Büchi automata on $\omega$-words
are effectively equivalent to deterministic \emph{parity} (or Muller, Rabin, Streett) automata
on $\omega$-words.
In translations of $\MSO$ to alternating tree automata,
McNaughton's Theorem is usually invoked for the Simulation Theorem.\fn\@
In our context, the relevant instances of McNaughton's Theorem
are imported into $\FSO$ via the completeness of $\MSO$ on
$\omega$-words~\cite{siefkes70lnm}.


It is well-known that the $\MSO$ theory of $k$-ary trees
can be embedded in that of the binary tree~\cite{rabin69tams}.
However, it does not seem that such an embedding
yields an axiomatization of $k$-ary trees from an axiomatization of the binary tree.
Therefore, in this work, we axiomatize the $\MSO$ theory of the full infinite $\Dir$-ary
tree for an arbitrary non-empty finite set $\Dir$.

This paper is a corrected version of~\cite{dr15lics},
which contains a flaw in the positional determinacy argument (Thm.\@ VI.15).
In the present paper, we augment the systems $\FSO$ and $\MSO$
with an axiom expressing the positional determinacy of parity games,
thereby obtaining complete axiomatizations.
We do not know yet whether the theory $\MSO$ of~\cite{dr15lics} is complete,
but let us mention that the axiomatization of $\WMSO$
over infinite trees given in~\cite{siefkes78ijm} augments the natural analogue 
for trees of Peano's arithmetic with an axiom of induction over finite trees.

\subsection*{Outline.}
The paper is organized as follows.
We present the basic formal theory for $\MSO$ in~\S\ref{sec:mso} 
Our theory $\FSO$ is then presented in~\S\ref{sec:fso}
and we sketch its mutual interpretability with $\MSO$.
\S\ref{sec:pos} and~\S\ref{sec:games} discuss a formalization
of two-players infinite games in $\FSO$,
and, in particular, we give a formulation of the axiom $(\PosDet)$
of positional determinacy of parity games.
This provides us with the required tools to formalize in~\S\ref{sec:aut}
(alternating) tree automata, acceptance games
and basic operations on them
(including complementation in $\FSO + (\PosDet)$).
\S\ref{sec:msow} is an interlude discussing
a complete theory of $\MSO$ over $\omega$-words
within the infinite paths of $\FSO$.
Building on~\S\ref{sec:aut} and~\S\ref{sec:msow},
we then give our completeness argument for 
$\FSO + (\PosDet)$ 
and $\MSO + (\PosDet)$ 
in~\S\ref{sec:compl}.
\full{Finally,~\S\ref{sec:sim} contains a proof of the Simulation Theorem in $\FSO$,
and the mutual interpretations of $\FSO$ and $\MSO$
are proved correct
in Appendix~\ref{sec:app:cons}.}%
\short{Finally,~\S\ref{sec:sim} contains a proof of the Simulation Theorem in $\FSO$.
The full version~\cite{dr20full} gives details of proofs omitted here.}



\section{Preliminaries: $\MSO$ on Infinite Trees as a Second-Order Logic}
\label{sec:mso}

\noindent
We present here
a basic formal theory
of \emph{Monadic Second-Order Logic} ($\MSO$) over infinite trees.
This theory can be seen as an analogue for trees of Peano's axioms
for second order arithmetic.
In order to obtain a complete theory, $\MSO$
will be augmented with 
an axiom of positional determinacy of parity games 
(see~\S\ref{sec:cons},~\S\ref{sec:posdet} and~\S\ref{sec:compl}).

We are going to define the theory $\MSOD$ of the infinite full $\Dir$-ary tree $\univ$,
for $\Dir$ a finite non-empty set.
Both the language and the axioms of $\MSOD$ will depend on $\Dir$.
The language of $\MSOD$ is the usual language of two-sorted first-order
logic, with one sort for \emph{Individuals}
and one sort for \emph{(Monadic)} \emph{Predicates}.
The axioms of $\MSOD$ are the expected axioms on the relational structure
of the full $\Dir$-ary tree, together with induction and comprehension.
The theory $\MSOD$ is essentially that of~\cite{siefkes78ijm},
but with second-order quantifications intended to range over arbitrary
subsets of $\univ$ (instead of just finite ones), and without the axiom
of induction over finite trees.

We fix for the rest of this Section
a finite non-empty set $\Dir$ of \emph{tree directions}.

\subsection{The Language of $\MSOD$}
\label{sec:mso:lgge}

The language of $\MSOD$ has two sorts:
\begin{itemize}
\item
The sort of \emph{Individuals}, 
intended to range over tree positions $p \in \univ$.
We have infinitely many Individual variables $x,y,z$ etc.
We also have one constant symbol $\Root$ (for the root of $\univ$),
and one unary function symbol $\Succ_d$
for each $d \in \Dir$ (for the successor function $p\mapsto p.d$).
Individual terms, written $t,u, \etc.$ are given by:
\begin{equation*}
\tag{for $d \in \Dir$}
t \quad\bnf\quad
x \gs \Root \gs \Succ_{d}(t)
\end{equation*}

\item
The sort of \emph{(Monadic) Predicates}, 
with variables $X,Y,Z,\etc$,
intended to range over sets of tree positions $A \in \Po(\univ)$.
There are no other term formers for this sort.
\end{itemize}

\noindent
Formulae of $\MSOD$ are given by the following grammar:
\[
\varphi,\psi \in \F_\Dir \quad\bnf\quad
  X(t)
\gs t \Eq u
\gs t \Lt u
\gs (\varphi \lor \psi)
\gs \lnot \varphi
\gs (\exists x)\varphi
\gs (\exists X)\varphi
\]
where $t$ and $u$ are Individual terms.
We use the usual derived formulae:
\[
\begin{array}{r !{~~\deq~~} l !{\qquad} r !{~~\deq~~} l}
  (\forall x)\varphi
& \lnot (\exists x)(\lnot \varphi)
& \varphi \land \psi
& \lnot(\lnot \varphi \lor \lnot \psi)
\\
  (\forall X)\varphi
& \lnot (\exists X)(\lnot \varphi)
& \varphi \limp \psi
& \lnot \varphi \lor \psi
\\
  \True
& (\forall x)(x \Eq x)
& \False
& \lnot \True
\\
  (t \Leq u)
& (t \Lt u) \lor (t\Eq u)
\end{array}
\]

\noindent
We employ usual writing conventions for formulae,
for instance omitting internal and external brackets when appropriate.

\subsection{The Deduction System of $\MSOD$}
\label{sec:mso:th}
Deduction for $\MSOD$ is defined by the system presented
in Figure~\ref{fig:ded:prop} and Figure~\ref{fig:ded:pred}
(where $\Phi$ stands for
a multiset of 
formulae),
together with the following axioms.
\begin{itemize}
\item \emph{Equality on Individuals:} 
\begin{equation}
\tag{for each $\varphi$}
(\forall x) (x \Eq x)
\qquad\text{and}\qquad
(\forall x) (\forall y)
\big( x \Eq y ~~\longlimp~~ \varphi[x/z] ~~\longlimp~~ \varphi[y/z] \big)
\end{equation}

\item The \emph{Tree Axioms} of Figure~\ref{fig:ded:tree}.

\item \emph{Comprehension Scheme:}
\begin{equation}
\tag{for each $\varphi$, with $X$ not free in $\varphi$}
(\exists X) (\forall y)\big[X(y) ~~\longliff~~ \varphi\big]
\end{equation}

\item \emph{Induction Axiom:}
\[
(\forall X)\left(
X(\Root) ~~\longlimp~~
\mathord{\bigconj}_{d \in \Dir}
(\forall y)
\big[
X(y) ~~\longlimp~~
  X(\Succ_d(y))
\big]
~~\longlimp~~ (\forall y) X(y)
\right)
\]
\end{itemize}

\begin{figure}[tbp]
\[
\begin{array}{c !{\qquad} c !{\qquad} c}
  \dfrac{}{\Phi \thesis \varphi \lor \lnot \varphi}
& \dfrac{}{\Phi, \varphi \thesis \varphi}
& \dfrac{\Phi \thesis \varphi \qquad \Phi \thesis \lnot \varphi}
  {\Phi \thesis \psi}
\\\\
  \dfrac{\Phi \thesis \varphi}{\Phi \thesis \varphi \lor \psi}
& \dfrac{\Phi \thesis \psi}{\Phi \thesis \varphi \lor \psi}
& \dfrac{\Phi \thesis \varphi \lor \psi
\qquad \Phi, \varphi \thesis \vartheta
\qquad \Phi, \psi \thesis \vartheta}
  {\Phi \thesis \vartheta}
\\[1em]
\end{array}
\]
\caption{Deduction Rules for Propositional Logic.\label{fig:ded:prop}}
\end{figure}

\begin{figure}[tbp]
\[
\begin{array}{c}
\dfrac{\Phi \thesis \varphi[t/x]}{\Phi \thesis (\exists x)\varphi}
\qquad\qquad
\dfrac{\Phi \thesis (\exists x) \varphi \qquad \Phi,\varphi \thesis \psi}
  {\Phi \thesis \psi}
~\text{($x$ not free in $\Phi,\psi$)}
\\\\
\dfrac{\Phi \thesis \varphi[Y/X]}{\Phi \thesis (\exists X) \varphi}
\qquad\qquad
\dfrac{\Phi \thesis (\exists X) \varphi \qquad \Phi,\varphi \thesis \psi}
  {\Phi \thesis \psi}
~\text{($X$ not free in $\Phi,\psi$)}
\\[1em]
\end{array}
\]
\caption{Deduction Rules for Predicate Logic.\label{fig:ded:pred}}
\end{figure}

\begin{rem}
\label{rem:mso:ded}
As usual, one can derive
$\thesis
(\varphi \limp \psi \limp \vartheta) ~\liff~
((\varphi \land \psi) \limp \vartheta)$
and we have the \emph{Deduction Theorem}:
\[
\Phi,\varphi \thesis \psi
\qquad\text{iff}\qquad
\Phi \thesis \varphi \limp \psi
\]
Indeed, 
if $\Phi,\varphi\thesis \psi$, then one gets $\Phi \thesis \lnot \varphi \lor \psi$
by $\lor$-Elimination on the Excluded Middle $\Phi \thesis \varphi \lor \lnot \varphi$.
Conversely, if $\Phi \thesis \lnot \varphi \lor \psi$,
then one gets $\Phi,\varphi \thesis \psi$ by $\lor$-Elimination.
One similarly obtains the \emph{Modus Ponens} as a derived rule
\[
\dfrac{\Phi \thesis \psi \limp \varphi \qquad \Phi\thesis \psi}
  {\Phi \thesis \varphi}
\]
\end{rem}

\begin{figure}[tbp]
\[
\begin{array}{c !{\qquad\quad} c}
\lnot (\exists x) \bigdisj_{d \neq d'}
\big( \Succ_{d}(x) \Eq \Succ_{d'}(x) \big)
&
(\forall x) (\forall y) \bigconj_{d \in \Dir}
  \big( \Succ_d(x) \Eq \Succ_d(y) ~~\limp~~ x \Eq y \big)
\\\\
\lnot (\exists x) \big(x \Lt x \big)
&
(\forall x) (\forall y) (\forall z)
\big( x \Lt y ~~\limp~~ y \Lt z ~~\limp~~ x \Lt z \big)
\\\\
(\forall x) \big(\Root \Leq x \big)
&
(\forall x) (\forall y)
\Big(
\big( \bigdisj_{d \in \Dir} x \Lt \Succ_d(y) \big) ~~\liff~~ x \Leq y \Big)
\end{array}
\]
\caption{Tree Axioms of $\MSOD$ and $\FSOD$
(where $(x \Leq y)$ stands for $(x \Lt y \lor x \Eq y)$).\label{fig:ded:tree}}
\end{figure}

\begin{nota}
Henceforth, we write $\MSO$ instead of $\MSOD$ when the set of directions
$\Dir$ is clear from the context.
\end{nota}


\section{A Functional Extension of $\MSO$ on Infinite Trees}
\label{sec:fso}

\noindent
In this Section,
we present (bounded) \emph{Functional (Monadic) Second-Order Logic}
over the full $\Dir$-ary tree ($\FSOD$),
an extension of $\MSOD$ 
with (hereditarily) \emph{finite sets} and bounded quantification over them.
As with $\MSOD$ in~\S\ref{sec:mso}, we will simply write $\FSO$
for $\FSOD$ when $\Dir$ is irrelevant or clear from the context.

$\FSOD$ is equipped with a basic axiomatization which will allow us,
in~\S\ref{sec:pos}-\S\ref{sec:aut}, to formalize a basic theory of games
and automata,
and in particular to state an axiom scheme $(\PosDet)$
expressing the positional determinacy of (suitably represented)
parity games (\S\ref{sec:posdet}).
We will then show in~\S\ref{sec:compl}
that $\FSOD + (\PosDet)$ is complete.

\subsection{Motivations and Overview.}
\label{sec:fso:intro}
Let us first discuss the motivations and guiding principles
in the design of $\FSOD$.
As usual, within the language of $\MSOD$ presented in~\S\ref{sec:mso},
we can simulate a labeling of $\univ$ over a finite non-empty set $\Sigma$
\[
T : \univ \longto \Sigma
\]
There are different ways to achieve this.
A possibility is, for say $\Sigma = \{\al a_1,\dots,\al a_n\}$,
to code $T : \univ \to \Sigma$ using a tuple of Monadic variables
$X_1,\dots,X_n$ such that
\begin{equation}
\tag{for $i = 1,\dots,n$}
x \in X_i \quad\text{iff}\quad
T(x) = \al a_i
\end{equation}
A more succinct coding could be obtained using $\lceil \log n \rceil$
monadic variables to encode the letter index $i$ of $\al a_i$ in binary. 
However, directly working with such codings would make it
cumbersome to formalize games and automata
as presented in this paper.
We will therefore rather work in the system $\FSOD$,
which is built around the following principles:
\begin{enumerate}
\item
$\FSOD$ has no primitive notion of \emph{Monadic variables}.
Instead, $\FSOD$ has a primitive notion of \emph{Function variables},
of the form
\begin{equation}
\tag{$\Sigma$ a finite set}
F : \univ \longto \Sigma
\end{equation}

\item
In addition, $\FSOD$ allows us to work \emph{uniformly} with arbitrary
finite sets.
In particular, we have an explicit sort for them, including
terms, variables and quantifications.

\item
$\FSOD$ is faithfully interpretable in $\MSOD$.
To this end, all quantifications over finite sets in $\FSOD$-formulae
are required to be bounded.
\end{enumerate}

\noindent
In particular, there is a syntactic translation $\MI{-}$
of $\FSOD$-formulae to $\MSOD$-formulae.
The basic idea of this translation is to interpret finite sets
using propositional logic,
and to interpret Functions
$F : \univ \to \{\al a_1,\dots,\al a_n\}$
as partitions $X_1,\dots,X_n$ of $\univ$.
But while $\FSOD$ handles free variables over finite sets in a uniform way,
the translation $\MI{-}$ only applies to $\FSOD$-formulae
without free variables over finite sets.
This means that for an $\FSOD$-formula $\varphi(k)$ with $k$ a variable over
finite sets, for each finite set $\kappa$ we will have a specific
$\MSOD$-formula $\MI{\varphi(\kappa)}$.

Technically, the finite sets of $\FSOD$ will be the usual
\emph{hereditarily} finite sets.

\begin{defi}
\label{def:fso:hf}
Let $V_0 \deq \emptyset$,
and $V_{n+1} \deq \Po(V_n)$ for each $n \in \NN$.
The set $V_\omega$
of \emph{hereditarily finite} sets (HF-sets)
is defined as
\[
V_\omega \quad\deq\quad \bigcup_{n \in \NN} V_n
\]
\end{defi}

\begin{rem}
\label{rem:fso:hf}
In the context of this paper, it is useful to note that,
as is well-known (see \eg~\cite[Exercise 12.9]{jech06set}),
$V_\omega$
is a model of $\ZFCM$
(\ie\@ of $\ZFC$ without the infinity axiom).
\end{rem}

\begin{conv}
We will always assume the finite non-empty set $\Dir$
of \emph{tree directions} to be an $\HF$-set.
\end{conv}

\noindent
The language of $\FSOD$ will have the same sort of Individuals as $\MSOD$
and a sort for $\HF$-sets, and
its Function variables will be
of the form $F:\univ \to K$ for $K$ a term over $\HF$-sets ($\HF$-term).
The design of $\FSOD$ is obtained as a compromise between the following two
conflicting desiderata:
\begin{enumerate}
\item To be as flexible as possible to allow an easy formalization of games
and automata.
\item To be as simple as possible to allow an easy translation to $\MSOD$.
\end{enumerate}

\noindent
This leads us to two peculiar design choices.
\begin{enumerate}
\item We have, in addition to the above mentioned sorts, a distinct sort of
\emph{Functions over $\HF$-sets}.
This sort contains only constants (so these functions cannot be quantified over),
whose purpose is to provide Skolem functions
for those $\forall\exists$ (bounded) statements over $\HF$-sets
which are provable in $\ZFCM$.

\item In order to facilitate the translation of $\FSOD$ to $\MSOD$,
Function variables, written $(F:K)$ (``$F$ has codomain $K$''),
cannot occur in $\HF$-terms.
Formally, Functions $(F:K)$ are only allowed in atomic formulae of the form
\[
\tag{for $L$ an $\HF$-term}
F(t) \Eq L
\]
\end{enumerate}

\noindent
The axioms of $\FSOD$ will contain the obvious adaptation of the Tree Axioms
and the Induction Axiom of $\MSOD$.
We also have axioms defining the aforementioned Skolem functions.
In addition, the Comprehension Scheme of $\MSOD$ will be replaced
by \emph{Functional Choice Axioms}
allowing us to define Functions $F : \univ \to K$
from $\forall\exists$-statements:
\[
(\forall x)(\exists k \in K)\varphi(x,k)
~~\longlimp~~
(\exists F : \univ \to K)(\forall x)\varphi(x,F(x))
\]

\begin{rem}
\label{rem:fso:choice}
\emph{Functional Choice Axioms} as above
actually amount to \emph{Comprehension} in $\MSO$ (\S\ref{sec:mso:th}).
Such axioms do not create choice predicates for Individuals,
which are known to be undefinable in
$\MSO$~\cite{gs83choice,cl07csl},
and moreover to break decidability when added to the language of
$\MSO$~\cite{bg00lics,cl07csl}.
\end{rem}

The rest of this Section is organized as follows.
The system $\FSOD$ is defined in~\S\ref{sec:fso:fso}-\ref{sec:ax},
and its (expected)
interpretation in the standard model of $\Dir$-ary trees
is given in~\S\ref{sec:std}.
Then in~\S\ref{sec:cons} we discuss the interpretation of $\FSOD$ in $\MSOD$
and a straightforward embedding of $\MSOD$ in $\FSOD$.
Finally,~\S\ref{sec:not} presents notation whose purpose is to
allow some flexibility in the manipulation of functions.
The language and axioms of $\FSOD$ are summarized in Figure~\ref{fig:fso},
with references to the relevant parts of the text.

\subsection{The Language of $\FSOD$}
\label{sec:fso:fso}
We now formally define the language of $\FSOD$, for $\Dir$ an $\HF$-set.
It consists of the the following sorts:
\begin{itemize}
\item
The sort of \emph{Hereditarily finite ($\HF$) sets},
with infinitely many $\HF$-variables $k,\ell$ etc.,
and with one constant symbol $\const\kappa$ for each 
$\kappa \in V_\omega$
(we often simply write $\kappa$ for $\const\kappa$ in formulae, omitting the overset dot).

\item
The same sort of \emph{Individuals} as $\MSOD$ (see \S\ref{sec:mso:lgge}).

\item
The sort of \emph{Functions}, with infinitely many variables $F,G,H,\etc$.

\item
The sort of \emph{$\HF$-Functions}, with no variable.
For each pair $(n,m) \in \NN \times \NN$, we assume given
a constant symbol $\const g_{n,m}$ of arity $n$.
The interpretation of these constant symbols is discussed in~\S\ref{sec:ax:hf}.
%
\end{itemize}

\noindent
The language of $\FSOD$ has two kinds of terms.
The \emph{Individual terms} are the same as those of $\MSOD$.
%
%
In addition, $\FSOD$ also has \emph{$\HF$-terms}, which are given by
\[
K,L \quad\bnf\quad
    k 
\gs \const\kappa
\gs \const g_{n,m}(L_1,\dots,L_n)
\]

\noindent
The \emph{formulae} of $\FSOD$ are built as follows:
\[
\begin{array}{r r l}
  \varphi,\psi
& \bnf
& t \Eq u \gs t \Lt u
\\
& \gs
& K \Eq L \gs K \In L \gs K \Sle L
  \gs F(t) \Eq K
\\
& \gs
& \varphi \lor \psi \gs \lnot \varphi
\\
& \gs
& (\exists x)\varphi
  \gs
  (\exists F:K)\varphi
  \gs
  (\exists k \In L)\varphi \gs (\exists k \Sle L)\varphi
\end{array}
\]

\noindent
An $\FSOD$-formula $\varphi$ is \emph{$\HF$-closed}
if it contains no free $\HF$-variable.

\begin{nota}
\label{not:fso}
\hfill
\begin{enumerate}
\item
Usual derived formulae are defined similarly as with $\MSO$
(where $\GC$ is either $\In$ or $\Sle$):
\[
\begin{array}{r !{~~\deq~~} l !{\qquad} r !{~~\deq~~} l}
  (\forall x)\varphi
& \lnot (\exists x)(\lnot \varphi)
& \varphi \land \psi
& \lnot(\lnot \varphi \lor \lnot \psi)
\\
  (\forall F:L)\varphi
& \lnot (\exists F:L)(\lnot \varphi)
& \varphi \limp \psi
& \lnot \varphi \lor \psi
\\
  (\forall k \GC L)\varphi
& \lnot (\exists k \GC L)(\lnot \varphi)
& (t \Leq u)
& (t \Lt u) \lor (t\Eq u)
\\
  \True
& (\forall x)(x \Eq x)
& \False
& \lnot \True
\end{array}
\]

\item
In addition to bounded quantification $(\exists F:K)$,
we use the notation $(F : K)$ within formulae
as the defined formula:
\[
(F : K) \quad\deq\quad
(\forall x)(\exists k \In K)(F(x) \Eq k)
\]

\item
For
variables
$\vec K = K_1,\dots,K_n$
and
$\vec L = L_1,\dots,L_n$,
and
$\GC$ either $\Eq$, $\In$ or $\Sle$,
we let
\[
\vec K \GC \vec L
\quad=\quad
  (K_1,\dots,K_n) \GC (L_1,\dots,L_n)
\quad\deq\quad \bigconj_{1 \leq i \leq n} K_i \GC L_i
\]
\end{enumerate}
\end{nota}

\begin{rem}
\label{rem:fso:dir}
The (hereditarily) finite set $\Dir$ of tree directions is considered
both as a parameter in the definition of $\FSOD$,
via the successor term constructors $\Succ_d$ (for $d \in \Dir$)
and the corresponding axioms (see~\S\ref{sec:ax}),
and as a (constant) $\HF$ set, which can occur as such in $\FSOD$ formulae.
Strictly speaking, we should write $\const \Dir $ rather than $\Dir$ in the latter case, but we usually simply omit the overset dot, as with other HF-sets.
\end{rem}

\subsection{The Deduction System of $\FSOD$}
\label{sec:ded}
Deduction for $\FSOD$ is defined by the system presented
on Figure~\ref{fig:ded:prop}
(with $\FSOD$ formulae instead of $\MSOD$ formulae)
and Figure~\ref{fig:ded:bfso},
together with all the axioms of~\S\ref{sec:ax}.
The language and axioms of $\FSOD$ are summarized in Figure~\ref{fig:fso}.

\begin{figure}[tbp]
\[
\begin{array}{c}
\dfrac{\Phi \thesis \varphi[t/x]}{\Phi \thesis (\exists x)\varphi}
\qquad\qquad
\dfrac{\Phi \thesis (\exists x)\varphi \qquad \Phi, \varphi \thesis \psi}
  {\Phi \thesis \psi}
~~\text{($x$ not free in $\Phi,\psi$)}
\\\\

\dfrac{\Phi \thesis \varphi[G/F]
  \qquad \Phi \thesis (G:K)}
{\Phi \thesis (\exists F:K)\varphi}
\quad~~~\,
\dfrac{\Phi \thesis (\exists F:K) \varphi \qquad
  \Phi,\, (F:K) ,\, \varphi \thesis \psi}
  {\Phi \thesis \psi}
~\text{($F$ not free in $\Phi,\psi$)}
\\\\

\dfrac{\Phi \thesis \varphi[K/k] \qquad \Phi \thesis K \GC L}
  {\Phi \thesis (\exists k \GC L)\varphi}
~~\text{(for $\GC$ either $\In$ or $\Sle$)}
\\\\
\dfrac{\Phi \thesis (\exists k \GC K)\varphi
  \qquad \Phi,\, k \GC K,\, \varphi \thesis \psi}
  {\Phi \thesis \psi}
~~\text{(for $\GC$ either $\In$ or $\Sle$, and $k$ not free in $\Phi,\psi$)}
\\\\

\dfrac{\Phi \thesis \varphi}
  {\Phi[F(t)/k] \thesis \varphi[F(t)/k]}
~~\text{($\Phi[F(t)/k], \varphi[F(t)/k]$ $\FSO$-formulae)}
\\[1em]
\end{array}
\]
\caption{Deduction Rules for $\FSOD$.\label{fig:ded:bfso}}
\end{figure}

\begin{figure}[tbp]
\[
\begin{array}{l l r c}
\toprule
\multicolumn{1}{l}{\textbf{Language}}
\\
\midrule

  \text{Individual Terms}
& t \quad\bnf\quad x \gs \Root \gs \Succ_{d}(t)
& \text{($d \in \Dir$)}
& \text{(\S\ref{sec:mso:lgge})}
\\

  \text{Functions}
& F,G,H,\etc
& \text{(only variables)}
& \text{(\S\ref{sec:fso:fso})}
\\


  \text{$\HF$-Terms}
& K,L \quad\bnf\quad k 
& \text{($k$ $\HF$-variable)}
& \text{(\S\ref{sec:fso:fso})}
\\
& \gs \const\kappa
& \text{($\kappa \in V_\omega$)}
\\
& \gs \const g_{n,m}(L_1,\dots,L_n) 
& \text{($\const g_{n,m}$ $\HF$-Function)}
\\\\

  \text{Formulae}
& \varphi,\psi \quad\bnf\quad
&
& \text{(\S\ref{sec:fso:fso})}
\\

& \multicolumn{3}{l}{
  \gs t \Eq u \gs t \Lt u \gs F(t) \Eq K}
\\

& \multicolumn{3}{l}{
\gs K \Eq L \gs K \In L \gs K \Sle L}
\\

& \multicolumn{3}{l}{
  \gs \varphi \lor \psi \gs \lnot \varphi
}
\\

& \multicolumn{3}{l}{
  \gs (\exists x)\varphi \gs (\exists F:K)\varphi
  \gs (\exists k \In L)\varphi \gs (\exists k \Sle L)\varphi
}

\\


\\
\toprule
\multicolumn{1}{l}{\textbf{Axioms}}
\\
\midrule

  \multicolumn{2}{l}{\text{\emph{Equality:}}}
&
& \text{(\S\ref{sec:ax:eq})}
\\
  \multicolumn{3}{c}{
  \begin{array}{c !{\qquad} c}
    (\forall x) (x \Eq x)
  & (\forall x) (\forall y)
    \big(
    x \Eq y ~~\longlimp~~ \varphi[x/z] ~~\longlimp~~ \varphi[y/z]
    \big)
  \\\\
    K \Eq K
  & \left(
    K \Eq L ~~\longlimp~~ \varphi[K/m] ~~\longlimp~~ \varphi[L/m]
    \right)
  \end{array}}
\\\\

  \multicolumn{2}{l}{\text{\emph{Induction:}}}
&
& \text{(\S\ref{sec:ax:ind})}
\\
  \multicolumn{3}{c}{
  \varphi(\Root)
  \quad\longlimp\quad
  \mathord{\bigconj}_{d \in \Dir}
  (\forall x)
  \big[
     \varphi(x) ~~\longlimp~~  \varphi(\Succ_d(x))
  \big]
  \quad\longlimp\quad
  (\forall x) \varphi(x)
  }
\\\\

  \multicolumn{2}{l}{\text{\emph{Tree Axioms:}}}
&
& \text{(\S\ref{sec:ax:tree})}
\\
  \multicolumn{3}{c}{
  \begin{array}{c !{\quad} c}
  \lnot (\exists x) \bigdisj_{d \neq d'}
  \big( \Succ_{d}(x) \Eq \Succ_{d'}(x) \big)
  &
  (\forall x) (\forall y) \bigconj_{d \in \Dir}
    \big( \Succ_d(x) \Eq \Succ_d(y) ~~\limp~~ x \Eq y \big)
  \\\\
  
  \lnot (\exists x) \big(x \Lt x \big)
  &
  (\forall x) (\forall y) (\forall z)
  \big( x \Lt y ~~\limp~~ y \Lt z ~~\limp~~ x \Lt z \big)
  \\\\
  
  (\forall x) \big(\Root \Leq x \big)
  &
  (\forall x) (\forall y)
  \Big(
  \big( \bigdisj_{d \in \Dir} x \Lt \Succ_d(y) \big) ~~\liff~~ x \Leq y \Big)
  \end{array}}
\\\\

  \multicolumn{2}{l}{\text{\emph{Axioms on $\HF$-Sets:}}}
&
& \text{(\S\ref{sec:ax:hf})}
\\
& \multicolumn{2}{l}{\varphi_{n,m}[\vec K/\vec k][\const g_{n,m}(\vec K) / \ell]}
\\
& \multicolumn{2}{r}{\text{(provided\quad
$\Sk(\ZFCM)\thesis (\forall k_1,\dots,k_n) (\exists! \ell) \varphi_{n,m}$)}}
\\\\

  \multicolumn{2}{l}{\text{\emph{Functional Choice Axioms:}}}
&
& \text{(\S\ref{sec:ax:choice})}
\\

  \multicolumn{4}{c}{
  \begin{array}{r !{\quad\longlimp\quad} l}
    (\forall k \In K) (\exists \ell \In L) \varphi(k,\ell)
  & \big( \exists f \In L^K\big) (\forall k \In K)
    \varphi(k,f(k))
  \\[1em]
  
    (\forall x) (\exists k \In K) \varphi(x,k)
  & (\exists F : K) (\forall x) (\exists k \In K)
    \left(F(x) \Eq k ~\land~ \varphi(x,k) \right)
  \\[1em]
  
    (\forall k \In K) (\exists F : L)\varphi(k,F)
  & \big( \exists G : L^K \big) (\forall k \In K) \varphi(k,F)[G(k) \sslash F]
  \end{array}}

\\\\
\toprule
\end{array}
\]

\caption{Summary of $\FSOD$.\label{fig:fso}}
\end{figure}

\cnote{\CR:NOTE
\begin{itemize}
\item $\FSOD$ has an explicit substitution rule for Functions
(see also Rem.~\ref{rem:ax:subst}).

\item Moreover,
the Equality Axioms (\S\ref{sec:ax:eq}) and the Axioms on $\HF$-Sets
(\S\ref{sec:ax:hf})
have been closed under substitution of $\HF$-terms.
Note that an other possibility would have been to add a substitution rule
\[
\dfrac{\Phi \thesis \varphi}{\Phi[K/k] \thesis \varphi[K/k]}
\]

\item This is for the expected substitution lemma to hold.
\end{itemize}}


\subsection{Basic Axiomatization}
\label{sec:ax}
We now present the axioms of $\FSOD$.
The first group
(Equality, Tree Axioms and Induction,~\S\ref{sec:ax:eq}-\S\ref{sec:ax:ind})
corresponds to its counterpart in $\MSOD$.
We then present our specific axioms for $\HF$-Sets in~\S\ref{sec:ax:hf}
and our Functional Choice Axioms in~\S\ref{sec:ax:choice}.

\subsubsection{Equality}
\label{sec:ax:eq}

The theory $\FSOD$ has
usual equality axioms for individuals and $\HF$-Sets.
\begin{itemize}
\item \emph{Equality on Individuals.}
\begin{equation*}
\tag{for all formula $\varphi$}
(\forall x) (x \Eq x)
\qquad\text{and}\qquad
(\forall x) (\forall y)
\big(
x \Eq y ~~\longlimp~~ \varphi[x/z] ~~\longlimp~~ \varphi[y/z]
\big)
\end{equation*}

\item \emph{Equality on $\HF$-Sets}
(for all formula $\varphi$, all $\HF$-terms $K,L$ and all $\HF$-variable $m$):
\[
K \Eq K
\qquad\text{and}\qquad
\left(
K \Eq L ~~\longlimp~~ \varphi[K/m] ~~\longlimp~~ \varphi[L/m]
\right)
\]
\end{itemize}

\begin{rem}
\label{rem:ax:subst}
Note that $\FSOD$ is equipped with an explicit \emph{Substitution Rule}
\begin{equation*}
\tag{$\Phi[F(t)/k], \varphi[F(t)/k]$ $\FSO$-formulae}
\dfrac{\Phi \thesis \varphi}
  {\Phi[F(t)/k] \thesis \varphi[F(t)/k]}
\end{equation*}

\noindent
Substitution entails the following
(where $\varphi(F(t))$ is an $\FSO$-formula):
\[
(F(t) \Eq K) ~~\longlimp~~ \varphi(K) ~~\longlimp~~ \varphi(F(t))
\]

\noindent
as well as
the derived rule
\[
\dfrac{\Phi \thesis \varphi(F(t)) \qquad \Phi \thesis (F:K)}
  {\Phi \thesis (\exists k \In K)\varphi(k)}
\]

\noindent
The former is a direct consequence of the Substitution rule
together with elimination of equality on $\HF$-Sets.
For the latter, first note that
Remark~\ref{rem:mso:ded} also holds for $\FSOD$.
In particular, one can derive
\[
  (k \In K) ~~\longlimp~~ \varphi(k) ~~\longlimp~~ (\exists k \In K)\varphi(k)
\]
On the other hand, we have
\[
(\exists \ell \In K)(k \Eq \ell)
~~\longlimp~~
(k \In K)
\]

\noindent
We therefore get
\[
  (\exists \ell \In K)(k \Eq \ell)
  ~~\longlimp~~ \varphi(k) ~~\longlimp~~ (\exists k \In K)\varphi(k)
\]
and the Substitution rule gives
\[
  (\exists \ell \In K)(F(t) \Eq \ell)
  ~~\longlimp~~ \varphi(F(t)) ~~\longlimp~~ (\exists k \In K)\varphi(k)
  \tag*{\qed}
\]
\end{rem}

\subsubsection{Induction}
\label{sec:ax:ind}
We have the following \emph{Induction Scheme}:
\begin{equation*}
\tag{for each formula $\varphi$}
  \varphi(\Root)
  ~~\longlimp~~
  \mathord{\bigconj}_{d \in \Dir}
  (\forall x)
  \big[
     \varphi(x) ~~\longlimp~~  \varphi(\Succ_d(x))
  \big]
  ~~\longlimp~~
  (\forall x) \varphi(x)
\end{equation*}

\subsubsection{Tree Axioms}
\label{sec:ax:tree}
For the tree structure of $\univ$,
we have the same Tree Axioms as $\MSOD$, displayed in Figure~\ref{fig:ded:tree}
(recall that $\FSOD$ has the same Individuals as $\MSOD$).
%

We now state expected results on the axioms so far introduced.
To this end, let $\FSODZ$ be the system consisting of 
the deduction rules of Figures~\ref{fig:ded:prop} and~\ref{fig:ded:bfso},
together with the Equality Axioms (\S\ref{sec:ax:eq})
the Induction Scheme (\S\ref{sec:ax:ind})
and the Tree Axioms (Figure~\ref{fig:ded:tree}).

\begin{prop}
\label{prop:ax:tree}
$\FSODZ$ proves the following.
\begin{enumerate}
\item $(\forall x) (\forall y) (x \Leq y \Leq x ~~\longlimp~~ x \Eq y)$
\item $(\forall x) ( x \Leq \Root ~~\longlimp~~ x \Eq \Root)$
\item $\lnot(\exists x) (x \Lt \Root )$
\item $\lnot (\exists x) (\Succ_d(x) \Eq \Root)$
\item 
\label{eq:ax:tree:rootorsucc}
$(\forall x)
(x \Eq \Root ~~\lor~~ (\exists y) \lor_{d \in \Dir} x \Eq \Succ_d(y))$

\item
$(\forall x)(\forall y)
(x \Lt y ~~\longlimp~~ \lor_{d \in \Dir} \Succ_d(x) \Leq y)$
\end{enumerate}
\end{prop}

\begin{fullproof}
\hfill
\begin{enumerate}
\item If $x \Lt y \Lt x$ then by transitivity of $\Lt$ we have $x \Lt x$,
contradicting the irreflexivity of $\Lt$.

\item 
If $x \Leq \Root$, we have $x \Leq \Root \Leq x$, so that $x \Eq \Root$.

\item
If $x \Leq \Root$ then $x \Eq \Root$ so that we cannot have
$(x \Leq \Root) \land \lnot(X \Eq \Root)$.

\item
If $\Succ_d(x) \Eq \Root$, then since $x \Lt \Succ_d(x)$ we have
$x \Lt \Root$, a contradiction.

\item
A direct application of the Induction Scheme.

\item
Assuming given $x$,
we apply the Induction Scheme on the formula
$\varphi(y) \deq (x \Lt y \limp \bigdisj_{d \in \Dir} \Succ_d(x) \Leq y)$.

We trivially get $\varphi(\Root)$ since $\lnot(x \Lt \Root)$.
Assuming now $\varphi(y)$ we show $\varphi(\Succ_{d'}(y))$.
So assume $x \Lt \Succ_{d'}(y)$.
Then the Tree Axioms give $x \Leq y$.
If $x \Lt y$, then we are done thanks to $\varphi(y)$.
Otherwise, we have $x \Eq y$, so that $\Succ_{d'}(x) \Eq \Succ_{d'}(y)$
and we are done.
\qedhere
\end{enumerate}
\end{fullproof}

A consequence of Proposition~\ref{prop:ax:tree}
is that the Induction Scheme of $\FSOD$ (\S\ref{sec:ax:ind})
implies the usual scheme of \emph{Well-Founded Induction} \wrt\@ 
the strict prefix order $\Lt$.

\begin{thm}[Well-Founded Induction]
\label{thm:ax:wfind}
$\FSODZ$
proves the following form of well-founded induction:
\[
(\forall x) \big[
(\forall y \Lt x)(\varphi(y))
~~\longlimp~~ \varphi(x)
\big]
~~\longlimp~~
(\forall x) \varphi(x)
\]
\end{thm}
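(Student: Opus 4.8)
The plan is to reduce well-founded induction to the ordinary (structural) Induction Scheme of \S\ref{sec:ax:ind} by strengthening the induction hypothesis. Assuming the premise $(\forall x)\big[(\forall y \Lt x)\varphi(y) \limp \varphi(x)\big]$, I would introduce the auxiliary formula $\psi(x) \deq (\forall y \Leq x)\varphi(y)$, expressing that $\varphi$ holds on every prefix of $x$, and prove $(\forall x)\psi(x)$ by applying the Induction Scheme to $\psi$. Since $x \Leq x$ always holds, $(\forall x)\psi(x)$ immediately yields $(\forall x)\varphi(x)$, which is the desired conclusion.

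For the base case $\psi(\Root)$, I would use Proposition~\ref{prop:ax:tree}(2) to reduce $y \Leq \Root$ to $y \Eq \Root$, so that $\psi(\Root)$ amounts to $\varphi(\Root)$. The latter follows from the premise instantiated at $\Root$: its hypothesis $(\forall y \Lt \Root)\varphi(y)$ holds vacuously, because $\lnot(\exists y)(y \Lt \Root)$ by Proposition~\ref{prop:ax:tree}(3).

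For the successor step, fix $d \in \Dir$, assume $\psi(x)$, and derive $\psi(\Succ_d(x))$. Given any $y \Leq \Succ_d(x)$, I would split $\Leq$ into the cases $y \Lt \Succ_d(x)$ and $y \Eq \Succ_d(x)$. In both cases the essential tool is the Tree Axiom $y \Lt \Succ_d(x) \liff y \Leq x$ of Figure~\ref{fig:ded:tree}. If $y \Lt \Succ_d(x)$, this axiom gives $y \Leq x$, whence $\varphi(y)$ by $\psi(x)$. If $y \Eq \Succ_d(x)$, I instantiate the premise at $\Succ_d(x)$: any $z \Lt \Succ_d(x)$ satisfies $z \Leq x$ by the same Tree Axiom, hence $\varphi(z)$ by $\psi(x)$, so the hypothesis $(\forall z \Lt \Succ_d(x))\varphi(z)$ is met and we obtain $\varphi(\Succ_d(x))$, and therefore $\varphi(y)$ by the equality axioms since $y \Eq \Succ_d(x)$. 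This establishes $\psi(\Succ_d(x))$, completing the induction.

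I expect no genuine obstacle beyond the standard strengthening of the hypothesis; the only points requiring care are the bookkeeping of the prefix order, specifically the repeated use of the equivalence $y \Lt \Succ_d(x) \liff y \Leq x$ and the decomposition of $\Leq$ into $\Lt$ and $\Eq$, together with checking that the base case is indeed vacuously satisfied.
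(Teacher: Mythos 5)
Your proposal is correct and follows essentially the same route as the paper's proof: both strengthen the hypothesis to $\psi(x) \deq (\forall y \Leq x)\varphi(y)$, apply the structural Induction Scheme of \S\ref{sec:ax:ind} to $\psi$, handle the base case via the vacuity of $y \Lt \Root$, and use the Tree Axiom $y \Lt \Succ_d(x) \liff y \Leq x$ in the successor step. Your version merely spells out a few bookkeeping details (the explicit split of $\Leq$ into $\Lt$ and $\Eq$, and the appeal to Proposition~\ref{prop:ax:tree}) that the paper leaves implicit.
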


\begin{fullproof}
Assume
\[
\forall x\left[
\forall y\left(y \Lt x ~~\longlimp~~ \varphi(y) \right)
~~\longlimp~~ \varphi(x)
\right]
\]
We apply induction on the formula
\[
\psi(x) \quad\deq\quad (\forall y \Leq x) \varphi(y)
\]
We have $\psi(\Root)$ since $\forall y \lnot(y \Lt \Root)$.
Assuming $\psi(x)$, we get $\psi(\Succ_d(x))$
as follows.
If $y \Lt \Succ_d(x)$, we have $y \Leq x$, hence $\varphi(y)$
since we assumed $\psi(x)$.
Moreover, $\varphi(\Succ_d(x))$ follows from the fact that
$y \Lt \Succ_d(x)$ implies $y \Leq x$, hence $\varphi(y)$
since we assumed $\psi(x)$.
\end{fullproof}

\begin{rem}
Both Proposition~\ref{prop:ax:tree}
and Theorem~\ref{thm:ax:wfind} also hold for $\MSOD$.
\end{rem}

\subsubsection{$\HF$-Sets}
\label{sec:ax:hf}
We now present our axioms on $\HF$-Sets.
Their
purpose is to ease formalization in $\FSOD$.
Recall that $\HF$-sets range over $V_\omega$
(Definition~\ref{def:fso:hf}).
The idea of these axioms
is to incorporate in $\FSOD$ as much of the theory of $V_\omega$ as possible,
while keeping $\FSOD$ interpretable in $\MSOD$
and with a semi-recursive notion of provability.
The interpretation of $\FSOD$ in $\MSOD$
relies on the fact that in $\FSOD$-formulae,
all quantifications over $\HF$-Sets are bounded (either by $\In$ or $\Sle$),
so that in a closed $\FSOD$-formula, quantifications
over $\HF$-Sets can be interpreted using usual propositional logic.

We will have, as particular cases of our axioms on $\HF$-Sets,
all bounded formulae over $\HF$-Sets which are true in $V_\omega$.
Moreover, \wrt\@ the interpretation of $\FSOD$ in $\MSOD$ (\S\ref{sec:cons})
and in particular \wrt\@ its application to $\MSOD$ over $\omega$-words
(\S\ref{sec:msow}, \S\ref{sec:compl} and~\S\ref{sec:sim}),
it is important to have sufficiently many functions over $V_\omega$
available within \emph{closed} $\HF$-terms.
This is the main purpose of our axioms on $\HF$-Sets.
They 
state that the $\HF$-Functions $\const g_{n,m}$
are Skolem functions for
$\forall\exists!$-statements over $\HF$-Sets.
These axioms are further commented in~\S\ref{sec:remcompl}.


\begin{defi}[$\HF$-Formula]
\label{def:ax:hf:form}
An \emph{$\HF$-formula} is an $\FSOD$-formula with atoms
of the form $K \Eq L$, $K \In L$ or $K \Sle L$
where $K$ and $L$ are $\HF$-terms.
\end{defi}


Fix a distinguished $\HF$-variable $\ell$,
and an enumeration $k_1,k_2,\dots$ of distinct $\HF$-variables all different from $\ell$.
Furthermore, fix an enumeration $(\varphi_{n,m})_{n,m \in \NN}$
of $\HF$-formulae satisfying the following conditions:
\begin{enumerate}
\item Each formula $\varphi_{n,m}$ has free variables among $k_1,\dots,k_n,\ell$.
\item All $\HF$-Functions occurring in $\varphi_{n,m}$
have the form $\const g_{n',m'}$ with $m' < m$.
\item Each $\HF$-formula $\varphi$ satisfying (1) and (2) occurs infinitely often in 
$(\varphi_{n,m})_{n,m \in \NN}$, in the following sense.
If $\varphi$ has free variables among $k_1,\dots,k_n,\ell$,
then there are infinitely many $m \in \NN$ such that $\varphi$
is $\varphi_{n,m}$.
\end{enumerate}

\noindent
Recall from Remark~\ref{rem:fso:hf} that $V_\omega$
is a model of $\ZFCM$.
The idea of our Axioms on $\HF$-Sets is that 
\begin{equation}
\label{eq:ax:hf}
\FSOD\thesis \varphi_{n,m}[\const g_{n,m}(k_1,\dots,k_n)/\ell]
\qquad\text{whenever}\qquad
\ZFCM\thesis (\forall k_1,\dots,k_n) (\exists! \ell) \varphi_{n,m}
\end{equation}
(where $(k \Sle k')$ is interpreted as $(\forall m\In k)(m \In k')$).
However, recall that $\varphi_{n,m}$ in~\eqref{eq:ax:hf}
may contain $\HF$-Functions $\const g_{n',m'}$ with $m'<m$.
The premise of~\eqref{eq:ax:hf} can thus not be formulated in $\ZFCM$,
but requires a suitable extension of it.
We let $\Sk(\ZFCM)$ consist of $\ZFCM$ augmented with the axioms
\begin{equation}
\tag{for each $n,m \in \NN$}
(\forall k_1,\dots,k_n)
(\exists! \ell)(\varphi_{n,m})
~~\longlimp~~
(\forall k_1,\dots,k_n)
\varphi_{n,m}[\const g_{n,m}(k_1,\dots,k_n)/\ell]
\end{equation}

\cnote{\CR:NOTE
\begin{itemize}
\item \cite[Thm.\@ 3.4.4]{dalen04book}
gives the conservativity with axioms of the form
\[
(\forall k_1,\dots,k_n)
\Big(
(\exists \ell)(\varphi_{n,m})
~~\limp~~
\varphi_{n,m}[\const g_{n,m}(k_1,\dots,k_n)/\ell]
\Big)
\]
which entails conservativity in our cases since our axioms are implied
by those of~\cite[Thm.\@ 3.4.4]{dalen04book}.
\end{itemize}}

\noindent
It is well-known that $\Sk(\ZFCM)$ is thus a conservative extension of $\ZFCM$
(see \eg~\cite[\S 3.4]{dalen04book}).
$\FSOD$ has the following axiom scheme for $\HF$-Sets,
which simply consists of~\eqref{eq:ax:hf} formulated for $\Sk(\ZFCM)$
rather than $\ZFCM$:
\begin{itemize}
\item \emph{Axioms on $\HF$-Sets.}
For each $n,m \in \NN$
such that
\begin{equation}
\label{eq:ax:hf:zfc}
\Sk(\ZFCM)\thesis\quad (\forall k_1,\dots,k_n) (\exists! \ell) \varphi_{n,m}
\end{equation}

\noindent
and for all $\HF$-terms $\vec K = K_1,\dots,K_n$, we have the axiom
\[
\varphi_{n,m}[\vec K/\vec k][\const g_{n,m}(\vec K) / \ell]
\]
\end{itemize}

\begin{rem}
\label{rem:ax:hf:rec}
Note that this axiom scheme makes the axiom set of $\FSOD$ not recursive.
But as expected for a proof system,
\emph{provability} in $\FSOD$ remains semi-recursive.
\end{rem}

We fix here once and for all an interpretation
of the $\HF$-Function symbols $\const g_{n,m}$
as functions over $V_\omega$.
The idea is
that if~\eqref{eq:ax:hf:zfc} holds, 
then $\const g_{n,m}$ is interpreted
as a computable function $\hf g_{n,m} : V_\omega^n \to V_\omega$
such that
\begin{equation}
\label{eq:ax:hf:vomega}
V_\omega \models\quad
(\forall k_1,\dots,k_n)\varphi_{n,m}[\hf g_{n,m}(k_1,\dots,k_n)/\ell]
\end{equation}

\noindent
But again recall that $\varphi_{n,m}$ 
may contain $\HF$-Functions $\const g_{n',m'}$ with $m'<m$.
We therefore proceed inductively, as follows.

\begin{conv}
\label{conv:ax:hf:ac}
By induction on $m \in \NN$,
we interpret the 
$\HF$-Function symbols $\const g_{n,m}$
as computable functions
\[
\hf g_{n,m} ~~:~~ V_\omega^n ~~\longto~~ V_\omega
\]

\noindent
Consider the formula $\varphi_{n,m}$, and assume that all $\HF$-Functions
$\const g_{n',m'}$ with $m' < m$ are already interpreted.
If~\eqref{eq:ax:hf:zfc} holds
then by the Countable Axiom of Choice
we interpret $\const g_{n,m}$ as the unique function
$\hf g_{n,m} : V^n_\omega \to V_\omega$ such that~\eqref{eq:ax:hf:vomega}
holds.
Note that such functions are computable.
Otherwise, we interpret $\const g_{n,m}$ as the function with constant value 
$\emptyset$.
\end{conv}

\begin{rem}
\label{rem:ax:hf:compl}
Note each $\HF$-Function symbol is interpreted by a recursive function
in Convention~\ref{conv:ax:hf:ac}.
However, since~\eqref{eq:ax:hf:zfc} is undecidable, 
there is no algorithm taking $(n,m) \in \NN^2$ to
the interpretation of $\const g_{n,m}$.
This point is further discussed in~\S\ref{sec:remcompl},
where a natural workaround is proposed,
as well as some explanations for our present choice of Axioms on $\HF$-Sets.
\end{rem}

We now discuss some consequences of these axioms.
First note that if $\varphi$ is a \emph{closed} $\HF$-formula,
then it is provable in $\Sk(\ZFCM)$ if and only if it holds in $V_\omega$.
We state this fact as a Remark for the record, and also to reiterate how much deductive power underlies the axioms on $\HF$-sets.

\begin{rem}
\label{rem:ax:hf:vomega}
Given a closed $\HF$-formula $\varphi$,
\[
\FSOD \thesis \varphi
\qquad\text{whenever}\qquad
V_\omega \models \varphi
\]
\end{rem}

\cnote{\CR:NOTES
\begin{itemize}
\item
This is for Remark~\ref{rem:ax:hf:vomega}
that we require an $\exists!$ in~\eqref{eq:ax:hf:zfc}
(Otherwise, $\HF$-Functions may be underspecified).

\item
Also~\S\ref{sec:games:parity} introduces Functions on $\HF$-Sets.
\end{itemize}}

\noindent
Moreover, we have all instances of the following:
\begin{enumerate}[(a)]
\item 
\label{item:ax:hf:first}
\emph{Extensionality}.
\[
(\forall m \In k) (m \In \ell)
~~\limp~~
(\forall m \In \ell) (m \In k)
~~\limp~~
k \Eq \ell
\]

\item \emph{Finite sets}.
For each $n \in \NN$ we have an $n$-ary $\HF$-Function symbol
$\{-,\dots,-\}$ such that 
\[
  \bigconj_{1 \leq i \leq n}
  (k_i \In \{k_1,\dots,k_n\})
  \quad\land\quad
  (\forall m \In \{k_1,\dots,k_n\})
  \bigdisj_{1 \leq i \leq n}(m \Eq k_i)
\]

\noindent
We have in particular singletons $\{-\}$ and unordered pairs
$\{-,-\}$.
Using Extensionality, $\FSOD$ proves that
\[
\{\{k\},\{k,\ell\}\}
\Eq
\{\{k'\},\{k',\ell'\}\}
\quad\longliff\quad
k \Eq k' \land \ell \Eq \ell'
\]
We use the following shorthand:
\[
(k,\ell) \quad\deq\quad 
\{\{k\},\{k,\ell\}\}
\]

\item \emph{Union}.
We have an $\HF$-Function symbol $\cup(-)$ such that 
\[
  (\forall \ell \In \cup(k))
  (\exists m \In k) (\ell \In m)
\quad\land\quad
  (\forall \ell \In k) (\forall m \In \ell)(m \In \cup(k))
\]

\item \emph{Powerset.}
\label{item:ax:hf:po}
We have an $\HF$-Function symbol $\Po(-)$ such that 
\[
  (\forall \ell \In \Po(k)) (\forall m \In \ell)(m \In k)
\quad\land\quad
  (\forall \ell \Sle k)(\ell \In \Po(k))
\]

\noindent
The powerset is the reason for our introduction of inclusion ($\Sle$)
as an atomic formula:
It is well-known that the powerset cannot be defined by a $\Delta_0 (\In)$-formula.
A possible formula defining it is:
\[
  (\forall \ell \In \Po(k)) (\forall m \In \ell)(m \In k)
\quad\land\quad
  (\forall \ell) \big[
  (\forall m \In \ell)(m \In k) ~~\longlimp~~ \ell \In \Po(k)
  \big]
\]
The quantification $\forall \ell$ in the right conjunct is not $\In$-bounded,
and cannot be so.
In addition, we also have an $\HF$-Function symbol $\Pne(-)$
for the \emph{non-empty powerset}, that is such that
\[
k \In \Pne(\ell)
\quad\longliff\quad
\big(
k \In \Po(\ell) ~\land~ (\exists m \In k)
\big)
\]

\item \emph{Comprehension.}
Given an $\HF$-formula $\varphi$ with free variables
among $k_1,\dots,k_n,k$, we have an $\HF$-Function
$\{k \In (-) \st \varphi[-,\dots,-,k]\}$
such that 
\[
m \In \{k \In k_0 \st \varphi[k_1,\dots,k_n,k]\}
\quad\longliff\quad
m \In k_0 \land \varphi[k_1,\dots,k_n,m]
\]

\item \emph{Products.}
\label{item:ax:hf:fun:prod}
We have a binary $\HF$-Function $(-) \times (-)$
such that
\[
k \times \ell
~~\deq~~
\left\{ m \In \Po(\Po(k \cup \ell))
   \st (\exists k_0 \In k) (\exists \ell_0 \In \ell) \big[ m \Eq (k_0,\ell_0) \big]
\right\}
\]

\noindent
Moreover, we have binary projections given by $\HF$-Functions 
$\pi_1^{-,-}(-)$ and $\pi_2^{-,-}(-)$
such that 
\[
\pi_1^{k,\ell}(m)
~~=~~
\left\{
n \in \ell \st m \Sle k \times \ell ~\land~ (\exists n' \in k)\big[(n,n') \In m \big]
\right\}
\]
and similarly for $\pi_2^{-,-}(-)$.
Whenever possible, we write $\pi_i(-)$ instead of $\pi_i^{k,\ell}(-)$.
Note that by composing binary projections $\pi_1$ and $\pi_2$
we obtain projections
\[
\pi^n_i ~~:~~
k_1 \times \dots \times k_n ~~\longto~~ k_i
\]
for any $k_1,\dots,k_n$ and $i \in \{1,\dots,n\}$

\item \emph{Function Spaces and Application.}
\label{item:ax:hf:fun}
We have an exponent $\HF$-Function $(-)^{(-)}$
such that
\[
\ell^k ~~\deq~~
\left\{ 
  m \In \Po(k \times \ell) \st 
  (\forall k_0 \In k) (\exists ! \ell_0 \In \ell) \big[ (k_0,\ell_0) \In m \big]
\right\}
\]

\noindent
Moreover, function application is given by the $\HF$-Function
$@_{-,-}(-,-)$ with
\[
@_{k,\ell}(f,a) ~~\deq~~
\left\{
  m \In \cup(\ell) \st (\exists \ell_0 \In \ell)
  \big[ m \In \ell_0 \land (a,\ell_0) \In f \big]
\right\}
\]
(here $f$ and $a$ are $\HF$-variables).
Whenever possible, we omit the subscripts $k,\ell$ of $@_{k,\ell}(f,a)$
and write simply $f(a)$ for $@_{k,\ell}(f,a)$.

\item
\label{item:ax:hf:last}
\emph{Disjoint Unions.}
\label{item:ax:hf:coprod}
We have a binary $\HF$-Function $(-) + (-)$ with
\[
k + \ell ~~\Eq~~ (\{0\} \times k) \cup (\{1\} \times \ell)
\]
(see Convention~\ref{conv:games:colors}
for a further discussion on finite ordinals in our context).
We moreover have $\HF$-Functions
$\inj^{k,\ell}_k(-)$, $\inj^{k,\ell}_\ell(-)$, and $[-,-]_{k,\ell}^\imath$
such that (dropping subscripts and superscripts)
\[
\inj(m) \Eq (0,m)
\qquad
\inj(n) \Eq (1,n)
\qquad
[f,g](0,m) \Eq f(m)
\qquad
[f,g](1,n) \Eq g(n)
\]
for $m \In k$, $\ell \In n$ and $f \In \imath^k$, $g \In \imath^\ell$.
\end{enumerate}

\cnote{\CR:NOTE\quad It is crucial here \wrt\@ the notion of $\HF$-closed
automata (\S\ref{sec:aut} and~\S\ref{sec:sim}) that projections as well as
operations on disjoint unions are given as (closed) $\HF$-Functions.}

\begin{conv}
\label{conv:ax:hf:fun}
Regarding function spaces as in~\ref{item:ax:hf:fun} above,
$\FSOD$ proves
\[
(k^\ell)^m \quad\iso\quad k^{\ell \times m}
\]
In the following, we reason modulo that bijection, and simply
identify $(k^\ell)^m$ with $k^{\ell \times m}$.
\end{conv}

\begin{rem}
\label{rem:ax:hf:well-order-hf}
An $\HF$-relation $\mathord\preceq \sle K\times K$ is a \emph{partial order}
on an $\HF$-term $K$, if the formula $\po(\preceq, K)$ holds in $V_\omega$, where
$\po(\preceq,K)$ is the $\HF$-formula:
\[
(\forall k,\ell,m \In K)
\Big[
\big( k \preceq \ell ~~\limp~~ \ell \preceq k ~~\limp~~ k \Eq \ell \big)
~~\land~~
\big( k \preceq \ell ~~\limp~~ \ell \preceq m ~~\limp~~ k \preceq m \big)
\Big]
\]

\noindent
A partial order $\mathord\preceq \sle K \times K$ is a \emph{well-order}
if every subset of $K$ has a $\preceq$-least element,
that is, if the following formula $\formfont{WO}(\preceq,K)$ holds
in $V_\omega$:
\[
  (\forall \ell\Sle K)
  \big[
    \ell \neq \emptyset ~~\limp~~
    (\exists m \In \ell) (\forall n \In \ell) (m \preceq n)
  \big]
\]

\noindent
Since every $\HF$-set is finite and can be well-ordered, we have
\[
V_\omega 
\quad\models\quad
(\forall k)
\underbrace{
(\exists \mathord{\preceq} \Sle k\times k)
\big[
  \po(\preceq, k)
~~\land~~
  \formfont{WO}(\preceq,k)
~~\land~~
  \formfont{WO}(\succeq,k)
\big]}_{\varphi(k)}
\]
Since $\varphi(k)$ is an $\HF$-formula $\varphi(k)$,
it follows that $\FSOD$ proves $\varphi(k)$,
hence in particular that every $\HF$-set is well-ordered.
\qed
\end{rem}

\subsubsection{Functional Choice Axioms}
\label{sec:ax:choice}
We have the following functional choice axiom schemes.
\begin{itemize}
\item \emph{$\HF$-Bounded Choice for $\HF$-Sets}.
\[
  (\forall k \In K) (\exists \ell \In L) \varphi(k,\ell)
  ~~\longlimp~~
  \big( \exists f \In L^K\big) (\forall k \In K)
  \varphi(k,f(k))
\]

\item \emph{$\HF$-Bounded Choice for Functions}.
\[
  (\forall x) (\exists k \In K) \varphi(x,k)
  ~~\longlimp~~
  (\exists F : K) (\forall x) (\exists k \In K)
  \left(F(x) \Eq k ~\land~ \varphi(x,k) \right)
\]

\item
\emph{Iterated $\HF$-Bounded Choice}.
\[
(\forall k \In K) (\exists F : L)\varphi(k,F)
~~\longlimp~~
\big( \exists G : L^K \big) (\forall k \In K)
\varphi(k,F)[G(k) \sslash F]
\]

\noindent
where the substitution $[G(k)\sslash F]$ is defined as 
the usual substitution operation
but with
\[
(F(t) \Eq M)[G(k)\sslash F]
~~\deq~~
\big( \exists f \In L^K \big)
\left(
G(t) \Eq f
~\land~
f(k) \Eq M
\right)
\]
\end{itemize}

We insist
that none of these axioms create choice functions
for the \emph{individuals} of $\FSOD$
(cf Remark~\ref{rem:fso:choice}).
Despite their common shape, these three axiom schemes are actually
of different nature.
First, the axiom of
\emph{$\HF$-Bounded Choice for Functions}
\begin{equation}
\label{eq:ax:choice:fun}
  (\forall x) (\exists k \In K) \varphi(x,k)
  ~~\longlimp~~
  (\exists F : K) (\forall x) (\exists k \In K)
  \left(F(x) \Eq k ~\land~ \varphi(x,k) \right)
\end{equation}
is a counterpart in $\FSOD$ of the Comprehension Scheme of $\MSOD$.
Recalling the informal discussion in~\S\ref{sec:fso:intro}
and anticipating~\S\ref{sec:std} and~\S\ref{sec:cons},
let us assume a translation $\MI{-}$ from ($\HF$-closed)
$\FSOD$-formulae to $\MSOD$-formulae, and
let us assume that $K$ is a closed $\HF$-term
representing the $\HF$-set $\{\kappa_1,\dots,\kappa_n\}$.
Then the premise of~\eqref{eq:ax:choice:fun}
can be read as
\[
(\forall x) \bigdisj_{1 \leq i \leq n} \MI{\varphi(x,\kappa_i)}
\]

\noindent
The conclusion easily follows from the fact that
using Comprehension, one can 
define in $\MSOD$ a partition $X_1,\dots,X_n$ of $\univ$
such that
\[
(\forall x) \bigdisj_{1 \leq i \leq n}
\Big(
X_i(x) ~~\land~~ \MI{\varphi(x,\kappa_i)}
\Big)
\]


Second, \emph{$\HF$-Bounded Choice for $\HF$-Sets}
\begin{equation}
\label{eq:ax:choice:hf}
  (\forall k \In K) (\exists \ell \In L) \varphi(k,\ell)
  ~~\longlimp~~
  \big( \exists f \In L^K\big) (\forall k \In K)
  \varphi(k,f(k))
\end{equation}
may look similar to the Axioms on $\HF$-Sets of~\S\ref{sec:ax:hf}.
The 
differences are that the formula $\varphi$ here
is an \emph{arbitrary} formula of $\FSOD$,
not necessarily an $\HF$-formula in the sense of Definition~\ref{def:ax:hf:form},
and moreover that this axiom only involves $\FSOD$ (\ie\@ \emph{bounded})
quantifications, contrary to~\eqref{eq:ax:hf:zfc}.
Note that for $\HF$-formulae $\varphi$,
this axiom is indeed an instance of the axioms of~\S\ref{sec:ax:hf}.
In the general case, this axiom can be seen as following
from the fact that
quantifications over $\HF$-Sets in $\FSOD$ are
ultimately interpreted in propositional logic.
Assume that the $\HF$-terms $K$ and $L$ are closed, and correspond
to the $\HF$-sets $\kappa$ and $\lambda$ respectively.
Then the premise of~\eqref{eq:ax:choice:hf}
can be read as
\[
\bigconj_{\kappa_0 \in \kappa}
\bigdisj_{\lambda_0 \in \lambda}
\MI{\varphi(\kappa_0,\lambda_0)}
\]
which is equivalent in propositional logic to the interpretation of the conclusion
\[
\bigdisj_{f \in \lambda^\kappa} \bigconj_{\kappa_0 \in \kappa}
\MI{\varphi(\kappa_0,f(\kappa_0))}
\]

Similarly, \emph{Iterated $\HF$-Bounded Choice}
reduces to an equivalence 
of the form
\[
\bigconj_{1 \leq i \leq n} (\exists \vec X)
\varphi(\kappa_i,\vec X)
\quad\longliff\quad
(\exists \vec X_1) \dots (\exists \vec X_n)
\bigconj_{1 \leq i \leq n}
\varphi(\kappa_i,\vec X_i)
\]
and follows from Comprehension.

The definition of $\FSOD$ is now complete.

\begin{nota}
Similarly as with $\MSOD$, we shall write $\FSO$ for $\FSOD$
when the set of tree directions $\Dir$ is clear from the context.
\end{nota}

\subsection{The Standard Model of $\FSO$}
\label{sec:std}

\noindent
The \emph{standard model} $\Std$
of $\FSOD$ is the full $\Dir$-ary tree $\univ$
equipped with suitable domains for each sort:
\begin{itemize}
\item \emph{$\HF$-Sets} range over $V_\omega$, and each constant
$\const\kappa$ is interpreted by the corresponding $\HF$-set $\kappa \in V_\omega$.

\item \emph{Individuals} range over $\univ$, 
the constant $\Root$
is interpreted by the empty sequence $\root \in \univ$
and $\Succ_d$ as the map 
taking $p \in \univ$ to $p.d \in \univ$.
Moreover, we write $<$ for the strict prefix order on $\univ$.

\item \emph{Functions}
range over
\[
\bigcup_{\kappa \in V_\omega}\left(
\univ ~~\longto~~ \kappa
\right)
\]

\item
For each $n,m \in \NN$, the $\HF$-Function $\const g_{n,m}$
(of arity $n$) is interpreted as the function 
\[
\hf g_{n,m} ~~:~~ V_\omega^n ~~\longto~~ V_\omega
\]
fixed in Convention~\ref{conv:ax:hf:ac}.
\end{itemize}

\begin{rem}
Note that $\Std$ has the same individuals as the standard model of $\MSOD$.
Moreover we write $\Std$ for both the standard model
of $\FSOD$ and that of $\MSOD$, as an abuse of notation.
\end{rem}

We have the usual 
interpretation $\SI{t} \in \univ$ for each closed
individual term $t$ with parameters in $\Std$,
and
an interpretation $\SI{K} \in V_\omega$ for each closed $\HF$-term $K$
with parameters in~$\Std$.
The relation $\Std \models \varphi$, 
for a closed $\FSO$-formula $\varphi$ with parameters in $\Std$,
is defined by induction on $\varphi$ as follows:
\[
\begin{array}{!{\Std \models} l !{\quad\text{iff}\quad} l !{\quad} l}
  K \mathrel{\const\GC} L
& \SI K \GC \SI L
& \text{(for $\GC$ either $\eq$, $\in$, or $\sle$)}
\\
  t \mathrel{\const\GC} u
& \SI t \GC \SI u
& \text{(for $\GC$ either $\eq$ or $\lt$)}
\\
  \varphi \lor \psi
& (\Std \models \varphi) \text{ or } (\Std \models \psi)
\\
  \lnot \varphi
& \Std \not\models \varphi
\\
  (\exists x)\varphi
& \Std \models \varphi[p/x] ~~\text{for some $p \in \univ$}
\\
  (\exists k \mathrel{\const\GC} L) \varphi
& \Std \models \varphi[\kappa/k] ~~\text{for some $\kappa \GC \SI{L}$}
& \text{(for $\GC$ either $\in$ or $\sle$)}
\\
  (\exists F:L)\varphi
& \Std \models \varphi[\hf F/F] ~~\text{for some $\hf F \in \SI L^{\univ}$}
\end{array}
\]

\noindent
By a routine induction argument, we can show the soundness of $\FSOD$ \wrt\@ $\Std$:

\begin{prop}
\label{prop:std:cor}
Given $\FSO$-formulae $\psi_1,\dots,\psi_n,\varphi$
with free $\HF$-variables among $\vec k$,
free Individual variables among $\vec x$,
and free Function variables among $\vec F$,
if
\[
\psi_1,\dots,\psi_n \thesis_{\FSO} \varphi
\]
then
for all $\HF$-Sets $\vec \kappa$,
all $\vec p \in \univ$
and all
$\vec {\hf F} \in \bigcup_{\kappa \in V_\omega}\left(\univ \to \kappa\right)$,
we have
\[
\Std \models
\varphi[\vec \kappa/\vec k,\vec p/\vec x,\vec{\hf F}/\vec F]
\qquad\text{whenever}\qquad
\Std \models
\bigconj_{1 \leq i \leq n}
\psi_i[\vec \kappa/\vec k,\vec p/\vec x,\vec{\hf F}/\vec F]
\]
\end{prop}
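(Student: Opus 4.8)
The plan is to argue by induction on the derivation of $\psi_1,\dots,\psi_n \thesis_{\FSO} \varphi$, showing that every axiom is valid in $\Std$ under all assignments of parameters to its free variables, and that every deduction rule of Figures~\ref{fig:ded:prop} and~\ref{fig:ded:bfso} preserves the semantic consequence relation $\Gamma \models \chi$ (``for every assignment of parameters to the free variables of $\Gamma,\chi$ over their respective domains, if all members of $\Gamma$ hold in $\Std$ then so does $\chi$''); the desired conclusion is exactly $\psi_1,\dots,\psi_n \models \varphi$. The entire argument rests on a preliminary \emph{Substitution Lemma} relating the truth of a substituted formula to that of the original under a shifted parameter. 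Concretely, for a closed individual term $t$ (with parameters in $\Std$) one has $\Std \models \varphi[t/x]$ iff $\Std \models \varphi[\SI t / x]$; likewise $\Std \models \varphi[K/k]$ iff $\Std \models \varphi[\SI K / k]$ with $\SI K \in V_\omega$, $\Std \models \varphi[G/F]$ iff $\Std \models \varphi$ with $F$ assigned the function denoted by $G$, and for the Function substitution $\Std \models \varphi[F(t)/k]$ iff $\Std \models \varphi[\hf F(\SI t)/k]$, where $\hf F$ is the function assigned to $F$. Each of these is a routine but slightly tedious induction on $\varphi$, the only delicate clauses being the bounded quantifiers and the atom $F(t)\Eq K$.

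Granting the Substitution Lemma, the propositional rules of Figure~\ref{fig:ded:prop} are validated exactly as in classical two-sorted first-order logic, using that $\Std$ interprets $\lor$ and $\lnot$ by the Boolean truth tables. The quantifier rules of Figure~\ref{fig:ded:bfso} reduce via the Substitution Lemma to the usual witness and fresh-variable reasoning; for the HF-bounded introduction rules one additionally uses that $\Std \models K \GC L$ forces $\SI K \GC \SI L$, so that the witness $\SI K$ genuinely lies in the bounded range required by the semantic clause for $(\exists k \GC L)$, and for the Function introduction rule the premise $(G:K)$ guarantees that the witnessing function maps $\univ$ into $\SI K$. The Function Substitution Rule is discharged directly by the Substitution Lemma for $[F(t)/k]$: since $k$ is no longer free after substitution, one instantiates the inductive hypothesis $\Phi \models \varphi$ at the assignment whose value for $k$ is $\hf F(\SI t)$, and translates both antecedent and succedent back through the lemma to obtain $\Phi[F(t)/k] \models \varphi[F(t)/k]$.

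It remains to verify validity of the axioms. The Equality axioms and the Tree Axioms of Figure~\ref{fig:ded:tree} hold because $\univ$ is the free monoid on $\Dir$ ordered by the strict prefix order $\lt$, so the $\Succ_d$ are injective with pairwise disjoint images and $\lt$ satisfies the stated order-theoretic laws. The Induction Scheme of~\S\ref{sec:ax:ind} holds since every position of $\univ$ is reached from $\root$ by finitely many successor steps, i.e.\@ by ordinary induction on word length in the metatheory. The three Functional Choice Axioms of~\S\ref{sec:ax:choice} are validated using choice in the metatheory: HF-Bounded Choice for HF-Sets holds because each HF-set is finite, HF-Bounded Choice for Functions is witnessed by the pointwise-defined map $\univ \to \SI K$ selecting for each position an element satisfying the premise, and Iterated HF-Bounded Choice combines these two.

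The main obstacle is the validity of the Axioms on HF-Sets of~\S\ref{sec:ax:hf}, where one must show $\Std \models \varphi_{n,m}[\vec K/\vec k][\const g_{n,m}(\vec K)/\ell]$ whenever~(\ref{eq:ax:hf:zfc}) holds. The difficulty is twofold. First, because $\varphi_{n,m}$ may itself contain symbols $\const g_{n',m'}$ with $m'<m$, one cannot appeal directly to $V_\omega \models \ZFCM$ but must follow the same induction on $m$ used in Convention~\ref{conv:ax:hf:ac}: assuming each $\hf g_{n',m'}$ with $m'<m$ already satisfies its defining property, the conservativity of $\Sk(\ZFCM)$ over $\ZFCM$ noted in~\S\ref{sec:ax:hf} together with $V_\omega \models \ZFCM$ (Remark~\ref{rem:fso:hf}) yields $V_\omega \models (\forall \vec k)(\exists! \ell)\varphi_{n,m}$, and hence~(\ref{eq:ax:hf:vomega}) for the chosen interpretation $\hf g_{n,m}$. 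Second, one must pass from this universally quantified HF-statement to the particular instance with the terms $\vec K$, which follows from the HF-case of the Substitution Lemma applied at the assignment determined by $\SI{\vec K}$. Carefully maintaining the inductive hypothesis on $m$ across this passage is precisely what makes the otherwise routine soundness induction nontrivial.
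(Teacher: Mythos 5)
Your proof is correct and is precisely the ``routine induction argument'' that the paper invokes for Proposition~\ref{prop:std:cor} without spelling it out: induction on derivations via a substitution lemma, rule-by-rule preservation of truth in $\Std$, validity of the Equality/Tree/Induction/Choice axioms by metatheoretic reasoning (including choice), and validity of the Axioms on $\HF$-Sets via the interpretation of the $\const g_{n,m}$. The only comment is that your final paragraph re-derives more than is needed: since Convention~\ref{conv:ax:hf:ac} already \emph{fixes} $\hf g_{n,m}$ so that~(\ref{eq:ax:hf:vomega}) holds whenever~(\ref{eq:ax:hf:zfc}) does, the soundness of the $\HF$-Set axioms follows by citing~(\ref{eq:ax:hf:vomega}) directly and instantiating at $\SI{\vec K}$, without reconstructing the induction on $m$ (and note that the conservativity of $\Sk(\ZFCM)$ as stated applies only to sentences in the pure language, so it cannot be invoked verbatim on $(\forall \vec k)(\exists!\ell)\varphi_{n,m}$, which contains the symbols $\const g_{n',m'}$).
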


\begin{rem}
\label{rem:ax:std:compl}
It follows from Remark~\ref{rem:ax:hf:compl} that the map
$\SI{-}$ is not computable on $\HF$-terms.
We refer to~\S\ref{sec:remcompl} for a discussion and a workaround.
\end{rem}


\subsection{Mutual Interpretability of $\FSO$ and $\MSO$}
\label{sec:cons}
While $\FSO$ seems more expressive than $\MSO$ (and, indeed, is easier to work with),
the two theories can mutually interpret each other 
via two formula-level translations:
\[
\MI{-} ~~:~~ \FSO ~~\longto~~ \MSO
\qquad\text{and}\qquad
(-)^\circ ~~:~~ \MSO ~~\longto~~ \FSO
\]

\noindent
%
As we shall see, both translations preserve and
reflect provability:
\begin{gather}
\FSO \thesis \varphi
\quad\text{if and only if}\quad
\MSO \thesis \MI\varphi
\tag{$\varphi$ closed $\FSO$-formula}
\\
\MSO \thesis \varphi
\quad\text{if and only if}\quad
\FSO \thesis \varphi^\circ
\tag{$\varphi$ closed $\MSO$-formula}
\end{gather}

\noindent
The interpretation $(-)^\circ$
of $\MSO$ in $\FSO$
simply amounts to simulate the (Monadic) Predicate variables of $\MSO$
by $\FSO$-Function variables $\univ \to \two$.
We therefore see $(-)^\circ$ as an embedding, and see $\FSO$
as a conservative extension of $\MSO$ which is faithfully interpretable in $\MSO$.
This property is not only a sanity check:
we actually rely on it in our completeness argument
(see Rem.~\ref{rem:cons:compl}).
We discuss the translations $\MI{-}$ and $(-)^\circ$
separately in~\S\ref{sec:cons:fso:mso}
and~\S\ref{sec:cons:mso:fso} below.
\full{In both cases, detailed proofs are deferred to Appendix~\ref{sec:app:cons}.}%
\short{Full proofs are detailed in~\cite[App.\@ A]{dr20full}.}


\subsubsection{From $\FSO$ to $\MSO$}
\label{sec:cons:fso:mso}
The translation
$\MI{-} : \FSO \to \MSO$
interprets the $\HF$-part of $\FSO$ using propositional logic.
It is essentially straightforward, except for the case of Functions,
which require some care.
We will work with the following convention:

\begin{conv}
\label{conv:cons:interp:enum}
We assume that each $\HF$-set $\kappa$ comes with a fixed enumeration
$\vec \kappa = \kappa_1,\dots,\kappa_n$ of its elements.
\end{conv}

\noindent
The translation $\MI{-}$ will map an $\HF$-closed $\FSO$-formula $\varphi$
without free Function variables
to an $\MSO$-formula $\MI\varphi$.
As stated earlier, quantifications over $\HF$-Sets will be interpreted using
propositional logic.
For instance we have,
\[
\bigMI{(\exists k \In K)\varphi}
~~=~~
\bigdisj_{\kappa \in \SI K} \MI{\varphi[\kappa/k]}
\]
where $\SI K \in V_\omega$ is the standard interpretation
of the closed $\HF$-term $K$ defined in~\S\ref{sec:std}.
As a consequence, the translation $\MI{-}$ is \emph{non-uniform}
\wrt\@ $\HF$-Sets.
In particular, for an $\FSO$-formula $\varphi$
with free $\HF$-variables among $\vec k = k_1,\dots,k_p$,
each tuple of $\HF$-sets $\vec\kappa = \kappa_1,\dots,\kappa_p$
will induce a specific $\MSO$-formula
$\MI{\varphi[\vec\kappa/\vec k]}$.

The interpretation of Function variables is more complex.
Consider a closed $\HF$-term $K$
and assume $\SI K = \{\kappa_1,\dots,\kappa_c\}$.
Then a Function $(F:K)$ can be seen as a function
\[
F ~~:~~ \univ ~~\longto~~ \{\kappa_1,\dots,\kappa_c\}
\]

\noindent
As indicated in~\S\ref{sec:fso:intro},
we interpret $F$ as a tuple $X_1,\dots,X_c$ of Monadic variables
such that
\begin{equation}
\tag{for $i = 1,\dots,c$}
x \in X_i \quad\text{iff}\quad
F(x) = \kappa_i
\end{equation}

\noindent
In other words, 
$F : \univ \to \{\kappa_1,\dots,\kappa_c\}$
is seen as a partition $X_1,\dots,X_c$ of $\univ$.
To handle the interpretation of Functions in the inductive definition
of $\MI{-}$,
it is actually convenient to temporarily work
in an extension of $\FSO$ with the following atomic formulae:
\begin{itemize}
\item
$\extatom{X_1\dots X_n(t) \Eq_{\vec\kappa} L}$
where $\vec\kappa = \kappa_1,\dots,\kappa_n$
enumerates an $\HF$-set
and $X_1,\dots,X_n$ are monadic variables of $\MSO$.
\end{itemize}

\noindent
\emph{Extended} $\FSO$-formulae are built just like
$\FSO$-formulae, but possibly using the atomic formulae above.
Extended atomic formulae are useful for dealing with 
$\HF$-bounded quantifications over Functions, say
$(\exists F : K)\varphi$.
The point is that
$F$ occurs
in subformulae of $\varphi$ of the form $F(t) \Eq L$,
where the $\HF$-term $L$ may contain free $\HF$-variables.
Hence the value of $L$ is not known when the translation
of $(\exists F :K)$ has to be computed.
Extended atomic formulae allow us to delay the interpretation of $F(t) \Eq L$
until $\SI L$ is known.


The interpretation of an extended
$\HF$-closed $\FSO$-formula $\varphi$
\emph{without free Function variables}
is the $\MSO$-formula
$\MI{\varphi}$ defined by induction on $\varphi$ as follows:
\[
\begin{array}{r !{~~}@{}c@{}!{~~} l !{\qquad} l}
  \MI{\extatom{X_1\dots X_n(t) \Eq_{\vec \kappa} L}}
& \deq
&
  \bigdisj_{1 \leq i \leq n \ \&\ \kappa_i = \SI L} X_i(t)
\\

  \MI{K \mathrel{\const\GC} L}
& \deq
& \left\{
\begin{array}{l l}
\True & \text{if $\SI K \GC \SI L$} \\
\False & \text{otherwise}
\end{array}
\right.
& \text{(where $\mathord{\GC} \in \{=,\in,\sle\}$)}
\\

  \MI{t \GC u}
& \deq
& t \GC u
& \text{(where $\mathord{\GC} \in \{\Eq,\Lt\}$)}
\\

  \MI{\lnot \varphi}
& \deq
& \lnot \MI{\varphi}
\\

  \MI{\varphi \lor \psi}
& \deq
& \MI{\varphi} \lor \MI{\psi}
\\

  \MI{(\exists k \mathrel{\const\GC} K)\varphi}
& \deq
& \bigdisj_{\kappa \GC \SI K} \MI{\varphi[\kappa/k]}
& \text{(where $\mathord{\GC} \in \{\in,\sle\}$)}
\\

  \MI{(\exists x)\varphi}
& \deq
& (\exists x)\MI{\varphi}
\\

  \MI{(\exists F : K) \varphi}
& \deq
& (\exists X_1) \dots (\exists X_{c})\quad
\\
&
& 
\multicolumn{2}{r}{
\left\{
\begin{array}{c l}
& \Part_c(X_1,\dots,X_c)
\\
  \land
& \MI{\varphi
  [\extatom{X_1\dots X_c(t)
     \Eq_{\vec \kappa} L}
    ~/~ (F(t) \Eq L)]}
\end{array}
\right.}
\end{array}
\]

\noindent
where in the last clause,
${\SI K}$ is enumerated by $\vec \kappa = \kappa_1,\dots,\kappa_c$,
and $\Part_c(X_1,\dots,X_c)$
is the following $\MSO$-formula,
expressing that $X_1,\dots,X_c$ form a partition of $\univ$:
\[
\Part_c(X_1,\dots,X_c)
~~\deq~~
(\forall x)
\left[
\bigdisj_{1 \leq i \leq c} 
  \left( 
    X_i(x) \quad\land\quad
      \bigconj_{j \neq i} \lnot X_j(x)
  \right)
\right]
\]

\noindent
Note that in the definition of $\MI\varphi$ above,
since $\varphi$ is assumed to be $\HF$-closed,
the displayed $\HF$-terms $K$ and $L$ are closed,
so that
their $\Std$-interpretation $\SI K,\SI L \in V_\omega$ is defined
(see~\S\ref{sec:std}).

\begin{rem}
\label{rem:cons:dec}
Since it involves the standard interpretation map $\SI{-}$
on $\HF$-terms,
it follows from Remark~\ref{rem:ax:std:compl} (\S\ref{sec:std})
that the interpretation $\MI{-}$ is not recursive.
We refer to~\S\ref{sec:posdet:mso}
and~\S\ref{sec:remcompl}
for discussions and workarounds.
\end{rem}

\begin{thm}
\label{thm:cons:fso:mso}
For every closed $\FSO$-formula $\varphi$, we have
\[
\MSO \thesis \MI{\varphi}
\qquad\text{whenever}\qquad
\FSO \thesis \varphi
\]
\end{thm}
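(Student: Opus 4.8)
The plan is to prove the statement by induction on the $\FSO$-derivation of $\varphi$, generalized so as to cope with the open formulae appearing at intermediate steps. The obstacle to a naive induction is that $\MI{-}$ is defined only on extended $\HF$-closed formulae without free Function variables, whereas a derivation of a closed formula passes through sequents $\Phi \thesis \psi$ whose formulae carry free $\HF$-variables $\vec k$, free Individual variables $\vec x$, and free Function variables $\vec F$. I therefore fix, for each such sequent, concrete $\HF$-sets $\vec\kappa$ to be substituted for $\vec k$ and, for each Function variable $F_j$ constrained by $(F_j : K_j)$, a fresh tuple $\vec X^{(j)}$ of $c_j = |\SI{K_j[\vec\kappa/\vec k]}|$ monadic variables, indexed by an enumeration $\vec\lambda^{(j)}$ of the now closed codomain $\SI{K_j[\vec\kappa/\vec k]}$. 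Writing $\sigma$ for the resulting substitution --- which sends $\vec k$ to $\vec\kappa$ and replaces every atom $F_j(t) \Eq L$ by the extended atom $\extatom{\vec X^{(j)}(t) \Eq_{\vec\lambda^{(j)}} \sigma(L)}$ --- the generalized claim is that
\[
\bigconj_j \Part_{c_j}(\vec X^{(j)}),\ \{\MI{\sigma(\psi)}\}_{\psi \in \Phi} \ \thesis_{\MSO}\ \MI{\sigma(\varphi)} .
\]
The theorem is then the closed instance, in which $\sigma$ is trivial and the partition hypotheses are vacuous.

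The induction rests on three substitution lemmas describing how $\MI{-}$ commutes with the substitutions occurring in the rules: substitution of an Individual term for an Individual variable, for which $\MI{\varphi[t/x]} = \MI{\varphi}[t/x]$; substitution of a closed $\HF$-term for an $\HF$-variable, which selects the corresponding disjunct of an $\HF$-bounded quantifier; and substitution of a term $F(t)$ for an $\HF$-variable, handled through the extended atoms (under $\sigma$, an occurrence $\varphi(F(t))$ translates to a guarded disjunction over the index $i$ for which $X^{(j)}_i(t)$ holds). These lemmas are exactly what make the extended-atom apparatus work, and they let the translation be computed before the value of an $\HF$-term is known.

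With these in place, most rule cases are routine. The propositional rules of Figure~\ref{fig:ded:prop} commute with $\MI{-}$ since $\MI{\lnot\varphi} = \lnot\MI{\varphi}$ and $\MI{\varphi \lor \psi} = \MI{\varphi} \lor \MI{\psi}$, so each maps to the identically-shaped $\MSO$ rule. The Individual-quantifier rules translate directly since $\MI{(\exists x)\varphi} = (\exists x)\MI{\varphi}$. The $\HF$-bounded quantifier rules reduce to $\lor$-introduction and $\lor$-elimination over the finite disjunction $\bigdisj_{\kappa \GC \SI{K}} \MI{\varphi[\kappa/k]}$, using that a derivable side condition $K' \GC L$ forces $\SI{K'} \GC \SI{L}$. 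The genuinely delicate cases are the Function-quantifier rules and the Substitution rule of Figure~\ref{fig:ded:bfso}: here the fresh monadic tuples and the partition guards $\Part_{c_j}$ must be threaded through the context correctly, and the match between $\MI{\varphi[G/F]}$ and the extended-atom translation of $\varphi$ must be established via the third substitution lemma.

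For the axioms, the Equality and Tree Axioms and the Induction Scheme translate to $\MSO$-provable statements directly, the Tree Axioms being shared between the two systems and Induction relying on Comprehension. The Axioms on $\HF$-Sets are pure $\HF$-formulae, hence $\HF$-closed after $\sigma$; their translation is therefore $\True$ or $\False$ according to truth in $V_\omega$, and since each such axiom holds in $V_\omega$ --- by soundness of $\Sk(\ZFCM)$ for $V_\omega$ under the interpretation of the $\HF$-Functions fixed in Convention~\ref{conv:ax:hf:ac}, cf.\ Remark~\ref{rem:ax:hf:vomega} --- the translation is $\True$ and hence $\MSO$-provable. Finally the three Functional Choice Axioms translate as sketched in~\S\ref{sec:ax:choice}: \emph{$\HF$-Bounded Choice for $\HF$-Sets} becomes the propositional tautology $\bigconj_\kappa \bigdisj_\lambda \MI{\varphi} \liff \bigdisj_f \bigconj_\kappa \MI{\varphi}$, while \emph{$\HF$-Bounded Choice for Functions} and \emph{Iterated $\HF$-Bounded Choice} become instances of $\MSO$-Comprehension defining the requisite partition. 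I expect the main effort to lie in the Function-quantifier and Substitution cases, where the bookkeeping of fresh monadic variables and partition guards must be kept coherent across the whole derivation.
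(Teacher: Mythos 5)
Your proposal is correct, and its core is the same as the paper's: the same generalized induction hypothesis (translated premises plus partition guards $\Part_{c_j}(\vec X^{(j)})$ entail the translated conclusion, with atoms $F_j(t) \Eq L$ replaced by extended atoms), the same routine treatment of the propositional, Individual-quantifier and $\HF$-bounded-quantifier rules, and the same identification of the Function-quantifier and Substitution rules as the delicate cases requiring the extended-atom bookkeeping.

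The one genuine divergence is in how the two harder Functional Choice axioms are discharged. You propose direct syntactic derivations in $\MSO$: Comprehension builds the partition witnessing \emph{$\HF$-Bounded Choice for Functions}, and (with more work than the phrase ``instances of Comprehension'' suggests) the partition indexed by $f \In \SI{L}^{\SI{K}}$ witnessing \emph{Iterated $\HF$-Bounded Choice}. The paper instead takes a model-theoretic detour: it sets up Henkin semantics for $\MSOD$, proves Henkin completeness, and then verifies that the translated axioms hold in every Henkin model --- performing essentially your comprehension construction \emph{inside} an arbitrary model $\mod M$ (the predicates $A_i$ and $B_j$ there are defined by Comprehension in $\O{\mod M}$), which avoids threading the nested $\exists$-eliminations and the $[G(k)\sslash F]$ substitution through a formal $\MSO$-derivation. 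Both routes are sound; yours is more elementary (no semantic apparatus needed) at the price of heavier syntactic bookkeeping, especially for Iterated $\HF$-Bounded Choice where the correspondence between the partition for $G$ and the family of partitions for the $F$'s at each $\kappa \In \SI{K}$ must be derived, not merely observed to hold in a model. If you carry out your route, be careful at $\exists F$-Elimination to discharge the partition hypotheses of the eliminated variable; as in the paper, this uses Comprehension together with non-emptiness of the chosen closed codomains.
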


\noindent
The proof of Theorem~\ref{thm:cons:fso:mso} is deferred
\full{to Appendix~\ref{sec:app:cons}.}%
\short{to~\cite[App.\@ A]{dr20full}.}
The logical rules of $\FSO$ are handled routinely.
The interpretations of most of the axioms of $\FSO$
are almost trivially provable from the corresponding axioms of $\MSO$.
The Functional Choice Axioms are dealt-with essentially as explained
in~\S\ref{sec:ax:choice}.

\subsubsection{From $\MSO$ to $\FSO$}
\label{sec:cons:mso:fso}
The translation
$(-)^\circ : \MSO \to \FSO$
is much simpler than $\MI{-}$.
Assume given 
a $\FSO$-Function variable $F_X$
for each monadic $\MSO$-variable $X$.
The map $(-)^\circ$ is inductively defined as follows:
\[
\begin{array}{r !{~~}@{}c@{}!{~~} l !{\qquad\qquad} r !{~~\deq~~} l}
  (X(t))^\circ
& \deq
& F_X(t) \Eq 1
& (\varphi \lor \psi)^\circ
& \varphi^\circ \lor \psi^\circ
\\
  (t \Eq u)^\circ
& \deq
& t \Eq u
& (\lnot \varphi)^\circ
& \lnot (\varphi^\circ)
\\
  (t \Leq u)^\circ
& \deq
& t \Leq u
& ((\exists x)\varphi)^\circ
& (\exists x) \varphi^\circ
\\
&
&
& ((\exists X)\varphi)^\circ
& (\exists F_X:\two)\varphi^\circ
\end{array}
\]

\noindent
It is easy to see that $(-)^\circ$ is truth preserving (and reflecting) \wrt\@ the standard model $\Std$,
by a direct induction on formulae
relying on the bijection $\Po(\univ) \iso \two^{\univ}$.

\begin{lem}
\label{lem:cons:fso:mso:std}
Given a closed $\MSO$-formula $\varphi$, we have
\[
\Std \models \varphi 
\qquad\text{if and only if}\qquad
\Std \models \varphi^\circ
\]
\end{lem}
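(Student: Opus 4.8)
The plan is to prove a slightly stronger statement quantifying over \emph{open} formulae together with valuations, so that the induction has something to feed on. Recall from~\S\ref{sec:std} that the two standard models share the same Individual domain $\univ$, that the order relations $\eq,\lt$ are interpreted identically, and that an $\FSO$-Function $(F:\two)$ ranges exactly over $\two^\univ$ in $\Std$. The carrier of the induction is the bijection $\chi : \Po(\univ) \to \two^\univ$ defined by $\chi_A(p) = 1$ iff $p \in A$, which lets us match a monadic valuation of the $X$'s with a Function valuation of the $F_X$'s. Concretely, I would show by induction on the $\MSO$-formula $\psi$, with free Individual variables among $\vec x$ and free monadic variables among $\vec X$, that for all $\vec p \in \univ$ and all $\vec A \in \Po(\univ)$,
\[
\Std \models \psi[\vec p/\vec x,\vec A/\vec X]
\qquad\text{iff}\qquad
\Std \models \psi^\circ[\vec p/\vec x, \chi_{\vec A}/\vec{F_X}],
\]
where $\chi_{\vec A}$ abbreviates the tuple $\chi_{A_1},\dots,\chi_{A_n}$. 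The Lemma is then the special case where $\psi$ is closed.

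First I would dispatch the atomic cases. For the Individual atoms, $(-)^\circ$ is essentially the identity, and since $\SI t,\SI u$ together with $\eq,\lt$ agree across the two (identical) Individual domains, both sides coincide. For $X(t)$ the translation yields $F_X(t) \Eq 1$, and by the standard semantics $\Std \models (F_X(t)\Eq 1)[\chi_A/F_X]$ holds iff $\chi_A(\SI t) = 1$, that is iff $\SI t \in A$, which is exactly $\Std \models X(t)[\vec A/\vec X]$. The Boolean cases $\lnot$ and $\lor$ are immediate from the induction hypothesis since $(-)^\circ$ commutes with them, and, because $((\exists x)\psi)^\circ = (\exists x)\psi^\circ$ and the Individual domains coincide, the first-order quantifier case follows directly as well.

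The one genuinely load-bearing case is the monadic quantifier, and it is precisely here that both surjectivity and injectivity of $\chi$ are used. Since $((\exists X)\psi)^\circ = (\exists F_X:\two)\psi^\circ$ and $\SI\two = \two$, we have $\Std \models (\exists X)\psi[\ldots]$ iff $\Std \models \psi[A/X,\ldots]$ for some $A \in \Po(\univ)$, while $\Std \models (\exists F_X:\two)\psi^\circ[\ldots]$ iff $\Std \models \psi^\circ[\hat F/F_X,\ldots]$ for some $\hat F \in \two^\univ$. The induction hypothesis converts a set witness $A$ into the Function witness $\chi_A$, and conversely $\chi^{-1}$ converts any Function witness $\hat F$ into the set witness $\chi^{-1}(\hat F)$; bijectivity of $\chi$ guarantees that the two existentials range over matching witness ranges, which closes the case.

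I do not expect a real obstacle here: the argument is a routine semantic induction. The only point requiring care is to set up and carry the stronger open-formula statement (with the valuation correspondence mediated by $\chi$) throughout the induction, rather than attempting to prove the closed-formula equivalence of the Lemma directly, which would leave the inductive hypothesis too weak at the quantifier steps.
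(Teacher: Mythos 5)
Your proof is correct and follows essentially the same route as the paper, which establishes the lemma ``by a direct induction on formulae relying on the bijection $\Po(\univ) \iso \two^{\univ}$''. Your write-up simply makes explicit the standard strengthening to open formulae with valuations that this induction implicitly requires.
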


\noindent
The main result on $(-)^\circ$ is the following.
\full{Its proof is deferred to Appendix~\ref{sec:app:cons:mso}.}%
\short{Its proof is deferred to~\cite[App.\@ A]{dr20full}.}

\begin{thm}
\label{thm:cons:mso:fso:mso}
Given a closed $\MSO$-formula $\varphi$,
\[
\FSO \thesis \varphi^\circ
\qquad\text{if and only if}\qquad
\MSO \thesis \varphi
\]
\end{thm}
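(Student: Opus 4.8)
The plan is to reduce one direction to Theorem~\ref{thm:cons:fso:mso}, to handle the other by a direct translation of derivations, and to glue the two together with a \emph{round-trip} lemma asserting that $\MSO \thesis \MI{\varphi^\circ} \longliff \varphi$ for every closed $\MSO$-formula $\varphi$.

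For the direction $\MSO \thesis \varphi \implies \FSO \thesis \varphi^\circ$, I would argue by induction on the $\MSO$-derivation, establishing the slightly more general statement that a derivation of $\Phi \thesis \varphi$ with free monadic variables $\vec Y$ yields $\{(F_Y : \two)\}_{Y \in \vec Y},\, \Phi^\circ \thesis \varphi^\circ$ in $\FSO$ (so that, for closed $\varphi$, one recovers $\FSO \thesis \varphi^\circ$). Since $(-)^\circ$ is a homomorphism for the propositional connectives and the individual quantifiers and sends $(\exists X)$ to $(\exists F_X : \two)$, the rules of Figures~\ref{fig:ded:prop}--\ref{fig:ded:pred} translate rule-by-rule into those of Figures~\ref{fig:ded:prop} and~\ref{fig:ded:bfso}, the side conditions $(G : \two)$ and $(F_X : \two)$ of the Function-quantifier rules being exactly the carried-along codomain hypotheses. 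It then remains to check that each $\MSO$-axiom has a provable image. The Tree and Equality axioms involve only Individuals and are literally the corresponding axioms of $\FSO$. The image of Comprehension is obtained from \emph{$\HF$-Bounded Choice for Functions} (\S\ref{sec:ax:choice}): one proves $(\forall x)(\exists k \In \two)\big[(k \Eq 1 \land \varphi^\circ) \lor (k \Eq 0 \land \lnot\varphi^\circ)\big]$ by excluded middle, extracts a Function $F : \two$, and reads off $(\forall y)(F(y) \Eq 1 \longliff \varphi^\circ)$. The image of the Induction Axiom follows from the Induction Scheme of $\FSO$ (\S\ref{sec:ax:ind}) instantiated at $\varphi(x) \deq (F_X(x) \Eq 1)$, generalised over $F_X : \two$.

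For the direction $\FSO \thesis \varphi^\circ \implies \MSO \thesis \varphi$, I would apply Theorem~\ref{thm:cons:fso:mso} to the closed $\FSO$-formula $\varphi^\circ$, obtaining $\MSO \thesis \MI{\varphi^\circ}$, and then invoke the round-trip lemma to conclude $\MSO \thesis \varphi$. The round-trip lemma is proved by induction on $\varphi$, generalised to formulas with free monadic variables through the extended-atom machinery of \S\ref{sec:cons:fso:mso}: for each free monadic $X$ one pre-substitutes the extended atom over a fresh pair $X_1, X_2$ enumerating $\two = \{0,1\}$ and subject to $\Part_2(X_1, X_2)$, so that $F_X(t) \Eq 1$ translates to $X_2(t)$, and one identifies $X$ with the ``$1$-part'' $X_2$. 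With this identification every inductive clause is immediate except the monadic existential, where $\MI{(\exists F_X : \two)\psi^\circ}$ is $(\exists X_1)(\exists X_2)\big[\Part_2(X_1, X_2) \land \MI{\psi^\circ[\cdots]}\big]$; since the partition constraint forces $X_1$ to be the complement of $X_2$, Comprehension in $\MSO$ lets one freely add or delete $X_1$, making this $\MSO$-equivalent to $(\exists X_2)\MI{\psi^\circ[\cdots]}$, which is $(\exists X)\psi$ by the induction hypothesis.

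I expect the round-trip lemma to be the main obstacle, precisely because of the mismatch between the \emph{uniform} treatment of Function variables in $\FSO$ and the \emph{non-uniform} partition encoding performed by $\MI{-}$. The delicate points are to thread the extended-atom substitutions and the partition constraints $\Part_2$ coherently through the recursion over \emph{open} formulas, and to verify that the identification of each monadic variable with the index of $1$ in the fixed enumeration of $\two$ is respected uniformly. The appeal to Comprehension to reinstate the discarded complement component is exactly what makes the equivalence \emph{provable} rather than merely true in $\Std$ (recall that completeness of $\MSO$, which the present development is meant to establish, is not yet available at this point, so no semantic shortcut via Lemma~\ref{lem:cons:fso:mso:std} is admissible).
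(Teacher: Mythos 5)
Your proposal is correct and follows essentially the same route as the paper: the forward direction by induction on $\MSO$-derivations with the codomain hypotheses $\vec{F_Y}:\vec\two$ carried along (your unpacking of Comprehension via excluded middle and $\HF$-Bounded Choice is exactly the paper's proof of Theorem~\ref{thm:funto:ca}, which it cites), and the converse by composing Theorem~\ref{thm:cons:fso:mso} with a round-trip lemma $\MSO \thesis \MI{\varphi^\circ} \liff \varphi$, which is precisely the paper's Lemma on $\varphi^\bullet_{\vec Y}$, proved by induction with the extended-atom substitution, the $\Part_2$ constraint, and $\MSO$-Comprehension to reinstate or discard the complement component in the monadic-existential case.
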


\noindent
Theorem~\ref{thm:cons:mso:fso:mso} can actually be extended to $\FSO$ formulae.
This is essentially the content of the following result.

\begin{prop}
\label{prop:cons:mso:cons}
For a closed $\FSO$-formula $\varphi$, we have the following.
\begin{gather}
\label{eq:cons:mso:cons:trans}
\FSO \thesis\quad \varphi ~~\longliff~~ \MI{\varphi}^\circ
\\
\label{eq:cons:mso:cons:fso:mso}
\FSO \thesis \varphi \quad\text{iff}\quad \MSO \thesis \MI{\varphi}
\\
\label{prop:cons:mso:cons:fso:mso:std}
\Std \models \varphi \quad\text{iff}\quad \Std \models \MI{\varphi}
\end{gather}
\end{prop}

\begin{rem}
\label{rem:cons:compl}
Theorem~\ref{thm:cons:mso:fso:mso}
and Proposition~\ref{prop:cons:mso:cons} will be used
in our completeness argument (\S\ref{sec:compl}) in two different ways:
\begin{enumerate}
\item We first obtain completeness of $\FSO$
(augmented with the Axiom $(\PosDet)$ of~\S\ref{sec:posdet})
for $\MSO$ formulae via a usual translation of formulae to automata.
From this result, completeness of $\FSO + (\PosDet)$
follows by Proposition~\ref{prop:cons:mso:cons},
while completeness of $\MSO$ 
(augmented with $\MI{-}$-translations of suitable instances of $(\PosDet)$)
follows by Theorem~\ref{thm:cons:mso:fso:mso}.

\item
In addition, 
we will use Proposition~\ref{prop:cons:mso:cons} in~\S\ref{sec:msow}
in order to import the $\MSO$-theory of $\NN$ for the infinite paths of $\Std$.
We rely on this for 
the version of the Büchi-Landweber Theorem
(namely that $\FSO$ decides parity games on finite graphs)
used in the completeness argument of~\S\ref{sec:compl},
as well as for the Simulation Theorem in~\S\ref{sec:sim}.
\end{enumerate}
\end{rem}


\subsection{Notations}
\label{sec:not}
We now introduce some notation that we will use throughout our formalization
of games and automata in $\FSO$.

\subsubsection{$\FSO$ with Extended $\HF$-Terms}
First, recall that the syntax of $\FSO$ formally disallows Functions in $\HF$-terms.
We propose here a notation system that allows them in some circumstances.
For instance, assuming $(F:K)$,
we can use the notation
\[
(F(t) \In L)
~~\deq~~
(\exists k \In K) \big( F(t) \Eq k ~~\land~~ k \In L \big)
\]

\noindent
More generally, consider an atomic formula
\begin{equation*}
\tag{for $\mathord{\GC} \in \{\Eq,\In,\Sle\}$}
M \GC N
\end{equation*}
with $M$ and $N$ terms on the following grammar:
\begin{equation}
\label{eq:not:exthf}
M,N \quad\bnf\quad
    k 
\gs \const\kappa
\gs F(t)
\gs \const g_{n,m}(L_1,\dots,L_n)
\end{equation}

\noindent
Such formulae can be interpreted in $\FSO$,
provided one assumes bounds for the Function variables occurring in them.
Let $M$ and $N$ be as above, and assume their free Function variables
to be among $\vec F = F_1,\dots,F_n$.
Note that there are (proper) $\HF$-terms $M'$ and $N'$ such that
\[
M = M'[\vec{F(t)} / \vec \ell]
\qquad\text{and}\qquad
N = N'[\vec{F(t)} / \vec \ell]
\]
for some $\HF$-variables $\vec \ell = \ell_1,\dots,\ell_c$
and where $\vec{F(t)} = F_{i_1}(t_1),\dots,F_{i_c}(t_c)$.
Given proper $\HF$-terms $K_1,\dots,K_n$,
assuming $\vec{F}: \vec{K}$, one can let
\[
M \GC N
~~\deq~~
\big( \exists \vec \ell \in \vec L \big)
\left(
\vec{F(t)} \Eq \vec \ell
~~\land~~
M' \GC N'
\right)
\]

\noindent
where $\vec L = L_1,\dots,L_c$ is such that $L_j = K_i$
iff the $j$th element of $\vec{F(t)}$ is $F_i(t_j)$.
Note however that the above defined formula $M \GC N$
actually depends on the choice of $\vec K$, so we rather write it as:
\[
(M \GC N)_{\vec F,\vec K}
\]

\noindent
Generalizing further we can, with the above method, interpret in $\FSO$
formulae build with $\HF$-terms in the sense of~\eqref{eq:not:exthf}.
The interpretation in $\FSO$ of such a formula $\varphi$ with free Function variables among $\vec F = F_1,\dots,F_n$
is defined by induction, and depends on a choice of proper 
$\HF$-terms $\vec K = K_1,\dots,K_n$.
Using notation as above, we arrive at the following definition:
\[
\begin{array}{r !{~~\deq~~} l @{~~} l}
  (t \GC u)_{\vec F,\vec K}
& (t \GC u)
& \text{(for $\mathord{\GC} \in \{\Eq,\In\}$)}
\\

  (M \GC N)_{\vec F,\vec K}
& \left\{
\begin{array}{l l}
  G(u) \Eq N
& \text{if $(M \GC N) = (G(u) \Eq N)$}
\\
& \text{with $N$ a proper $\HF$-term}
\\
\multicolumn{2}{l}{
  (\exists \vec \ell \In \vec L)
  \left(\vec{F(t)} \Eq \vec \ell ~~\land~~ M' \GC N'\right)}
\\
& \text{otherwise}
\end{array}
\right.
& \text{(for $\mathord{\GC} \in \{\Eq,\In,\Sle\}$)}
\\

  (\lnot\varphi)_{\vec F,\vec K}
& \lnot (\varphi_{\vec F,\vec K})
\\
  (\varphi \lor \psi)_{\vec F,\vec K}
& \varphi_{\vec F,\vec K} \lor \psi_{\vec F,\vec K}
\\

  ((\exists x)\varphi)_{\vec F,\vec K}
& (\exists x) \varphi_{\vec F,\vec K}
\\

  ((\exists m \GC M)\varphi)_{\vec F,\vec K}
& 
  (\exists \vec \ell \In \vec L)
  (\exists m \GC M')
    \left(\vec{F(t)} \Eq \vec \ell ~~\land~~
     \varphi_{\vec F,\vec K} \right)
& \text{(for $\mathord{\GC} \in \{\In,\Sle\}$)}
\\

  ((\exists G : M)\varphi)_{\vec F,\vec K}
& \multicolumn{2}{l}{
  (\exists \vec \ell \In \vec L)
  (\exists G : M') 
  \left( \vec{F(t)} \Eq \vec \ell ~~\land~~ \varphi_{G\vec F,M'\vec K} \right)}
\end{array}
\]

\noindent
Beware that $(\varphi)_{\vec F,\vec K}$ only makes sense under the assumptions
$\vec{F}: \vec{K}$.
Keeping this in mind we may obtain, for instance, the following formulations
of the Functional Choice Axioms of~\S\ref{sec:ax:choice}.
\begin{itemize}

\item \emph{$\HF$-Bounded Choice for Functions}.
\[
  (\forall x) (\exists k \In K) \varphi(x,k)
  ~~\longlimp~~
  (\exists F:K) (\forall x)
  \varphi(x,F(x))
\]

\item
\emph{Iterated $\HF$-Bounded Choice}.
\[
(\forall k \In K) (\exists F:L) \varphi(k,F)
~~\longlimp~~
\left(\exists F : L^K\right) (\forall k \In K) \varphi(k , F(-,k))
\]
\end{itemize}

\subsubsection{Notations for Products and Functions}
\label{sec:funto}
We now introduce notation for a form of \emph{product type}, based
on the function spaces and application functions
of~\S\ref{sec:ax:hf}.\ref{item:ax:hf:fun}.
The main idea is to be able to manipulate a Function variable
\[
F ~~:~~ K^L
\]
as a function
\[
F ~~:~~ \univ \times L ~~\longto~~ K
\]

\noindent
Furthermore, it is convenient to allow such $F$ to have a domain
defined by an $\FSO$ formula $\psi(-)$, and to write
\[
F ~~:~~ \psi(-) \times L ~~\longto~~ K
\qquad\text{for}\qquad
(\forall x)\big(\psi(x) ~~\longlimp~~ F(x) \In K^L \big)
\]

\noindent
We develop here a notation system for such ``function'' and ``product types''.
In~\S\ref{sec:funto:choice}, we discuss formulations 
of the Functional Choice Axioms of~\S\ref{sec:ax:choice}
induced by this notation.
In order not to overload the arrow symbol $\longto$
(which will be used with games later on),
we will write typing declarations as
\[
\funto{F}{\univ \times L}{K}
\qquad\text{instead of}\qquad
F: \univ \times L \longto K
\]

\begin{nota}[Product Types]
\emph{Product types} are given by the following grammar,
where $\psi(-)$ is an $\FSO$ formula of an individual variable
(with possibly other free variables of any sort),
and where $K$ is an $\HF$-term.
\[
\Pi \quad\bnf\quad
  \psi(-) \gs
  K \gs
  \Pi \times K
\]

\noindent
The \emph{arity} of a product type $\Pi$ is:
\begin{itemize}
\item $(1,n)$ if $\Pi$ is of the form
$\psi(-) \times K_1 \times \dots \times K_n$,

\item $(0,n)$ if $\Pi$ is of the form
$K_1 \times \dots \times K_n$.
\end{itemize}


\noindent
Product types are to be used with the following defined formulae.
\[
\begin{array}{r !{~~\deq~~} l}
  (t,\vec K) \PEq (u,\vec L)
& t \Eq u ~~\land~~ \vec K \Eq \vec L
\\
  (t,\vec K) \PIn \psi(-) \times \vec L
& \psi(t) ~~\land~~ \vec K \In \vec L
\\[1em]
  \funto{f}{K}{L}
& f \In L^K
\\
  \funto{F}{\psi(-)}{L}
& (\forall x)(\psi(x) ~~\limp~~ F(x) \In L)
\\
  \funto{\ptfun{f}}{\Pi \times K}{L}
& \funto{\ptfun{f}}{\Pi}{L^K}
\end{array}
\]

\noindent
Here $\ptfun{f}$ stands for a Function variable $F$ if the arity of $\Pi$
is of the form $(1,n)$,
and for an $\HF$-variable $f$ if the arity of $\Pi$ is of the form $(0,n)$.
Moreover, 
for $\Pi = \psi(-) \times K_1 \times \dots \times K_n$,
we let
\[
(\exists \funto{F}{\Pi}{L})\varphi
~~\deq~~
\left(\exists F : L^{K_n \times \dots \times K_1}\right)
  \big[\funto{F}{\Pi}{L} ~~\land~~ \varphi \big]
\]
\end{nota}

\begin{rems}
\label{rem:funto}
\hfill
\begin{enumerate}
\item
\label{rem:funto:tot}
Thanks to Rem.~\ref{rem:ax:hf:well-order-hf},
using the Axioms of $\HF$-Bounded Choice
(\S\ref{sec:ax:choice}),
we have
\[
\big( \funto{\ptfun{f}}{\Pi}{L} ~~\land~~ \varphi \big)
~~\longlimp~~
(\exists \funto{\ptfun{f}}{\Pi}{L}) \varphi
\]

\item
\label{eq:funto:prod}
Using Convention~\ref{conv:ax:hf:fun}, 
for each product type $\Pi$ we have
\[
  (\funto{\ptfun{f}}{\Pi \times K_1 \times \dots \times K_n}{K})
\quad\longliff\quad
  \left(
  \funto{\ptfun{f}}{\Pi} {K^{K_n \times \dots \times K_1}}
  \right)
\]
It follows that
for each 
product type $\Pi$
and each formula $\varphi$
we have
\[
(\exists\funto{\ptfun{f}}{\Pi \times K_1 \times \dots \times K_n}{K})\varphi
\quad\longliff\quad
\left(\exists\funto{\ptfun{f}}{\Pi}{K^{K_n \times \dots \times K_1}}\right)\varphi
\]
\end{enumerate}
\end{rems}

\begin{nota}
In the following, given a product type $\Pi$,
we use the notation $\tilde t : \Pi$,
where $\tilde t$ stands for a tuple of the form $(t,K_1,\dots,K_n)$
if $\Pi$ has arity $(1,n)$,
or of the form $(K_1,\dots,K_n)$ if $\Pi$ has arity $(0,n)$.
When $\Pi$ is clear from the context, we write $\tilde t$
instead of $\tilde t : \Pi$, and furthermore 
we may omit the overset tilde, writing $t$ instead of $\tilde t$.

Write $\tilde x$ for tuples of variables of the form
$(x,k_1,\dots,k_n)$ or of the form $(k_1,\dots,k_n)$.
\begin{enumerate}
%

\item
If $\Pi = \psi(-) \times K_1 \times \dots \times K_n$
and $\tilde t = (t,L_1,\dots,L_n)$,
we write $F^{\Pi \to K}(\tilde t)$ for the $\HF$-term
\[
@_{K_1 \times \dots \times K_n,K}(F(t),(L_1,\dots,L_n))
\]
If $\Pi = K_1 \times \dots \times K_n$
and $\tilde t = (L_1,\dots,L_n)$,
we write $f^{\Pi \to K}(\tilde t)$ for the $\HF$-term
\[
@_{K_1 \times \dots \times K_n,K}(f,(L_1,\dots,L_n))
\]

\noindent
When $\Pi$ and $K$ are
clear from the context, in either case above we write
$\ptfun f(\tilde t)$ 
for
$\ptfun f^{\Pi \to K}(\tilde t)$.

\item
We furthermore write $\tilde t \In \ptfun f$ or even $\ptfun f(\tilde t)$
for the formula
\[
\ptfun f(\tilde t) \Eq 1
\]

\item
We extend product types as follows
\[
\Pi \quad\bnf\quad \ldots \gs \univ \gs X
\]
where $X$ is a Function variable. We let
\[
\begin{array}{r !{~~\deq~~} l}
  \funto{F}{X}{L}
& (\forall x)(X(x) ~~\limp~~ F(x) \In L)
\\
  \funto{F}{\univ}{L}
& 
  (\forall x)(F(x) \In L)
\\[1em]

  (t,\vec K) \PIn \univ \times \vec L
& \True ~~\land~~ \vec K \In \vec L
\\
  (t,\vec K) \PIn X \times \vec L
& X(t) \Eq 1 ~~\land~~ \vec K \In \vec L
 
\end{array}
\]

\noindent
Note that
\[
\begin{array}{r !{\quad\longliff\quad} l}
  \funto{F}{(\univ \times K_1 \times \dots \times K_n)}{L}
& \funto{F}{(\True \times K_1 \times \dots \times K_n)}{L}
\\
  (\exists \funto{F}{\univ}{L})\varphi
& (\exists F : L)\varphi
\end{array}
\]
%
\end{enumerate}
\end{nota}

\subsubsection{Choice and Comprehension}
\label{sec:funto:choice}
We list here some important straightforward consequences 
of the Functional Choice Axioms of~\S\ref{sec:ax:choice} pertaining to Product Types.

\begin{thm}
\label{thm:funto:choice}
$\FSOD$ proves the following generalizations of the 
Functional Choice Axioms of~\S\ref{sec:ax:choice}:
\begin{itemize}
\item \emph{$\HF$-Bounded Choice}.
\[
  (\forall \tilde x \In \Pi) (\exists k \In L) \varphi(\tilde x,k)
  ~~\longlimp~~
  (\exists \funto{\ptfun f}{\Pi}{L}) (\forall \tilde x \In \Pi)
  \varphi(\tilde x, \ptfun f(\tilde x))
\]

\item
\emph{Iterated $\HF$-Bounded Choice}.
\[
(\forall k \In K) (\exists \funto{\ptfun f}{\Pi}{L}) \varphi(k,\ptfun f)
~~\longlimp~~
\left( \exists \funto{\ptfun f}{\Pi}{L^K} \right)
(\forall k \In K) \varphi(k , \ptfun f(-,k))
\]
\end{itemize}
\end{thm}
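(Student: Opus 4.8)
The plan is to derive both schemes from the three Functional Choice Axioms of \S\ref{sec:ax:choice} by unfolding the product-type notation and then splitting on the arity of $\Pi$. Write $P\deq K_1\times\dots\times K_n$ for the $\HF$-product of the factors of $\Pi$. Recall that, for $\Pi = K_1\times\dots\times K_n$, the binder $\ptfun{f}$ of $\funto{\ptfun{f}}{\Pi}{L}$ is an $\HF$-variable in $L^{K_n\times\dots\times K_1}$ (identified with $L^{P}$ via Convention~\ref{conv:ax:hf:fun}), whereas for $\Pi = \psi(-)\times K_1\times\dots\times K_n$ it is a Function variable $F$ with $F(x)\In L^{K_n\times\dots\times K_1}$ whenever $\psi(x)$. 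The first move in each case is to collapse the tuple $\tilde x\In\Pi$, using the product and projection $\HF$-Functions of \S\ref{sec:ax:hf}, into a single $\HF$-bounded quantifier $m\In P$ and, in the arity-$(1,n)$ case, one $\psi$-guarded individual quantifier, while rewriting $\ptfun{f}(\tilde x)$ through the application $\HF$-Functions. Throughout I would reason modulo the currying isomorphism $(k^\ell)^m\iso k^{\ell\times m}$ of Convention~\ref{conv:ax:hf:fun} and Remark~\ref{rem:funto}(\ref{rem:funto:prod}), which is precisely what allows nested function spaces to be identified with a single one and nested applications with partial applications.

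For $\HF$-Bounded Choice the arity dictates the argument. When $\Pi$ has arity $(0,n)$ there are no individuals and $\tilde x\In\Pi$ amounts to a single $m\In P$, so the scheme is a direct instance of the basic $\HF$-Bounded Choice for $\HF$-Sets with domain $P$ and codomain $L$. When $\Pi$ has arity $(1,n)$ I would argue in two nested steps. First, for a fixed individual $x$ with $\psi(x)$, the basic $\HF$-Bounded Choice for $\HF$-Sets (domain $P$, codomain $L$) applied to $(\forall m\In P)(\exists k\In L)\varphi(\tilde x,k)$ yields some $f\In L^{P}$ with $(\forall m\In P)\varphi(\tilde x,f(m))$; generalizing over $x$ gives $(\forall x)\big(\psi(x)\longlimp(\exists f\In L^{P})(\forall m\In P)\varphi(\tilde x,f(m))\big)$. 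Second, feeding this into the basic $\HF$-Bounded Choice for Functions with codomain $L^{P}$ produces a single Function $F$ uniform in $x$; after currying, $F$ is the required $\funto{\ptfun{f}}{\Pi}{L}$ and $F(x)(m)$ is $\ptfun{f}(\tilde x)$.

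For Iterated $\HF$-Bounded Choice the same case split selects the basic axiom, but no inner step is needed. Set $M\deq L^{K_n\times\dots\times K_1}$, identified with $L^{P}$. In arity $(0,n)$, $(\exists\funto{\ptfun{f}}{\Pi}{L})\varphi(k,\ptfun{f})$ is simply $(\exists f\In M)\varphi(k,f)$, so the premise reads $(\forall k\In K)(\exists f\In M)\varphi(k,f)$ and the basic $\HF$-Bounded Choice for $\HF$-Sets (domain $K$, codomain $M$) returns some $g\In M^{K}$ with $(\forall k\In K)\varphi(k,g(k))$; currying turns $g$ into the required $\funto{\ptfun{f}}{\Pi}{L^K}$ and $g(k)$ into $\ptfun{f}(-,k)$. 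In arity $(1,n)$ the inner witness is a Function, so $(\exists\funto{\ptfun{f}}{\Pi}{L})\varphi(k,\ptfun{f})$ unfolds to $(\exists F:M)\big(\funto{F}{\Pi}{L}\land\varphi(k,F)\big)$ and the premise matches the hypothesis of the basic Iterated $\HF$-Bounded Choice with $L$ replaced by $M$; its conclusion, of the form $(\exists G:M^{K})(\forall k\In K)\big(\funto{F}{\Pi}{L}\land\varphi(k,F)\big)[G(k)\sslash F]$, read through the currying isomorphism, is exactly the desired $\funto{\ptfun{f}}{\Pi}{L^K}$ with $\ptfun{f}(-,k)$ realized by the substitution $[G(k)\sslash F]$.

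The genuinely fiddly part will be this last identification: checking that the special substitution clause for $(F(t)\Eq M)[G(k)\sslash F]$ from \S\ref{sec:ax:choice} composes correctly with the application $\HF$-Functions and with the currying and product-reassociation isomorphisms of Convention~\ref{conv:ax:hf:fun}, so that the object delivered by the basic axiom genuinely denotes $\ptfun{f}(-,k)$. The other delicate point, specific to the arity-$(1,n)$ case of the first scheme, is removing the $\psi$-guard before invoking $\HF$-Bounded Choice for Functions, whose premise must hold for \emph{every} individual: I would weaken $\varphi$ to $\psi(x)\longlimp\varphi$ and supply an arbitrary value off the domain of $\psi$, relying on the well-orderability of $\HF$-sets (Remark~\ref{rem:ax:hf:well-order-hf}) and Remark~\ref{rem:funto}(\ref{rem:funto:tot}) to repackage the resulting total Function as the bounded existential $(\exists\funto{\ptfun{f}}{\Pi}{L})$. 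Degenerate cases where the function space $L^{P}$ is empty make the premise either vacuous or unsatisfiable, and are dispatched separately.
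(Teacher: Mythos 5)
Your proposal is correct and follows essentially the same route as the paper's proof: a case split on the arity of $\Pi$, unfolding the product-type notation via Convention~\ref{conv:ax:hf:fun} and Remark~\ref{rem:funto}, deriving the arity-$(0,n)$ cases from the choice axioms for $\HF$-Sets, combining per-individual $\HF$-Set choice with $\HF$-Bounded Choice for Functions (resp.\ Iterated $\HF$-Bounded Choice for Functions) in the arity-$(1,n)$ cases, and eliminating the $\psi(-)$ guard by relativizing $\varphi$. The only cosmetic difference is that you collapse the tuple of $\HF$-variables into a single quantifier over the product $P$ and invoke the basic axiom once, whereas the paper peels the variables off one at a time using the already-established iterated scheme; both variants are routine, and both face the same (glossed-over) non-emptiness side condition when discharging the guard.
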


\begin{fullproof}
\hfill
\begin{itemize}
\item \emph{Iterated $\HF$-Bounded Choice for $\Pi$ of arity $(0,n)$}.\\
Let $\Pi = K_1 \times \dots \times K_n$.
We have to prove
\[
  (\forall k \In K) (\exists \funto{f}{\Pi}{L})
  \varphi(k,f)
  ~~\longlimp~~
  \left( \exists \funto{f}{\Pi}{L^K} \right) (\forall k \In K)
  \varphi(k,f((-),k))
\]
which by
Remark~\ref{rem:funto}.\eqref{eq:funto:prod}
amounts to
\[
  (\forall k \In K) \left( \exists f \In L^{K_n \times \dots \times K_1} \right)
  \varphi(k,f)
  ~~\longlimp~~
  \left( \exists f \In L^{K \times K_n \times \dots \times K_1} \right)
  (\forall k \In K)
  \varphi(k,f((-),k))
\]

\noindent
But by the axioms on $\HF$-Sets this is equivalent to
\[
  (\forall k \In K) \left(\exists f \In L^{K_n \times \dots \times K_1} \right)
  \varphi(k,f)
  ~~\longlimp~~
  \left( \exists f \In L^{K_n \times \dots \times K_1 \times K} \right(
  (\forall k \In K)
  \varphi(k,f(k,(-)))
\]
which follows from the axiom of $\HF$-Bounded Choice for $\HF$-Sets.

\item \emph{$\HF$-Bounded Choice for $\Pi$ of arity $(0,n)$}.\\
Let $\Pi = K_1 \times \dots \times K_n$.
We have to prove
\[
  (\forall \vec k \In \vec K) (\exists k \In L) \varphi(\vec k,k)
  ~~\longlimp~~
  (\exists \funto{f}{\Pi}{L}) (\forall \vec k \In \vec K)
  \varphi(\vec k,f(\vec k))
\]
which by
Remark~\ref{rem:funto}.\eqref{eq:funto:prod}
amounts to
\[
  (\forall \vec k \In \vec K) (\exists k \In L) \varphi(\vec k,k)
  ~~\longlimp~~
  \left( \exists f \In K^{K_n \times \dots \times K_1} \right)
  (\forall \vec k \In \vec K)
  \varphi(\vec k,f(\vec k))
\]
We can then conclude by Iterated $\HF$-Bounded Choice for $\HF$-Sets
(\ie\@ for $\Pi$ of arity $(0,n)$).

\item \emph{$\HF$-Bounded Choice for $\Pi$ of the form
  $\Pi = \univ \times K_1 \times \dots \times K_n$}.\\
Using Remark~\ref{rem:funto}.\eqref{eq:funto:prod},
we have to prove
\[
  (\forall x)(\forall \vec k \In \vec K)( \exists k \In L) \varphi(\vec k,k)
  ~~\longlimp~~
  \left(\exists F : L^{K_n \times \dots \times K_1}\right)
  (\forall x) (\forall \vec k \In \vec K)
  \varphi(x,\vec k,F(x,\vec k))
\]

\noindent
But by the axiom of Iterated $\HF$-Bounded Choice for $\HF$-Sets, we have
\[
  (\forall x) (\forall \vec k \In \vec K) (\exists k \In L) \varphi(\vec k,k)
  ~~\longlimp~~
  (\forall x)
  \left( \exists f \In L^{K_n \times \dots \times K_1} \right)
  (\forall \vec k \In \vec K)
  \varphi(x,\vec k,f(\vec k))
\]
and we conclude by the axiom of $\HF$-Bounded Choice for Functions.

\item \emph{Iterated $\HF$-Bounded Choice for $\Pi$ of the form
  $\Pi = \univ \times K_1 \times \dots \times K_n$}.\\
Using Remark~\ref{rem:funto}.\eqref{eq:funto:prod},
we have to prove
\[
(\forall k \In K) \left( \exists F : L^{K_n \times \dots \times K_1} \right) \varphi(k,F)
~~\longlimp~~
\left( \exists F : L^{K_n \times \dots \times K_1 \times K} \right)
(\forall k \In K) \varphi(k , F(-,k))
\]
and we conclude by Iterated $\HF$-Choice for Functions.

\item \emph{$\HF$-Bounded Choice for $\Pi$ of the form
  $\Pi = \psi(-) \times K_1 \times \dots \times K_n$}.\\
Let $\Pi_0 \deq \univ \times K_1 \times \dots \times K_n$.
We have to show
\[
  (\forall \tilde x \In \Pi) (\exists k \In L) \varphi(\tilde x,k)
  \quad\longlimp\quad
  (\exists \funto{F}{\Pi}{L}) (\forall \tilde x \In \Pi)
  \varphi(\tilde x,F(\tilde x))
\]

\noindent
but this follows from
\begin{multline*}
  (\forall \tilde x \In \Pi_0) (\exists k \In L)
  \big(
    \tilde x \In \Pi
    ~\limp~
    \varphi(\tilde x,k)
  \big)
  \quad\longlimp
\\
  (\exists \funto{F}{\Pi_0}{L}) (\forall \tilde x \In \Pi_0)
  \big(
    \tilde x \In \Pi
    ~\limp~
    \varphi(\tilde x,F(\tilde x))
  \big)
\end{multline*}

\item \emph{$\HF$-Bounded Choice for $\Pi$ of the form
  $\Pi = \psi(-) \times K_1 \times \dots \times K_n$}.\\
Let $\Pi_0 \deq \univ \times K_1 \times \dots \times K_n$.
We have to show
\[
  (\forall k \In K) (\exists \funto{F}{\Pi}{L}) \varphi(k,F)
  ~~\longlimp~~
  \left(\exists \funto{F}{\Pi}{L^K}\right) \left(\forall k \In K\right)
  \varphi(k,F(-,k))
\]

\noindent
but this follows from
\begin{multline*}
  (\forall k \In K) (\exists \funto{F}{\Pi_0}{L})
  \big(
    \funto{F}{\Pi}{L}
    ~~\land~~
    \varphi(k,F)
  \big)
  \quad\longlimp
\\
  \left(\exists \funto{F}{\Pi_0}{L^K}\right) (\forall k \In K)
  \big(
      \funto{F(-,k)}{\Pi}{L} 
  ~~\land~~
  \varphi(k,F(-,k))
  \big)
\end{multline*}
\end{itemize}
\end{fullproof}

\begin{thm}[Comprehension for Product Types]
\label{thm:funto:ca}
$\FSOD$
proves the following form of Comprehension, 
where $V$ does not occur free in $\varphi$:
\[
  (\exists \funto{V}{\Pi}{\two})
  (\forall \tilde x \In \Pi)
  \big( V(\tilde x) ~~\longliff~~ \varphi(\tilde x) \big)
\]
\end{thm}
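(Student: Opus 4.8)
The plan is to derive the statement as a single instance of the $\HF$-Bounded Choice axiom for product types established in Theorem~\ref{thm:funto:choice}, taking the codomain to be $\two$ and letting the boolean output of the chosen function record the truth value of $\varphi$. Concretely, I would introduce the auxiliary formula
\[
\chi(\tilde x, k) \quad\deq\quad \big( k \Eq 1 ~~\longliff~~ \varphi(\tilde x) \big)
\]
in which the variable $V$ does not occur (precisely because $V$ is not free in $\varphi$), and then apply $\HF$-Bounded Choice to $\chi$ with $L \deq \two$.

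First I would discharge the premise of choice, namely
\[
(\forall \tilde x \In \Pi)(\exists k \In \two)\,\chi(\tilde x, k).
\]
Fixing $\tilde x \In \Pi$, I argue by the excluded middle $\varphi(\tilde x) \lor \lnot \varphi(\tilde x)$: in the first case the witness $k \deq 1$ satisfies $\chi(\tilde x, 1)$, and in the second case $k \deq 0$ satisfies $\chi(\tilde x, 0)$. Both $0 \In \two$ and $1 \In \two$ are provable in $\FSOD$ by Remark~\ref{rem:ax:hf:vomega}, since they are closed $\HF$-formulae holding in $V_\omega$. This argument is uniform in the arity of $\Pi$, as the choice axiom of Theorem~\ref{thm:funto:choice} is stated for arbitrary product types (with $\ptfun f$ standing for a Function variable when $\Pi$ has arity $(1,n)$ and for an $\HF$-variable when it has arity $(0,n)$).

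Applying $\HF$-Bounded Choice would then yield
\[
(\exists \funto{V}{\Pi}{\two})(\forall \tilde x \In \Pi)\,\chi(\tilde x, V(\tilde x)),
\]
that is, $(\exists \funto{V}{\Pi}{\two})(\forall \tilde x \In \Pi)\big(V(\tilde x) \Eq 1 \longliff \varphi(\tilde x)\big)$. Unwinding the notational convention under which the formula $V(\tilde x)$ abbreviates $V(\tilde x) \Eq 1$, this is exactly the desired conclusion. I expect essentially no obstacle beyond bookkeeping of the product-type notation and checking the freshness side-condition: the mathematical content is entirely carried by the choice axiom, and the hypothesis that $V$ not occur free in $\varphi$ is exactly what legitimises the application of choice, since it guarantees that $V$ does not occur in $\chi$.
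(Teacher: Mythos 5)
Your proof is correct and follows essentially the same route as the paper's: the paper likewise establishes the premise $(\forall \tilde x \In \Pi)(\exists k \In \two)\big(k \Eq 1 \longliff \varphi(\tilde x)\big)$ by excluded middle and then concludes by the $\HF$-Bounded Choice axiom of Theorem~\ref{thm:funto:choice}. Your additional bookkeeping (the membership of $0,1$ in $\two$ and the freshness of $V$) is sound but does not change the argument.
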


\begin{proof}
We require
\[
(\exists \funto{V}{\Pi}{\two}) (\forall \tilde x \In \Pi )
\big( V(\tilde x) \Eq 1 ~~\longliff~~ \varphi(x) \big)
\]

\noindent
By excluded middle, bounded existentials and generalization we have,
\[
(\forall \tilde x\In \Pi) (\exists k \In \two) 
  \big( k \Eq 1  ~~\longliff~~ \varphi(x) \big)
\]

\noindent
and we conclude by $\HF$-Bounded Choice.
\end{proof}

\begin{rems}
\label{rem:hfchoice}
\hfill
\begin{enumerate}
\item
In the case of $\Pi = \univ$, 
Theorem~\ref{thm:funto:ca} gives Comprehension for characteristic functions:
\begin{equation*}
\tag{$X$ not free in $\varphi$}
  (\exists \funto{X}{\univ}{\two})
  (\forall x)  \big(x \in X ~~\longliff~~ \varphi(x) \big)
\end{equation*}

\item
We have the following form of Comprehension
for $\HF$-Sets:
\begin{equation*}
\tag{$\ell$ not free in $\varphi$}
  (\exists \ell \Sle K)
  (\forall k \In K) \big(
k \In \ell ~~\longliff~~ \varphi(k) \big)
\end{equation*}
\begin{fullproof}
By Theorem~\ref{thm:funto:ca} we have
\begin{equation*}
\tag{$g$ not free in $\varphi$}
  (\exists \funto{g}{K}{\two})
  (\forall k \In K) \left[g(k)\Eq 1 ~~\longliff~~ \varphi(k)\right]
\end{equation*}
By the axiom on $\HF$-Functions, we get that
\begin{equation*}
\tag{$g$ not free in $\varphi$}
  \left(\exists \funto{g}{\one}{\two^K}\right)
  (\forall k \In K) \left[
g(0,k)\Eq 1 ~~\longliff~~ \varphi(k)\right]
\end{equation*}
and therefore
\begin{equation*}
\tag{$g$ not free in $\varphi$}
  \left(\exists g \In \two^K \right)
  (\forall k \In K) \left[
g(k)\Eq 1 ~~\longliff~~ \varphi(k)\right]
\end{equation*}
But now, we have
\[
V_\omega \models\quad
\left(\forall g \In \two^K \right)
  (\exists \ell \Sle K) (\forall k \In K)\left[
  k \In \ell ~~\longliff~~ g(k) \Eq \one
  \right]
\]
It then follows from the Axioms on $\HF$-Sets 
(Remark~\ref{rem:ax:hf:vomega})
that
\begin{equation*}
\tag{$\ell$ not free in $\varphi$}
  (\exists \ell \Sle K)
  (\forall k \In K) \left[
k \In \ell ~~\longliff~~ \varphi(k)\right]
\end{equation*}
\end{fullproof}

\item
\label{eq:hfchoice:ind}
Using Comprehension for $\HF$-Sets, the well-orders on $\HF$-Sets
given by Remark~\ref{rem:ax:hf:well-order-hf}
give the following Induction Scheme for $\HF$-Sets:
\[
(\forall k \In K) \Big[
(\forall \ell \In K) \big(
\ell \prec_K k ~\limp~ \varphi(\ell)
\big)
~~\longlimp~~
\varphi(k)
\Big]
\quad\longlimp\quad
(\forall k \In K) \varphi(k)
\]

\noindent
where $\preceq_K$ is the well-order on $K$ given by Remark~\ref{rem:ax:hf:well-order-hf}.

\begin{fullproof}
Assume toward a contradiction that there is some $k \In K$ such that
$\lnot\varphi(k)$.
Then, by Comprehension for $\HF$-Sets, for the set of all $k \In K$
such that $\lnot\varphi(k)$.
Since this set is non-empty, it has a $\preceq$-least element, say $k_0$.
But then $\varphi(\ell)$ holds for all $\ell \prec k_0$, so we get
$\varphi(k)$, a contradiction.
\end{fullproof}
\end{enumerate}
\end{rems}


\section{Game Positions}
\label{sec:pos}

This Section and the next one describe our setting for games. 
The games we consider ultimately aim at formalizing acceptance games of tree automata
(\S\ref{sec:aut}),
and thus must encompass acceptance games for \emph{non-deterministic} tree automata.
We shall therefore give a setting for infinite games, with players
\emph{Proponent} $\Prop$ (corresponding to \emph{Automaton} or $\exists$loïse)
and \emph{Opponent} $\Opp$ (corresponding to \emph{Pathfinder} or $\forall$bérlard).
In the case of acceptance games, $\Prop$ plays for acceptance and $\Opp$
plays for rejection, and in the particular case of 
non-deterministic automata, $\Prop$ chooses transitions
from the non-deterministic transition relation, while $\Opp$
chooses tree directions $d \in \Dir$, with the aim of building an infinite path.
This leads to an inherent asymmetry in the very notion of games, where,
from a game position with a given tree position $x \in \univ$,
$\Prop$ can only go to game positions with tree position $x$,
while $\Opp$ must go to a game position with tree position a successor of $x$.

Due to the fact that we cannot access the usual primitive recursive codings in
the monadic language, we will only consider games that are `superposed' onto
the infinite $\Dir$-tree,
with only boundedly many positions associated with each tree node.
Such a setting indeed suffices for the case of acceptance games arising from tree automata.
Assume that we are given disjoint non-empty $\HF$-Sets $\PL\G$ and $\OL\G$ of
\emph{Proponent} and \emph{Opponent} labels respectively.
Intuitively, \emph{Proponent} will play from game positions of the form
\[
\univ \times \PL\G
\]
while \emph{Opponent} will play from positions of the form
\[
\univ \times \OL\G
\]

\noindent
A game will be given by specifying edge relations of the form
\[
(x,k) \edge{}{\Prop} (x,\ell)
\quad\text{or}\quad
(x,\ell) \edge{}{\Opp} (x.d,k)
\qquad\text{where $k \in \PL\G$, $\ell \in \OL\G$ and $d \in \Dir$.}
\]
So $\Prop$ can only move to a game position with the same
underlying tree position,
while $\Opp$ is forced to move to a game position with a
successor underlying tree position.
This induces a dag structure on game positions,
whose underlying partial order $\gle_\G$
is the lexicographic product of the usual tree order with
the one setting $\Prop$-labels smaller than all $\Opp$-labels.
The games we shall consider will all be subrelations of $\gle_\G$. 

This Section is devoted to the definition of this dag structure.
We shall also prove some basic results 
related to induction in~\S\ref{sec:pos:ind}
and to infinite paths in~\S\ref{sec:pos:path}.
These will help proving some similar results for games in~\S\ref{sec:games},
for which arguments are more naturally given at the level of $\gle_\G$.


\subsection{A Partial Order of Game Positions}
\label{sec:pos:po}

We first introduce the formal notion of labels of game positions.

\begin{defi}[Labels of Game Positions]
\emph{Labels of game positions} are pairs $(\PL\G,\OL\G)$
of $\HF$-terms satisfying the following formula:
\[
\Labels(\PL\G,\OL\G)
\quad\deq\quad
 \lnot (\exists k \In \PL\G \cap \OL\G)
  ~~\land~~
  (\exists k \In \PL\G)
  ~~\land~~
  (\exists \ell \In \OL\G)
\]
We write $\PO\G$ for $\PL\G \cup \OL\G$.
When no ambiguity arises, we write $\PL{}$, $\OL{}$ and $\PO{}$
for $\PL\G$, $\OL\G$ and $\PO\G$ respectively.
\end{defi}


Assume $(\PL{},\OL{})$ are labels of game positions.
Intuitively, game positions are pairs $(x,k)$ with $x \in \univ$ and $k \in \PO{}$,
Proponent's positions are game positions with $k \in \PL{}$
and Opponent's positions are game positions with $k \in \OL{}$.
To summarize, we have the informal correspondence:
\[
\begin{array}{l !{\qquad} l}
  \univ \times
  \PO{}
& \text{Game positions}
\\
  \univ \times \PL{}
& \text{Proponent's positions}
\\
  \univ \times \OL{}
& \text{Opponent's positions}
\end{array}
\]


We will throughout the paper use the following notation
to manipulate game positions and sets of game positions.

\begin{nota}[Game Positions]
\label{not:games:pos}
We introduce the following notation,
assuming $\Labels(\PL\G,\OL\G)$.
\begin{enumerate}
\item
\label{eq:games:pos:indiv}
Variables, written $u,v,w,\etc$, range over game positions,
that is over pairs $(x,k)$ with $x$ an Individual variable and
$k$ an $\HF$-variable ranging over $\PO\G$.

\item
Sets of game positions, written $U,V,W,\etc$,
range over 
Functions $\univ \times \PO\G \fsoto \two$.
We will systematically use the following notation:
\[
\begin{array}{l !{~~\deq~~} l !{\qquad} l}
  \funto{V}{\G_{\phantom{\Opp}}}{\two}
& \funto{V}{\univ \times \PO{\G}}{\two}
& \text{(sets of Game positions)}
\\
  \funto{V}{\PP\G}{\two}
& \funto{V}{\univ \times \PL{\G}\phantom{\Opp}}{\two}
& \text{(sets of Proponent's positions)}
\\
  \funto{V}{\OP\G}{\two}
& \funto{V}{\univ \times \OL{\G}\phantom{\Prop}}{\two}
& \text{(sets of Opponent's positions)}
\end{array}
\]

\noindent
We often write $v  \in V$ or $V(v)$ for $V(v) \Eq \one$.

\item For a set of game positions $V$,
we write $\PP V$ and $\OP V$
for the $\Prop$ and $\Opp$ subsets of $V$ respectively.
This amounts to the following abbreviations:
\[
\begin{array}{r !{~~\deq~~} l}
  v \in \PP V
& v \in V ~~\land~~ v \in (\univ \times \PL\G) 
\\
  v \in \OP V
& v \in V ~~\land~~ v \in (\univ \times \OL\G)
\end{array}
\]
Intuitively,
$\PP V$ represents $V \cap (\univ \times \PL\G)$
while
$\OP V$ represents $V \cap (\univ \times \OL\G)$.

\item
In formulae we interpret quantifiers over (sets of) game positions as follows:
\[
\begin{array}{r !{~~\deq~~} l}
  (\exists v) \varphi
& (\exists x) (\exists \ell \In \PO\G)  \varphi[(x,\ell)/v]
\\
  (\exists V) \varphi
& (\exists \funto{V}{\G}{\two}) \varphi
\end{array}
\]
where, in the $\exists v$ case, we choose $x,\ell$ not free in $\varphi$.
\end{enumerate}
\end{nota}

\noindent
We now introduce the partial order $\gle$ on game positions.

\begin{defi}[Partial Order on Game Positions]
\label{def:games:order}
The relations
$\glt_{(\PL\G,\OL\G)}$, $\gle_{(\PL\G,\OL\G)}$ and $\gsucc_{(\PL\G,\OL\G)}$,
where $\PL\G$ and $\OL\G$ are $\HF$-variables,
are defined as follows:
\[
\begin{array}{r @{~~}c@{~~} l c l}
  (x,k)
& \glt_{(\PL\G,\OL\G)}
& (y,\ell)
& \deq
& x \Lt y ~\lor~ (x \Eq y ~\land~ k\In \PL\G ~\land~ \ell \In \OL\G)
\\
  u
& \gle_{(\PL\G,\OL\G)}
& v
& \deq
& u \glt_{(\PL\G,\OL\G)} v ~\lor~ u = v
\\
  \multicolumn{3}{r}{\gsucc_{(\PL\G,\OL\G)} (u,v)}
& \deq
& u \glt_{(\PL\G,\OL\G)} v
  ~\land~
  \lnot (\exists w) \left(u \glt_{(\PL\G,\OL\G)} w \glt_{(\PL\G,\OL\G)} v \right)
\end{array}
\]

\noindent
When no ambiguity arises, we write
$\glt_{\G}$, $\gle_{\G}$ and $\gsucc_\G$,
or even $\glt$, $\gle$ and $\gsucc$
for
$\glt_{(\PL\G,\OL\G)}$, $\gle_{(\PL\G,\OL\G)}$ and $\gsucc_{(\PL\G,\OL\G)}$
respectively.
\end{defi}

Note that the formula $\gsucc(-,-)$ is actually bounded,
since by Notation~\ref{not:games:pos}.\eqref{eq:games:pos:indiv},
the variable $w$ ranges over game positions,
so that $(\exists w)$ stands for
$(\exists w \in \univ \times \PO{})$.


\noindent
We note a number of useful properties of $\glt$,
in particular that it is a discrete partial order.

\begin{prop} 
\label{prop:games:gle}
$\FSOD$ proves following, under the assumption $\Labels(\PL{},\OL{})$.
\begin{enumerate}
\item
$u \glt v \glt w ~~\longlimp~~ u \glt w$


\item
$\lnot(u \glt u)$

\item
$u \gle v \gle u ~~\longlimp~~ u = v$

\item
\label{eq:games:gle:decomp}
$u \glt v
  \quad\longliff\quad (\exists w \gle v) (\gsucc (u,w))
  \quad\longliff\quad (\exists w' \gge u) (\gsucc (w',v))$

\item
$(\forall k \in \PL{})\big( u \gle (\Root,k) ~~\longlimp~~ u = (\Root,k) \big)$
\end{enumerate}
\end{prop}

\begin{fullproof}
\hfill\begin{enumerate}
\item 
Assume $(x,k) \glt (y,\ell) \glt (z,m)$.
Note that we must have $x \Leq y \Leq z$.
If $x \Lt z$, then $(x,k) \glt (z,m)$ and we are done.
Otherwise, $x \Eq z$ and the Tree Axioms (\S\ref{sec:ax:tree})
imply that $x \Eq y \Eq z$.
But in this case, by definition of $\glt$, we must have $\ell \in \PL{} \cap \OL{}$,
contradicting the disjointness of $\PL{}$ and $\OL{}$.

\item
If $(x,k) \glt (x,k)$,
since by the Tree Axioms $\lnot(x \Lt x)$,
then we must have $k \in \PL{} \cap \OL{}$,
a contradiction.

\item
$u \glt v \glt u$ would imply $u \glt u$, a contradiction.
So $u \gle v \gle u$ implies $u = v$.


\item
The implications
$\exists w' \gge u.\gsucc (w',v)
~\limp~
\exists w \gle v.\gsucc (u,w)
~\limp~
u \glt v$
are trivial.

As for the other implications,
first note that $(x,k) \glt (x,\ell)$ implies $\gsucc((x,k),(x,\ell))$.
Indeed, using the Tree Axioms (\S\ref{sec:ax:tree}),
$(x,k) \glt (y,m) \glt (x,\ell)$ implies $y \Eq x$, so that
$m \in \PL{} \cap \OL{}$, contradicting $\PL{} \cap \OL{} = \emptyset$.
Hence $\lnot \exists w((x,k) \glt w \glt (x,\ell))$ and
$\gsucc((x,k),(x,\ell))$.

\begin{itemize}
\item
We show
$u \glt v
~\limp~
\exists w \gge u.\gsucc (w,v)$.

Let $u = (x,k)$ and $v = (y,\ell)$. 
If $x \Eq y$ then we are done.
Otherwise, we must have $x \Lt y$.
If $\ell \in \OL{}$, then for any $m \in \PL{}$ we have
$u \glt (y,m) \glt (y,\ell)$, so that $\gsucc((y,m),(y,\ell))$.
Since $\PL{}$ is assumed to be non-empty,
we are done by taking $(y,m)$ for $w$.

It remains the case of $\ell \in \PL{}$.
It follows from the Induction Scheme of $\FSOD$
(\S\ref{sec:ax:ind}) that either $y \Eq \Root$
or there is some $z$ and some $d \in \Dir$ such that $y = \Succ_d(z)$.
Moreover, it follows from Proposition~\ref{prop:ax:tree}
that $x \Lt y$ implies $\lnot(y \Eq \Root)$.
Then the Tree Axioms give $x \Leq z$ from $x \Lt \Succ_d(z)$.
Let now $m \in \OL{}$. We thus have $(x,k) \gle (z,m) \glt (y,\ell)$.
Moreover, assuming $(z,m) \glt (z',m') \glt (y,\ell)$,
then $z' \Lt y$ implies $z' \Leq z$ hence $z' \Eq z$
and therefore $m \in \PL{} \cap \OL{}$, a contradiction.
Hence $z' \Eq y$, so that $\ell \in \OL{} \cap \PL{}$, again a contradiction.

\item The case of 
$u \glt v
~\limp~
\exists w \gle v.\gsucc (u,w)$
is dealt-with similarly, using the Tree Axioms (\S\ref{sec:ax:tree})
and Proposition~\ref{prop:ax:tree}.

Let $u = (x,k)$ and $v = (y,\ell)$.
First, if $x = y$ then we have $\gsucc(u,v)$ as shown above.
Assume now that $x \Lt y$ and that $k \in \PL{}$.
Consider any $m \in \OL{}$ (recall that $\OL{}$ is non-empty).
Then, again as above we have $\gsucc((x,k),(x,m))$
and we are done since $x \Lt y$ implies
$(x,m) \glt (y,\ell)$.
It remains the case of $k \in \OL{}$.
Since $x \Lt y$, by Proposition~\ref{prop:ax:tree}
there is some $d \in \Dir$ such that $\Succ_d(x) \gle y$.
Take any $m \in \PL{}$ (which is assumed to be non-empty).
Then we have $(x,k) \glt (\Succ_d(x),m) \glt (y,\ell)$.
Hence we are done as soon as we show $\gsucc((x,k),(\Succ_d(x),m))$.
Given $z$ and $n$ with
$(x,k) \glt (z,n) \glt (\Succ_d(x),m)$,
the Tree Axioms imply that either $x \Eq z$ or $z \Eq \Succ_d(x)$.
But $x \Eq z$ implies $k \in \OL{} \cap \PL{}$, a contradiction,
while $z \Eq \Succ_d(x)$ implies $m \in \PL{} \cap \OL{}$, also a contradiction.
Hence $\lnot\exists w[(x,k) \glt w \glt (\Succ_d(x),m)]$, as required.
\end{itemize}

\item
Assume $(x,\ell) \gle (\Root,k)$.
By definition of $\gle$ we have $x \Leq \Root$.
Then Proposition~\ref{prop:ax:tree} implies $x = \Root$,
so that $k \in \PL{}$ implies $\ell = k$.
\qedhere
\end{enumerate}
\end{fullproof}

\subsection{Induction and Recursion}
\label{sec:pos:ind}
We now present some basic results on induction and recursion \wrt\@
the partial order on game positions.

\cnote{\CR:NOTES
\begin{itemize}
\item
We Recursion for
Proposition~\ref{prop:sim:sim:der}
and
Proposition~\ref{prop:sim:par:der}.
\end{itemize}}

We can show that $\glt$
satisfies well-founded induction from the induction principle on the underlying tree.

\begin{thm}[$\glt$-Induction]
\label{thm:games:ind:pos}
$\FSOD$ proves the following, under the assumption $\Labels(\PL{},\OL{})$.
\[
(\forall V)
\Big(
(\forall v) \big[
(\forall u \glt v)(u \in V) ~~\longlimp~~  v \in V
\big]
~~\longlimp~~
(\forall v)(v \in V)
\Big)
\]
\end{thm}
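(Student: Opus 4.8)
The plan is to reduce well-founded induction on $\glt$ to the already-established well-founded induction on the tree order $\Lt$ (Theorem~\ref{thm:ax:wfind}), exploiting that by Definition~\ref{def:games:order} the relation $\glt$ is a lexicographic combination of $\Lt$ on tree positions with the two-block order placing $\PL{}$-labels below $\OL{}$-labels at a fixed tree position. First I would fix an arbitrary set of game positions $V$ and assume the progressiveness hypothesis
\[
(\forall v)\big[(\forall u \glt v)(u \in V) ~~\longlimp~~ v \in V\big].
\]
Since variables $v$ range over game positions $(x,k)$ with $k \In \PO{}$ (Notation~\ref{not:games:pos}), the goal $(\forall v)(v \in V)$ unfolds to $(\forall x)(\forall k \In \PO{})((x,k) \in V)$, so it suffices to prove $(\forall x)\psi(x)$ where
\[
\psi(x) ~~\deq~~ (\forall k \In \PO{})\big((x,k) \in V\big),
\]
and for this I would apply Theorem~\ref{thm:ax:wfind} to the formula $\psi$.

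Fixing a tree position $x$ and assuming the tree induction hypothesis $(\forall y \Lt x)\psi(y)$ --- that is, $(y,m) \in V$ for every $y \Lt x$ and every $m \In \PO{}$ --- I would establish $\psi(x)$ in two steps matching the two label-blocks. For $k \In \PL{}$: unfolding Definition~\ref{def:games:order}, any $u = (x',k') \glt (x,k)$ forces $x' \Lt x$, since the same-position disjunct would require $k \In \OL{}$, impossible by disjointness of $\PL{}$ and $\OL{}$; hence $(\forall u \glt (x,k))(u \in V)$ holds by the tree hypothesis, and progressiveness of $V$ gives $(x,k) \in V$. For $k \In \OL{}$: now $u = (x',k') \glt (x,k)$ holds iff $x' \Lt x$, or $x' \Eq x$ with $k' \In \PL{}$; the former positions lie in $V$ by the tree hypothesis and the latter by the first step just proved, so again $(\forall u \glt (x,k))(u \in V)$ and hence $(x,k) \in V$. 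Together the two steps yield $(\forall k \In \PO{})((x,k) \in V)$, i.e.\ $\psi(x)$, completing the inductive step.

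The argument is essentially routine, and the only points requiring care are bookkeeping ones. One must recall that quantification over game positions is bounded (as noted after Definition~\ref{def:games:order}, the $\exists w$ hidden in $\glt$ ranges over $\univ \times \PO{}$), so that $\psi$ is a genuine $\FSO$-formula to which the induction scheme of Theorem~\ref{thm:ax:wfind} legitimately applies. The slightly delicate step is the $\OL{}$-case, which invokes the already-proved $\PL{}$-case at the \emph{same} tree position $x$; this is sound because both sub-steps are carried out after fixing $x$ together with its tree induction hypothesis, so no circularity arises. I do not expect a genuine obstacle: the explicit lexicographic decomposition of $\glt$ recorded in Proposition~\ref{prop:games:gle}, together with disjointness and non-emptiness of $\PL{}$ and $\OL{}$, makes the case analysis on the label of the top position completely transparent.
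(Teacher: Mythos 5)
Your proof is correct. It shares with the paper's proof the essential idea --- reduce $\glt$-induction to induction along the tree order, with the two-block case analysis at a fixed position $x$ (first labels in $\PL{}$, whose $\glt$-predecessors all lie strictly below $x$ by disjointness, then labels in $\OL{}$, using the $\PL{}$-case just established at $x$) --- but it routes the tree induction differently. The paper applies the raw Induction Scheme of \S\ref{sec:ax:ind} (root/successor structural induction) to the \emph{strengthened} formula $(\forall y \Leq x)(\forall \ell \In \PO{})\big((y,\ell) \in V\big)$, and must carry the prefix-closure bookkeeping through both the base case and the successor case (where predecessors of positions over $\Succ_d(x)$ can sit at any $z \Leq x$, not just at $x$). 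You instead invoke Well-Founded Induction (Theorem~\ref{thm:ax:wfind}) applied to the unstrengthened $\psi(x) = (\forall k \In \PO{})\big((x,k) \in V\big)$; since Theorem~\ref{thm:ax:wfind} is itself proved from the Induction Scheme by exactly this kind of $\Leq$-strengthening, your proof factors that step out into an already-available lemma rather than redoing it inline. The gain is modularity and brevity --- no case split on root versus successor, and the tree induction hypothesis directly covers all $y \Lt x$; the paper's version is merely more self-contained relative to the bare induction axiom. Your two points of care (that $\psi$ is a genuine bounded $\FSO$-formula, and that the $\OL{}$-case may legitimately use the $\PL{}$-case at the same $x$) are exactly the right ones, and both check out against Definition~\ref{def:games:order} and $\Labels(\PL{},\OL{})$.
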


\begin{proof}
Let $V$ be such that, for any game position $v$:
\begin{equation}
\label{eq:games:ind:pos}
(\forall u \glt v)~ 
\big( V(u) ~~\longlimp~~  V(v) \big)
\end{equation}
We show that
\[
(\forall x) (\forall y \Leq x) (\forall \ell \In \PO{})
\big( (y,\ell) \in V \big)
\]
by induction on $x$, whence the theorem will follow.
	
Suppose that $x = \Root$, and so $y = \Root$.
We first prove the statement for arbitrary $\ell \In \PL{}$;
in this case notice that there is no $u$ such that $u \glt (\Root,\ell)$,
and so we vacuously satisfy the LHS of~\eqref{eq:games:ind:pos} above.
Therefore we have that $(\Root,\ell) \in V$.
Otherwise $\ell \In \OL{}$ and every $u\glt (\Root, \ell)$ is of the form $(\Root, k)$
for some $k \In \PL{}$, and we have just shown that such $u$ must be contained in $V$.
Therefore we can conclude that $(\Root, \ell) \in V$,
again by~\eqref{eq:games:ind:pos}, as required.

Now we consider the inductive step, assuming the statement above is already true
for $x$ and considering the case of $\Succ_d x$.
If $y \Leq \Succ_d x$ then either $y \Leq x$ or $y \Eq \Succ_d x$.
In the former case we have by the inductive hypothesis that,
for any $\ell \In \PO{}$, $(y,\ell) \in V$.
So assume that $y \Eq \Succ_d x$.
Again we distinguish when $\ell \In \PL{}$
and when $\ell \In \OL{}$ in order to exhibit the
LHS of~\eqref{eq:games:ind:pos} above.
In the former case, notice that any $(z,k) \glt (y,\ell)$ is such that $z \Leq x$,
and so we have that $(z,k) \in V$ by the inductive hypothesis;
thus $(y,\ell) \in V$ by~\eqref{eq:games:ind:pos}.
In the latter case (when $\ell \In \OL{}$) we have for any $(z,k) \glt (y,\ell)$
either $z \Leq x$ or ($z \Eq \Succ_d x$ and $k\In \PL{}$).
In both cases we have seen that $(z,k)\in V$,
and so again we have that $(y,\ell)\in V$ by~\eqref{eq:games:ind:pos}. 
%
\end{proof}

Since $\glt$ is a partial order with induction,
comprehension (Theorem~\ref{thm:funto:ca})
gives a \emph{Recursion Theorem},
which allows us to
define a set of game positions $V$ by induction on game positions.
This requires the value of $V$ at a position $v$
to be determined by its values at positions $u \glt v$.
Thus, if the value of $V$ at $v$ is given by a formula $\varphi(V,v)$,
we assume that the following
formula 
holds
\[
\Rec(\varphi) ~~\deq~~
(\forall v)
(\forall V, V')
\Big[
(\forall w \glt v) \big(V w  ~\liff~ V' w \big)
~~\longlimp~~
\big(
\varphi(V,v) ~\liff~
\varphi(V',v)
\big)
\Big]
\]

\noindent
The Recursion Theorem says that, assuming $\Rec(\varphi)$,
the set of game positions $V$ given by
\[
V v \quad\longliff\quad
(\forall U)
\Big[
(\forall u \gle v) \big(U u ~\liff~ \varphi(U,u) \big) 
~~\longlimp~~ U v
\Big]
\]
is the unique set of game positions such that
\[
V v \quad\longliff\quad \varphi(V,v)
\]

\begin{prop}[Recursion Theorem]
\label{prop:games:rec}
$\FSOD$ proves that 
$\Labels(\PL{},\OL{}) \land \Rec(\varphi)$ implies
\[
\begin{array}{l r !{~~\longlimp~~} l}
\multicolumn{2}{r}{
(\forall v)
\bigg(
V v ~\longliff~
(\forall U)
\Big[
(\forall u \gle v) \big(U u ~\liff~ \varphi(U,u) \big) 
~~\limp~~ U v
\Big]
\bigg)
~~\longlimp}
&
(\forall v)\left(
V v ~\liff~ \varphi(V,v)
\right)
\\

\land
&
(\forall v)(V v ~\liff~ \varphi(V,v))
~~\longlimp~~
(\forall v)(U v ~\liff~ \varphi(U,v))
&
(\forall v)\left( V v ~\liff~ U v \right)
\end{array}
\]
\end{prop}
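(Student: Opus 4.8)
The plan is to treat the two implications separately: the second (uniqueness) is a short well-founded induction, while the first (that the set $V$ given by the displayed $\Pi$-style formula is a genuine fixed point of $\varphi$) is the substantial part. Throughout, every induction on a \emph{formula} $\psi(v)$ is routed through Comprehension (Theorem~\ref{thm:funto:ca}) to produce a set $W$ with $Wv \liff \psi(v)$, to which $\glt$-Induction (Theorem~\ref{thm:games:ind:pos}) then applies; and I repeatedly use that, by $\Rec(\varphi)$, the truth of $\varphi(U,u)$ for $u \gle v$ depends only on the restriction of $U$ to positions $w \glt v$ (note that $w \glt u \gle v$ forces $w \glt v$ by Proposition~\ref{prop:games:gle}). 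For \emph{uniqueness}, assuming $(\forall v)(Vv \liff \varphi(V,v))$ and $(\forall v)(Uv \liff \varphi(U,v))$, I would form the set $W$ with $Wv \liff (Vv \liff Uv)$ and apply $\glt$-Induction: the step at $v$ is immediate, since the hypothesis $(\forall u \glt v)(Uu \liff Vu)$ and $\Rec(\varphi)$ give $\varphi(U,v) \liff \varphi(V,v)$, whence $Vv \liff \varphi(V,v) \liff \varphi(U,v) \liff Uv$.

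For \emph{existence}, abbreviate the defining condition as $\mathrm{App}(U,v) \deq (\forall u \gle v)(Uu \liff \varphi(U,u))$, so that $V$ satisfies $Vv \liff (\forall U)[\mathrm{App}(U,v) \limp Uv]$. I would first prove an \emph{Agreement Lemma}, namely $\mathrm{App}(U,v) \land \mathrm{App}(U',v) \limp (\forall u \gle v)(Uu \liff U'u)$, by $\glt$-induction on $u$: for $u \gle v$ the inductive hypothesis gives agreement of $U$ and $U'$ strictly below $u$, so $\Rec(\varphi)$ yields $\varphi(U,u) \liff \varphi(U',u)$, and the two approximation conditions then give $Uu \liff U'u$. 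From this and the definition of $V$ I would derive that every approximation agrees with $V$ on its domain, i.e.\ $\mathrm{App}(U,v) \land u \gle v \limp (Uu \liff Vu)$; both directions use that $\mathrm{App}(U,v)$ restricts to $\mathrm{App}(U,u)$ for $u \gle v$, together with the Agreement Lemma.

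I then prove $(\forall v)(Vv \liff \varphi(V,v))$ by $\glt$-induction on $v$, the hypothesis being $(\forall w \glt v)(Vw \liff \varphi(V,w))$. Using Comprehension I build an explicit approximation at $v$ from $V$ itself, namely the $U$ with
\[
Uu \quad\liff\quad (u \glt v \,\land\, Vu) \ \lor\ (u \Eq v \,\land\, \varphi(V,v)) ;
\]
the inductive hypothesis settles the cases $u \glt v$ and $\Rec(\varphi)$ settles the case $u \Eq v$, so $\mathrm{App}(U,v)$ holds while $Uv \liff \varphi(V,v)$ by construction. This yields $Vv \limp \varphi(V,v)$, since the definition of $V$ applied to this $U$ gives $Uv$, hence $\varphi(V,v)$. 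Conversely, for $\varphi(V,v) \limp Vv$, any $U'$ with $\mathrm{App}(U',v)$ agrees with $V$ below $v$ by the previous paragraph, so $\Rec(\varphi)$ gives $\varphi(U',v) \liff \varphi(V,v)$, whence $U'v$; as $U'$ was arbitrary, $Vv$ follows.

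The main obstacle is exactly this existence half, and in particular the direction $Vv \limp \varphi(V,v)$: because $V$ is defined as an \emph{intersection} over all approximations, the defining formula is informative only once one knows that an approximation at $v$ actually exists, and the only approximation available is the one assembled from $V$ through the induction hypothesis. The care needed to build $U$ by Comprehension and to track the locality guaranteed by $\Rec(\varphi)$ (so that $\varphi(U,u) \liff \varphi(V,u)$ throughout $u \gle v$) is where the real work lies; the remaining steps are routine once the Agreement Lemma is in place.
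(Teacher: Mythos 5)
Your proof is correct and follows essentially the same route as the paper's: the same key agreement lemma for approximations (proved by $\glt$-induction via the locality granted by $\Rec(\varphi)$), and the identical Comprehension-defined approximation $U u \liff (u \glt v \land V u) \lor (u = v \land \varphi(V,v))$ at the inductive step of the existence half. The only difference is organizational — you establish agreement of approximations with $V$ as an unconditional lemma, while the paper threads the induction hypothesis through a relativized uniqueness formula $\psi(V,U,v)$ and reuses it inside the existence induction — but both amount to the same argument.
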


\begin{proof}
Consider a formula $\varphi(V,v)$ and assume $\Rec(\varphi)$
and $\Labels(\PL{},\OL{})$.
We begin with the second part of the statement, namely the uniqueness part.
Fix $V,U$.
By $\glt$-induction on $v$, we show that 
$\FSO$ proves the following formula $\psi(v) = \psi(V,U,v)$:
\[
(\forall u \gle v)(V u \liff \varphi(V,u))
~~\longlimp~~
(\forall u \gle v)(U u \liff \varphi(U,u))
~~\longlimp~~
(\forall u \gle v)(V u \liff U u)
\]

\noindent
Let $v$ and assume both premises of $\psi(v)$,
as well as $\psi(w)$ for all $w \glt v$.
The premises of $\psi(v)$ imply those of $\psi(w)$ for $w \glt v$,
so that we have $(V w \liff U w)$ for all $w \glt v$.
Hence, given $u \gle v$, if $u \glt v$ then we are done.
It thus remains to show $(V v \liff U v)$.
Thanks to the premises of $\psi(v)$, this amounts to showing
$\varphi(V,v) \liff \varphi(U,v)$,
which itself follows from $\Rec(\varphi)$,
since $(V w \liff U w)$ for all $w \glt v$.

We now turn to the first part of the statement.
Let $V$ such that
\[
V v \quad\longliff\quad
(\forall U)
\Big[
(\forall u \gle v) \big(U u ~\liff~ \varphi(U,u) \big) 
~~\longlimp~~ U v
\Big]
\]
By $\glt$-induction on $v$, we show that $\FSO$ proves the following formula
\[
\theta(v) \quad\deq\quad
(\forall u \gle v)
\underbrace{
\big(
V u
~~\longliff~~
\varphi(V,u)
\big)}_{\vartheta(u)}
\]

\noindent
So let $v$ and assume $\theta(w)$ for all $w \glt v$.
Given $u \gle v$, if $u \glt v$ then 
$\vartheta(u)$ follows from $\theta(u)$.
It thus remains to show $\vartheta(v)$.
We consider the two implications separately.
\begin{itemize}
\item
\emph{Case of $\varphi(V,v) \longlimp Vv$.}
Assume $\varphi(V,v)$.
By definition of $V$, we are done if we show
\[
(\forall U)
\Big[
(\forall u \gle v) \big(U u ~\liff~ \varphi(U,u) \big) 
~~\longlimp~~ U v
\Big]
\]

\noindent
Given $U$ such that
$\big(U u ~\liff~ \varphi(U,u) \big)$
for all $u \gle v$, we obtain $U v$ from $\varphi(U,v)$,
which itself follows $\varphi(V,v)$ and $\Rec(\varphi)$.
The premise of $\Rec(\varphi)$ follows from
$(\forall w \glt v)\psi(V,U,w)$,
whose premises are in turn given by resp.\@
$(\forall w \glt v)\vartheta(w)$ and the assumption on $U$.

\item
\emph{Case of $V v \longlimp \varphi(V,v)$.}
Assume $V v$.
By comprehension (Theorem~\ref{thm:funto:ca})
let $U$ such that
\[
U u
\quad\longliff\quad
\Big[
(u \glt v ~\land~ V u)
~\lor~
\big( u = v ~\land~ \varphi(V,v) \big)
\Big]
\]

\noindent
We obtain $\varphi(V,v)$ from $U v$,
which in turn by def.\@ of $V$ 
follows from
$(\forall u \gle v)\big(U u ~\liff~ \varphi(U,u) \big)$.
In order to show the latter,
note that by definition of $U$ we have
$(U u \liff V u)$ for all $u \glt v$.
Hence $\Rec(\varphi)$ gives
$\varphi(U,v) \liff \varphi(V,v)$
and we get
$(U v \liff \varphi(U,v))$ from the definition of $U$.
In the case of $u \glt v$, namely $(U u \liff \varphi(U,u))$,
we have $(\forall w \gle u)(U w \liff V w)$
so that $\Rec(\varphi)$ implies
$\varphi(U,u) \liff \varphi(V,u)$
and the result follows form $\vartheta(u)$.
\qedhere
\end{itemize}
\end{proof}

\subsection{Infinite Paths}
\label{sec:pos:path}
We develop here a notion of infinite paths
(\ie\@ unbounded linearly order sets) for the partial order $\gle$
on game positions.
This material will be useful in Section~\ref{sec:games:plays}
to handle properties of infinite plays in games
which intrinsically rely on the particular structure of
the relation $\gle$ on game positions.
A typical example is the Predecessor Lemma~\ref{lem:games:predplays}.

\begin{defi}[Game Paths]
\label{def:games:path}
Let $\PL{},\OL{}$ be $\HF$-variables.
Given a game position $u$ and a set of game positions $U$,
we say that $U$ is a \emph{path from $u$} if the following
formula $\Path(\PL{},\OL{},u,U)$ holds:
\[
\Path(\PL{},\OL{},u,U)
\quad\deq\quad
\left\{
\begin{array}{c l}
& u \in U
\\
  \land
& (\forall v \in U) (u \gle v)
\\
  \land
& (\forall v \in U)(\exists w \in U) (\gsucc(v,w))
\\
  \land
& (\forall v,w \in U)(w \glt v ~\lor~ v = w ~\lor~ v \glt w)
\end{array}
\right.
\]
We write $\Path(u,U)$ when $\PL{}$ and $\OL{}$ are clear from the context.
\end{defi}

As a preparation to the Predecessor Lemma~\ref{lem:games:predplays}
for Infinite Plays, we prove here the analogous property
for infinite paths.

\begin{lem}[Predecessor Lemma for Game Paths]
\label{lem:games:predpath}
$\FSOD$ proves the following.
Assuming that $\Labels(\PL{},\OL{})$ and
that $\Path(\PL{},\OL{},u_0,U)$ hold
for a game position $u_0$ and a set of game positions $U$,
we have
\[
(\forall v \in U)
\left[
  u_0 \glt v
  ~~\limp~~
  (\exists u \in U)  (\gsucc(u,v))
\right]
\]
\end{lem}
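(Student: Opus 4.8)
The plan is to reduce the statement to the existence of a greatest element and then convert that into an immediate predecessor using the path structure. First, using Comprehension (Theorem~\ref{thm:funto:ca}) I would form the set of game positions $W$ defined by $w \in W \liff (w \in U \land w \glt v)$. Since $u_0 \in U$ and $u_0 \glt v$, the set $W$ is nonempty. The goal I aim for is that $W$ has a $\gle$-greatest element $m$, that is $m \in W$ and $(\forall w \in W)(w \gle m)$.

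Granting such an $m$, the conversion to the conclusion is short. Since $m \in U$, the path clause $(\forall v' \in U)(\exists w' \in U)(\gsucc(v',w'))$ of $\Path(\PL{},\OL{},u_0,U)$ yields some $s \in U$ with $\gsucc(m,s)$. By the linearity clause of $\Path$, $s$ is $\gle$-comparable with $v$. If $s \glt v$ then $s \in W$ with $m \glt s$, contradicting maximality of $m$; if $v \glt s$ then $m \glt v \glt s$ contradicts $\gsucc(m,s)$. Hence $s = v$, so $\gsucc(m,v)$ with $m \in U$, as required.

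The crux is therefore the existence of the greatest element, which I would establish by $\glt$-induction (Theorem~\ref{thm:games:ind:pos}) on the auxiliary statement $M(v)$: \emph{if $\{w \in U : w \glt v\}$ is nonempty then it has a $\gle$-greatest element}. In the inductive step I would apply the decomposition property (Proposition~\ref{prop:games:gle}.(\ref{prop:games:gle:decomp})) to a witness $w_0 \glt v$ to obtain an immediate predecessor $p$ of $v$ with $w_0 \gle p$ and $\gsucc(p,v)$. If some immediate predecessor of $v$ already lies in $U$, then by linearity together with $\gsucc$ it is the greatest element of $\{w \in U : w \glt v\}$ and we are done. Otherwise $p \notin U$, so from $w_0 \in U$ with $w_0 \glt p \glt v$ I would invoke the induction hypothesis $M(p)$ (note $p \glt v$) to obtain a greatest element $m$ of $\{w \in U : w \glt p\}$, and the remaining claim is that $m$ is also greatest in $\{w \in U : w \glt v\}$.

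To close that last claim I would use the key sub-fact that \emph{all immediate predecessors of a fixed position share the same tree coordinate}, which follows from the Tree Axioms together with Proposition~\ref{prop:ax:tree} by a case analysis on whether the label of $v$ lies in $\PL{}$ or $\OL{}$ (when it lies in $\OL{}$ the predecessors are the $\PL{}$-positions at the same node; when it lies in $\PL{}$ they are the $\OL{}$-positions at the unique parent node, uniqueness coming from the Tree Axiom $x \Lt \Succ_d y \liff x \Leq y$). Given this, any $w \in U$ with $w \glt v$ is either itself an immediate predecessor of $v$ — excluded in the present case — or satisfies $w \glt p$, whence $w \gle m$ by the choice of $m$. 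The main obstacle is precisely this maximum-existence argument: a naive $\glt$-induction directly on the lemma fails, because its conclusion is conditioned on membership in $U$ while the predecessor $p$ furnished by decomposition need not lie in $U$; rephrasing the induction as the membership-free statement $M(v)$, which quantifies over all positions, is what makes the induction hypothesis applicable at $p$.
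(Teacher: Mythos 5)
Your proof is correct, and although its outer skeleton coincides with the paper's --- both form, by Comprehension, the set $W$ of elements of $U$ strictly below $v$, extract a greatest element $m$ of $W$, and then conclude identically (the path successor $s$ of $m$ must equal $v$, since $s \glt v$ contradicts maximality of $m$ and $v \glt s$ contradicts $\gsucc(m,s)$) --- the way you obtain the greatest element is genuinely different. The paper proves a standalone maximality principle (Lemma~\ref{lem:games:maxlinbounded}): every bounded, non-empty, linearly ordered set of game positions has a maximum, by $\glt$-induction on the \emph{bound}. There, the candidate maxima produced by the induction hypothesis are indexed by the several pairwise-incomparable immediate predecessors of the bound, so the proof must take a maximum over finitely many candidates; this is exactly where the well-order on $\HF$-sets (Remark~\ref{rem:ax:hf:well-order-hf}) and $\preceq$-induction (Remark~\ref{rem:hfchoice}.(\ref{rem:hfchoice:ind})) enter. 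You instead run $\glt$-induction on $v$ itself for a statement $M(v)$ tailored to the fixed path $U$, and the linearity of $U$ dissolves precisely that difficulty: either some immediate predecessor of $v$ lies in $U$, in which case linearity makes it the maximum outright, or none does, in which case your tree-structure dichotomy (all immediate predecessors of $v$ share one tree coordinate, so every $w \glt v$ that is not itself an immediate predecessor satisfies $w \glt p$ for the fixed $p$) funnels everything below a single $p$, and one application of the induction hypothesis suffices. Your route is therefore more elementary --- it never invokes the $\HF$-set well-ordering --- at the price of being specialized to down-sets of a linearly ordered path, whereas the paper's route buys a reusable, self-contained lemma (whose analogue reappears for $\omega$-words in Lemma~\ref{lem:omega:pred}). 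One point to make explicit in a full write-up: the dichotomy needs not just the shared-coordinate fact but the case split on whether the label of $v$ lies in $\PL{}$ or $\OL{}$, using the Tree Axiom $x \Lt \Succ_d(y) \liff x \Leq y$ in the $\PL{}$ case to force the coordinate of $w$ below the unique parent; your sketch contains exactly this, so nothing is missing.
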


The proof of Lemma~\ref{lem:games:predpath}
relies on the following usual maximality principle 
for non-empty linearly-ordered bounded sets.

\begin{lem}
\label{lem:games:maxlinbounded}
$\FSOD$ proves the following, assuming $\Labels(\PL{},\OL{})$.
Given a set of game positions $V$, if
$V$
is bounded
(\ie\@ $(\exists u)(\forall v \in V)(v \glt u)$),
non-empty
and linearly ordered,
then $V$ has a maximum element:
$(\exists u \in V) (\forall v \in V)(v \gle u)$.
\end{lem}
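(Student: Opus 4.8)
The plan is to prove, by $\glt$-induction (Theorem~\ref{thm:games:ind:pos}), the statement $P(u)$ asserting that every non-empty, linearly ordered set $V$ of game positions bounded strictly by $u$ (i.e.\ $(\forall v \in V)(v \glt u)$) has a maximum. Since $P(u)$ is an ordinary $\FSO$-formula whose quantification over $V$ is merely a Function quantifier $\funto{V}{\G}{\two}$, Comprehension (Theorem~\ref{thm:funto:ca}) lets me form the set $\{u : P(u)\}$ and feed it to $\glt$-induction. The lemma then follows by instantiating $P$ at any strict upper bound of the given $V$, which exists by the boundedness hypothesis.

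For the inductive step I fix $u$, assume $P(w)$ for every $w \glt u$, and take $V$ non-empty, linearly ordered, with all elements $\glt u$. Two facts drive the argument. First, by the decomposition property (Proposition~\ref{prop:games:gle}.(\ref{prop:games:gle:decomp})), every $v \in V$ satisfies $v \gle w$ for some \emph{immediate predecessor} $w$ of $u$, that is some $w$ with $\gsucc(w,u)$. Second, the immediate predecessors of $u$ form an antichain: if $\gsucc(w,u)$, $\gsucc(w',u)$ and $w \glt w'$, then $w \glt w' \glt u$ contradicts $\gsucc(w,u)$. Note also that $u$ cannot be of the form $(\Root,k)$ with $k \In \PL{}$, since by Proposition~\ref{prop:games:gle}.(5) nothing is $\glt$ such a position, so $V$ would be empty; hence this case makes $P(u)$ vacuous.

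I then split into two cases. \emph{Case 1:} some immediate predecessor $w^*$ of $u$ lies in $V$. Then $w^*$ is the maximum. Indeed, for any $v \in V$ linearity gives $v \gle w^*$ or $w^* \glt v$; in the latter case $v \gle w'$ for some predecessor $w'$, whence $w^* \glt v \gle w'$ forces $w^* \glt w'$, contradicting the antichain fact (or, if $w^* = w'$, contradicting irreflexivity via Proposition~\ref{prop:games:gle}.(2)). This case is clean and uses no induction. \emph{Case 2:} no immediate predecessor of $u$ lies in $V$, so every $v \in V$ is \emph{strictly} below some predecessor. Here I fix a single predecessor $w_0 \glt u$, argue that \emph{all} of $V$ lies strictly below $w_0$, and invoke the inductive hypothesis $P(w_0)$ to obtain a maximum.

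The main obstacle is the uniform-bound claim of Case 2: antichain-ness and linearity alone do not force all of $V$ below a single predecessor, so one must use the explicit shape of the immediate-predecessor set of $u=(y,m)$, obtained by a short case analysis on the label $m$. If $m \In \OL{}$, the predecessors are exactly the $(y,k)$ with $k \In \PL{}$; since these are excluded from $V$, every $v = (x,j) \in V$ must have $x \Lt y$, so $v \glt w_0$ for any fixed $w_0 = (y,k_0)$ with $k_0 \In \PL{}$. If $m \In \PL{}$ then $y = \Succ_d(z)$ for some $z$ and $d \In \Dir$ (Proposition~\ref{prop:ax:tree}.(\ref{prop:ax:tree:rootorsucc}) together with $y \neq \Root$), the predecessors are the $(z,n)$ with $n \In \OL{}$, and one checks from the definition of $\glt$ that whether $v \glt (z,n)$ holds is independent of the choice of $n \In \OL{}$, again yielding a uniform $w_0 = (z,n_0)$. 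Equivalently, the crux is the structural fact that $v \glt w$ for one immediate predecessor $w$ of $u$ iff it holds for all of them, read off from Proposition~\ref{prop:games:gle} and Proposition~\ref{prop:ax:tree}; everything else is routine.
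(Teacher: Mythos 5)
Your proof is correct, but it follows a genuinely different route from the paper's at the inductive step. The paper forms (by Comprehension) the set $U$ of immediate predecessors of $u$ that dominate some element of $V$, applies the induction hypothesis \emph{once per element of $U$} to extract a local maximum $\tilde w \in V$ of the part of $V$ below each $w \in U$, and then — using the fact that $U$ has the shape $\{x\} \times k$ for an $\HF$-set $k$ — runs a $\preceq$-induction over the well-ordered finite set $k$ (Remark~\ref{rem:ax:hf:well-order-hf}, Remark~\ref{rem:hfchoice}) to select the largest of these finitely many local maxima, which by linearity of $V$ is the global maximum. You instead split on whether some immediate predecessor of $u$ already lies in $V$: if so, the antichain property of immediate predecessors makes it the maximum outright, with no use of the induction hypothesis; if not, your structural analysis of the predecessor sets (for an $\OL{}$-labelled $u=(y,m)$ they are exactly the $(y,k)$, $k \In \PL{}$, forcing all of $V$ to strictly smaller tree positions; for a $\PL{}$-labelled $u$ they are the $(z,n)$, $n \In \OL{}$, with $v \glt (z,n)$ independent of $n$) shows that \emph{all} of $V$ sits strictly below a single predecessor $w_0$, so one application of the induction hypothesis at $w_0$ finishes. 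Both arguments hinge on the same special structure of $\glt$ (all immediate predecessors of a position sit over one tree node), but yours buys a shorter, more elementary inductive step — a single use of the hypothesis and no well-order/finite-maximum machinery on $\HF$-sets — at the price of an explicit two-case analysis of the predecessor sets, whereas the paper's local-maxima argument defers that combinatorics to the generic device of taking a maximum over a finite well-ordered label set.
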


\begin{proof}
By $\glt$-induction, we prove the following property:
\begin{enumerate}[$(\star)$]
\item\label{item:game-maximal-element} For all $u$,
for all $V$,
if $V$ is non-empty, linearly ordered by $\glt$
and such that $\forall v \in V(v \gle u)$,
then $V$ has a $\glt$-maximal element.
\end{enumerate}

\noindent
Let $u$ and $V$ satisfy the assumptions from \ref{item:game-maximal-element} above,
and assume \ref{item:game-maximal-element} for all $c \glt u$.
First, if $u = v$ for some $v \in V$, then $u$ is indeed the maximal element
of $V$.
So we can assume $v \glt u$ for all $v \in V$.

By Comprehension for Product Types (Thm.~\ref{thm:funto:ca}),
let $U$ be the set of all 
$w$ such that $\gsucc(w,u)$
and such that $v \gle w$ for some $v \in V$.
For each $v \in V$, 
it follows from Proposition~\ref{prop:games:gle}.\eqref{eq:games:gle:decomp}
that there is some $w \in U$ such that $v \gle w$.
In particular, $U$ is non-empty since $V$ is non-empty.

We claim the following:
\begin{subclm}
\label{clm:games:maxlinbounded}
\[
(\forall w \in U)(\exists! \tilde w \in V) 
\underbrace{
(\forall v \in V)
\Big(
v \gle w ~~\limp~~ v \gle \tilde w
\Big)
}_{\vartheta(w,\tilde w)}
\]
\end{subclm}

\begin{subproof}[Proof of Claim \thesubclm]
Let $w \in U$.
By Comprehension for Product Types (Thm.~\ref{thm:funto:ca}),
let $W$ be the set of all $v \in V$ such that $v \gle w$.
Note that $W$ is non-empty by definition of $U$.
It is inherits the property of being linearly ordered from $V$,
and by construction it is bounded by $w$ with $w \glt u$.
By induction hypothesis, $W$ has a maximal element, say $\tilde w$.
We indeed have $\tilde w \in V$
and $v \gle \tilde w$
for all $v \in V$ with $v \gle w$.
Since $\tilde w \gle w$, uniqueness follows from the antisymmetry of $\gle$.
\end{subproof}

The remainder of the argument relies on the particular structure of $\glt$.
Using Comprehension on $\HF$-Sets,
it follows from the definition of $\glt$ that there is some $x \in \univ$
and some $\HF$-Set $k$ such that 
$U$ is exactly the set of all $(x,\ell)$ with $\ell \in k$.
This observation allows us to show

\begin{subclm}
\[
(\exists \tilde w_m \in V)
\underbrace{
(\forall w \in U)
(\forall \tilde w \in V)
\big(
\vartheta(w,\tilde w) ~~\limp~~ \tilde w \gle \tilde w_m
\big)
}_{\varphi(\tilde w_m)}
\]
\end{subclm}

\begin{subproof}[Proof of Claim \thesubclm]
Write $\preceq$ for the well-order on $k$ given by Remark~\ref{rem:ax:hf:well-order-hf}.
By $\preceq$-Induction
(Remark~\ref{rem:hfchoice}.\eqref{eq:hfchoice:ind})
we show the following:
\[
(\forall \ell \in k)
(\exists m \in k)
\underbrace{
(\forall n \preceq \ell)
(\forall \tilde w_n, \tilde w_m \in V)
\big(
  \vartheta((x,n),\tilde w_n)
  ~~\limp~~
  \vartheta((x,m),\tilde w_m)
  ~~\limp~~
  \tilde w_n \gle \tilde w_m
\big)}_{\psi(\ell,m)}
\]

\noindent
Let $\ell \in k$ be such that the property holds for all $\ell' \prec \ell$.
If $\ell$ is $\preceq$-minimal, the result follows from the
existence of a unique $\tilde w$ such that $\vartheta((x,\ell),\tilde w)$.
Otherwise, let $\ell'$ be the $\preceq$-predecessor of $\ell$,
and let $m \in k$ such that $\psi(\ell',m)$ be given by induction hypothesis.
By Claim~\ref{clm:games:maxlinbounded}, let 
$\tilde w_\ell,\tilde w_m$ be the unique elements of $V$ such that
$\vartheta((x,\ell),\tilde w_\ell)$ and $\vartheta((x,m),\tilde w_m)$.
Since $V$ is linearly ordered, we have either that 
$\tilde w_m \gle \tilde w_\ell$ or that $\tilde w_m \gle \tilde w_\ell$.
In the former case, we take $\ell$ for the new $m$, and in the latter
we keep the same $m$.

Since $U$ is non-empty, there is a $\preceq$-maximal $\ell \in k$.
Let $m \in k$ such that $\psi(\ell,m)$,
and by Claim~\ref{clm:games:maxlinbounded},
let $\tilde w_m \in V$
such that $\vartheta((x,m),\tilde w_m)$.
By definition of $k$, we do have
$\tilde w \gle \tilde w_m$
for all $\tilde w \in V$ with $\vartheta(w,\tilde w)$
for some $w \in U$.
Hence we have that $\varphi(\tilde w_m)$.
\end{subproof}

Consider now $\tilde w_m \in V$ such that $\varphi(\tilde w_m)$.
As noted above, for all $v \in V$ there is some $w \in U$ such that $v \glt w$.
But we also have $v \gle \tilde w$ where $\tilde w$ is unique such that
$\vartheta(w,\tilde w)$.
It thus follows
that $v \gle \tilde w_m$ for all $v \in V$.

This concludes the proof of Lemma~\ref{lem:games:maxlinbounded}.
\end{proof}

We can now prove Lemma~\ref{lem:games:predpath}.

\begin{proof}[Proof of Lemma~\ref{lem:games:predpath}]
Fix $v \in U$ with $u_0 \glt v$.
By Comprehension for Product Types (Thm.~\ref{thm:funto:ca}),
let $W$ be the set of all $w \in U$ such that $w \glt v$.
Since $u_0 \glt v$ and $\Path(u_0,U)$, the set $W$
is non-empty, linearly ordered and bounded by~$v$.
By Lemma~\ref{lem:games:maxlinbounded}, it has a maximal element, say $w$.
We have $u_0 \gle w$ and $w \glt v$.
Moreover, by $\Path(u_0,U)$
there is some $\tilde w \in U$ such that $\gsucc(w,\tilde w)$.
Again by $\Path(u_0,U)$, we have
\[
\left(
  \tilde w \glt v \quad\lor\quad \tilde w = v \quad\lor\quad v \glt \tilde w
\right)
\]

\noindent
But $\tilde w \glt v$ implies $\tilde w \gle w$, a contradiction,
while $v \glt \tilde w$ implies $w \glt v \glt \tilde w$,
contradicting $\gsucc(w,\tilde w)$.
It thus follows that $\tilde w = v$
and we are done.
\end{proof}


\section{Infinite Two-Player Games}
\label{sec:games}

\noindent
This Section is devoted to definitions and basic properties relating to games,
building on~\S\ref{sec:pos}.
We will use these games in~\S\ref{sec:aut} and~\S\ref{sec:sim} to formalize
a basic theory of tree automata in $\FSO$.

Our games are played on bipartite dags (with partial order $\gle_\G$)
induced by labels of game positions $(\PL\G,\OL\G)$ in the sense of~\S\ref{sec:pos}.
Continuing~\S\ref{sec:pos},
\emph{Proponent} will play from positions of the form
\[
\PP\G \quad=\quad \univ \times \PL\G
\]
while \emph{Opponent} will play from positions of the form
\[
\OP\G \quad=\quad \univ \times \OL\G
\]

\noindent
A game will be given by specifying edge relations of the form
\[
(x,k) \edge{}{\Prop} (x,\ell)
\quad\text{and}\quad
(x,\ell) \edge{}{\Opp} (x.d,k)
\qquad\text{where $k \in \PL\G$, $\ell \in \OL\G$ and $d \in \Dir$,}
\]
so that, for $\player$ either $\Prop$ or $\Opp$,
\[
u \edge{}{\player} v
\qquad\text{implies}\qquad
u \glt v
\]
(actually even $\gsucc(u,v)$).
We insist on the fact that $\Prop$ can only move to a game position with the same
underlying tree position,
while $\Opp$ is forced to move to a game position with a
successor tree position.

We first give basic definitions and results on games
(\S\ref{sec:games:games})
and infinite plays
(\S\ref{sec:games:plays}).
Besides the above mentioned constraints on the shape of games,
these notions are standard.
Our notion of strategy is presented in~\S\ref{sec:strat}.
A crucial point here is that, \wrt\@ our games,
the monadic language imposes all strategies to be
by construction \emph{positional} in the usual sense
(see \eg~\cite{thomas97handbook}).
Finally,~\S\ref{sec:games:win} briefly discusses our setting
for \emph{winning} in games, and~\S\ref{sec:games:parity} presents in more detail
the important particular case of \emph{parity} conditions.
Parity conditions are one of the prominent formulations of winning conditions
for $\omega$-regular games.
This is
in particular due to the fact that they are \emph{positionally} determined,
\ie\@ the winner of a parity game can always win with a \emph{positional}
winning strategy~\cite{ej91focs}
(see also~\cite{thomas97handbook,walukiewicz02tcs,pp04book}).
This is of crucial importance in our setting as all our strategies
are inherently positional, due to the underlying limits on expressiveness in the language of $\MSO$.
Finally, the Axiom $(\PosDet)$ of Positional Determinacy
of Parity Games is formulated in~\S\ref{sec:posdet}.

\subsection{Games}
\label{sec:games:games}

A game $\G$ will be given by labels of game positions
$\PL\G$ and $\OL\G$ together with Functions
\[
\funto{\EP}{\PP\G}{\Pne(\OL\G)}
\qquad\text{and}\qquad
\funto{\EO}{\OP\G}{\Pne(\Dir \times \PL\G)}
\]
where $\Pne(-)$ is the $\HF$-Function of~\S\ref{sec:ax:hf}.\ref{item:ax:hf:po}.
Such Functions $\EP,\EO$ induce edge relations
$\edge{}{\PL\G}$ and $\edge{}{\OL\G}$
given by
\[
\begin{array}{l !{\quad\text{iff}\quad} l}
  (x,k) \edge{}{\PL\G} (x,\ell)
& \ell \in \EP(x,k)
\\
  (x,\ell) \edge{}{\OL\G} (x.d,k)
& (d,k) \in \EO(x,\ell)
\end{array}
\]

\noindent
We make this formal in the following definition.

\begin{defi}[Games and Edge Relations]
\label{def:games:games}
\hfill
\begin{enumerate}
\item
A \emph{game} $\G$ 
is given by $\HF$-terms $\PL\G,\OL\G$ and Functions $\EGP\G,\EGO\G$
which satisfy the following formula
\[
\Game(\PL\G,\OL\G,\EGP\G,\EGO\G) \quad\deq\quad
\left\{
\begin{array}{c l}
& \Labels(\PL\G,\OL\G)
\\
  \land
& \funto{\EGP\G}{\PP\G}{\Pne(\OL\G)}
\\
  \land
& \funto{\EGO\G}{\OP\G}{\Pne(\Dir \times \PL\G)}
\end{array}
\right.
\]

\noindent
We often write $\Game(\G)$ for
$\Game(\PL\G,\OL\G,\EGP\G,\EGO\G)$.
Moreover, when no ambiguity arises, we abbreviate
$\G = (\PL\G,\OL\G,\EGP\G,\EGO\G)$ as
$\G = (\PL\G,\OL\G,\EP,\EO)$
or even 
$\G = (\PL\G,\OL\G,\e)$
or
$\G = (\Prop,\Opp,\e)$.

\item
The \emph{edge relations} induced by
$\G = (\Prop,\Opp,\EP,\EO)$
are defined as follows:
\[
\begin{array}{c !{\quad\deq\quad} l}
  (x,k) \edge{}{\PL\G} (y,\ell)
& k \In \Prop ~\land~ x \Eq y ~\land~ \ell \In \EP(x,k)
\\
  (x,\ell) \edge{}{\OL\G} (y,k)
& \ell \In \Opp ~\land~
  \bigdisj_{d \in \Dir} 
  \big(y \Eq \Succ_d(x) ~\land~ (d,k) \In \EO(x,\ell)\big)
\\
  u \edge{}{\G} v
& u \edge{}{\PL\G} v ~\lor~ u \edge{}{\OL\G} v
\end{array}
\]

\noindent
When no ambiguity arises, we write
$\edge{}{\Prop}$, 
$\edge{}{\Opp}$
and
$\edge{}{}$, 
for
$\edge{}{\PL\G}$, 
$\edge{}{\OL\G}$
and
$\edge{}{\G}$.
\end{enumerate}
\end{defi}

Note that $\Game(\G)$
implies that the edge relation $\edge{}{}$ has no dead ends,
\ie\@ that from any position, a move can always be made by one of the players.
It follows that
the edge relation $\edge{}{}$ induces
an unbounded partial order.
(Note that it already follows from the structure of $\edge{}{}$ that 
it induces a partial order.)

\begin{lem}
$\FSOD$ proves
\[
\Game(\G)
\quad\longlimp\quad
(\forall u) (\exists v) \left(u \edge{}{} v\right)
\]
\end{lem}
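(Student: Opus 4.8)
The plan is to argue within $\FSOD$ by a straightforward case analysis on the label of an arbitrary game position. Assume $\Game(\G)$ and fix a game position $u$. By Notation~\ref{not:games:pos}.(\ref{not:games:pos:indiv}), $u$ is a pair $(x,m)$ with $x$ an Individual and $m$ an $\HF$-variable ranging over $\PO\G = \PL\G \cup \OL\G$; hence either $m \In \PL\G$ or $m \In \OL\G$, and I would split on these two cases. In each case the task is to exhibit a witness $v$ with $u \edge{}{} v$ and then discharge the bounded existential $(\exists v) = (\exists x)(\exists \ell \In \PO\G)$ by instantiation.

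In the \emph{Proponent} case ($m \In \PL\G$), unfolding the hypothesis $\funto{\EGP\G}{\PP\G}{\Pne(\OL\G)}$ (a conjunct of $\Game(\G)$) at the position $(x,m)$ gives $\EP(x,m) \In \Pne(\OL\G)$. By the defining property of the non-empty powerset $\Pne$ in~\S\ref{sec:ax:hf}.(\ref{item:ax:hf:po}), this yields both $\EP(x,m) \In \Po(\OL\G)$ and $(\exists \ell \In \EP(x,m))$. Fixing such an $\ell$, the Powerset axiom $(\forall \ell \In \Po(k))(\forall m \In \ell)(m \In k)$ gives $\ell \In \OL\G$, hence $\ell \In \PO\G$, so $(x,\ell)$ is a legitimate game position. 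By the definition of $\edge{}{\PL\G}$ in Definition~\ref{def:games:games} we then have $(x,m) \edge{}{\PL\G}(x,\ell)$, and therefore $u \edge{}{}(x,\ell)$.

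In the \emph{Opponent} case ($m \In \OL\G$), unfolding $\funto{\EGO\G}{\OP\G}{\Pne(\Dir \times \PL\G)}$ gives $\EO(x,m) \In \Pne(\Dir \times \PL\G)$, so again $\EO(x,m) \In \Po(\Dir \times \PL\G)$ together with non-emptiness. Picking a pair $(d,k) \In \EO(x,m)$, the Powerset axiom gives $(d,k) \In \Dir \times \PL\G$, and the characterization of products in~\S\ref{sec:ax:hf}.(\ref{item:ax:hf:fun:prod}) together with injectivity of pairing yields $d \In \Dir$ and $k \In \PL\G \Sle \PO\G$. Taking $y \deq \Succ_d(x)$ makes $(y,k)$ a game position, and selecting the disjunct for this $d$ in the definition of $\edge{}{\OL\G}$ (using $y \Eq \Succ_d(x)$ and $(d,k) \In \EO(x,m)$) gives $(x,m) \edge{}{\OL\G}(y,k)$, hence $u \edge{}{}(y,k)$.

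Since in each case I have produced a witness $v$ with $u \edge{}{} v$, the conclusion $(\forall u)(\exists v)(u \edge{}{} v)$ follows. I do not expect any genuinely hard step here: the entire content is the non-emptiness built into the codomains $\Pne(\OL\G)$ and $\Pne(\Dir \times \PL\G)$. The only point requiring care—the mild bookkeeping obstacle—is the unfolding of the curried product-type notation $\funto{\EGP\G}{\univ \times \PL\G}{\Pne(\OL\G)}$ (resp.\ for $\EGO\G$) so as to justify applying $\EGP\G$ at an argument whose label lies in $\PL\G$, followed by extracting from $\Pne$ a genuine element whose own label lies in $\PO\G$, so that the exhibited pair is a bona fide game position.
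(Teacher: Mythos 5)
Your proof is correct and is exactly the argument the paper has in mind: the paper states this lemma without proof, treating it as immediate from the fact that the codomains $\Pne(\OL\G)$ and $\Pne(\Dir \times \PL\G)$ in $\Game(\G)$ build in non-emptiness, which is precisely the case analysis you carry out. The unfolding of the product-type notation and the extraction of an element of the non-empty powerset via~\S\ref{sec:ax:hf} are handled correctly, so nothing is missing.
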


Games are equipped with a natural notion of subgame.
In this paper we will use subgames to ease some reasoning on automata
(in particular in~\S\ref{sec:sim}),
and also to more easily define certain strategies that are more naturally seen as concepts at the game level
(see~\S\ref{sec:strat}).
We only need the following weak notion of subgame.

\begin{defi}[Subgame]
\label{def:games:sub}
We say that $\G'$ is a \emph{subgame} of $\G$ whenever the following
formula holds
\[
\Sub(\G',\G)
\quad\deq\quad
\PL{\G'} \Eq \PL{\G}
~~\land~~
\OL{\G'} \Eq \OL{\G}
~~\land~~
(\forall u , v) \left(
  u \edge{}{\G'} v ~~\limp~~ u \edge{}{\G} v
\right)
\]
\end{defi}

\begin{rem}
\label{rem:games:sub}
Let $\G = (\PL\G,\OL\G,\EGP\G,\EGO\G)$ with $\Game(\G)$.
Then we have $\Sub(\G,\G(\gle))$,
where $\G(\gle)$
stands for the game
\[
(\PL\G,\OL\G,\EO,\EP)
\]
in which by $\HF$-Bounded Choice we let
\[
\EP(x,k) \deq \OL\G
\qquad\text{and}\qquad
\EO(x,\ell) \deq (\Dir \times \PL\G)
\]
Note that the edge relation of $\G(\gle)$
is precisely the relation $\gle_{(\PL\G,\OL\G)}$
of Definition~\ref{def:games:order},
hence the notation.
\end{rem}

\renewcommand\fntext{It is well known (see \eg~\cite[Chap.\@ 4]{libkin04book})
that transitive closure in graphs
is not expressible in first-order logic over the edge relation.}
The edge relation $\edge{}{}$ of a game $\G$ only specifies the \emph{moves} of $\G$.
In order to manipulate plays (\ie\@ sequences of moves)
we define the reflexive-transitive closure $\edge{*}{}$
and the transitive closure $\edge{+}{}$ of $\edge{}{}$.
As expected, these are second-order notions.\fn

\begin{defi}
Let $\G = (\Prop,\Opp,\EP,\EO)$
where $\Prop,\Opp$ are $\HF$-variables and $\EP,\EO$ are Function variables.
We define the following formulae.
\[
\begin{array}{r !{\quad\deq\quad} l !{\qquad} l}
  \DC_\G (V)
& (\forall v \in V) (\forall u)
  \left(u \edge{}{\G} v ~~\limp~~ u \in V \right)
& \text{\emph{($V$ is downward-closed)}}
\\
  u \edge{*}{\G} v
& (\forall V)
  \big(\DC_\G (V)
  ~~\limp~~
  v \in V
  ~~\limp~~
  u \in V \big)
\\
  u \edge{+}{\G} v
& u \edge{*}{\G} v ~\land~ \lnot (u = v)
\end{array}
\]

\noindent
Whenever possible, we write 
$\edge{*}{}$ and $\edge{+}{}$
for 
$\edge{*}{\G}$ and $\edge{+}{\G}$.
\end{defi}


The relations $\edge{*}{}$ and $\edge{+}{}$
satisfy properties analogous to those of Proposition~\ref{prop:games:gle}:

\begin{prop}[Properties of Edge Relations]
\label{prop:games:edges}
$\FSOD$ proves the following, under the assumption $\Game(\G)$.
\begin{enumerate}
\item 
\label{eq:games:edges:gsucc}
$u \edge{}{} v ~~\limp~~ \gsucc (u,v)$

\item
\label{eq:games:edges:irrasym}
$\edge{}{}$ is irreflexive and asymmetric.

\item
\label{eq:games:edges:reftrans}
$\edge{*}{}$
is reflexive and transitive.

\item
\label{eq:games:edges:decomp}
$u \edge{*}{} v \quad\liff\quad 
  u = v \lor (\exists w)\left( u \edge{*}{} w \edge{}{} v \right)
\quad\liff\quad
  u = v \lor (\exists w) \left( u \edge{}{} w \edge{*}{} v \right)$

\item
$u \edge{*}{} v ~~\limp~~ u \gle v$

\item
\label{eq:games:edges:antisym}
$\edge{*}{}$ is antisymmetric.

\item
\label{eq:games:edges:glt}
$u \edge{+}{} v ~~\limp~~ u \glt v$

\item
$\edge{+}{}$ is irreflexive and transitive.

\item
\label{eq:games:edges:root}
$(\forall k \in \Prop) \left( u \edge{*}{} (\Root,k) ~~\limp~~ u = (\Root,k) \right)$
\end{enumerate}
\end{prop}

\begin{fullproof}
\begin{enumerate}
\item
Assume $u \edge{}{} v$ and let $u = (x,k)$ and $v = (y,\ell)$.
If $k \in \Prop$, then we must have $y \Eq x$ and $\ell \in \Opp$, so that $\gsucc(u,v)$.
Otherwise, we must have $u \edge{}{\Opp} v$ which implies $y = \Succ_d(x)$
for some $d \in \Dir$.
Moreover, reasoning as in the proof of Proposition~\ref{prop:games:gle},
we get $\gsucc(u,v)$ from the facts that $k \in \Opp$ and $\ell \in \Prop$.

\item
Inherited from the same properties for $\gsucc$.

\item
Reflexivity follows from reflexivity of implication
and transitivity follows from transitivity of implication.

\item
Assume first
$u \edge{*}{} w \edge{}{} v$.
Given $\DC(V)$ such that $v \in V$, we must have $w \in V$.
But $u \edge{*}{} w$ implies $u \in V$.
Hence $u \edge{*}{} v$.
Similarly, if
$u \edge{}{} w \edge{*}{} v$,
given $\DC(V)$ such that $v \in V$,
we have $w \in V$ by definition of $\edge{*}{}$
and we get $u \in V$ since $\DC(V)$.

Assume conversely that $u \edge{*}{} v$ with $u \neq v$.
We first show that $u \edge{*}{} w \edge{}{} v$ for some $w \in \PosSet$.
By Comprehension for Product Types (Theorem~\ref{thm:funto:ca}),
let $W$ be the set of all $\tilde w$
such that $\tilde w \edge{*}{} w$ for some $w \edge{}{} v$,
and let $V$ be the union of $W$ with $\{v\}$.
We claim that $V$ is downward closed.
Indeed, assume given $w' \in V$ and $w'' \edge{}{} w'$.
If $w' = v$, then (by reflexivity of $\edge{*}{}$),
$w'' \edge{}{} v$ implies $w'' \in W \sle V$.
Otherwise, we must have $w' \in W$, so that $w' \edge{*}{} \tilde w$
for some $\tilde w \edge{}{} w$.
But by transitivity of $\edge{*}{}$, we have $w'' \edge{*}{} \tilde w$,
so that $w'' \in W$.
Since $v \in V$ and $V$ is downward closed, 
$u \edge{*}{} v$ implies $u \in V$, and $u \neq v$
implies $u \in W$, so that $u \edge{*}{} w$ for some $w \edge{}{} v$.

We reason similarly in order to show
$u \edge{}{} w \edge{*}{} v$ for some $w \in \PosSet$.
Again by Comprehension, let $W$ be the set of all $\tilde w$
such that $\tilde w \edge{}{} w$ for some $w \edge{*}{} v$,
and let $V = W \cup \{v\}$.
Again, $V$ is downward closed, since
$w \edge{}{} v$ implies $w \edge{}{} v \edge{*}{}v$,
and given $w' \edge{}{} \tilde w \in W$ with
$\tilde w \edge{}{} w \edge{*}{} v$
we have
$w' \edge{}{} \tilde w \edge{*}{} v$.
Now, $u \edge{*}{} v$ implies $u \in V$, but since $u \neq v$,
we have $u \in W$ and we are done.

\item
For the right-to-left direction, assume $u \edge{*}{} w \edge{}{} v$.
Given $\DC(V)$ such that $v \in V$, we must have $w \in V$.
But $u \edge{*}{} w$ implies $u \in V$. Hence $u \edge{*}{} v$.

For the left-to-right direction, assume $u \edge{*}{} v$ with $\lnot(u = v)$.
By Comprehension for Product Types (Theorem~\ref{thm:funto:ca}),
let $W$ be the set of all $\tilde w$
such that $\tilde w \edge{*}{} w$ for some $w \edge{}{} v$,
and let $V$ be the union of $W$ with $\{v\}$.
We claim that $V$ is downward closed.
Indeed, assume given $w' \in V$ and $w'' \edge{}{} w'$.
If $w' = v$, then (by reflexivity of $\edge{*}{}$),
$w'' \edge{}{} v$ implies $w'' \in W \sle V$.
Otherwise, we must have $w' \in W$, so that $w' \edge{*}{} \tilde w$
for some $\tilde w \edge{}{} w$.
But by transitivity of $\edge{*}{}$, we have $w'' \edge{*}{} \tilde w$,
so that $w'' \in W$.

Since $v \in V$ and $V$ is downward closed, 
$u \edge{*}{} v$ implies $u \in V$, and $\lnot(u = v)$
implies $u \in W$, so that $u \edge{*}{} w$ for some $w \edge{}{} v$.

\item
Assume given $u$.
By $\glt$-induction, we show that $\forall v(u \edge{*}{} v \limp u \gle v)$.
So let $v$ such that the property holds for all $w \glt v$,
and assume $u \edge{*}{} v$.
If $u = v$ then we are done.
Otherwise, there is some $w \edge{}{} v$ such that $u \edge{*}{} w$.
But we have seen that $w \edge{}{} v$ implies $w \glt v$,
so that the induction hypothesis gives $u \gle w$,
and we are done by transitivity of $\gle$.

\item
Inherited from the same property for $\gle$.

\item
If $u \edge{+}{} v$ then $u \edge{*}{} v$ and $u \neq v$,
so that $u \gle v \land u \neq v$, which implies $u \glt v$
by definition of $\gle$.

\item
Inherited from the same properties for $\glt$.

\item
Inherited from the same property for $\gle$.
\qedhere
\end{enumerate}
\end{fullproof}

Induction for games (\ie\@ \wrt\@ edge relations)
is an immediate corollary to Theorem~\ref{thm:games:ind:pos}
and Proposition~\ref{prop:games:edges}.

\begin{cor}[Game Induction]
\label{cor:games:edges:ind}
$\FSOD$ proves the following,
under the assumption $\Game(\G)$.
\[
(\forall V)
\left(
(\forall v)
\left[
\left(\forall u \edge{+}{} v \right)(u \in V)
~~\longlimp~~
v \in V
\right]
~~\longlimp~~
(\forall v) \big( v \in V \big)
\right)
\]
\end{cor}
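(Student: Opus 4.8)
The plan is to derive this form of well-founded induction directly from the already-established $\glt$-Induction (Theorem~\ref{thm:games:ind:pos}), which holds under $\Labels(\PL{},\OL{})$ and hence \emph{a fortiori} under $\Game(\G)$. The only additional ingredient I would need is the containment of the relation $\edge{+}{}$ in $\glt$, recorded in Proposition~\ref{prop:games:edges}.(\ref{prop:games:edges:glt}), namely that $u \edge{+}{} v \limp u \glt v$. Since $\edge{+}{}$ has fewer edges than $\glt$, the induction premise for $\edge{+}{}$ is \emph{weaker} than that for $\glt$, which is exactly what lets the former discharge the latter.

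Concretely, I would assume $\Game(\G)$, fix an arbitrary set of game positions $V$, and assume the induction premise
\begin{equation}
(\forall v)\Big[\big(\forall u \edge{+}{} v\big)(u \in V) ~~\longlimp~~ v \in V\Big]
\tag{$\ast$}
\end{equation}
To conclude $(\forall v)(v \in V)$, I would invoke Theorem~\ref{thm:games:ind:pos}, for which it suffices to establish its premise $(\forall v)\big[(\forall u \glt v)(u \in V) \limp v \in V\big]$. So I would fix $v$ and assume $(\forall u \glt v)(u \in V)$. By Proposition~\ref{prop:games:edges}.(\ref{prop:games:edges:glt}), every $u$ with $u \edge{+}{} v$ also satisfies $u \glt v$, so $(\forall u \edge{+}{} v)(u \in V)$ follows from this assumption. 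Instantiating $(\ast)$ at $v$ then yields $v \in V$. As $v$ is arbitrary, the premise of $\glt$-Induction is met, and Theorem~\ref{thm:games:ind:pos} delivers $(\forall v)(v \in V)$.

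There is essentially no obstacle here, consistent with the statement being flagged as an \emph{immediate corollary}; the entire argument is a one-step reduction. The single point requiring care is the direction of the containment: I must use that $\edge{+}{}$ is a sub-relation of $\glt$ (not the reverse), so that the stronger hypothesis $(\forall u \glt v)(u \in V)$ is enough to supply the weaker $(\forall u \edge{+}{} v)(u \in V)$ demanded by $(\ast)$. Well-founded induction transfers from any relation to each of its sub-relations, and Proposition~\ref{prop:games:edges} guarantees $\edge{+}{}$ sits below $\glt$ in exactly this way.
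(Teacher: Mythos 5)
Your proposal is correct and matches the paper's intended argument exactly: the paper states this result as an immediate corollary of Theorem~\ref{thm:games:ind:pos} ($\glt$-Induction) and Proposition~\ref{prop:games:edges}, and your one-step reduction via the containment $u \edge{+}{} v \limp u \glt v$ (item~(\ref{prop:games:edges:glt}) of that proposition) is precisely the intended derivation, with the direction of the containment used correctly.
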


\subsection{Infinite Plays}
\label{sec:games:plays}
We now define our notion of infinite play.
They are sets of game positions which are unbounded and
linearly ordered \wrt\@ $\edge{}{}$.
Infinite plays will allow us to define winning in games (\S\ref{sec:games:win})
and thus acceptance for tree automata (\S\ref{sec:aut}).
Furthermore, we prove a number of basic properties on infinite plays
on which we rely for the formalization of usual operations
on tree automata.

In the following, given $\G=(\Prop,\Opp,\e)$,
we write $\Path(\G,u,U)$ for $\Path(\Prop,\Opp,u,U)$,
where $\Path$ is as in Definition~\ref{def:games:path}.

\begin{defi}[Infinite Plays]
Let $\G = (\Prop,\Opp,\EP,\EO)$
where $\Prop,\Opp$ are $\HF$-variables and $\EP,\EO$ are Function variables.
Given a position $u$ and a set of game positions $U$,
we say that $U$ is an \emph{infinite play in $\G$ from $u$} when the following
formula $\Play(\G,u,U)$ holds:
\[
\Play(\G,u,U)
\quad\deq\quad
\left\{
\begin{array}{c l}
& 
  (u \in U)
\\
  \land
& (\forall v \in U) \big( u \edge{*}{\G} v \big)
\\
  \land
& (\forall v \in U) (\exists w \in U) \big(v \edge{}{\G} w \big)
\\
  \land
& (\forall v,w \in U) \big(v \edge{+}{\G} w ~\lor~ v = w ~\lor~ w \edge{+}{\G} v \big)
\end{array}
\right.
\]
\end{defi}

\noindent
Note that $\Play(\G,u,U)$ is literally just the formula $\Path(\G,u,U)$
in which $\edge{*}{\G}$ replaces $\gle$,
$\edge{}{\G}$ replaces $\gsucc(-,-)$ and
$\edge{+}{\G}$ replaces $\glt$.
It follows from Proposition~\ref{prop:games:edges}
that $\Play(\G,u,U)$ implies $\Path(\G,u,U)$.
In other words, an infinite play in $\G = (\Prop,\Opp,\e)$
is simply an infinite path of the underlying
partial order $\gle_{(\Prop,\Opp)}$
which respects the transitions of $\G$ induced by $\e$.
Also, if $\G'$ is a subgame of $\G$,
then $\Play(\G',u,U)$ implies $\Play(\G,u,U)$.

We now gather some basic properties on infinite plays.
The first one will help to show that a set of game positions
is linearly ordered.

\begin{prop}
\label{prop:games:edges:lin}
$\FSOD$ proves the following, assuming $\Game(\G)$.
Let $V$ and $u_0 \in V$ be such that
\[
\left\{
\begin{array}{c l}
& (\forall v \in V)(u_0 \edge{*}{} v)
\\
  \land
& (\forall u \in V)(\exists! v \in V)(u \edge{}{} v)
\\
  \land
& (\forall v \in V) \left[
  v \neq u_0 ~~\limp~~ (\exists u \in V)(u \edge{}{} v)
  \right]
\end{array}
\right.
\]

\noindent
Then
\[
 (\forall v,w \in V)\left(
  v \edge{+}{} w \quad\lor\quad v = w \quad\lor\quad w \edge{+}{} v
\right)
\]
\end{prop}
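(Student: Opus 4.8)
The plan is to recognise $V$ as the forward orbit of $u_0$ under its (total, by~(ii)) $V$-successor function, hence a single $\edge{+}{}$-chain. Write (i), (ii), (iii) for the three hypotheses on $V$ --- reachability of every element from $u_0$, existence of a \emph{unique} $\edge{}{}$-successor inside $V$, and existence of an $\edge{}{}$-predecessor inside $V$ for every element other than $u_0$. Since $\edge{+}{}$ refines $\glt$ (Proposition~\ref{prop:games:edges}.(\ref{prop:games:edges:glt})), I would run a $\glt$-induction (Theorem~\ref{thm:games:ind:pos}) to prove the comparability statement
\[
\Phi(w) \quad\deq\quad w \in V ~~\longlimp~~ (\forall v \in V)\big(v \edge{*}{} w ~\lor~ w \edge{+}{} v\big).
\]
Unfolding $v \edge{*}{} w$ as $v \Eq w \lor v \edge{+}{} w$, the statement $(\forall w)\Phi(w)$ is exactly the desired linearity of $V$.

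For the induction step fix $w \in V$ with $\Phi(w')$ available for all $w' \glt w$. If $w \Eq u_0$, then (i) gives $u_0 \edge{*}{} v$ for every $v \in V$, which is $\Phi(u_0)$. Otherwise (iii) supplies a $V$-predecessor $w'$ with $w' \edge{}{} w$; then $w' \glt w$, so $\Phi(w')$ holds. Fixing $v \in V$ and instantiating $\Phi(w')$ at $v$ leaves two cases. If $v \edge{*}{} w'$, then $v \edge{*}{} w' \edge{}{} w$ gives $v \edge{*}{} w$ by transitivity of $\edge{*}{}$ (Proposition~\ref{prop:games:edges}.(\ref{prop:games:edges:reftrans})). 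The delicate case is $w' \edge{+}{} v$, which I would settle via the following auxiliary claim.

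Auxiliary claim: if $w' \edge{}{} w$ with $w \in V$ --- so that $w$ is \emph{the} $V$-successor of $w'$ by (ii) --- and $w' \edge{+}{} v$ with $v \in V$, then $w \edge{*}{} v$ (whence $w \Eq v$ or $w \edge{+}{} v$, closing the step). I would prove this by a nested $\glt$-induction on $v$, with $w',w$ fixed and $\Phi(w')$ used as an established fact. First, $v \neq u_0$: otherwise $w' \edge{+}{} u_0$ together with $u_0 \edge{*}{} w'$ (from (i)) would give $u_0 \Eq w'$ by antisymmetry of $\edge{*}{}$ (Proposition~\ref{prop:games:edges}.(\ref{prop:games:edges:antisym})), contradicting irreflexivity of $\edge{+}{}$. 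Hence (iii) gives a $V$-predecessor $v'$ of $v$ with $v' \glt v$. Instantiating $\Phi(w')$ at $v'$: if $w' \edge{+}{} v'$, the nested hypothesis at $v'$ (legitimate as $v' \glt v$) yields $w \edge{*}{} v'$, and $v' \edge{}{} v$ gives $w \edge{*}{} v$; if instead $v' \edge{*}{} w'$, then either $v' \Eq w'$, so that $w' \edge{}{} v$ and uniqueness of the $V$-successor of $w'$ force $v \Eq w$ (hence $w \edge{*}{} v$), or $v' \edge{+}{} w'$, which is impossible since, with $w' \edge{+}{} v$, it gives $v' \glt w' \glt v$, contradicting $\gsucc(v',v)$ --- the latter holding because $v' \edge{}{} v$ (Proposition~\ref{prop:games:edges}.(\ref{prop:games:edges:gsucc})).

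The main obstacle --- and the reason the nested induction is needed --- is exactly this configuration: game edge relations do not have unique predecessors (every Proponent position precedes the corresponding Opponent positions), so $w' \edge{+}{} v$ cannot simply be decomposed through the $V$-successor $w$ of $w'$. What saves the argument is the discreteness of $\gle$: every edge is a $\gsucc$-step, which rules out the case $v' \edge{+}{} w' \edge{+}{} v$; uniqueness of $V$-successors (ii) disposes of the coincidence $v' \Eq w'$; and the comparability furnished by the outer induction hypothesis $\Phi(w')$ reduces everything to these two tractable situations.
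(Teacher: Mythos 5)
Your proof is correct and follows essentially the same route as the paper's: an outer induction establishing comparability of each element of $V$ with all the others, reduced via the predecessor supplied by the third hypothesis to an auxiliary claim proved by a nested induction, which is then settled exactly as in the paper using antisymmetry of $\edge{*}{}$, uniqueness of $V$-successors, and discreteness (every edge being a $\gsucc$-step). The only cosmetic difference is that you run both inductions as $\glt$-inductions (Theorem~\ref{thm:games:ind:pos}) where the paper invokes Game Induction on $\edge{+}{}$ (Corollary~\ref{cor:games:edges:ind}), itself a corollary of $\glt$-induction.
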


\begin{proof}
First, 
it follows from Proposition~\ref{prop:games:edges}.\eqref{eq:games:edges:antisym}
that $u_0$ is unique such that 
$(\forall v \in V)(u_0 \edge{*}{} v)$.
By induction on the edge relation $\edge{+}{}$
(cf.~Corollary~\ref{cor:games:edges:ind}) we show
\[
(\forall u \in V)
\underbrace{(\forall v \in V)
\left(
u \edge{+}{} v ~~\lor~~ u = v ~~\lor~~ v \edge{+}{} u
\right)}_{\theta(u)}
\]

\noindent
Let $u \in V$, and assume that $\theta(v)$ holds for all $v \in V$
such that $v \edge{+}{} u$.
If $u = u_0$ then we are done since
$u_0 \edge{*}{} v$ for all $v \in V$.
Otherwise, by assumption there is $v \in V$
with $v \edge{}{} u$, and moreover such that 
$u$ is the unique $\edge{}{}$-successor of $v$ in $U$.

Note that $v \edge{}{} u$ implies 
$v \edge{+}{} u$
(Proposition~\ref{prop:games:edges},
\eqref{eq:games:edges:irrasym} \& \eqref{eq:games:edges:decomp}),
so that $\theta(v)$ follows from the induction hypothesis.
Given $w \in V$, if $w \edge{*}{} v$ then we get $w \edge{*}{} u$ and we are done.
Otherwise, since $\theta(v)$ implies $v \edge{+}{} w$,
we may appeal to the following.

\begin{subclm}
\label{clm:games:edges:lin}
\[
(\forall w \in V)\left(
  v \edge{+}{} w ~~\limp~~ u \edge{*}{} w
\right)
\]
\end{subclm}

\begin{subproof}[Proof of Claim \thesubclm]
We reason by induction on $\edge{+}{}$.
So let $w \in V$ with $v \edge{+}{} w$
and such that
\[
(\forall w' \in V)
\left(w' \edge{+}{} w ~~\limp~~ v \edge{+}{} w' ~~\limp~~ u \edge{*}{} w' \right)
\]

\noindent
Since 
$u_0 \edge{*}{} v \edge{+}{} w$
we have $w \neq u_0$ by 
Proposition~\ref{prop:games:edges}.\eqref{eq:games:edges:antisym},
so that there is $w' \in V$ with $w' \edge{}{} w$.
If $v \edge{+}{} w'$ then the induction hypothesis implies $u \edge{*}{} w'$,
so that $u \edge{+}{} w$ and we are done.
Otherwise $\theta(v)$ implies $w' \edge{*}{} v$.
Assume for contradiction that $w' \edge{+}{} v$.
We thus have 
\[
w' \edge{+}{} v \edge{+}{} w
\]

\noindent
Proposition~\ref{prop:games:edges}.\eqref{eq:games:edges:glt}
then gives 
$w' \glt v \glt w$.
But this contradicts
$w' \edge{}{} w$
since the latter implies
$\gsucc(w',w)$
by
Proposition~\ref{prop:games:edges}.\eqref{eq:games:edges:gsucc}.
Hence $w'=v$. But then $v = w' \edge{}{} w \in V$ and, since $u$
is the unique $\edge{}{}$-successor of $v$ in $V$,
we have $u = w$, as required.
\end{subproof}

This concludes the proof of Proposition~\ref{prop:games:edges:lin}.
\end{proof}

Proposition~\ref{prop:games:edges:lin} is a useful tool to prove
that given sets of game positions are infinite plays.
Some constructions on automata
(see~\S\ref{sec:aut}, \S\ref{sec:sim})
furthermore require us to build plays
in one game from plays in another game.
To this end, we note here the following property,
which we informally see as a partial converse to
Proposition~\ref{prop:games:edges:lin}.

\begin{lem}[Predecessor Lemma for Infinite Plays]
\label{lem:games:predplays}
$\FSOD$ proves the following.
Assuming $\Game(\G)$ and $\Play(\G,u_0,U)$,
we have
\[
(\forall v \in U)\left[
  u_0 \edge{+}{} v
  ~~\longlimp~~
  (\exists u \in U)
  \left( u \edge{}{} v \right)
\right]
\]
\end{lem}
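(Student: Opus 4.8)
The plan is to reduce the statement to the already-established Predecessor Lemma for Game Paths (Lemma~\ref{lem:games:predpath}) and then to bridge the gap between the ambient successor relation $\gsucc$ and the game-specific edge relation $\edge{}{}$. First I would invoke the observation recorded just after the definition of infinite plays, that $\Play(\G,u_0,U)$ implies $\Path(\G,u_0,U)$ (via Proposition~\ref{prop:games:edges}). So, assuming $\Game(\G)$ and $\Play(\G,u_0,U)$, fix $v \in U$ with $u_0 \edge{+}{} v$. By Proposition~\ref{prop:games:edges}.(\ref{prop:games:edges:glt}) this gives $u_0 \glt v$, so applying Lemma~\ref{lem:games:predpath} to the path $U$ yields some $u \in U$ with $\gsucc(u,v)$.

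It remains to upgrade $\gsucc(u,v)$ to $u \edge{}{} v$, and this is the crux of the argument. The difficulty is that $\gsucc$ is the successor relation of the ambient order $\glt$ and is in general not functional: a $\Prop$-position $(x,k)$ has a distinct $\gsucc$-successor $(x,\ell)$ for each $\ell \In \OL{}$, as seen in the proof of Proposition~\ref{prop:games:gle}.(\ref{prop:games:gle:decomp}). Hence $\gsucc(u,v)$ alone does not pin down which move of $\G$ was actually played. To resolve this I would exploit the play structure: since $u \in U$, the third clause of $\Play$ furnishes some $w \in U$ with $u \edge{}{} w$, and Proposition~\ref{prop:games:edges}.(\ref{prop:games:edges:gsucc}) then gives $\gsucc(u,w)$. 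Now $v,w \in U$, so by the linearity clause of $\Play$ one of $v \edge{+}{} w$, $v = w$, or $w \edge{+}{} v$ holds. If $v \edge{+}{} w$ then $v \glt w$ (Proposition~\ref{prop:games:edges}.(\ref{prop:games:edges:glt})), so that $u \glt v \glt w$ contradicts $\gsucc(u,w)$; symmetrically $w \edge{+}{} v$ gives $u \glt w \glt v$, contradicting $\gsucc(u,v)$. Therefore $v = w$, and since $u \edge{}{} w$ we conclude $u \edge{}{} v$, as required.

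The main obstacle is exactly this last bridging step, since the path-level lemma only produces an order-theoretic predecessor and one must show it coincides with a genuine game move. The key point that makes it go through is that two distinct $\gsucc$-successors of a common element are necessarily $\glt$-incomparable, so the linearity of the play forces the $\gsucc$-predecessor supplied by Lemma~\ref{lem:games:predpath} and the $\edge{}{}$-predecessor guaranteed by the play structure to be one and the same position.
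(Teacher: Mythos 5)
Your proof is correct and follows essentially the same route as the paper's: reduce to the Predecessor Lemma for Game Paths via $\Play \Rightarrow \Path$, obtain $u \in U$ with $\gsucc(u,v)$, take the $\edge{}{}$-successor $w$ of $u$ guaranteed by the play structure, and use the linearity clause of $\Play$ together with Proposition~\ref{prop:games:edges} to rule out $v \edge{+}{} w$ and $w \edge{+}{} v$, forcing $v = w$. The paper's argument is word-for-word the same (its $u'$ is your $w$), so there is nothing to add.
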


\begin{proof}
First, it follows from Proposition~\ref{prop:games:edges}
that $\Play(\G,u_0,U)$ implies $\Path(\G,u_0,U)$.
We invoke the Predecessor Lemma~\ref{lem:games:predpath} for Game Paths.
Assuming $u_0 \edge{+}{} v$, Proposition~\ref{prop:games:edges}
implies $u_0 \glt v$, 
so there is $u \in U$ such that $\gsucc(u,v)$.
Since $U$ is an infinite play, $u \in U$ has an $\edge{}{}$-successor in $U$,
\ie\@ there is some $u' \in U$ such that $u \edge{}{} u'$.
Again since $U$ is an infinite play, we have
\[
\left(
  v \edge{+}{} u' \quad\lor\quad u' = u \quad\lor\quad u' \edge{+}{} v
\right)
\]

\noindent
But by Proposition~\ref{prop:games:edges} again,
$v \edge{+}{} u'$ 
implies $u \glt v \glt u'$, contradicting $\gsucc(u,u')$,
while $u' \edge{+}{} v$ implies $u \glt u' \glt v$,
contradicting $\gsucc(u,v)$.
Hence $u' = v$ and we are done.
\end{proof}

Next, we show that games have infinite plays from any position, relying on Remark~\ref{rem:ax:hf:well-order-hf}.

\begin{lem}
\label{lem:games:infplay:pos}
$\FSOD$ proves that $\Game(\G)$
implies
\[
(\forall v) (\exists U)
\Play(\G,v,U)
\]
\end{lem}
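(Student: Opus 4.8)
The plan is to reduce to a game in which every position has a unique outgoing edge, build the (forced) play there, and transport it back along a subgame inclusion. Concretely, I would first restrict $\G$ to a subgame $\G'$ in which each player has exactly one available move at every position; in such a $\G'$ the set of positions reachable from $v$ is automatically an infinite play, since linearity is handed to us by Proposition~\ref{prop:games:edges:lin}, and any infinite play of $\G'$ is a fortiori an infinite play of $\G$ (cf.\ the remark following the definition of $\Play$, together with Definition~\ref{def:games:sub}).

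Constructing $\G'$: assume $\Game(\G)$. Since $\EP$ takes values in $\Pne(\OL\G)$ and $\EO$ in $\Pne(\Dir \times \PL\G)$, both are non-empty at every position, so $(\forall u \In \PP\G)(\exists \ell \In \EP(u))$ and $(\forall u \In \OP\G)(\exists c \In \EO(u))$. Using the singleton $\HF$-Function of \S\ref{sec:ax:hf}.(b) together with $\HF$-Bounded Choice for Functions (\S\ref{sec:ax:choice}, in the form of Theorem~\ref{thm:funto:choice}), I would select Functions $\funto{\EP'}{\PP\G}{\Pne(\OL\G)}$ and $\funto{\EO'}{\OP\G}{\Pne(\Dir \times \PL\G)}$ with $\EP'(u) \Sle \EP(u)$, $\EO'(u) \Sle \EO(u)$ and such that each $\EP'(u)$, $\EO'(u)$ is a singleton. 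Writing $\G' \deq (\PL\G, \OL\G, \EP', \EO')$, non-emptiness of the singletons gives $\Game(\G')$, while $\EP' \Sle \EP$ and $\EO' \Sle \EO$ pointwise show that every $\G'$-edge is a $\G$-edge, i.e.\ $\G'$ is a subgame of $\G$. Crucially, since the values of $\EP'$ and $\EO'$ are singletons and the labels $\PL\G,\OL\G$ are disjoint, $\G'$ enjoys $(\forall u)(\exists! w)(u \edge{}{\G'} w)$.

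Building and checking the play: by Comprehension for Product Types (Theorem~\ref{thm:funto:ca}) let $U$ be the set of game positions with $U(w) \liff v \edge{*}{\G'} w$. I would verify $\Play(\G', v, U)$ clause by clause. We have $v \In U$ by reflexivity of $\edge{*}{\G'}$; $(\forall w \In U)(v \edge{*}{\G'} w)$ by definition; and $(\forall w \In U)(\exists w' \In U)(w \edge{}{\G'} w')$ using the no-dead-end property of $\G'$ (valid since $\Game(\G')$) together with transitivity of $\edge{*}{\G'}$ to place $w'$ in $U$. For the linearity clause I would apply Proposition~\ref{prop:games:edges:lin} to $\G'$, with $V \deq U$ and $u_0 \deq v$: its reachability hypothesis is the clause just noted; its unique-successor hypothesis $(\forall u \In U)(\exists! w \In U)(u \edge{}{\G'} w)$ is immediate, since each position of $\G'$ has a unique successor, which lies in $U$ by forward closure; and its predecessor hypothesis follows from Proposition~\ref{prop:games:edges}.(\ref{prop:games:edges:decomp}), as any $w \In U$ with $w \neq v$ satisfies $v \edge{+}{\G'} w$ and hence $v \edge{*}{\G'} u \edge{}{\G'} w$ for some $u \In U$. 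This yields $(\forall w, w' \In U)(w \edge{+}{\G'} w' \lor w = w' \lor w' \edge{+}{\G'} w)$, so $\Play(\G', v, U)$ holds; since $\G'$ is a subgame of $\G$, $\Play(\G', v, U)$ implies $\Play(\G, v, U)$, and we conclude.

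The main obstacle — and the reason for the detour through $\G'$ — is precisely the linearity clause. The naive route of taking $U$ to be the forward orbit of a single chosen move Function directly in $\G$ fails: a position may have several $\edge{}{\G}$-predecessors, so one cannot conclude that a position of $U$ has a unique predecessor (or successor) in $U$ without first knowing that $U$ is a chain, which is exactly what one is trying to establish. Passing to the deterministic subgame $\G'$ dissolves this circularity, because there the unique-successor hypothesis of Proposition~\ref{prop:games:edges:lin} holds by construction.
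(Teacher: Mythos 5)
Your proof is correct and follows essentially the same route as the paper's: both determinize the game by selecting a single move per position (the paper via a well-order on $\PL\G \cup \OL\G$ and the $\preceq$-least element, you via a direct application of $\HF$-Bounded Choice, which is an inessential variant), take $U$ to be the $\edge{*}{\G'}$-forward orbit of $v$ by Comprehension, and establish linearity through Proposition~\ref{prop:games:edges:lin} with exactly the same three premises, transporting the result back along the subgame inclusion. The only cosmetic difference is that the paper checks $\Play(\G,v,U)$ directly using the edge inclusion, whereas you check $\Play(\G',v,U)$ and then invoke the remark that plays of a subgame are plays of the ambient game.
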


\begin{proof}
Let $\G = (\Prop,\Opp,\EP,\EO)$.
Fix $v \in V$.
Using Remark~\ref{rem:ax:hf:well-order-hf},
let $\preceq$ be a well-order on $\Prop \cup \Opp$.
We extend the relation $\preceq$ to $\univ \times (\Prop \cup \Opp)$ by setting:
\[
(x,k) \prec (y,\ell)
\quad\deq\quad
\left\{
\begin{array}{l}
  (x \Eq y ~\land~ k \preceq \ell) \quad\lor
\\
  (\exists z) \bigdisj_{d < d' \in \Dir}\left(
      x \Eq \Succ_d(z) \land y \Eq \Succ_{d'}(z)
  \right)
\end{array}
\right.
\]

\noindent
Remark~\ref{rem:ax:hf:well-order-hf} implies that
every non-empty $W$ such that
\[
(\forall (x,k) \in W)(x = \Root)
\quad\lor\quad
(\exists z) (\forall (x,k) \in W) \bigdisj_{d \in \Dir}(x = \Succ_d(z))
\]
has a $\preceq$-least element.
By $\HF$-Bounded Choice (Theorem~\ref{thm:funto:choice}), we define
\[
\funto{\e'_\Prop}{\univ \times \Prop}{\Pne(\Opp)}
\qquad\text{and}\qquad
\funto{\e'_\Opp}{\univ \times \Opp}{\Pne(\Dir \times \Prop)}
\]
by setting, for $\player$ either $\Prop$ or $\Opp$,
\[
\e'_\player(u) \quad\deq\quad \{\text{the $\preceq$-least element of $\e_\player(u)$}\}
\]
Let $\G' \deq (\Prop,\Opp,\e')$.
Note that we have $\Game(\G')$
and that
\begin{equation}
\label{eq:lem:infpath:incl}
(\forall u , v) \left(
  u \edge{}{\G'} v ~~\limp~~ u \edge{}{\G} v
\right)
\end{equation}

\noindent
By Comprehension for Product Types (Theorem~\ref{thm:funto:ca}),
we then let 
\[
U \quad\deq\quad
\{ u \st v \edge{*}{\G'} u \}
\]

It remains to show
\[
\Play(\G,v,U)
\]

\noindent
First, we have $v \in U$ by reflexivity of $\edge{*}{\G'}$
(Proposition~\ref{prop:games:edges}.\eqref{eq:games:edges:reftrans}),
and $(\forall u \in U)\left(v \edge{*}{\G} u\right)$
follows from~\eqref{eq:lem:infpath:incl}.
Moreover, we have
\[
(\forall u \in U) (\exists w \in U) \left(u \edge{}{\G} w \right)
\]
thanks to~\eqref{eq:lem:infpath:incl},
since this property already holds for $\G'$.
It remains to show that $U$ is linearly ordered \wrt\@ $\edge{*}{\G}$.
We invoke Proposition~\ref{prop:games:edges:lin}:
its first premise has already been discussed,
its second follows from the definition of $\e'$,
and its last one is
Proposition~\ref{prop:games:edges}.\eqref{eq:games:edges:decomp}.
\end{proof}

Finally, in some situations
(typically for the \emph{Simulation Theorem} in~\S\ref{sec:sim}),
it is convenient to build infinite plays from
paths (in the sense of Definition~\ref{def:games:path}).

\begin{lem}[Infinite Plays From Paths]
\label{lem:games:pathplays}
Assume $\Game(\G)$
and let $u_0$ and $U$
be such that
\[
  \Path(\G,u_0,U)
~\land~
  (\forall u,v \in U)\left[\gsucc(u,v) ~\limp~ u \edge{}{} v\right]
\]
Then $\FSOD$ proves $\Play(\G,u_0,U)$.
\end{lem}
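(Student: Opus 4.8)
The plan is to verify the four conjuncts of $\Play(\G,u_0,U)$ in turn, reducing the two non-trivial ones (reachability from $u_0$ via $\edge{*}{}$ and linearity via $\edge{+}{}$) to a single key claim relating the strict order $\glt$ to the transitive edge relation $\edge{+}{}$ on members of $U$. First, the conjunct $u_0 \in U$ is immediate from $\Path(\G,u_0,U)$. Next, the ``no dead end'' conjunct $(\forall v \in U)(\exists w \in U)(v \edge{}{} w)$ follows directly from the third conjunct of $\Path$, which supplies for each $v \in U$ some $w \in U$ with $\gsucc(v,w)$; the additional hypothesis $\gsucc(u,v) \limp u \edge{}{} v$ for members of $U$ then turns this $\gsucc$-edge into a genuine move $v \edge{}{} w$.

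The heart of the argument is the claim
\[
(\forall v,w \in U)\big( v \glt w ~~\limp~~ v \edge{+}{} w \big),
\]
which I would prove, for fixed $v$, by $\glt$-induction on $w$ (Theorem~\ref{thm:games:ind:pos}), working with the set $\{w \st (v \in U \land w \in U \land v \glt w) \limp v \edge{+}{} w\}$ obtained by Comprehension (Theorem~\ref{thm:funto:ca}). Assume $v,w \in U$ and $v \glt w$. Since $u_0 \gle v \glt w$ gives $u_0 \glt w$, the Predecessor Lemma for Game Paths (Lemma~\ref{lem:games:predpath}) yields some $u \in U$ with $\gsucc(u,w)$, whence $u \edge{}{} w$ by hypothesis and hence $u \edge{+}{} w$ (Proposition~\ref{prop:games:edges}). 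As $U$ is $\glt$-linear (fourth conjunct of $\Path$), I compare $v$ and $u$: the case $u \glt v$ is impossible, since $u \glt v \glt w$ would contradict $\gsucc(u,w)$ by discreteness; if $v = u$ we are done, as $u \edge{+}{} w$; and if $v \glt u$ then, because $u \glt w$, the induction hypothesis applied at $u$ gives $v \edge{+}{} u$, so that $v \edge{+}{} u \edge{+}{} w$ yields $v \edge{+}{} w$ by transitivity (Proposition~\ref{prop:games:edges}).

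With the claim in hand, the remaining two conjuncts follow quickly. For reachability, given $v \in U$ either $v = u_0$, in which case $u_0 \edge{*}{} v$ by reflexivity, or else $u_0 \glt v$ (from $u_0 \gle v$), whence the claim gives $u_0 \edge{+}{} v$ and a fortiori $u_0 \edge{*}{} v$. For linearity, given $v,w \in U$ the $\glt$-linearity from $\Path$ gives $w \glt v$, $v = w$, or $v \glt w$, and applying the claim to the two strict cases yields $w \edge{+}{} v$ or $v \edge{+}{} w$ respectively. I expect the main obstacle to be the inductive step of the claim: specifically the use of discreteness of $\glt$ (via $\gsucc$) to rule out a position strictly between the $\gsucc$-predecessor $u$ of $w$ and $v$, together with the care needed to invoke the induction hypothesis only at positions strictly below $w$. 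Everything else reduces to the recorded properties of $\edge{*}{}$ and $\edge{+}{}$ in Proposition~\ref{prop:games:edges}.
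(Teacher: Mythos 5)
Your proof is correct and follows essentially the same route as the paper's: both reduce the lemma to a key claim that the order restricted to $U$ implies reachability via edges (yours in the strict form $v \glt w \limp v \edge{+}{} w$, the paper's in the form $u \gle v \limp u \edge{*}{} v$), proved by $\glt$-induction using the Predecessor Lemma~\ref{lem:games:predpath} together with the hypothesis that $\gsucc$-pairs in $U$ are genuine moves. The only difference is cosmetic: you make explicit the linearity case analysis ruling out $u \glt v$ that the paper leaves implicit when it asserts $u \gle w$ for the predecessor.
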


\begin{proof}
Thanks to Proposition~\ref{prop:games:edges},
the result directly follows from the fact that
\[
(\forall u,v \in U) \left(u \gle v ~~\longlimp~~ u \edge{*}{} v \right)
\]

\noindent
Fix $u \in U$. By $\glt$-induction we show 
$(\forall v \in U)(u \gle v ~~\limp~~ u \edge{*}{} v)$.
So let $v \in U$ such that the property holds for all $w \glt v$,
and assume $u \gle v$.
If $u = v$ then we are done.
Otherwise, by the Predecessor Lemma~\ref{lem:games:predpath} for Paths,
we have $\gsucc(w,v)$ for some $w \in U$ with $u \gle w$.
By induction hypothesis we get $u \edge{*}{} w \edge{}{} v$
and we conclude by Proposition~\ref{prop:games:edges}.
\end{proof}

\cnote{\CR:
\begin{itemize}
\item 
Lemma~\ref{lem:games:pathplays} says that 
``being an infinite play'' is a first order notion in $\FSO$
(from the primitive symbol $\Lt$).

\item \TODO: 
Put a remark on this, possibly when defining the notion
of infinite plays.
\end{itemize}}

\subsection{Strategies}
\label{sec:strat}
We now turn to strategies.
Our strategies are Functions from the positions of one player to the set
of labels of the other player,
which
must respect the edge relations.
This implies that all our strategies are, by definition, positional.

\begin{defi}[Strategies]
\label{def:games:strat}
Let $\G = (\Prop,\Opp,\EP,\EO)$
where $\Prop,\Opp$ are $\HF$-variables and where $\EP,\EO$ are Function variables.
\begin{enumerate}
\item A \emph{$\Prop$-strategy on $\G$}
is a Function $\strat$
which satisfies the formula 
\[
  \Strat_\Prop(\G,\strat)
\quad\deq\quad
\funto{\strat}{\PP\G}{\Opp}
~~\land~~
(\forall v)
  \left( \strat(v) \in \EP(v) \right)
\]

\item
An \emph{$\Opp$-strategy on $\G$} is a Function
$\strat$
which satisfies the formula 
\[
  \Strat_\Opp(\G,\strat)
\quad\deq\quad
\funto{\strat}{\OP\G}{\Dir \times \Prop}
~~\land~~
(\forall v)
  \left( \strat(v) \in \EO(v) \right)
\]
\end{enumerate}
\end{defi}

Strategies naturally induce subgames in the sense of Definition~\ref{def:games:sub}.
This will allow us to lift to strategies notions
which are more naturally defined at the level of games.

\begin{defi}[Subgame induced by a Strategy]
Given a player $\player$ (either $\Prop$ or $\Opp$)
and a $\player$-strategy $\strat$ on $\G$,
we let
\[
\G\restr\{\strat\}_\player
\quad\deq\quad
(\PL\G \,,\, \OL\G \,,\, \EG\G\restr\{\strat\}_\player)
\]
where
\[
\EG\G\restr\{\strat\}_\Prop ~~\deq~~ (\{\strat\}_\Prop,\EGO\G)
\qquad\text{and}\qquad
\EG\G\restr\{\strat\}_\Opp ~~\deq~~ (\EGP\G,\{\strat\}_\Opp)
\]
and where $\{\strat\}_\player \sle \EG\G_\player$
is defined by $\HF$-Bounded Choice to be the Function
taking $u \in \univ \times \G_\player$
to the singleton $\{\strat(u)\}$.

Whenever possible, we write $\G\restr\{\strat\}$
or even just $\strat$ for $\G\restr\{\strat\}_\player$, when it is unambiguous.
\end{defi}

\begin{lem}
$\FSOD$ proves the following, where $\player$ is a player
(either $\Prop$ or $\Opp$):
\[
\big(
\Game(\G) ~~\land~~ \Strat_\player(\G,\strat)
\big)
\quad\longlimp\quad
\Game(\strat)
\]
\end{lem}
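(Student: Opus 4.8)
The plan is to unfold $\Game(\G\restr\{\strat\}_\player)$ according to Definition~\ref{def:games:games} into its three conjuncts — the $\Labels$ condition together with the two transition-typing conditions — and to verify each one. Since $\G\restr\{\strat\}_\Prop = (\PL\G,\OL\G,\{\strat\}_\Prop,\EGO\G)$ and $\G\restr\{\strat\}_\Opp = (\PL\G,\OL\G,\EGP\G,\{\strat\}_\Opp)$ are entirely symmetric, I treat the case $\player = \Prop$ in detail; the Opponent case is obtained by swapping the roles of $\Prop$ and $\Opp$.

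Two of the three conjuncts are inherited directly from the hypothesis $\Game(\G)$. Indeed, $\G\restr\{\strat\}_\Prop$ keeps the labels $\PL\G,\OL\G$ and the Opponent transition Function $\EGO\G$ unchanged, so both $\Labels(\PL\G,\OL\G)$ and $\funto{\EGO\G}{\OP\G}{\Pne(\Dir \times \PL\G)}$ are literally conjuncts of $\Game(\G)$ and are thus available at once.

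The only conjunct requiring work is $\funto{\{\strat\}_\Prop}{\PP\G}{\Pne(\OL\G)}$, which, by the definition of $\{\strat\}_\Prop$, amounts to $\{\strat(u)\} \In \Pne(\OL\G)$ for every Proponent position $u$. Since $\strat$ is a $\Prop$-strategy (Definition~\ref{def:games:strat}), the clause $\funto{\strat}{\PP\G}{\Opp}$ gives $\strat(u) \In \OL\G$, so there is an $\HF$-variable $\ell \In \OL\G$ with $\strat(u) \Eq \ell$ (here the extended $\HF$-term notation of \S\ref{sec:not} is used, since $\strat(u)$ is a Function application rather than a proper $\HF$-term). It then suffices to prove $\{\ell\} \In \Pne(\OL\G)$ for $\ell \In \OL\G$: the singleton axiom gives $\ell \In \{\ell\}$ and $(\forall m \In \{\ell\})(m \Eq \ell)$, so that $\{\ell\} \Sle \OL\G$; the Powerset axiom (\S\ref{sec:ax:hf}.(\ref{item:ax:hf:po})) then gives $\{\ell\} \In \Po(\OL\G)$, and since $\{\ell\}$ is non-empty the defining property of $\Pne$ yields $\{\ell\} \In \Pne(\OL\G)$. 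This is precisely the fact making the $\HF$-Bounded Choice definition of $\{\strat\}_\Prop$ total.

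I do not expect a genuine obstacle. The only point needing care is that $\strat(u)$ is not literally an $\HF$-term, so one passes through the bounded witness $\ell \In \OL\G$ before invoking the Axioms on $\HF$-Sets, all of whose listed consequences (\S\ref{sec:ax:hf}) are available for arbitrary $\HF$-terms and variables. Assembling the three conjuncts yields $\Game(\G\restr\{\strat\}_\Prop)$, and the Opponent case is verbatim the same, with $\OL\G$ and $\Pne(\OL\G)$ replaced by $\Dir \times \PL\G$ and $\Pne(\Dir \times \PL\G)$, and with $\EGO\G$ replaced by $\EGP\G$.
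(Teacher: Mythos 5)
Your proof is correct. The paper states this lemma without any proof, treating it as immediate from the definitions, and your argument is exactly the routine verification it leaves implicit: two of the three conjuncts of $\Game(\G\restr\{\strat\}_\player)$ are literally conjuncts of $\Game(\G)$, and the only remaining content is that $\{\ell\} \In \Pne(\OL\G)$ whenever $\ell \In \OL\G$, which you correctly derive from the singleton, powerset and non-empty-powerset properties of \S\ref{sec:ax:hf}.
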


\noindent
This in particular allows us to speak of the infinite plays of a strategy
$\strat$ on $\G$ simply as infinite plays of the game $\G\restr\{\strat\}$.

\subsection{Winning}
\label{sec:games:win}
In order to deal with acceptance for automata, we equip games
with a notion of \emph{winning}.
Given a game $\G$, a \emph{winning condition} on $\G$
is a formula $\Win(U)$ where $U$ is intended to range over the infinite
plays of $\G$.
As usual a $\Prop$-strategy $\strat$ on $(\G,\Win)$
is winning from a position $v$
whenever all the infinite plays $U$ of $\strat$ from $v$ satisfy $\Win(U)$.
Dually, an $\Opp$-strategy is winning from $v$
when all its infinite plays $U$ from $v$ satisfy $\lnot \Win(U)$.

We formally proceed as follows.

\begin{defi}
\label{def:games:win}
Let $\G = (\Prop,\Opp,\EP,\EO)$
where $\Prop,\Opp$ are $\HF$-variables and $\EP,\EO$ are Function variables.
Let $\Win(U)$ be a given $\FSO$-formula where $U$ is a Function variable.
\begin{enumerate}
\item
We define the following formulae.
\[
\begin{array}{r !{\quad\deq\quad} l}
  \WonGame_\Prop(\G,v,\Win)
& (\forall U)
  \big(\Play(\G,v,U) ~~\limp~~ \Win(U) \big)
\\
  \WonGame_\Opp(\G,v,\Win)
& (\forall U)
  \big(\Play(\G,v,U) ~~\limp~~ \lnot \Win(U) \big)
\end{array}
\]

\item
Given a player $\player$ (either $\Prop$ or $\Opp$),
we say that a $\player$-strategy $\strat$ is
\emph{winning in $(\G,\Win)$ from $v$}
if the game $(\G\restr \{\strat\}_\player,\Win)$ is won by $\player$ from $v$,
\ie\@ if
the following formula holds
\[
\WinStrat_\player(\G,\strat,v,\Win)
\quad\deq\quad
  \WonGame_\player(\G\restr\{\strat\}_\player, v, \Win)
\]
\end{enumerate}
\end{defi}

\noindent
Strictly speaking, in Definition~\ref{def:games:win} above,
$\WonGame_\player$
and
$\WinStrat_\player$
are actually families of $\FSO$ formulae, parametrized by the choice of $\FSO$-formula $\Win$.

As expected, a game position cannot be winning for both players.

\begin{lem}
\label{lem:games:notbothwin}
$\FSOD$ proves the following.
\begin{multline*}
\Game(\G)
\quad\longlimp\quad
\Strat_\Prop(\G,\strat_\Prop)
\quad\longlimp\quad
\Strat_\Opp(\G,\strat_\Opp)
\quad\longlimp
\\
\lnot (\exists v)\bigg[
\WinStrat_\Prop(\G,\strat_\Prop,v,\Win)
~~\land~~
\WinStrat_\Opp(\G,\strat_\Opp,v,\Win)
\bigg]
\end{multline*}
\end{lem}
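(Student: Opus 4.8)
The plan is to reduce the claim to the existence of a single infinite play that would have to satisfy both $\Win$ and $\lnot\Win$. By the Deduction Theorem (which, as noted in Remark~\ref{rem:mso:ded}, holds for $\FSOD$) it suffices to assume $\Game(\G)$, $\Strat_\Prop(\G,\strat_\Prop)$ and $\Strat_\Opp(\G,\strat_\Opp)$, to suppose towards a contradiction that some position $v$ satisfies $\WinStrat_\Prop(\G,\strat_\Prop,v,\Win) \land \WinStrat_\Opp(\G,\strat_\Opp,v,\Win)$ (obtaining such a $v$ by $\exists$-elimination), and to derive falsity. The guiding idea is to let both players follow their (necessarily positional) strategies at once, yielding a game in which every position has exactly one successor.

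First I would form the doubly-restricted game
\[
\G' \quad\deq\quad (\G\restr\{\strat_\Prop\}_\Prop)\restr\{\strat_\Opp\}_\Opp .
\]
Restricting $\Prop$-moves leaves the $\Opp$-edges of $\G$ untouched, so $\strat_\Opp$ remains an $\Opp$-strategy on $\G\restr\{\strat_\Prop\}_\Prop$; applying the lemma $\Game(\G) \limp \Strat_\player(\G,\strat) \limp \Game(\strat)$ twice, first with $\player = \Prop$ and then with $\player = \Opp$, gives $\Game(\G')$. I would then invoke Lemma~\ref{lem:games:infplay:pos} to produce an infinite play of $\G'$ from $v$, that is, a $U$ with $\Play(\G',v,U)$.

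The crucial observation is that $\G'$ is a subgame of both $\G\restr\{\strat_\Prop\}_\Prop$ and $\G\restr\{\strat_\Opp\}_\Opp$, since it is obtained from each by further restricting the moves of the remaining player. By the play-preservation property for subgames recorded after the definition of infinite plays, $\Play(\G',v,U)$ therefore entails both $\Play(\G\restr\{\strat_\Prop\}_\Prop,v,U)$ and $\Play(\G\restr\{\strat_\Opp\}_\Opp,v,U)$. Unfolding Definition~\ref{def:games:win}, $\WinStrat_\Prop(\G,\strat_\Prop,v,\Win)$ is $\WonGame_\Prop(\G\restr\{\strat_\Prop\}_\Prop,v,\Win)$, so the first play yields $\Win(U)$; dually $\WinStrat_\Opp(\G,\strat_\Opp,v,\Win)$ yields $\lnot\Win(U)$, which is the desired contradiction.

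Since we never inspect the internal structure of $U$ — only its membership among the infinite plays of the two strategies — the argument is essentially a composition of the two winning hypotheses through a common play. The only point requiring care, and the mild obstacle here, is the bookkeeping that $\strat_\Opp$ survives as an $\Opp$-strategy on $\G\restr\{\strat_\Prop\}_\Prop$ and that $\G'$ is genuinely a subgame of each singly-restricted game (so that the play-preservation property applies in both directions). Once these are verified, the contradiction is immediate and discharges the assumed witness $v$.
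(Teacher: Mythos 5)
Your proposal is correct and follows essentially the same route as the paper: assume a common winning position $v$, form the doubly-restricted game $(\Prop,\Opp,\{\strat_\Prop\}_\Prop,\{\strat_\Opp\}_\Opp)$, note it is a subgame of both $\G\restr\{\strat_\Prop\}$ and $\G\restr\{\strat_\Opp\}$ so any of its plays from $v$ must satisfy both $\Win$ and $\lnot\Win$, and contradict the existence of such a play given by Lemma~\ref{lem:games:infplay:pos}. Your extra bookkeeping (checking that $\strat_\Opp$ survives as a strategy on the singly-restricted game, so that $\Game(\G')$ follows from two applications of the $\Game(\strat)$ lemma) just makes explicit a step the paper leaves implicit.
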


\begin{proof}
Assume for contradiction that for some $v$ we have
\[
\WinStrat_\Prop(\G,\strat_\Prop,v,\Win)
~~\land~~
\WinStrat_\Opp(\G,\strat_\Opp,v,\Win)
\]
that is
\[
  (\forall U)
  \Big[ \Play(\G\restr\{\strat_\Prop\},v,U) ~~\limp~~ \Win(U) \Big]
~~\land~~
  (\forall U)
  \Big[ \Play(\G\restr\{\strat_\Opp\},v,U) ~~\limp~~ \lnot \Win(U) \Big]
\]

\noindent
Consider the game
\[
\G' \quad\deq\quad
(\Prop,\, \Opp,\, \{\strat_\Prop\}_\Prop,\, \{\strat_\Opp\}_\Opp)
\]
Note that $\G'$ is a subgame of both $\G\restr\{\strat_\Prop\}$
and $\G\restr\{\strat_\Opp\}$.
We thus get
\[
  (\forall U)
  \Big[ \Play(\G',v,U) ~~\limp~~ \Win(U) \land \lnot \Win(U) \Big]
\]
which implies that there is no $U$
such that $\Play(\G',v,U)$,
contradicting Lemma~\ref{lem:games:infplay:pos}.
\end{proof}

\subsection{Parity Conditions}
\label{sec:games:parity}
In this paper, we mostly consider winning conditions expressed as
\emph{parity conditions}.
Parity conditions are defined from \emph{colorings}
of game positions by natural numbers from a given finite interval.
We represent natural numbers and the operations and relations on them
using the Functions on $\HF$-Sets of $\FSO$ and the axioms
of~\S\ref{sec:ax:hf}.

\begin{conv}
\label{conv:games:colors}
In order to conveniently manipulate colorings and parity conditions, 
we will use the following functions on finite ordinals
(a.k.a.\@ natural numbers),
obtained from the \emph{Axioms on $\HF$-Functions} (see~\S\ref{sec:ax:hf}).
We rely on the well-known fact that ``\emph{$n$ is an ordinal}''
can be expressed by an $\HF$-formula
$\Ord(n)$
(see \eg~\cite[Lemma 12.10]{jech06set}).
\begin{enumerate}[(1)]
\item
We consider unary $\HF$-Functions
\[
[0,-] \,,\, [0,-) \,,\, (0,-]
~~:~~ V_\omega ~~\longto~~ V_\omega
\]
such that for all finite ordinals $n$, we have
\[
\Sk(\ZFCM)\thesis\quad
  [0,n] ~\Eq~ \{0,\dots,n\}
~~\land~~
  [0,n) ~\Eq~ \{0,\dots,n-1\}
~~\land~~
  (0,n] ~\Eq~ \{1,\dots,n\}
\]

\item
We consider binary $\HF$-Functions
\[
\const g_\leq \,,\, \const g_< \,,\, \const g_\geq \,,\, \const g_>
~~:~~
V_\omega \times V_\omega ~~\longto~~ \two
\]
such that for finite ordinals $n,m$
\[
\begin{array}{c !{\qquad} c}
  \const g_\leq(n,m) = 1 ~~\text{iff}~~ n \leq m
& \const g_\geq(n,m) = 1 ~~\text{iff}~~ n \geq m
\\
  \const g_<(n,m) = 1 ~~\text{iff}~~ n < m
& \const g_>(n,m) = 1 ~~\text{iff}~~ n > m
\end{array}
\]
In $\FSO$-formulae, we write
$n \leq m$ for the formula $\const g_\leq(n,m) \Eq 1$, and so on.

\item
We consider a unary $\HF$-Function
\[
\even
~~:~~ V_\omega ~~\longto~~ V_\omega
\]

\noindent
such that for each ordinal $n$,
$\even(n)$ is the set of ordinals $m \in [0,n]$
such that $m$ represents an even number.

\item
\label{item:games:colors:arith}
We consider $\HF$-Functions $\max(-,-)$ and $(-) + 1$,
computing respectively the maximum
of two finite ordinals and the successor ordinal of an ordinal.
\end{enumerate}
\end{conv}

\begin{rem}
Even if ``\emph{$n$ is an ordinal}'' can be expressed by an
$\HF$-formula, quantification over all finite ordinals
 \emph{cannot} be expressed in $V_\omega$ by an $\HF$-formula,
since for each finite ordinal $n>0$ 
we have $n \in V_{n} \setminus V_{n-1}$.
In particular, induction over finite $\HF$-ordinals cannot be expressed
by an $\HF$-formula.
\end{rem}

\begin{defi}[Parity Conditions]
\label{def:games:parity}
Let $\G = (\Prop,\Opp,\EP,\EO)$
where $\Prop,\Opp$ are $\HF$-variables and $\EP,\EO$ are Function variables.
\begin{enumerate}
\item
A \emph{coloring} is given by a Function $\col$ and an $\HF$-term $n$
satisfying the following formula
\[
\Coloring(\G,\col,n)
\quad\deq\quad
\Ord(n) ~~\land~~ \funto{\col}{\G}{[0,n]}
\]

\item
We define the following formula:
\[
\Par(\G,\col,n,U)
\quad\deq\quad
(\exists m \In \even(n))
\left[
  \begin{array}{r l}
  & (\forall u \in U) (\exists v \in U) (u \edge{+}{} v ~\land~ \col(v) \Eq m)
  \\
    \land
  & (\exists u \in U) (\forall v \in U) (u \edge{+}{} v ~\limp~ \col(v) \geq m)
  \end{array}
\right]
\]
\end{enumerate}
\end{defi}

\begin{rem}
The formula $\Par(\G,\col,n,U)$ will be used to say that an infinite play $U$
satisfies the (min) parity condition induced by the
coloring $\funto{\col}{\G}{[0,n]}$.
In the standard model $\Std$, if $U$ is an infinite play in $\G$, then
$\Par(\G,\col,n,U)$ holds if and only if there is an even $m \leq n$
such that
$U$ has infinitely many positions colored by $m$, and
$U$ has only finitely many positions colored by any $k < m$.
Also, notice that any $U$ (not necessarily a play) satisfying $\Par(\G,\col,n,U)$ in $\Std$ is infinite.
\end{rem}

\begin{rem}
\label{rem:games:parity:sub}
Assume that $\G'$ is a subgame of $\G$
(in the sense of Definition~\ref{def:games:sub}).
Note that $\FSO$ proves
\[
\Coloring(\G,\col,n) ~~\longliff~~
\Coloring(\G',\col,n)
\]

\noindent
Furthermore, as noted earlier, every infinite play in $\G'$ 
is an infinite play in $\G$.
It follows that $\FSO$ proves 
\begin{multline*}
  \Game(\G)
~~\longlimp~~
  \Coloring(\G,\col,n)
~~\longlimp~~
  \Game(\G')
~~\longlimp~~
  \Sub(\G',\G)
~~\longlimp~~
\\
(\forall \funto{U}{\G'}{\two})(\forall v)
\Big[
\Play(\G',v,U)
~~\longlimp~~
\big(
\Par(\G',\col,n,U)
~~\liff~~
\Par(\G,\col,n,U)
\big)
\Big]
\end{multline*}
\end{rem}

\begin{rem}
\label{rem:games:parity:gle}
When considering parity automata in~\S\ref{sec:aut},
it will actually be convenient to define acceptance via
the formula $\Par$
for games of the form $\G(\gle)$ in the sense of Remark~\ref{rem:games:sub}.
It follows from Remarks~\ref{rem:games:sub} and~\ref{rem:games:parity:sub}
that $\FSO$ proves
\begin{multline*}
  \Game(\G)
~~\longlimp~~
  \Coloring(\G(\gle),\col,n)
~~\longlimp~~
\\
(\forall \funto{U}{\G}{\two})(\forall v)
\Big[
\Play(\G,v,U)
~~\longlimp~~
\Big(
\Par(\G,\col,n,U)
~~\liff~~
\Par(\G(\gle),\col,n,U)
\Big)
\Big]
\end{multline*}
\end{rem}

\noindent
We use the following more succinct notation for winning in the case parity games.

\begin{nota}[Winning in Parity Games]
Let $\G = (\Prop,\Opp,\EP,\EO)$
where $\Prop,\Opp$ are $\HF$-variables and $\EP,\EO$ are Function variables.
Let $\col$ be a Function variable and $n$ be an $\HF$-variable.
We write the following,
where $\player$ is a player (either $\Prop$ or $\Opp$).
\[
\begin{array}{r !{\quad\deq\quad} l}
  \WonGame_\player(\G,v,\col,n)
& \WonGame_\player(\G,v,\Par(\G,\col,n,-))
\\
  \WinStrat_\player(\G,\strat,v,\col,n)
& \WinStrat_\player(\G,\strat,v,\Par(\G,\col,n,-))
\end{array}
\]
\end{nota}

\subsection{The Axiom of Positional Determinacy of Parity Games.}
\label{sec:posdet}
We now formulate the axiom scheme $(\PosDet)$,
which states the (positional) determinacy of 
parity games.
Intuitively $(\PosDet)$ should consist of all formulae
of the form
\begin{multline*}
\Game(\G) ~~\limp~~ \Coloring(\G,\col,n)
~~\limp~~
\\
(\forall v \in \G)
\left[
\begin{array}{c r l}
& (\exists \funto{\strat_\Prop}{\PP\G}{\Opp})
& \left(
  \begin{array}{c l}
  & \Strat_\Prop(\G,\strat_\Prop)
  \\
    \land
  & \WinStrat_\Prop(\G,\strat_\Prop,v,\col,n)
  \end{array}
  \right)
\\\\
  \lor
& (\exists \funto{\strat_\Opp}{\OP\G}{\Dir \times \Prop})
& \left(
  \begin{array}{c l}
  & \Strat_\Opp(\G,\strat_\Opp)
  \\
    \land
  & \WinStrat_\Opp(\G,\strat_\Opp,v,\col,n)
  \end{array}
  \right)
\end{array}
\right]
\end{multline*}

\noindent
But note that these formulae are open, and in particular
\[
\G = (\Prop,\Opp,\EP,\EO)
\quad\text{and}\quad
\col
\]
contain free Function variables.
On the other hand, when formulating our completeness results in~\S\ref{sec:compl},
it will be interesting to have translations of instances of $(\PosDet)$ in $\MSO$,
based on the map $\MI{-} : \FSO \to \MSO$ of~\S\ref{sec:cons}.
However, the translation $\MI{-}$ only handles $\HF$-closed
formulae without free Function variables.
We therefore officially let $(\PosDet)$ consist
of all formulae 
$\formfont{PosDet}(\Prop,\Opp,n)$,
for $\Prop$, $\Opp$ and $n$ ranging over $\HF$-terms (see~\S\ref{sec:fso:fso}),
where
$\formfont{PosDet}(\Prop,\Opp,n)$ is the formula
\begin{multline*}
\Labels(\Prop,\Opp)
~~\limp~~
\Ord(n)
~~\limp~~
\\
\Big( \forall \funto{\EP}{\PP\G}{\Pne(\Opp)} \Big)
\Big( \forall \funto{\EO}{\OP\G}{\Pne(\Dir \times \Prop)} \Big)
\Big( \forall \funto{\col}{\G}{[0,n]} \Big)
\\
(\forall v \in \G)
\left[
\begin{array}{c r l}
& (\exists \funto{\strat_\Prop}{\PP\G}{\Opp})
& \left(
  \begin{array}{c l}
  & \Strat_\Prop(\G,\strat_\Prop)
  \\
    \land
  & \WinStrat_\Prop(\G,\strat_\Prop,v,\col,n)
  \end{array}
  \right)
\\\\
  \lor
& (\exists \funto{\strat_\Opp}{\OP\G}{\Dir \times \Prop})
& \left(
  \begin{array}{c l}
  & \Strat_\Opp(\G,\strat_\Opp)
  \\
    \land
  & \WinStrat_\Opp(\G,\strat_\Opp,v,\col,n)
  \end{array}
  \right)
\end{array}
\right]
\end{multline*}


It follows from the positional determinacy of parity games~\cite{ej91focs}
(see also~\cite{thomas97handbook,walukiewicz02tcs,pp04book})
that all instances of $(\PosDet)$ hold in the standard model
$\Std$ of $\FSO$.
We can thus extend Proposition~\ref{prop:std:cor}
to the following.

\begin{prop}
\label{prop:games:std:cor}
For each closed $\FSO$-formula $\varphi$,
\[
\Std \models \varphi
\qquad\text{whenever}\qquad
\FSO + (\PosDet) \thesis \varphi
\]
\end{prop}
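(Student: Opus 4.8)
The plan is to reduce to the already-established soundness of plain $\FSO$ (Proposition~\ref{prop:std:cor}), so that the only genuinely new content is the validity of the axiom scheme $(\PosDet)$ in $\Std$. First I would observe that a derivation of $\varphi$ in $\FSO + (\PosDet)$ uses only finitely many instances $\delta_1,\dots,\delta_m$ of $(\PosDet)$, so that $\delta_1,\dots,\delta_m \thesis_{\FSO} \varphi$. By Proposition~\ref{prop:std:cor}, applied under an arbitrary assignment to the free variables occurring in the $\delta_i$, it then suffices to show that each instance $\formfont{PosDet}(\Prop,\Opp,n)$ is \emph{valid} in $\Std$, i.e.\@ holds under every such assignment: for then the premises of Proposition~\ref{prop:std:cor} are satisfied, yielding $\Std \models \varphi$ (recall that $\varphi$ is closed).

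Next I would unfold what such an instance asserts in the standard model. Fixing an assignment to the free $\HF$-variables of the terms $\Prop,\Opp,n$ and assuming the premises $\Labels(\Prop,\Opp)$ and $\Ord(n)$, the interpreted formula ranges over all $\Std$-Functions $\EP,\EO$ and all colourings $\funto{\col}{\G}{[0,n]}$, and asserts that from every position one of the two players has a winning strategy (which is necessarily positional, by the discussion in~\S\ref{sec:strat}). Thus its content is precisely that every parity game superposed on the full $\Dir$-ary tree, with finitely many priorities drawn from $[0,n]$, is positionally determined. The key here is to match the formal apparatus to the classical notions: by the Remark following Definition~\ref{def:games:parity}, $\Par(\G,\col,n,U)$ holds in $\Std$ exactly when the infinite play $U$ satisfies the min-parity condition; the $\Std$-infinite plays $U$ (in the sense of $\Play$) are exactly the genuine infinite plays of $\G$; and $\Std$-strategies $\strat$ (in the sense of $\Strat_\player$) are exactly the positional strategies.

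With this dictionary in hand, the validity of each instance is an immediate consequence of the classical positional determinacy of parity games~\cite{ej91focs} (see also~\cite{thomas97handbook,walukiewicz02tcs,pp04book}), which applies since the game graph, though infinite, carries only finitely many priorities. Finally I would note that the extension of Proposition~\ref{prop:std:cor} itself is routine: its proof proceeds by induction on derivations, and the present statement only requires adding a single base case for leaves labelled by an instance of $(\PosDet)$, whose validity in $\Std$ we have just established.

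I expect the main (though modest) obstacle to be the verification of this semantic dictionary, namely checking that the formalized notions $\Play$, $\Par$ and $\Strat_\player$ interpret in $\Std$ exactly as genuine infinite plays, the min-parity condition, and positional strategies respectively, so that the cited theorem can be invoked verbatim. The underlying determinacy fact itself is entirely imported from the literature.
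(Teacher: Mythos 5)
Your proposal is correct and follows exactly the paper's own route: the paper disposes of this proposition in one remark, noting that all instances of $(\PosDet)$ hold in $\Std$ by the classical positional determinacy of parity games~\cite{ej91focs} (which applies precisely because the formalized $\Play$, $\Par$ and $\Strat_\player$ interpret in $\Std$ as genuine plays, the min-parity condition, and positional strategies), and then extending the soundness Proposition~\ref{prop:std:cor} accordingly. Your write-up merely makes explicit the finite-instance reduction and the semantic dictionary that the paper leaves implicit.
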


\subsubsection{The Axiom of Positional Determinacy in $\MSO$.}
\label{sec:posdet:mso}
In order to obtain a complete axiomatization of $\MSOD$ from
the completeness of $\FSOD+(\PosDet)$ (see \S\ref{sec:compl}),
we extend the axioms of $\MSOD$ with sufficiently many
translated instantiations $\MI{\formfont{PosDet}(\Prop,\Opp,n)}$
for $\Prop$, $\Opp$ and $n$ \emph{closed} $\HF$-terms.
However, in general these terms may contain arbitrary $\HF$-Functions
symbols, which make the translation 
$\MI{\formfont{PosDet}(\Prop,\Opp,n)}$
in general uncomputable from $\Prop$, $\Opp$ and $n$
(see Remark~\ref{rem:cons:dec} and \S\ref{sec:ax:hf}).
However, for each closed $\HF$-\emph{terms} $\Prop$, $\Opp$
and $n$, there are constant symbols for $\HF$-\emph{sets}
$\const\Prop$, $\const\Opp$ and $\const n$ such that the formulae
$\MI{\formfont{PosDet}(\Prop,\Opp,n)}$
and
$\MI{\formfont{PosDet}(\const\Prop,\const\Opp,\const n)}$
are syntactically identical.
We therefore officially take the following version of $(\PosDet)$
in $\MSOD$.

\begin{defi}[The Axiom of Positional Determinacy in $\MSO$]
\label{def:posdet:mso}
We let $\MI{\PosDet}$ consist of all formulae of the form
$\MI{\formfont{PosDet}(\const\Prop,\const\Opp,\const n)}$,
for $\const\Prop$, $\const\Opp$ and $\const n$
ranging over constant symbols for $\HF$-sets.
\end{defi}



\section{Alternating Tree Automata}
\label{sec:aut}

\noindent
We detail in this Section a representation of alternating tree automata
in $\FSO$.
We closely follow the presentation of~\cite{walukiewicz02tcs}.
Our main motivation to consider alternating automata is that when formulating
acceptance with (parity) games (of the kind of~\S\ref{sec:games}),
complementation follows from (positional) determinacy
(\ie\@ in our setting from the Axiom $(\PosDet)$).
Let us recall the main ideas underlying alternating automata.
The original formulation, as 
in \eg~\cite{ms87tcs,ms95tcs},
is for an automaton $\At A$ with state set $Q$
to have transitions with values in the free distributive
lattice over $\Dir \times Q$
(in other words, transitions have positive Boolean formulae over $\Dir \times Q$
as values).
Actually, following~\cite{walukiewicz02tcs} we can
simply assume that transitions are of the form
\[
\trans ~~:~~ Q \times \Sigma
~~\longto~~ \Pne(\Pne(\Dir \times Q))
\]
and we read $\trans(q,\al a)$
as the disjunctive normal form
\[
\bigdisj\limits_{\Conj \in \trans(q, \al a)}
\bigconj\limits_{(d,q') \in \Conj} (d,q')
\]
This results in acceptance games 
where intuitively $\Prop$ plays from disjunctions while $\Opp$
plays from conjunctions.
In the following we often call the $\Conj \in \trans(q, \al a)$
\emph{conjunctions}.

We begin by giving basic definitions in~\ref{sec:aut:alt}.
Because our setting is restricted to only describe
\emph{positional} strategies, and because parity
games are positionally determined, we give a special emphasis
to \emph{parity} automata, whose acceptance conditions
are parity conditions generated from a coloring of their states.
We then present a series of operations on automata,
on which we rely in~\S\ref{sec:compl:aut} for the interpretation
of $\MSO$ formulae as automata.
We recapitulate them in Table~\ref{tab:aut:op}.
First,~\S\ref{sec:aut:subst} and~\S\ref{sec:aut:disj} present
two simple constructions
implementing respectively a substitution and a disjunction
operation.
We discuss in~\S\ref{sec:aut:nd} and~\S\ref{sec:nd:proj} the important
special case of \emph{non-deterministic} automata.
Non-deterministic automata are important because they allow us,
via the usual \emph{projection} operation (\S\ref{sec:nd:proj}),
to interpret the existential quantifier of $\MSO$ (see~\S\ref{sec:compl:aut}).
To this end, an important result of the theory of automata on infinite trees
is the \emph{Simulation Theorem}~\cite{ej91focs,ms95tcs},
which states that each alternating automaton is equivalent to a
non-deterministic one.
The formalization of this result in $\FSO$ is deferred to~\S\ref{sec:sim}.
This is the only part of this paper where we shall (momentarily)
use automata with acceptance conditions which are not parity conditions.
This result moreover relies on the complete axiomatization of
$\MSO$ on $\omega$-words for paths of $\FSO$
(to be discussed in~\S\ref{sec:msow}).
Finally, in~\S\ref{sec:neg} we discuss complementation in the setting of $\FSO$,
and show that alternating automata can be complemented in $\FSO$
when we assume the Axiom $(\PosDet)$ of Positional Determinacy of Parity Games.


\cnote{\CR:NOTE\quad We require $\Sigma$ to be non-empty
so that $\G(\At A)$ is always defined.}

\begin{table}[tbp]
\[
\begin{array}{l l !{\quad} l !{\quad} l l}
\toprule
  \textbf{Name}
& \textbf{Notation}
& \textbf{Requirements} 
& \multicolumn{2}{l}{\textbf{Location in text}}
\\
\midrule

  \text{Substitution}
& \At A[f] : \Gamma
& \At A : \Sigma
  \quad\text{and}\quad \funto{f}{\Gamma}{\Sigma}
& \S\ref{sec:aut:subst}
& \text{(Lem.~\ref{lem:aut:subst})}
\\

  \text{Disjunction}
& (\At A_0 \oplus \At A_1) : \Sigma
& \At A_i : \Sigma
  \quad(i = 0,1)
& \S\ref{sec:aut:disj}
& \text{(Lem.~\ref{lem:aut:disj})}
\\

  \text{Complementation}
& \aneg \At A:\Sigma
& \At A :\Sigma
  \quad\text{parity}
& \S\ref{sec:neg}
& \text{(Thm.~\ref{thm:neg})}
\\
&
& \text{($\FSO$ + Axiom $(\PosDet)$)}
&
&
\\

  \text{Projection}
& (\exists_\Gamma \At A) : \Sigma
& \At A : \Sigma \times \Gamma
  \quad\text{non-deterministic}
& \S\ref{sec:nd:proj}
& \text{(Prop.~\ref{prop:aut:nd:proj})}
\\

  \text{Simulation}
& \ND(\At A) : \Sigma
& \At A : \Sigma
  \quad\text{$\HF$-closed}
& \S\ref{sec:sim}
& \text{(Thm.~\ref{thm:sim})}
\\
\toprule
\end{array}
\]
\caption{Operations on Automata.\label{tab:aut:op}}
\end{table}

%
%
%
%
%

\subsection{Alternating Tree Automata in $\FSOD$}
\label{sec:aut:alt}
We present here a representation of alternating tree automata
in $\FSO$.

\begin{defi}[Alternating Tree Automata]
\label{def:aut:alt}
Given an $\HF$-Set $\Sigma$,
an \emph{Alternating Tree Automaton}
(or simply \emph{Automaton}) $\At A$ on $\Sigma$
(notation $\At A : \Sigma$)
is given by $\HF$-terms $Q_{\At A}$, $\init q_{\At A}$
and $\trans_{\At A}$
together with
an $\FSO$-formula $\Omega_{\At A}(U)$
of a Function variable $U$,
which are required to satisfy the following formula:
\[
\Aut(\Sigma, Q_{\At A},\init q_{\At A},\trans_{\At A})
\quad\deq\quad
(\exists \al a \In \Sigma)
~~\land~~
\init q_{\At A} \In Q_{\At A}
~~\land~~
\funto{\trans_{\At A}}{Q_{\At A} \times \Sigma}
{\Pne(\Pne(\Dir \times Q_{\At A}))}
\]
where $\Pne(-)$ is the $\HF$-Function of~\S\ref{sec:ax:hf}.\ref{item:ax:hf:po}.
We write
\[
\At A:\Sigma
\quad=\quad (Q_{\At A},\, \init q_{\At A},\, \trans_{\At A}, \, \Omega_{\At A})
\]
and adopt the following terminology:
$\Sigma$ is the \emph{input alphabet}
of $\At A$, $Q_{\At A}$ is its set of \emph{states}
(with $\init q_{\At A}$ \emph{initial}),
$\trans_{\At A}$ is the \emph{transition function} of $\At A$
and $\Omega_{\At A}$ is its \emph{acceptance condition}.

We often write $\Aut(\At A:\Sigma)$ or even $\Aut(\At A)$ for 
$\Aut(\Sigma, Q_{\At A},\init q_{\At A},\trans_{\At A})$.
\end{defi}

\noindent
An automaton $\At A : \Sigma$ is intended to run over $\Sigma$-labeled
$\Dir$-ary trees, represented as Functions $F:\Sigma$
(equivalently $\funto{F}{\univ}{\Sigma}$, following~\S\ref{sec:not}).
As usual, acceptance is modeled using games,
which we formalize in the setting of~\S\ref{sec:games}.


\begin{defi}[Acceptance Games]
\label{def:aut:games}
Given an automaton $\At A : \Sigma$ and a Function $F:\Sigma$
we define the \emph{acceptance game} $\G(\At A,F)$ as follows:
\[
\PL{\G(\At A,F)} \quad\deq\quad Q_{\At A}
\qquad\qquad
\OL{\G(\At A,F)} \quad\deq\quad Q_{\At A} \times \Pne(\Dir \times Q_{\At A})
\]
and $\EGP{\G(\At A,F)}$, $\EGO{\G(\At A,F)}$
are defined by
$\HF$-Bounded Choice for Product Types (Theorem~\ref{thm:funto:choice})
and Comprehension for $\HF$-Sets (Remark~\ref{rem:hfchoice}) as
\[
\begin{array}{l !{\qquad} r !{\quad\text{iff}\quad} l}
& (q',\Conj) \in \EGP{\G(\At A,F)}(x,q) 
& q' \Eq q ~\land~ \Conj \in \trans_{\At A}(q,F(x))
\\
  \text{and}
& (d,q') \in \EGO{\G(\At A,F)}(x,(q,\Conj))
& (d,q') \in \Conj
\end{array}
\]
\end{defi}

\begin{rem}
Note that $\Aut(\At A)$ implies $\Game(\G(\At A,F))$ for
$F:\Sigma$.
The edge relations of $\G(\At A,F)$
(in the sense of Definition~\ref{def:games:games})
are given by
\[
\begin{array}{r c l !{\quad\text{iff}\quad} l}
  (x,q)
& \edge{}{\Prop}
& (x,(q,\Conj))
& \Conj \in \trans_{\At A}(q,F(x))
\\
  (x,(q,\Conj))
& \edge{}{\Opp}
& (\Succ_d(x),q')
& (d,q') \in \Conj
\end{array}
\]

\noindent
Note also that an $\Opp$-position $(x,(q,\Conj))$
is equipped with the information
$(x,q) \in \PP{\G(\At A,F)}$.
It follows that
an $\Opp$-position has at most one predecessor.
This is useful when complementing automata (\S\ref{sec:neg}).
\end{rem}

\begin{conv}
It the rest of this paper, unless explicitly stated otherwise,
when speaking of an infinite play in an acceptance game $\G(\At A,F)$
(including infinite plays in strategies in such games),
we always mean an infinite play from position $(\Root,\init q_{\At A})$.
\end{conv}

Given $\At A : \Sigma$ and $F:\Sigma$,
the acceptance condition $\Omega_{\At A}(-)$ of $\At A$
induces a winning condition in the sense of~\S\ref{sec:games:win}
in the game $\G(\At A,F)$.
This gives the following notions of tree
acceptance and language generated by an automaton.

\begin{defi}[Language of an Automaton]
\label{def:aut:lang}
Given an automaton $\At A : \Sigma$, a winning condition $\Omega_{\At A}$
(in the sense of Definition~\ref{def:games:win}) and $F : \Sigma$,
we say that \emph{$\At A$ accepts $F$}
when the following formula $F \in \Lang(\At A)$
holds.
\[
F \in \Lang(\At A) \quad\deq\quad
  (\exists \funto{\strat_\Prop}{\PP{\G(\At A,F)}}{\Opp})
  \left(
  \begin{array}{c l}
  & \Strat_\Prop(\G(\At A,F),\strat_\Prop)
  \\
    \land
  & \WinStrat_\Prop(\G(\At A,F),\strat_\Prop,v,\Omega_{\At A})
  \end{array}
  \right)
\]
\end{defi}

Recall that the formulae $\Strat$ and $\WinStrat$ are defined 
in Def.~\ref{def:games:strat} (\S\ref{sec:strat})
and Def.~\ref{def:games:win} (\S\ref{sec:games:win}) respectively.
In words, the formula $F \in \Lang(\At A)$ 
of Definition~\ref{def:aut:lang}
states that
$\Prop$ has a winning strategy from position
$(\Root,\init q_{\At A})$ in the game $\G(\At A,F)$.


Except for the Simulation Theorem in~\S\ref{sec:sim},
we shall only consider automata whose acceptance conditions
are given by \emph{parity conditions}
in the sense of~\S\ref{sec:games:parity}.
Recall from Definition~\ref{def:games:parity}
that a parity condition on a game $\G$ is given by the formula
\[
\Par(\G,\col,n,U)
\]
which depends on $\G$.
However, it is desirable that automata come, as in Definition~\ref{def:aut:alt},
with acceptance conditions which are independent from any particular
acceptance game.
Note that for a given automaton $\At A$, all acceptance games $\G(\At A,F)$
have the same sets of $\Prop$ and $\Opp$ labels and positions;
the input trees $F$ can only induce different edge relations.
Recall now the games $\G(\gle)$ from Remark~\ref{rem:games:sub}.
The game $\G(\gle)$ has the same labels and positions as $\G$,
but its edge relation is exactly the partial order $\gle$ discussed
in~\S\ref{sec:pos}.
It follows that for a fixed automaton $\At A : \Sigma$,
all acceptance games $\G(\At A,F)$ for $F:\Sigma$
induce the same $\G(\At A,F)(\gle)$, that we shall write
\begin{equation}
\label{eq:aut:gle}
\G(\At A)(\gle)
\end{equation}

\begin{defi}[Parity Automata]
\label{def:aut:parity}
Let 
\[
\At A:\Sigma
\quad=\quad
(Q_{\At A} \,,\, \init q_{\At A} \,,\, \trans_{\At A} \,,\, \Omega_{\At A})
\]
We say that $\At A$ is a \emph{parity} automaton if
$\At A$ comes equipped with $\HF$-terms $n_{\At A}$ and $\col_{\At A}$
such that the two following conditions are satisfied.
\begin{enumerate}
\item The following formula holds
\[
\PAut(\At A,\col_{\At A},n_{\At A})
\quad\deq\quad
\Aut(\At A)
~~\land~~
\Ord(n_{\At A})
~~\land~~
\funto{\col_{\At A}}{Q_{\At A}}{[0,n_{\At A}]}
\]

\item The formula $\Omega_{\At A}(U)$
is $\Par(\G(\At A)(\gle),\hat \col_{\At A},n_{\At A})$,
where
\[
\hat \col_{\At A}(x,k) \quad\deq\quad
\left\{
\begin{array}{l !{\quad} l !{\qquad} l}
  \col_{\At A}(q)
& \text{if $k=q \in Q_{\At A}$}
& \text{($\Prop$-position)}
\\
  \col_{\At A}(q)
& \text{if $k=(q,\Conj) \in Q_{\At A} \times \Pne(\univ \times Q_{\At A})$}
& \text{($\Opp$-position)}
\end{array}
\right.
\]
\end{enumerate}

\noindent
We write
\[
\At A \quad=\quad
(Q_{\At A} \,,\, \init q_{\At A} \,,\, \trans_{\At A} \,,\, \col_{\At A} \,,\, n_{\At A})
\]
for a parity automaton $\At A$ with $\col_{\At A}$ and $n_{\At A}$ as above.
Furthermore, we write
$\Par(\At A,\hat \col_{\At A},n_{\At A},U)$
or even 
$\Par(\At A,U)$
for the formula
$\Par(\G(\At A)(\gle),\hat \col_{\At A},n_{\At A},U)$.
\end{defi}

In Definition~\ref{def:aut:parity},
the purpose of the coloring $\hat \col_{\At A}$
is to equip the game $\G(\At A)(\gle)$ with a coloring in the sense of
Def.~\ref{def:games:parity} (\S\ref{sec:games:parity}), namely a coloring of the
positions of the game, while the coloring $\col_{\At A}$
only colors the states of $\At A$.

Note that it follows 
from Remarks~\ref{rem:games:sub} and~\ref{rem:games:parity:gle}
that $\FSO$ proves
\[
\PAut(\At A :\Sigma)
\quad\longlimp\quad
(\forall F:\Sigma)
\bigg( \Sub\Big(\G(\At A,F) ,\, \G(\At A)(\gle) \Big) \bigg)
\]

\noindent
where the formula $\Sub(\G,\G')$ (stating that $\G$ is a subgame of $\G'$)
is defined in Def.~\ref{def:games:sub} (\S\ref{sec:games:games}),
and
\begin{multline*}
\PAut(\At A :\Sigma)
\quad\longlimp\quad
(\forall F:\Sigma)
(\forall \funto{U}{\G(\At A,F)}{\two})
\\
\bigg(
\Play\big( \G(\At A,F),(\Root, \init q_{\At A}),U \big)
~~\limp~~
\Big[
\Par\big( \G(\At A,F),\hat \col_{\At A},n_{\At A},U \big)
~~\liff~~
\Par\big(\At A,\hat\col_{\At A},n_{\At A},U \big)
\Big]
\bigg)
\end{multline*}

\noindent
The following simple fact will be useful when proving the Simulation Theorem
in~\S\ref{sec:sim}.

\begin{rem}
\label{rem:aut:propplays}
Given two plays $U$ and $V$ in $\G(\At A)(\gle)$, if $\PP U = \PP V$
then
\[
\Par\big( \G(\At A)(\gle),U \big)
~~\liff~~
\Par\big( \G(\At A)(\gle),V \big)
\]
\end{rem}

Other than the Simulation Theorem
in~\S\ref{sec:sim},
all constructions we need on automata can be performed on automata
$\At A:\Sigma$
where $\Sigma$, $Q_{\At A}$, $\init q_{\At A}$, $\trans_{\At A}$, $\col_{\At A}$
and $n_{\At A}$ are given by 
arbitrary
$\HF$-terms.
However, our completeness result (\S\ref{sec:compl})
ultimately relies, via Proposition~\ref{prop:bfsos:bfso},
on the completeness of $\FSOW$ over $\omega$-words (\S\ref{sec:msow})
and requires automata to be given by \emph{closed} $\HF$-terms.
In addition, our proof of the 
Simulation Theorem 
uses McNaughton's Theorem~\cite{mcnaughton66ic},
and imports it into $\FSO$ by Proposition~\ref{prop:bfsos:bfso},
which also requires automata to be closed objects.
This leads to the following. 


\begin{defi}
\label{def:aut:hfclosed}
A parity automaton $\At A : \Sigma$ is \emph{$\HF$-closed}
if $\Sigma$, $Q_{\At A}$, $\init q_{\At A}$, $\trans_{\At A}$,
$\col_{\At A}$ and $n_{\At A}$ are closed $\HF$-terms.
\end{defi}

\begin{rem}
\label{rem:aut:hfclosed}
For each of our constructions on automata (see Table~\ref{tab:aut:op}),
the alphabets, states and colorings of new automata
will be obtained by composing simple Functions on $\HF$-Sets
from~\S\ref{sec:ax:hf}
and Convention~\ref{conv:games:colors}.
In particular this means that the obtained automata
have $\HF$-closed alphabets, states and coloring
provided we started from $\HF$-closed ones.

On the other hand, transition functions may be more complex
(see~\S\ref{sec:neg} or~\S\ref{sec:sim}),
and we often present them in a way suggesting 
the use of the Axiom of $\HF$-Bounded Choice for $\HF$-Sets
(\S\ref{sec:ax:choice}).
This is unproblematic when $\HF$-closedness is not at issue.
To preserve $\HF$-closedness, 
starting from $\HF$-closed automata, the transition functions
of the newly built automata must always be read as being
constructed from \emph{concrete} $\HF$-sets.
\end{rem}

\begin{conv}
In the rest of this paper, whenever we speak
of a (parity) automaton $\At A$ in formal statements, we always mean
that the formula $\Aut(\At A)$ (resp.\@ $\PAut(\At A)$) holds.
(By contrast, $\HF$-closedness is an external notion.)
\end{conv}

\subsection{Substitution}
\label{sec:aut:subst}
Let $\At A:\Sigma$ be an automaton and let $\Gamma$ and
$\funto{f}{\Gamma}{\Sigma}$ be $\HF$-sets.
The automaton $\At A[f] : \Gamma$ is defined to have
the same states and acceptance condition as $\At A : \Sigma$,
and its transitions are given
by
\[
(q,\al b) 
\quad\longmapsto\quad
\trans_{\At A}(q,f(\al b))
\]

\noindent
Note that $\Aut(\At A) \land (\exists \al b \In \Gamma)$ implies $\Aut(\At A[f])$.
Also, $\At A[f]$ is a parity automaton
whenever $\At A$ is.
Furthermore, it follows from Remark~\ref{rem:aut:hfclosed} that
$\At A[f] : \Gamma$ is $\HF$-closed when $\At A:\Gamma$
is $\HF$-closed
and in addition $\Gamma$ and $f$ are closed $\HF$-terms.
A typical use of substitution,
on which we rely when translating formulae to automata in~\S\ref{sec:compl:aut},
is to enlarge the input alphabet of an automaton.
For instance, given $\HF$-closed $\Sigma_1,\dots,\Sigma_n$
and an $\HF$-closed $\At A : \Sigma_i$,
we obtain an $\HF$-closed
\[
\At A[\pi^n_i] : \Sigma_1 \times \dots \times \Sigma_n
\]
where
\[
\pi^n_i ~~:~~ \Sigma_1 \times \dots \times \Sigma_n ~~\longto~~ \Sigma_i
\]
is a projection $\HF$-Function of~\S\ref{sec:ax:hf}.\ref{item:ax:hf:fun:prod}.

\begin{lem}
\label{lem:aut:subst}
Given $\Gamma$, $f$ and $\At A$ as above,
$\FSOD$ proves the following. 
\[
(\forall H:\Gamma)
\Bigg[
H \in \Lang(\At A[f])
~~\liff~~
(\forall F:\Sigma)
\bigg(
(\forall x) \Big[ F(x) = f(H(x)) \Big]
~\limp~
F \in \Lang(\At A)
\bigg)
\Bigg]
\]
\end{lem}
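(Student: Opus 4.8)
The plan is to establish the biconditional by exhibiting, for a fixed $H : \Gamma$, a tight correspondence between the acceptance game $\G(\At A[f],H)$ and the acceptance game $\G(\At A,F)$ for the unique $F : \Sigma$ satisfying $(\forall x)[F(x) = f(H(x))]$. The crucial observation is that, by Definition~\ref{def:aut:games}, the transition function of $\At A[f]$ at input $H$ evaluates $\trans_{\At A[f]}(q,H(x)) = \trans_{\At A}(q,f(H(x))) = \trans_{\At A}(q,F(x))$. Since $\At A[f]$ and $\At A$ share the same states, initial state, and acceptance condition $\Omega_{\At A}$, the two games $\G(\At A[f],H)$ and $\G(\At A,F)$ have \emph{identical} labels of game positions and, position-by-position, \emph{identical} edge relations $\EGP{}$ and $\EGO{}$. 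In other words, $\FSO$ proves that these two games are literally the same game (each is a subgame of the other via Definition~\ref{def:games:sub}, or more directly their defining Functions are provably equal). Hence the formulae $F \in \Lang(\At A)$ and $H \in \Lang(\At A[f])$ of Definition~\ref{def:aut:lang}, being built from $\Strat_\Prop$ and $\WinStrat_\Prop$ applied to these games, are provably equivalent.

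First I would introduce the $F : \Sigma$ given by the antecedent. For the existence of such an $F$, I would invoke $\HF$-Bounded Choice for Functions (\S\ref{sec:ax:choice}): from $(\forall x)(\exists k \In \Sigma)(k \Eq f(H(x)))$—which holds since $\funto{f}{\Gamma}{\Sigma}$ and $H : \Gamma$ guarantee $f(H(x)) \In \Sigma$—we obtain $(\exists F : \Sigma)(\forall x)(F(x) \Eq f(H(x)))$, and this $F$ is unique by functional extensionality on Individuals. I would then prove the key equality of games: working under $\Aut(\At A)$ and the hypotheses on $\Gamma, f, H$, show that for every game position the $\Prop$-edges and $\Opp$-edges of $\G(\At A[f],H)$ and $\G(\At A,F)$ coincide, using only that $\trans_{\At A[f]}(q,H(x)) = \trans_{\At A}(q,F(x))$ pointwise. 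From equality of edge relations and of acceptance conditions, $\Play$, $\Strat_\Prop$, and $\Par$ (equivalently $\Omega_{\At A}$) are interpreted identically in both games.

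To finish, I would argue both directions of the biconditional simultaneously. For the forward direction, from $H \in \Lang(\At A[f])$ I extract a winning $\Prop$-strategy $\strat_\Prop$ in $\G(\At A[f],H)$; by the game equality this same $\strat_\Prop$ is a winning $\Prop$-strategy in $\G(\At A,F)$, giving $F \in \Lang(\At A)$, and since $F$ was arbitrary subject to the pointwise constraint (and is in fact unique), the universally-quantified conclusion follows. For the backward direction, I instantiate the universal hypothesis at the specific $F$ constructed above: its defining property $(\forall x)[F(x) = f(H(x))]$ is exactly the antecedent of the implication, so I obtain $F \in \Lang(\At A)$, whence the shared winning strategy yields $H \in \Lang(\At A[f])$.

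The main obstacle I anticipate is not conceptual but bookkeeping: verifying \emph{within $\FSO$} that the two acceptance games have provably equal edge relations, which requires carefully unfolding Definition~\ref{def:games:games} through the substitution $\trans_{\At A[f]}(q,\al b) = \trans_{\At A}(q,f(\al b))$ and tracking that the $\Opp$-positions of the form $(x,(q,\Conj))$ carry the same conjunction sets $\Conj$ in both games. A subtle secondary point is the handling of the outermost $(\forall F:\Sigma)$ in the statement: I must take care that the uniqueness of $F$ (guaranteed by extensionality) legitimately lets the single constructed $F$ stand in for the bound variable, so that proving $F \in \Lang(\At A)$ for this one witness suffices for the forward direction's universal conclusion.
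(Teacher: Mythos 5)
Your proposal is correct and follows essentially the same route the paper takes: the paper states this lemma without a detailed proof, treating it as immediate from the observation that $\G(\At A[f],H)$ and $\G(\At A,F)$ are position-by-position the same game whenever $(\forall x)\big[F(x) \Eq f(H(x))\big]$, and the remark immediately following the lemma makes exactly your $\HF$-Bounded-Choice point about the existence of such an $F$. One small caution: drop the appeal to ``uniqueness of $F$ by functional extensionality''---equality between Functions is not even an atomic formula of $\FSO$, so uniqueness is neither expressible as an identity of Function objects nor needed, since your game-equality argument already applies to an \emph{arbitrary} $F$ satisfying the pointwise constraint, which is all the universally quantified direction requires.
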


\noindent
Note that by $\HF$-Bounded Choice, $\FSOD$ proves that
\[
(\forall H:\Gamma)
(\exists F:\Sigma)
(\forall x)
\Big(
F(x) = f(H(x))
\Big)
\]
so the above Lemma could have equivalently been stated with an existentially bound $F$.

\subsection{Disjunction}
\label{sec:aut:disj}
We use here the $\HF$-Functions from~\S\ref{sec:ax:hf}.\ref{item:ax:hf:coprod}
and Convention~\ref{conv:games:colors}.\ref{item:games:colors:arith}.
Given parity automata $\At A_0,\At A_1 : \Sigma$,
the parity automaton $\At A_0 \oplus \At A_1 : \Sigma$
has state set
\[
Q_{\At A_0} + Q_{\At A_1} + \{\init q\}
\]
with $\init q$ initial,
transitions given 
by
\[
\begin{array}{r !{\quad}c!{\quad} l !{\qquad} l}
  (\init q,\al a)
& \longmapsto
& \trans_{\At A_0}(\init q_{\At A_0},\al a) + \trans_{\At A_1}(\init q_{\At A_1},\al a)
& \text{(modulo $Q_{\At A_i} \hookrightarrow Q_{\At A_0 \oplus \At A_1}$)}
\\
  (q_{\At A_i},\al a)
& \longmapsto
& \trans_{\At A_i}(q_{\At A_i}, \al a)
& \text{(for $q_{\At A_i} \in Q_{\At A_i}$)} 
\end{array}
\]
and coloring $\funto{\col}{Q_{\At A_0 \oplus \At A_1}}{[0,n]}$
(where $n = \max(n_{\At A_0},n_{\At A_1})$)
given by
\[
\begin{array}{r !{\quad\deq\quad} l !{\qquad} l}
  \col(\init q)
& n
\\
  \col(q_{\At A_i})
& \col_{\At A_i}(q_{\At A_i})
& \text{(for $q_{\At A_i} \in Q_{\At A_i}$)} 
\end{array}
\]

\noindent
We have
\[
  \Aut(\At A_0)
\quad\longlimp\quad
  \Aut(\At A_1)
\quad\longlimp\quad
  \Aut(\At A_0 \oplus \At A_1)
\]
Moreover, if follows from Remark~\ref{rem:aut:hfclosed} that
$\At A_0 \oplus \At A_1 : \Sigma$ is 
$\HF$-closed
whenever $\At A_0 : \Sigma$ and $\At A_1 : \Sigma$ are.

\begin{rem}
Even in our positional setting,
strictly speaking the automaton $\At A_0 \oplus \At A_1$ does not
require $\At A_0$ and $\At A_1$ to be \emph{parity} automata
(see Table~\ref{tab:aut:op}).
However, the acceptance condition of $\At A_0 \oplus \At A_1$
is actually simpler to define when both $\At A_0$ and $\At A_1$
are parity automata.
Since we shall only need $\At A_0 \oplus \At A_1$
for parity automata,
we only formally define disjunction
in this setting.
\end{rem}

\begin{lem}
\label{lem:aut:disj}
$\FSOD$ proves the following.
\[
(F:\Sigma)
\Big(
F \in \Lang(\At A_0 \oplus \At A_1)
~~\longliff~~
\Big(
F \in \Lang(\At A_0)
~\lor~
F \in \Lang(\At A_1)
\Big)
\Big)
\]
\end{lem}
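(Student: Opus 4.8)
The plan is to exploit that the fresh initial state $\init q$ is visited exactly once, at the very start of any play: Proponent's first move out of $(\Root,\init q)$ forces the play into exactly one of the two embedded copies of $\At A_0$ or $\At A_1$, and thereafter the game coincides with the corresponding acceptance game $\G(\At A_i,F)$. Concretely, for each $i \in \{0,1\}$ let $\iota_i$ be the embedding of game positions of $\G(\At A_i,F)$ into those of $\G(\At A_0 \oplus \At A_1,F)$ induced by the coprojection $Q_{\At A_i} \hookrightarrow Q_{\At A_0} + Q_{\At A_1} + \{\init q\}$ from \S\ref{sec:ax:hf}.(\ref{item:ax:hf:coprod}). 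The decisive structural observation, immediate from the definition of $\trans_{\At A_0 \oplus \At A_1}$, is that every state of $Q_{\At A_i}$ transitions only within $Q_{\At A_i}$; hence $\iota_i$ restricts to an isomorphism of $\edge{}{\G(\At A_i,F)}$ onto the sub-dag of $\G(\At A_0 \oplus \At A_1,F)$ lying in the $\At A_i$-copy, and it preserves colors since $\col(q_{\At A_i}) \Eq \col_{\At A_i}(q_{\At A_i})$.

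First I would establish, using Comprehension (Theorem~\ref{thm:funto:ca}) and the Predecessor Lemma for Infinite Plays (Lemma~\ref{lem:games:predplays}), a correspondence between plays: for any infinite play $U$ in $\G(\At A_0 \oplus \At A_1,F)$ from $(\Root,\init q)$, Proponent's first move determines a unique $i \in \{0,1\}$, and $U \Eq \{(\Root,\init q)\} \cup \iota_i(U')$ for a unique infinite play $U'$ in $\G(\At A_i,F)$ from $(\Root,\init q_{\At A_i})$, every such $U'$ arising in this way. That any $U$ stays in a single copy after its first move is an induction on the edge relation (Corollary~\ref{cor:games:edges:ind}) using that $Q_{\At A_i}$ is $\trans$-closed. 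I would then verify that the parity conditions match along this correspondence, i.e.\ $\Par(\At A_0 \oplus \At A_1,U) \liff \Par(\At A_i,U')$ (passing between $\G(\At A_i,F)$ and $\G(\At A_i)(\gle)$ by Remark~\ref{rem:games:parity:gle}). This is exactly where the choice $\col(\init q) \Eq n \Eq \max(n_{\At A_0},n_{\At A_1})$ is used: the sole occurrence of $(\Root,\init q)$ carries the maximal color $n$, so it can neither contribute to the ``infinitely often'' clause of $\Par$ (being a single position) nor falsify its ``eventually $\geq m$'' clause for any even $m \leq n$ (being $\geq m$); as colors agree along $\iota_i$, both clauses transfer in each direction.

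With the play correspondence in hand, both implications follow by transporting strategies. For the right-to-left direction, given a winning $\Prop$-strategy $\strat_i$ on $\G(\At A_i,F)$ from $(\Root,\init q_{\At A_i})$, I would define by $\HF$-Bounded Choice (Theorem~\ref{thm:funto:choice}) a $\Prop$-strategy $\strat$ on $\G(\At A_0 \oplus \At A_1,F)$ that plays, at $(\Root,\init q)$, the coprojection of the conjunction $\strat_i(\Root,\init q_{\At A_i}) \In \trans_{\At A_i}(\init q_{\At A_i},F(\Root))$ into the $\At A_i$-summand, and otherwise follows $\strat_i$ along $\iota_i$. Every infinite play of $\strat$ then corresponds to a play of $\strat_i$, which satisfies the acceptance condition, so $\strat$ is winning and $F \in \Lang(\At A_0 \oplus \At A_1)$. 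For the left-to-right direction, a winning $\strat$ from $(\Root,\init q)$ makes a first move into a determined summand $i$; by the structural observation every $\strat$-play remains in the $\At A_i$-copy, so pulling $\strat$ back along $\iota_i$ (by Comprehension) yields a $\Prop$-strategy on $\G(\At A_i,F)$ all of whose plays satisfy $\Par(\At A_i,-)$, giving $F \in \Lang(\At A_i)$.

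I expect the main obstacle to be the bookkeeping of the $\FSO$ formalization rather than any conceptual difficulty: turning $\iota_i$ and the two strategy transports into genuine Functions via Comprehension and $\HF$-Bounded Choice, and proving the play correspondence and the parity-condition equivalence \emph{from the definition of} $\Par$, since our positional, monadic setting does not license the usual informal appeals to ``the game is unchanged after one move'' or ``finite prefixes are irrelevant''.
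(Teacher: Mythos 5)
Your proposal is correct and follows essentially the same route as the paper's proof: analyze Proponent's first move at $(\Root,\init q)$ to determine the summand $i$, transport strategies between $\G(\At A_0 \oplus \At A_1,F)$ and $\G(\At A_i,F)$ via $\HF$-Bounded Choice, relate plays by adjoining/removing the single initial position, and observe that the parity conditions transfer. The paper's write-up is terser (it treats only the left-to-right direction explicitly and declares the converse "similar"), while you spell out the copy-closure of $Q_{\At A_i}$ and the prefix-irrelevance of the initial position's color, but these are elaborations of the same argument rather than a different one.
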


\begin{proof}
Assume first that $F \in \Lang(\At A_0 \oplus \At A_1)$
for $F:\Sigma$,
and consider a winning $\Prop$-strategy
$\strat$ in the acceptance game $\G(\At A_0 \oplus \At A_1,F)$.
We first look at the move of $\strat$ on the initial
position $(\Root,\init q)$.
By definition of $\At A_0 \oplus \At A_1$ we have
\[
\strat(\Root,\init q) = (q,\Conj)
\qquad\text{with}\qquad
\Conj \in 
\Big(
  \trans_{\At A_0}\big( \init q_{\At A_0},F(\Root) \big)
+ \trans_{\At A_1}\big( \init q_{\At A_1},F(\Root) \big)
\Big)
\]

\noindent
Assume $\Conj \in \trans_{\At A_i}(\init q_{\At A_i},F(\Root))$.
Then $\strat$ induces a $\Prop$-strategy $\strat_i$ in
$\G(\At A_i,F)$.
The strategy $\strat_i$ is defined using
$\HF$-Bounded Choice for Product Types (Theorem~\ref{thm:funto:choice})
by putting
\[
\strat_i(x,q_{\At A_i})
\quad=\quad
\left\{
\begin{array}{l l}
  \strat(\Root,\init q)
& \text{if $(x,q_{\At A_i}) = (\Root,\init q)$}
\\
  \strat(x,q_{\At A_i})
& \text{otherwise}
\end{array}
\right.
\]

\noindent
It remains to show that $\strat_i$ is winning, that is
\[
(\forall \funto{V}{\G(\At A_i,F)}{\two})
\Big(
\Play(\strat_i,\iota,V)
~~\limp~~
\Par(\At A_i,V)
\Big)
\]
for $\iota = (\Root,\init q_{\At A_i})$.
Consider an infinite play $V$ of $\strat_i$ from $\iota$.
Then by Comprehension for Product Types (Theorem~\ref{thm:funto:ca}),
define $\funto{U}{\G(\At A_0 \oplus \At A_1,F)}{\two}$
as the set of all $(x,\ell)$ such that either
$(x,\ell) = (\Root,\init q)$ or $(x,\ell) \in V$.
It is clear that
\[
\Par(\At A_i, V)
~~\liff~~
\Par(\At A_1 \oplus \At A_2,U)
\]

The converse is proved similarly.
\end{proof}

\subsection{Non-Deterministic Automata}
\label{sec:aut:nd}
We turn to the important class of alternating automata known as
\emph{non-deterministic} automata.
Non-deterministic automata are important because they allow us,
via the usual \emph{projection} operation (\S\ref{sec:nd:proj}),
to interpret the existential quantifier of $\MSO$ (see~\S\ref{sec:compl}).
An important result in the theory of automata on infinite
trees is the \emph{Simulation Theorem}~\cite{ej91focs,ms95tcs}
(addressed in~\S\ref{sec:sim}),
stating that each alternating automata can be simulated by a non-deterministic one.

Intuitively, an automaton $\At A$ is non-deterministic
if in acceptance games $\Opp$ can only explicitly choose tree directions
but not states.

\begin{defi}[Non-Deterministic Automata]
\label{def:aut:nd}
An automaton $(\At A :\Sigma)$ in the sense of Definition~\ref{def:aut:alt},
with
\[
\funto{\trans_{\At A}}{Q_{\At A} \times \Sigma}{\Pne(\Pne(\Dir \times Q_{\At A}))}
\]
is \emph{non-deterministic} if for every $q \in Q_{\At A}$,
every $\al a \in \Sigma$, every $\Conj \in \trans_{\At A}(q,\al a)$, and every tree direction $d \in \Dir$, there is at most one
$q' \in Q_{\At A}$ such that $(d,q') \in \Conj$.

\end{defi}

%

\noindent
The key property of non-deterministic automata
is that in each play of a $\Prop$-strategy $\strat$ in an acceptance game,
the sequence of states is uniquely determined from the tree positions.
We formally state this as follows.


\begin{lem}
\label{lem:aut:nd:unique}
Consider a non-deterministic automaton $\At A : \Sigma$,
and let $F:\Sigma$.
Furthermore let $\strat$ be a $\Prop$-strategy in $\G(\At A,F)$.
Then $\FSOD$ proves that 
for all $x \in \univ$
and all infinite plays $V$ and $V'$ of $\strat$, if
\[
(\exists q \in Q_{\At A})
(x,q) \in V
~~\land~~
(\exists q' \in Q_{\At A})
(x,q') \in V'
\]
then
for all $y \Leq x$, all $q \in Q_{\At A}$,
and all $\Conj \in \Pne(\Dir \times Q_{\At A})$,
we have
\[
\Big[ (y,q) \in V ~~\liff~~ (y,q) \in V' \Big]
\quad\land\quad
\Big[ (y,(q,\Conj)) \in V ~~\liff~~ (y,(q,\Conj)) \in V' \Big]
\]
\end{lem}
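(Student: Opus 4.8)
The plan is to prove the statement by well-founded induction on the tree position $y \Leq x$ (via Theorem~\ref{thm:ax:wfind}), showing that $V$ and $V'$ carry exactly the same positions at every tree position below $x$. The engine of the argument is the combination of three facts: a move of a play of $\strat$ out of a $\Prop$-position $(y,q)$ is forced to be $(y,\strat(y,q))$; a move out of an $\Opp$-position advances the tree position to a child; and, by non-determinism (Definition~\ref{def:aut:nd}), once the conjunction $\Conj$ and the chosen direction $d$ are fixed there is at most one successor state $q'$ with $(d,q') \in \Conj$.

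First I would record the structural shape of an infinite play. Since $(\Root,\init q_{\At A}) \edge{*}{} v$ for every $v \in V$ and $\edge{*}{}$ is $\gle$-monotone (Proposition~\ref{prop:games:edges}), the tree positions along $V$ are non-decreasing, and the play alternates between a $\Prop$-position $(y,q)$ and the $\Opp$-position $(y,\strat(y,q))$ at the same tree position before an $\Opp$-move passes to a child. Consequently, for a play $V$ with $(x,q) \in V$, each ancestor $y \Leq x$ carries a unique $\Prop$-position $(y,q^{V}_{y}) \in V$ and a unique $\Opp$-position, necessarily $(y,\strat(y,q^{V}_{y}))$; this uses that $V$ is linearly ordered (trichotomy) together with the Predecessor Lemma~\ref{lem:games:predplays} to locate moves, and the tree axioms to see that the child of $y$ visited next is the unique direction $d$ with $\Succ_d(y) \Leq x$.

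With this in hand the induction is short. The base case $y = \Root$ is immediate: both plays start at $(\Root,\init q_{\At A})$, so $q^{V}_{\Root} = \init q_{\At A} = q^{V'}_{\Root}$, and their $\Opp$-positions are both $\strat(\Root,\init q_{\At A})$. For the inductive step at $y = \Succ_d(z) \Leq x$, the induction hypothesis at $z \Lt x$ gives $q^{V}_{z} = q^{V'}_{z}$, hence identical $\Opp$-conjunctions $\Conj = \pi_2(\strat(z,q^{V}_{z}))$ in both plays. Both plays advance from $z$ in direction $d$ (the unique child $\Leq x$), so the $\Prop$-state at $y$ is in each case the unique $q'$ with $(d,q') \in \Conj$; non-determinism forces $q^{V}_{y} = q^{V'}_{y}$, and then the $\Opp$-positions at $y$ coincide, being $\strat(y,q^{V}_{y})$ in both. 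Since at each $y \Leq x$ every position of $V$ (resp.\ $V'$) is one of these two, the desired equivalences for all $q$ and all $\Conj$ follow.

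The main obstacle is the structural preliminary rather than the induction itself: formalizing in $\FSO$ that a play reaching $x$ actually contains a position at every ancestor $y \Leq x$, and that the $\Opp$-move it takes at each such $y \neq x$ goes to the unique child $\Succ_d(y) \Leq x$. This requires chasing the downward-closed reachability relation $\edge{*}{}$ and the linear order on $V$ through Proposition~\ref{prop:games:edges} and the Predecessor Lemma~\ref{lem:games:predplays}, and it is the one place where the special geometry of the acceptance dag (each $\Opp$-position having a single predecessor, $\Prop$-moves fixing the tree position) is genuinely used. Once the positions of $V$ and $V'$ at tree positions $\Leq x$ are pinned down as the images of these two recursions, their agreement is exactly the matching of recursions driven by $\strat$ and by non-determinism.
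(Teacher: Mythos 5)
Your proposal is correct and follows essentially the same route as the paper's proof: first the structural claim that both plays contain a position at every $y \Leq x$ (the paper's Claim~\ref{clm:aut:nd:unique:exists}, proved there by ascending induction using $\Game(\strat)$ and linearity, where you instead descend via the Predecessor Lemma~\ref{lem:games:predplays}), then tree induction from the root, using that the $\Opp$-position in a play of $\strat$ is forced to be $(y,\strat(y,q))$ (hence the same conjunction in both plays) and that non-determinism pins down the unique successor state for a fixed direction. The remaining differences (well-founded versus successor induction, descent versus ascent for the structural preliminary) are cosmetic.
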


\begin{proof}
Fix $\strat$ and $V,V'$ as in the statement of the Lemma
and let $x \in \univ$.
First, note that for every $y \Leq x$
we have

\begin{subclm}
\label{clm:aut:nd:unique:exists}
\[
(\exists q \in Q_{\At A}) \big( (y,q) \in V \big)
~~\land~~
(\exists q \in Q_{\At A}) \big( (y,q) \in V' \big)
\]
\end{subclm}

\begin{subproof}[Proof of Claim \thesubclm]
We use the Induction Axiom of $\FSO$ (\S\ref{sec:ax:ind}).
The property holds for $\Root \Leq x$ since
$(\Root,\init q_{\At A})$ belongs to both $V$ and $V'$.
Assume now the property for $y \Leq x$, and consider some tree direction $d \in \Dir$
such that $\Succ_d(y) \Leq x$.
By assumption, we have some $q \in Q_{\At A}$ such that $(y,q) \in V$,
and by using $\Game(\strat)$ twice, 
we get some $q' \in Q_{\At A}$ and some $d' \in \Dir$
such that
\[
\exists q'\left(
(y,q)
\edge{}{\strat}
\edge{}{\strat}
(\Succ_{d'}(y),q')
\right)
\]

\noindent
But since $V$ is a play of $\strat$, 
by Proposition~\ref{prop:games:edges}
we must have $\Succ_{d'}(y) \gle x$, so that $d' = d$
and we are done.
The same reasoning gives the result for $V'$.
\end{subproof}

Using the Induction Axiom of $\FSO$ (\S\ref{sec:ax:ind}), we now show that
\[
(\forall y \Leq x) (\forall q \in Q_{\At A})
\Big[
(y,q) \in V ~~\liff~~ (y,q) \in V'
\Big]
\]
First, 
we have
\[
(\Root,q) \in V
\quad\liff\quad
(\Root,q)\in V'
\quad\liff\quad
q = \init q_{\At A}
\]

\noindent
Assume now the property for $y \Leq x$ and let us prove it for
$\Succ_d(y)$ with $\Succ_d(y) \Leq x$.
It follows from the induction hypothesis and
Claim~\ref{clm:aut:nd:unique:exists}
that
we have $(y,q) \in V$ and $(y,q) \in V'$
for some $q \in Q_{\At A}$.
Again by Claim~\ref{clm:aut:nd:unique:exists},
let $q',q'' \in Q_{\At A}$ such that
$(\Succ_d(y),q') \in V$ and $(\Succ_d(y),q'') \in V'$.
Now since $V$ and $V'$ are plays of $\strat$,
there are $\Conj$, $\Conj'$ such that
$(y,(q,\Conj)) \in V$ and $(y,(q,\Conj')) \in V'$,
and
we necessarily have
\[
(q,\Conj) \quad=\quad (q,\Conj') \quad=\quad \strat(x,q)
\]
so that $\Conj = \Conj'$.
Moreover, we have $(d,q'),(d,q'') \in \Conj$,
but this implies $q' = q''$ since $\At A$ is non-deterministic.

This concludes the proof of Lemma~\ref{lem:aut:nd:unique}.
\end{proof}

\cnote{\CR:NOTE\quad It is not possible to
strengthen Corollary~\ref{cor:aut:nd:unique}
to an ``exists unique''.
This would require all non-deterministic automata to be ``total'',
and therefore to allow for empty sets as states of $\oc \At A$
in~\S\ref{sec:sim}}

\begin{cor}
\label{cor:aut:nd:unique}
Given $\At A$, $F$ and $\strat$ as in Lemma~\ref{lem:aut:nd:unique},
$\FSOD$ proves that for each $x \in \univ$ there is at most
one $q \in Q_{\At A}$
such that
\[
(\exists \funto{U}{\G(\At A,F)}{\two})
\Big(
  \Play(\strat,\, (\Root,\init q_{\At A}),\, U)
~~\land~~
  (x,q) \in U
\Big)
\]
\end{cor}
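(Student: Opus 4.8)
The plan is to obtain the Corollary as an immediate consequence of Lemma~\ref{lem:aut:nd:unique}, supplemented by the observation that a single infinite play visits each tree node with at most one $\Prop$-state. Working towards the ``at most one'' claim, I would fix $x \in \univ$ and suppose that two states $q, q' \in Q_{\At A}$ each satisfy the displayed property: let $U$ and $U'$ be infinite plays of $\strat$ from $(\Root,\init q_{\At A})$ with $(x,q) \in U$ and $(x,q') \in U'$. The goal is then to prove $q = q'$.

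First I would invoke Lemma~\ref{lem:aut:nd:unique} with $V := U$ and $V' := U'$. Its hypothesis is exactly that some state lies over $x$ in $U$ (witnessed by $q$) and some state lies over $x$ in $U'$ (witnessed by $q'$), so it is satisfied. Instantiating its conclusion at $y := x$ and the state $q$ gives $(x,q) \in U \liff (x,q) \in U'$, whence $(x,q) \in U'$. We are thus reduced to showing that the two $\Prop$-positions $(x,q)$ and $(x,q')$, both lying in the single play $U'$, must coincide.

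This last reduction is the key step. Since $U'$ is an infinite play it is linearly ordered by $\edge{+}{}$, and by Proposition~\ref{prop:games:edges}.(\ref{prop:games:edges:glt}) this makes it linearly ordered by $\glt$. If $q \neq q'$ then $(x,q) \neq (x,q')$, so one of the two is $\glt$ below the other; but by Definition~\ref{def:games:order} a strict inequality between positions sharing the same tree node $x$ forces the smaller label to lie in $\PL{}$ and the larger in $\OL{}$. Since $q$ and $q'$ are both states, i.e.\ both in $Q_{\At A} = \PL{\G(\At A,F)}$, this would place a state in $\PL{} \cap \OL{}$, contradicting $\Labels(\PL{},\OL{})$ (which holds as part of $\Game(\G(\At A,F))$). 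Hence $q = q'$, as required. I expect the only point requiring any care to be the correct matching of the hypothesis of Lemma~\ref{lem:aut:nd:unique}; the final incomparability argument is routine given the explicit shape of $\glt$.
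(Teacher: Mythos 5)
Your proof is correct and is exactly the derivation the paper intends: the Corollary is stated without an explicit proof as an immediate consequence of Lemma~\ref{lem:aut:nd:unique}, and your filling-in — using the Lemma to place both $(x,q)$ and $(x,q')$ in the single play $U'$, then invoking the linearity clause of $\Play$ together with Proposition~\ref{prop:games:edges} ($\edge{+}{}$ implies $\glt$) and the disjointness of $\PL{}$ and $\OL{}$ from $\Labels$ to rule out two distinct $\Prop$-positions over the same tree node — is the canonical argument. No gaps.
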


We now state the \emph{Simulation Theorem}~\cite{ej91focs,ms95tcs}.
Its proof in $\FSOD$, requiring $\HF$-closedness of automata,
is deferred to~\S\ref{sec:sim}.

\begin{thm}[Simulation]
\label{thm:aut:nd:sim}
For each $\HF$-closed parity automaton $\At A : \Sigma$
there is a non-deterministic $\HF$-closed parity automaton
$\ND(\At A) : \Sigma$
such that 
\[
\FSO
\thesis\quad \Lang(\ND(\At A)) = \Lang(\At A)
\]
\end{thm}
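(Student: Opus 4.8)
The plan is to follow the classical Muller--Schupp/Emerson--Jutla construction~\cite{ej91focs,ms95tcs} as presented in~\cite{walukiewicz02tcs}, realising $\ND(\At A)$ as an $\HF$-closed object and then verifying the two language inclusions inside $\FSO$, importing the one genuinely infinitary ingredient (McNaughton's Theorem) through the completeness of $\MSO$ over $\omega$-words discussed in~\S\ref{sec:msow}. Note that the statement is for plain $\FSO$: positional determinacy is not needed, since in each inclusion a winning strategy is \emph{given} to us and our strategies are positional by construction.

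First I would build $\ND(\At A)$. A state records a subset $S \Sle Q_{\At A}$ of currently active states together with the state of a deterministic parity automaton $\At D$ on $\omega$-words. The automaton $\At D$ is obtained, for the fixed $\HF$-closed $\At A$, by applying McNaughton's Theorem to the $\omega$-regular language $L_{\At A}$ over the finite alphabet $(\Pne(\Dir\times Q_{\At A}))^{Q_{\At A}}$ of ``local $\Prop$-moves'', consisting of those sequences all of whose $\At A$-threads satisfy the parity condition $\Par$ of $\At A$. A transition of $\ND(\At A)$ on $(S,s)$ and letter $\al a \In \Sigma$ guesses, for each $q \In S$, a conjunction $\Conj_q \In \trans_{\At A}(q,\al a)$; each direction $d \In \Dir$ then determines uniquely the successor active set $\{q' : (d,q')\In \Conj_q \text{ for some } q \In S\}$ and the $\At D$-update, so that only $\Opp$'s choice of $d$ remains and $\ND(\At A)$ is indeed non-deterministic in the sense of Definition~\ref{def:aut:nd}; the coloring and ordinal are inherited from $\At D$. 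Since $\At A$ is $\HF$-closed, all of $S$, $s$, the $\Conj_q$ and $\At D$ are concrete $\HF$-sets, so $\ND(\At A)$ is $\HF$-closed by Remark~\ref{rem:aut:hfclosed}; the \emph{correctness} of $\At D$, namely that it accepts exactly the words of $L_{\At A}$, is a true statement about finite word-automata which I import into $\FSO$ for the paths of $\Std$ via Proposition~\ref{prop:bfsos:bfso}.

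Next I would prove that every $F$ accepted by $\At A$ is accepted by $\ND(\At A)$. Given $F:\Sigma$ with $F\in\Lang(\At A)$, Definition~\ref{def:aut:lang} supplies a (necessarily positional) winning $\Prop$-strategy $\strat$ in $\G(\At A,F)$. I define a $\Prop$-strategy $\hat\strat$ in $\G(\ND(\At A),F)$ which, at a node $x$ with active set $S$, plays the conjunction recording, for each $q\In S$, the move $\strat(x,q)$. To see that $\hat\strat$ wins, fix an infinite play of $\hat\strat$; by the infinite-path machinery of~\S\ref{sec:pos:path}--\S\ref{sec:games:plays} it projects to a single infinite path $\pi$ of $\univ$, along which the labels form an $\omega$-word $w$ over the local-move alphabet. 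Because $\strat$ wins, every $\At A$-thread along $\pi$ satisfies $\Par$, so $w\in L_{\At A}$, whence $\At D$ accepts $w$ and the $\ND(\At A)$-parity condition holds on the play. The converse inclusion runs dually: from a winning $\Prop$-strategy in $\G(\ND(\At A),F)$, non-determinism (Lemma~\ref{lem:aut:nd:unique} and Corollary~\ref{cor:aut:nd:unique}) yields a well-defined labeling of $\univ$ by pairs $(S,s)$ and by the guessed conjunctions; reading off, at each $(x,q)$, the conjunction guessed for $q$ defines a $\Prop$-strategy in $\G(\At A,F)$, and any of its plays follows a single thread along some path $\pi$ for which $\At D$ accepts, so that thread satisfies $\Par$ and the play is won.

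The hard part will be the transfer, in both directions, between the \emph{global} assertion ``$\Prop$ wins the acceptance game'' --- a second-order statement quantifying over all infinite plays, i.e.\ over all branches \emph{and} all threads at once --- and the \emph{branch-local} assertion ``$\At D$ accepts the word read along each infinite path''. In classical presentations this is done by casual reasoning over infinite plays, but here, as warned in the introduction, the inexpressibility of choice over tree positions and the forced positionality of strategies make such reasoning delicate. Concretely, I expect to have to (i) use Lemmas~\ref{lem:games:predplays} and~\ref{lem:games:pathplays} together with Proposition~\ref{prop:games:edges:lin} to pass rigorously between infinite plays, the paths they project onto, and the threads along those paths, and (ii) phrase the equivalence ``all threads satisfy $\Par$ $\liff$ $\At D$ accepts'' as a property of $\omega$-words and discharge it through the complete theory of $\MSO$ on $\omega$-words of~\S\ref{sec:msow} via Proposition~\ref{prop:bfsos:bfso}, which is precisely the route by which McNaughton's Theorem enters. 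Stitching the per-path conclusions back into a single statement about the whole input tree, uniformly in the branch, is where I expect the bulk of the formalization effort to lie.
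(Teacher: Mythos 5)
Your overall route is the same as the paper's: a powerset-style construction, with McNaughton's Theorem imported into $\FSO$ along tree paths via Proposition~\ref{prop:bfsos:bfso} rather than formalized, followed by a product with the resulting deterministic parity word automaton $\At D$; you are also right that $(\PosDet)$ is not needed here. (The paper merely organizes this differently, first proving in Theorem~\ref{thm:sim:sim:cor} that an intermediate non-deterministic automaton $\oc\At A$ with a non-parity acceptance condition is equivalent to $\At A$, and only then taking the product with $\At D$.) However, there is a genuine gap in your choice of the alphabet over which $L_{\At A}$ and $\At D$ live. You take the letters to be local $\Prop$-moves $f \In (\Pne(\Dir\times Q_{\At A}))^{Q_{\At A}}$, but a sequence $f_0 f_1 f_2\ldots$ of local moves does \emph{not} determine its threads: the state a thread moves to at step $i$ is fixed only once the direction $d_i$ taken by the branch is known, namely by the relation $\{(q,q') \st (d_i,q')\In f_i(q)\}$, and this is not recoverable from $f_i$ alone. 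Consequently the $\At D$-update cannot ``depend on $d$'' as you assert (the letter $\At D$ reads never mentions $d$), and ``all of whose $\At A$-threads satisfy $\Par$'' is either undefined or must be read as quantifying over threads consistent with \emph{arbitrary} per-step directions.

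Under that reading the inclusion $\Lang(\At A)\Sle\Lang(\ND(\At A))$ fails, because a winning strategy $\strat$ for $\At A$ only controls threads following the actual branch directions, not ``phantom'' threads that at each step follow an edge of $f_i$ assigned to the other direction. Concretely: take $\Dir=\{0,1\}$, $\Sigma=\{a,b\}$, the tree $F$ labelling a node $a$ or $b$ according to the parity of the number of $1$s in its address, and $\At A$ with states $m,x,y$ of colors $0,1,0$, forced transitions where $m$ spawns $m$ and $x$ in both directions, $x$ goes to $y$ in both directions at $b$-nodes, to $y$ via $0$ and to $x$ via $1$ at $a$-nodes, and $y$ is a sink. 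Every actual thread visits $x$ at most twice consecutively (taking direction $1$ at an $a$-node leads to a $b$-node), so $F\in\Lang(\At A)$; but along the all-$0$ branch (all $a$-nodes) the state $x$ stays active forever, and the phantom thread looping on $x$ via the direction-$1$ edges violates $\Par$, so the word of local moves along that branch lies outside your $L_{\At A}$ and $\At D$ rejects it; since all transitions are forced, every strategy in $\G(\ND(\At A),F)$ produces this word, whence $F\notin\Lang(\ND(\At A))$. The missing idea is exactly Walukiewicz's device adopted by the paper: make the per-step letter carry the thread-step relation itself. The paper's $\oc\At A$ has states $Q_{\oc\At A}=\Pne(Q_{\At A}\times Q_{\At A})$, the successor in direction $d$ being $S'_d=\{(q,q')\st q\In\pi_2(S),\ (d,q')\In f(q)\}$, so the sequence of states along a branch determines all traces, and McNaughton is applied over the alphabet $Q_{\oc\At A}$. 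With this correction (equivalently, letters $(f,d)$), the rest of your argument goes through essentially as in the paper's Propositions~\ref{prop:sim:sim:prom}, \ref{prop:sim:sim:der}, \ref{prop:sim:par:prom} and~\ref{prop:sim:par:der}.
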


\subsection{Projection}
\label{sec:nd:proj}
We now discuss the usual operation of \emph{projection},
which allows us to interpret (existential) quantification in $\MSO$
(see~\S\ref{sec:compl:aut}).
This operation is defined on arbitrary alternating automata,
but it only correctly computes the appropriate projection for non-deterministic ones.

Given an automaton $\At A : \Sigma \times \Gamma$
as in Definition~\ref{def:aut:alt},
we define its \emph{projection} on $\Sigma$
to be the automaton $\exists_\Gamma \At A : \Sigma$ with
\[
\exists_\Gamma \At A
\quad\deq\quad
(Q_{\At A},\, \init q_{\At A},\, \trans_{\exists_\Gamma \At A},\,
 \col_{\At A},\, n_{\At A})
\]
where
\[
\trans_{\exists_\Gamma \At A} ~~:~~ 
Q_{\At A} \times \Sigma
~~\longto~~
\Pne(\Pne(\Dir \times Q_{\At A}))
\]
is given
by
\[
\trans_{\exists_\Gamma \At A}(q,\al a)
\quad\deq\quad
\bigcup_{\al b \in \Gamma}
\trans_{\At A}(q,(\al a,\al b))
\]

\noindent
Note that $\Aut(\At A : \Sigma \times \Gamma)$
implies $\Aut(\exists_\Gamma \At A)$.
Moreover, $\exists_\Gamma \At A : \Sigma$ is an ($\HF$-closed) parity
automaton whenever so is $\At A : \Sigma \times \Gamma$.

We shall now prove that $\exists_\Gamma \At A : \Sigma$
indeed implements the projection of $\At A : \Sigma \times \Gamma$.
This involves a notion of pairing for trees.
Given 
$F:\Sigma$ and $G:\Gamma$,
we let
$\pair{F,G} : \Gamma \times \Sigma$
be given
(using the axiom of $\HF$-Bounded Choice for $\HF$-Functions~(\S\ref{sec:ax:choice}))
by
\[
\pair{F,G}(x) \quad\deq\quad (F(x),G(x))
\]

\begin{prop}
\label{prop:aut:nd:proj}
Consider a \emph{non-deterministic} $\At A : \Sigma \times \Gamma$ and 
let $\exists_\Gamma \At A : \Sigma$
be as defined above.
Then $\FSOD$ proves the following.
\[
(\forall F:\Sigma)
\bigg[
F \in \Lang(\exists_\Gamma \At A)
~~\liff~~
(\exists G:\Gamma)
\Big(
\pair{F,G} \in \Lang(\At A)
\Big)
\bigg]
\]
\end{prop}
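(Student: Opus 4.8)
The plan is to prove the two implications separately, with the hypothesis of non-determinism entering only in the left-to-right direction.

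For the implication from right to left, suppose $\pair{F,G} \in \Lang(\At A)$ for some $G : \Gamma$, witnessed by a winning $\Prop$-strategy $\strat$ in $\G(\At A,\pair{F,G})$. The key observation is that, by the very definition of the transition function, $\trans_{\At A}(q,(F(x),G(x))) \sle \trans_{\exists_\Gamma \At A}(q,F(x))$ for all $x$ and $q$, so every legal $\Prop$-move of $\strat$ in $\G(\At A,\pair{F,G})$ is also legal in $\G(\exists_\Gamma \At A,F)$; hence $\strat$ is a $\Prop$-strategy on $\G(\exists_\Gamma \At A,F)$ as well. Moreover the two automata share their states, their $\Opp$-labels $Q_{\At A} \times \Pne(\Dir \times Q_{\At A})$, and their $\Opp$-edges, which merely read off the chosen conjunction and in particular depend neither on $F$ nor on $G$. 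Consequently the restricted games $\G(\At A,\pair{F,G})\restr\{\strat\}$ and $\G(\exists_\Gamma \At A,F)\restr\{\strat\}$ have literally the same positions and the same edge relation, hence the same infinite plays from $(\Root,\init q_{\At A})$. Since $\exists_\Gamma \At A$ carries the coloring $\col_{\At A}$ of $\At A$, the parity condition is the same formula on these plays and winning transfers verbatim. This direction uses no hypothesis on $\At A$.

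For the implication from left to right, suppose $F \in \Lang(\exists_\Gamma \At A)$, witnessed by a winning $\Prop$-strategy $\strat$ in $\G(\exists_\Gamma \At A,F)$. First I would note that $\exists_\Gamma \At A$ is again non-deterministic: each $\Conj \In \trans_{\exists_\Gamma \At A}(q,\al a)$ is by construction a conjunction of $\At A$ (it lies in $\trans_{\At A}(q,(\al a,\al b))$ for some $\al b \In \Gamma$), hence contains at most one successor per direction. So Corollary~\ref{cor:aut:nd:unique} applies to $\strat$: at each tree position $x$ there is \emph{at most one} state reachable by $\strat$, and this uniqueness is exactly what lets one recover a witnessing label for $x$ independently of history. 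Writing $R$ for the set of positions reachable from $(\Root,\init q_{\At A})$ in $\G(\exists_\Gamma \At A,F)\restr\{\strat\}$ (definable via $\edge{*}{}$ by Comprehension), I would establish $(\forall x)(\exists \al b \In \Gamma)\,\psi(x,\al b)$, where $\psi(x,\al b)$ asserts that whenever a $\Prop$-position $(x,q) \In R$ has $\strat(x,q)=(q,\Conj)$ then $\Conj \In \trans_{\At A}(q,(F(x),\al b))$. Existence of such a $\al b$ holds because at the unique reachable $(x,q)$ the move satisfies $\Conj \In \trans_{\exists_\Gamma \At A}(q,F(x)) = \bigcup_{\al b' \In \Gamma}\trans_{\At A}(q,(F(x),\al b'))$, and vacuously otherwise (using $\Gamma$ non-empty). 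Then $\HF$-Bounded Choice for Functions (\S\ref{sec:ax:choice}, Theorem~\ref{thm:funto:choice}) yields $G : \Gamma$ with $(\forall x)\psi(x,G(x))$.

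Finally I would lift $\strat$ to a $\Prop$-strategy $\strat'$ on $\G(\At A,\pair{F,G})$, equal to $\strat$ wherever that move is legal in $\G(\At A,\pair{F,G})$ and equal to an arbitrary legal move elsewhere (legal moves always exist since $\trans_{\At A}$ takes values in $\Pne(\cdots)$, and $\strat'$ is definable by $\HF$-Bounded Choice). By the construction of $G$ and the uniqueness of reachable states, $\strat'$ agrees with $\strat$ on all of $R$; since the $\Opp$-edges of the two games coincide and $R$ is closed under the edges of the strategy-restricted game (Proposition~\ref{prop:games:edges}.(\ref{prop:games:edges:decomp})), a game induction along plays (Corollary~\ref{cor:games:edges:ind}) shows that every infinite play from $(\Root,\init q_{\At A})$ stays inside $R$ and is simultaneously a $\strat$-play in $\G(\exists_\Gamma \At A,F)$ and a $\strat'$-play in $\G(\At A,\pair{F,G})$. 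As the colorings agree, winning of $\strat$ transfers to $\strat'$, giving $\pair{F,G} \in \Lang(\At A)$. The main obstacle is precisely this left-to-right direction, and within it the essential appeal to non-determinism through Corollary~\ref{cor:aut:nd:unique}: it is what makes the witnessing label at $x$ depend on $x$ alone, so that a single Function $G : \univ \to \Gamma$ can be extracted by choice. The remaining bookkeeping—that plays never leave $R$, so the behaviour of $\strat'$ at unreachable positions is irrelevant to winning, and that $R$-membership entails lying on some infinite $\strat$-play (gluing an $\edge{*}{}$-prefix to a continuation supplied by Lemma~\ref{lem:games:infplay:pos}) in order to invoke uniqueness—is routine but must be handled with care.
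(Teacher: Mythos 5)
Your proposal is correct and takes essentially the same route as the paper's own proof: the easy direction via the inclusion $\trans_{\At A}(q,(F(x),G(x))) \sle \trans_{\exists_\Gamma \At A}(q,F(x))$, and the hard direction by invoking non-determinism through Corollary~\ref{cor:aut:nd:unique} to extract $G$ by $\HF$-Bounded Choice, then transferring $\strat$ to a strategy on $\G(\At A,\pair{F,G})$ whose infinite plays coincide with infinite plays of $\strat$. The only (harmless) divergence is that the paper phrases everything in terms of membership in infinite plays of $\strat$ rather than $\edge{*}{}$-reachability, which makes the prefix-gluing slightly more direct (prefixes of plays are automatically linearly ordered), whereas your bridge from $R$ to play-membership is the step you correctly flag as needing care.
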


\begin{proof}
Given $G:\Gamma$
and a winning $\Prop$-strategy $\strat$ on $\G(\At A,\pair{F,G})$,
it is easy to see that $\strat$ is also a winning strategy on
$\G(\exists_\Gamma \At A,F)$.

Conversely, assume that $\strat$ is a winning $\Prop$-strategy on
$\G(\exists_\Gamma \At A,F)$.
%
We define a tree $G:\Gamma$
by $\HF$-Bounded Choice for $\HF$-Functions~(\S\ref{sec:ax:choice})
as follows:
\begin{itemize}
\item For $x \in \univ$, if there is some infinite play
$U$ of $\strat$ such that $(x,q) \in U$ for some state $q \in Q_{\At A}$,
then we let $G(x)$ be some $\al b \in \Gamma$
such that $\strat(x,q) \in \trans_{\At A}(q,(F(x),\al b))$.

\item Otherwise, we let $G(x)$ be any element of $\Gamma$.
\end{itemize}

\noindent
We now define a $\Prop$-strategy $\strat_G$ on $\G(\At A,\pair{F,G})$
as follows,
again using
$\HF$-Bounded Choice for $\HF$-Functions~(\S\ref{sec:ax:choice}).
\begin{itemize}
\item 
If $(x,q) \in U$ for some infinite play $U$ of $\strat$, then we let
$\strat_G(x,q) \deq \strat(x,q)$.

\item Otherwise, we let $\strat_G(x,q) = (q,\Conj)$, where
$\Conj \in \trans_{\At A}(q,\pair{F,G}(x))$.
\end{itemize}

We first check that $\strat_G$ is indeed a strategy on $\G(\At A,\pair{F,G})$,
namely that for all $(x,q) \in \univ \times Q_{\At A}$,
if $\strat_G(x,q) = (q,\Conj)$ then 
$\Conj \in \trans_{\At A}(q,\pair{F,G}(x))$.
If $(x,q)$ belongs to no infinite play of $\strat$, then the result follows
by definition of $\strat_G$.
Otherwise, by Corollary~\ref{cor:aut:nd:unique},
$q$ is unique in $Q_{\At A}$ such that $(x,q)$ belongs to an infinite play
of $\strat$, and we are done since
\[
\strat_G(x,q) = \strat(x,q) \in \trans_{\At A}(q,\pair{F,G}(x))
\]

In order to show that $\strat_G$ is winning, we show that any infinite play
of $\strat_G$ is also an infinite play of $\strat$.
So let $\funto{U}{\G(\At A,\pair{F,G})}{\two}$
such that
\[
\Play(\strat_G,\, (\Root,\init q_{\At A}),\, U)
\]

\noindent
We are done if we show that
\[
(\forall (x,q) \in U)
\left(
\strat(x,q) = \strat_G(x,q)
\right)
\]
which follows from the fact that

\begin{subclm}
\[
(\forall (x,q) \in U)
(\exists \funto{W}{\G(\At A,\pair{F,G})}{\two})
\Big(
\Play(\strat,\, (\Root,\init q_{\At A}),\, W)
~~\land~~ (x,q) \in W
\Big)
\]
\end{subclm}

\begin{subproof}[Proof of Claim \thesubclm]
We apply the Induction Scheme of $\FSOD$
(\S\ref{sec:ax:ind}).
In the base case $x = \Root$,
and we conclude by Lemma~\ref{lem:games:infplay:pos}.

For the induction step consider the case of $\Succ_d(x)$,
assuming the property for $x$.
So let $q' \in Q_{\At A}$ such that $(\Succ_d(x),q') \in U$.
First, by applying
twice the Predecessor Lemma~\ref{lem:games:predplays} for Infinite Plays,
we get some $q \in Q_{\At A}$ such that $(x,q) \in U$,
and by induction hypothesis, there is some infinite play $W$ of $\strat$
such that $(x,q) \in W$.
But then, by definition of $\strat_G$, we have $\strat(x,q) = \strat_G(x,q)$.
We thus have $(d,q') \in \Conj$, where $(q,\Conj) = \strat(x,q)$.
Using Lemma~\ref{lem:games:infplay:pos}, let now $W'$
be an infinite play of $\strat$ from position $(\Succ_d(x),q')$.
By Comprehension for Product Types
(Theorem~\ref{thm:funto:ca}),
we define an infinite play $W''$ of $\strat$ from position $(\Root,\init q_{\At A})$
as follows:
\begin{itemize}
\item
Given $u$ a position of $\G(\At A,\pair{F,G})$, if
$u \in W'$ then $u \in W''$.
Otherwise, we let $u \in W''$ iff $u \in W$ and
$u \edge{*}{\strat} (\Succ_d(x),q')$.
\end{itemize}
It is then easy to check that $W''$ is an infinite play of $\strat$.
\end{subproof}

This concludes the proof of Proposition~\ref{prop:aut:nd:proj}.
\end{proof}


\subsection{Complementation}
\label{sec:neg}
It is known that, assuming the determinacy of acceptance games,
alternating tree automata are closed under complement~\cite{ms87tcs}.
On the other hand, our setting only allows us to manipulate
\emph{positional} strategies on acceptance games,
which leads us to formulate complementation for \emph{parity} automata,
since their acceptance games are always positionally determined.
Thus, in this section, we formalize the fact that, assuming the axiom
$(\PosDet)$, each alternating parity automaton has a complement in $\FSO$.
More precisely, we prove the following.

\begin{thm}[Complementation of Tree Automata]
\label{thm:neg}
For each ($\HF$-closed) parity automaton $\At A : \Sigma$,
there is an ($\HF$-closed) parity automaton $\aneg \At A : \Sigma$
such that 
\[
\FSO + (\PosDet) \thesis\quad
(\forall F:\Sigma)
\Big(
F \in \Lang(\aneg \At A) ~~\liff~~
F \notin \Lang(\At A)
\Big)
\]
\end{thm}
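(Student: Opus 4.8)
The plan is to define the complement $\aneg \At A : \Sigma$ by De Morgan dualization of $\At A$ together with a parity shift, and then to derive the stated equivalence from the instance of $(\PosDet)$ for the acceptance game $\G(\At A,F)$, combined with Lemma~\ref{lem:games:notbothwin}. Concretely, I would keep $Q_{\aneg \At A} \deq Q_{\At A}$ and $\init q_{\aneg \At A} \deq \init q_{\At A}$, set $n_{\aneg \At A} \deq n_{\At A} + 1$ and $\col_{\aneg \At A} \deq \col_{\At A} + 1$, and dualize transitions by reading each $\trans_{\At A}(q,\al a) \In \Pne(\Pne(\Dir \times Q_{\At A}))$ as a disjunctive normal form and taking its De Morgan dual. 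Thus $\trans_{\aneg \At A}(q,\al a)$ is the set of all ``diagonal'' conjunctions $\Conj^\star$ obtained by choosing one literal $s(\Conj) \In \Conj$ from each $\Conj \In \trans_{\At A}(q,\al a)$ and letting $\Conj^\star$ be the set of chosen literals. The key combinatorial point of this definition is that \emph{every} dual conjunction $\Conj^\star$ meets \emph{every} original conjunction $\Conj$. All the operations involved (products of $\HF$-sets, images, $+1$, $\even$) are $\HF$-Functions of \S\ref{sec:ax:hf} and Convention~\ref{conv:games:colors}, so by Remark~\ref{rem:aut:hfclosed} the automaton $\aneg \At A : \Sigma$ is an $\HF$-closed parity automaton.

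Since $\At A$ and $\aneg \At A$ share their states and $\Opp$-labels, the positional games $\G(\At A)(\gle)$ and $\G(\aneg \At A)(\gle)$ coincide and differ only in their colorings (by the shift $+1$). The heart of the argument is a pair of dual strategy transformations. From a $\Prop$-strategy $\strat$ on $\G(\aneg \At A,F)$ I would build an $\Opp$-strategy $\tau$ on $\G(\At A,F)$: at an $\Opp$-position $(x,(q,\Conj))$ put $\Conj^\star \deq \strat(x,q)$ and let $\tau(x,(q,\Conj))$ be, via $\HF$-Bounded Choice (Theorem~\ref{thm:funto:choice}) and the well-order of Remark~\ref{rem:ax:hf:well-order-hf}, some literal of the nonempty set $\Conj \cap \Conj^\star$. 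Symmetrically, from an $\Opp$-strategy $\tau$ on $\G(\At A,F)$ I would set $\strat(x,q) \deq \{\tau(x,(q,\Conj)) \st \Conj \In \trans_{\At A}(q,F(x))\}$, which is a legal dual conjunction. Both transformations manifestly produce positional strategies, as forced by our setting.

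It then remains to show these transformations preserve winning across the complementary parity conditions. Given an infinite play $V$ of $\tau$ in $\G(\At A,F)$, I would construct, using Comprehension for Product Types (Theorem~\ref{thm:funto:ca}), game induction (Corollary~\ref{cor:games:edges:ind}) and the Predecessor Lemma~\ref{lem:games:predplays}, an infinite play $U$ of $\strat$ in $\G(\aneg \At A,F)$ with the same $\Prop$-positions, $\PP U \Eq \PP V$; the matching $\Opp$-move is legal at each step precisely because $\tau(x,(q,\Conj)) \In \Conj^\star$. Since the parity condition depends only on $\Prop$-positions (Remark~\ref{rem:aut:propplays}) and the two colorings differ by $+1$, the least color occurring infinitely often on $U$ is even iff the one on $V$ is odd, so $\Par(\aneg \At A,U) \liff \lnot \Par(\At A,V)$. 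This parity-shift equivalence is an auxiliary lemma for infinite plays, proved by unwinding the $(\exists m \In \even(n))$ structure of $\Par$ in Definition~\ref{def:games:parity}. Hence $\strat$ is winning iff $\tau$ is winning for the complementary condition, and symmetrically.

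Finally I would assemble the two directions. If $F \In \Lang(\aneg \At A)$, a winning $\Prop$-strategy on $\G(\aneg \At A,F)$ yields a winning $\Opp$-strategy on $\G(\At A,F)$, so by Lemma~\ref{lem:games:notbothwin} there is no winning $\Prop$-strategy there, i.e.\ $F \notin \Lang(\At A)$. Conversely, if $F \notin \Lang(\At A)$ then $\G(\At A,F)$ has no winning $\Prop$-strategy from $(\Root,\init q_{\At A})$, so the instance of $(\PosDet)$ for this parity game supplies a winning $\Opp$-strategy, which the transformation turns into a winning $\Prop$-strategy on $\G(\aneg \At A,F)$, i.e.\ $F \In \Lang(\aneg \At A)$. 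The main obstacle I anticipate is exactly the play-and-parity correspondence of the third paragraph: because all strategies are forced to be positional and infinite plays are genuinely second-order objects, relating the plays of $\strat$ and of $\tau$ requires the full induction, recursion and Predecessor-Lemma machinery of \S\ref{sec:pos}--\S\ref{sec:games}, and the parity-shift lemma must be handled with care rather than taken for granted.
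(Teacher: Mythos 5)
Your proposal is correct and follows essentially the same route as the paper: the same De Morgan dualization with colour shift $+1$, the same pair of positional strategy transformations, the same play correspondence preserving $\Prop$-positions combined with a parity-shift lemma (the paper's Remark~\ref{rem:aut:propplays} plus Lemmas~\ref{lem:neg:win:toneg} and~\ref{lem:neg:win:fromneg}), with $(\PosDet)$ invoked only for the direction $F \notin \Lang(\At A) \Rightarrow F \in \Lang(\aneg\At A)$ and the converse direction using Lemma~\ref{lem:games:notbothwin}, exactly as in Propositions~\ref{prop:neg:toneg} and~\ref{prop:neg:fromneg}. Two cosmetic deviations: the paper's $\trans_{\aneg \At A}(q,\al a)$ consists of \emph{all} conjunctions meeting every $\Conj \in \trans_{\At A}(q,\al a)$ rather than only your diagonal selections (both work, and the paper's proofs in fact only ever produce diagonal ones), and, since strategies in this formalization are total Functions, your $\Opp$-strategy $\tau$ must also be defined at positions $(x,(q,\Conj))$ with $\Conj \notin \trans_{\At A}(q,F(x))$, where $\Conj \cap \Conj^\star$ may be empty — the paper patches this by letting the strategy pick an arbitrary element of $\Conj$ there.
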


Alternating automata may be directly complemented in a locally syntactic fashion.
For an automaton $\At A:\Sigma$ we may define a complement automaton
$\aneg \At A:\Sigma$ 
with the same states as $\At A$, and such that $\Prop$-strategies
in acceptance games for $\aneg \At A$ correspond
(\wrt\@ the visited states in infinite plays)
to $\Opp$-strategies in acceptance games for $\At A$, and vice-versa.
Closely following~\cite{walukiewicz02tcs},
the basic idea 
is to see the transition function of $\At A$
\[
\trans_{\At A} ~~:~~ 
{Q_{\At A} \times \Sigma}
~~\longto~~ {\Pne(\Pne(\Dir \times Q_{\At A}))}
\]
as taking $(q,\al a)$
to the disjunctive normal form
\[
\bigdisj\limits_{\Conj \in \trans_{\At A}(q, \al a)}
\bigconj\limits_{(d,q') \in \Conj} (d,q')
\]
Then, for the complement $\aneg \At A:\Sigma$ of $\At A$,
we can let
\[
{\trans_{\aneg \At A}} ~~:~~
{Q_{\At A} \times \Sigma}
~~\longto~~
{\Pne(\Pne(\Dir \times Q_{\At A}))}
\]
take $(q,\al a)$
to the De Morgan dual of $\trans_{\At A}(q,\al a)$.

We now proceed to the formal definition.

\begin{defi}
\label{def:neg}
Given a parity automaton $\At A : \Sigma$, we define
the parity automaton
$\aneg \At A :\Sigma$ 
as follows.
The automaton $\aneg \At A$ has the same states and initial state as $\At A$.
Its transitions are defined 
as
\[
\trans_{\aneg \At A}(q,a)
\quad\deq\quad
\Big\{ \dual\Conj \in \Pne(\Dir \times Q_{\At A})
  ~~\st~~
  (\forall \Conj \in \trans_{\At A}(q,a))
    \big( \dual\Conj \cap \Conj \neq \emptyset \big)
\Big\}
\]
Its coloring is given as follows,
using Convention~\ref{conv:games:colors}.\ref{item:games:colors:arith}:
\[
\col_{\aneg \At A}(q) ~~\deq~~
\col_{\At A}(q) + 1
\]
\end{defi}

Note that by Remark~\ref{rem:aut:hfclosed},
$\aneg \At A:\Sigma$ is $\HF$-closed
whenever so is $\At A :\Sigma$.
We are now going to prove Theorem~\ref{thm:neg}.
To this end, fix a parity automaton $\At A :\Sigma$
and let $\aneg \At A : \Sigma$ be as in Definition~\ref{def:neg}.
Fix also some $F:\Sigma$.
We split Theorem~\ref{thm:neg} into the following statements.
\begin{prop}
\label{prop:neg:toneg}
\(
\FSO + (\PosDet) \thesis~~
F \notin \Lang(\At A) ~~\longlimp~~
F \in \Lang(\aneg \At A)
\).
\end{prop}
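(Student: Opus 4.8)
The plan is to obtain a winning $\Opp$-strategy in $\G(\At A, F)$ from the Axiom $(\PosDet)$ and then to dualise it into a winning $\Prop$-strategy in $\G(\aneg \At A, F)$. First I would instantiate $(\PosDet)$ at the acceptance game $\G(\At A, F)$, taking $\Prop$ to be $Q_{\At A}$, $\Opp$ to be $Q_{\At A} \times \Pne(\Dir \times Q_{\At A})$, $n$ to be $n_{\At A}$, the edge relations to be $\EGP{\G(\At A, F)}, \EGO{\G(\At A, F)}$, and the coloring to be $\hat \col_{\At A}$. Since $\Aut(\At A)$ entails $\Game(\G(\At A, F))$ together with the hypotheses $\Labels$ and $\Coloring$ required by $(\PosDet)$, the axiom yields at the position $(\Root, \init q_{\At A})$ that one of the two players has a (positional) winning strategy. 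By Remarks~\ref{rem:games:sub}, \ref{rem:games:parity:sub} and~\ref{rem:games:parity:gle}, the condition $\Par(\G(\At A, F), \hat\col_{\At A}, n_{\At A}, -)$ appearing in $(\PosDet)$ agrees with the acceptance condition $\Omega_{\At A} = \Par(\At A, -)$ on all infinite plays of $\G(\At A, F)$, so the $\Prop$-disjunct is equivalent to $F \in \Lang(\At A)$. Hence the assumption $F \notin \Lang(\At A)$ forces the $\Opp$-disjunct: there is an $\Opp$-strategy $\strat_\Opp$ with $\WinStrat_\Opp(\G(\At A, F), \strat_\Opp, (\Root, \init q_{\At A}), \Omega_{\At A})$.

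Next I would define the dual $\Prop$-strategy $\strat'_\Prop$ on $\G(\aneg \At A, F)$. Using Comprehension for $\HF$-Sets (Remark~\ref{rem:hfchoice}) and $\HF$-Bounded Choice (Theorem~\ref{thm:funto:choice}), set
\[
\strat'_\Prop(x,q) \quad\deq\quad \Big(q,\; \{\, \strat_\Opp(x,(q,\Conj)) \st \Conj \in \trans_{\At A}(q, F(x)) \,\}\Big) .
\]
Writing $\dual\Conj$ for the displayed set of moves, checking that $\strat'_\Prop$ is a legal strategy amounts to $\dual\Conj \in \trans_{\aneg \At A}(q, F(x))$. By Definition~\ref{def:neg} this requires $\dual\Conj \cap \Conj \neq \emptyset$ for every $\Conj \in \trans_{\At A}(q, F(x))$, which holds because $\strat_\Opp(x, (q, \Conj)) \in \Conj$ (as $\strat_\Opp$ is a strategy) lies in $\dual\Conj$ by construction; non-emptiness of $\dual\Conj$ follows from that of $\trans_{\At A}(q, F(x))$.

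The heart of the argument, and the step I expect to be the main obstacle, is to prove that $\strat'_\Prop$ is winning: given any infinite play $U'$ of $\strat'_\Prop$ from $(\Root, \init q_{\At A})$, I must exhibit an infinite play $U$ of $\strat_\Opp$ in $\G(\At A, F)$ with the same $\Prop$-positions, $\PP U = \PP{U'}$. Along $U'$ the tree components of successive $\Prop$-positions strictly increase, and each $\Opp$-move out of $(x, (q, \dual\Conj))$ to the next $\Prop$-position $(\Succ_d(x), q')$ satisfies $(d, q') \in \dual\Conj$, hence $(d, q') = \strat_\Opp(x, (q, \Conj))$ for some $\Conj \in \trans_{\At A}(q, F(x))$. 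Using $\HF$-Bounded Choice I would select such a witness $\Conj$ at each $\Prop$-position of $U'$ (the next move being determined by the uniqueness of successors in $U'$), and then, by Comprehension for Product Types (Theorem~\ref{thm:funto:ca}), let $U$ consist of $\PP{U'}$ together with the corresponding witnessing $\Opp$-positions $(x, (q, \Conj))$. The delicate part is verifying that $U$ is genuinely an infinite play of $\strat_\Opp$ from the root: I would establish the clauses of $\Play$ by appealing to Proposition~\ref{prop:games:edges:lin} for linearity (checking that $U$ has unique successors and the requisite predecessors) and to the Predecessor Lemma~\ref{lem:games:predplays}, following the pattern of the proof of Proposition~\ref{prop:aut:nd:proj}.

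Finally, the parity transfer is routine. The colorings $\hat\col_{\At A}$ and $\hat\col_{\aneg \At A}$ depend only on the underlying state, and $\col_{\aneg \At A}(q) = \col_{\At A}(q) + 1$ (Definition~\ref{def:neg}); since $U$ and $U'$ visit the same states at matching $\Prop$-positions, the least color occurring infinitely often along $U'$ is exactly one greater than that along $U$. As $\strat_\Opp$ is winning we have $\lnot \Par(\At A, U)$, i.e.\ the least color occurring infinitely often in $U$ is odd, so adding one makes the least color occurring infinitely often in $U'$ even, giving $\Par(\aneg \At A, U')$. I would carry out this comparison directly through the two conjuncts of the formula $\Par$ to stay within $\FSO$. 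As $U'$ was arbitrary, $\strat'_\Prop$ is a winning $\Prop$-strategy in $\G(\aneg \At A, F)$, and therefore $F \in \Lang(\aneg \At A)$, as required.
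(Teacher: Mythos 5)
Your proposal is correct and follows essentially the same route as the paper's own proof: invoke $(\PosDet)$ (together with the remarks identifying $\Par(\G(\At A,F),\hat\col_{\At A},n_{\At A},-)$ with $\Omega_{\At A}$ on plays) to extract a winning $\Opp$-strategy, dualise it via $\strat_\Prop(x,q) \deq \big(q,\{\strat_\Opp(x,(q,\Conj)) \st \Conj \in \trans_{\At A}(q,F(x))\}\big)$, transfer each play of the dual strategy back to a play of $\strat_\Opp$ with the same $\Prop$-positions (the paper's Lemma~\ref{lem:neg:toneg}), and conclude by the parity shift $\col_{\aneg\At A} = \col_{\At A}+1$ (Lemma~\ref{lem:neg:win:toneg}). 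The only cosmetic difference is that the paper picks the witnessing conjunction for each $\Opp$-position of the transferred play as the $\preceq$-minimal one for a well-order on $\Pne(\Dir \times Q_{\At A})$ (Remark~\ref{rem:ax:hf:well-order-hf}) rather than by $\HF$-Bounded Choice, which changes nothing of substance.
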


\begin{prop}
\label{prop:neg:fromneg}
\(
\FSO\thesis~~
F \in \Lang(\aneg \At A) ~~\longlimp~~
F \notin \Lang(\At A)
\).
\end{prop}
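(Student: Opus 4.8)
The plan is to prove the statement essentially as the ``easy'' half of complementation, using only $\FSO$ (no $(\PosDet)$): assuming a winning $\Prop$-strategy for $\aneg\At A$, I would produce a \emph{winning $\Opp$-strategy} in the acceptance game of $\At A$, and then invoke Lemma~\ref{lem:games:notbothwin}. Concretely, assume $F \in \Lang(\aneg\At A)$ and fix a winning $\Prop$-strategy $\strat$ on $\G(\aneg\At A,F)$ (Definition~\ref{def:aut:lang}). If we had $F \in \Lang(\At A)$ as well, then $\Prop$ would also have a winning strategy on $\G(\At A,F)$; exhibiting a winning $\Opp$-strategy $\strat_\Opp$ on the same game $\G(\At A,F)$ then contradicts Lemma~\ref{lem:games:notbothwin} at the initial position $(\Root,\init q_{\At A})$. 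So the whole proof reduces to building such a $\strat_\Opp$ from $\strat$.

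The strategy $\strat_\Opp$ is defined by De Morgan duality. At an $\Opp$-position $(x,(q,\Conj))$ of $\G(\At A,F)$ we necessarily have $\Conj \in \trans_{\At A}(q,F(x))$, while $\strat$ selects at the corresponding $\Prop$-position $(x,q)$ some $\dual\Conj \deq \pi_2(\strat(x,q)) \in \trans_{\aneg\At A}(q,F(x))$. By the very definition of $\trans_{\aneg\At A}$ (Definition~\ref{def:neg}) we have $\dual\Conj \cap \Conj \neq \emptyset$, so using $\HF$-Bounded Choice (Theorem~\ref{thm:funto:choice}) I would let $\strat_\Opp(x,(q,\Conj))$ be some $(d,q') \in \dual\Conj \cap \Conj$; this is a legal $\Opp$-move of $\G(\At A,F)$ since $(d,q')\in\Conj$.

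To see $\strat_\Opp$ is winning, consider any infinite play $U$ of $\strat_\Opp$ in $\G(\At A,F)$ from $(\Root,\init q_{\At A})$. Using Comprehension for Product Types (Theorem~\ref{thm:funto:ca}) I would define a companion set $U^-$ consisting of the $\Prop$-positions of $U$ together with, for each $(x,q)\in\PP U$, the $\Opp$-position $(x,(q,\pi_2(\strat(x,q))))$, and verify that $U^-$ is an infinite play of $\strat$ in $\G(\aneg\At A,F)$: each $\Opp$-step of $U$ lands on some $(d,q')$ with $(d,q')\in\dual\Conj\cap\Conj$, so in particular $(d,q')\in\dual\Conj$, which is a legal $\Opp$-move of $\G(\aneg\At A,F)$. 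By construction $\PP U = \PP U^-$. Since the $\gle$-game depends only on the labels $(\PL{},\OL{})$, which $\At A$ and $\aneg\At A$ share, the games $\G(\At A)(\gle)$ and $\G(\aneg\At A)(\gle)$ coincide, so both $U$ and $U^-$ are plays of this single game. As $\strat$ is winning we obtain $\Par(\aneg\At A,U^-)$. Because $\hat\col_{\aneg\At A} = \hat\col_{\At A}+1$ pointwise, a color-shift argument (the least color occurring infinitely often flips parity under a global $+1$) gives $\Par(\aneg\At A,U^-) \liff \lnot\Par(\At A,U^-)$, and Remark~\ref{rem:aut:propplays} gives $\Par(\At A,U^-) \liff \Par(\At A,U)$ from $\PP U = \PP U^-$. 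Chaining these equivalences yields $\lnot\Par(\At A,U)$, so $\strat_\Opp$ wins from $(\Root,\init q_{\At A})$, completing the argument.

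The main obstacle is the construction and verification of the companion play $U^-$: matching the two games position-by-position and then proving the $\Play$ predicate — in particular unboundedness and linear-orderedness, for which I would appeal to Proposition~\ref{prop:games:edges:lin} and the Predecessor Lemma~\ref{lem:games:predplays} — is exactly the kind of bookkeeping on infinite plays that the restricted monadic language makes cumbersome. A secondary, more combinatorial point is the color-shift equivalence $\Par(\aneg\At A,U^-) \liff \lnot\Par(\At A,U^-)$, whose $\FSO$ proof rests on the existence of a least color appearing infinitely often along $U^-$; this in turn uses the well-ordering of the finite color set $[0,n_{\At A}]$ (Remark~\ref{rem:ax:hf:well-order-hf}) together with game induction (Corollary~\ref{cor:games:edges:ind}).
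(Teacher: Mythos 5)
Your proposal is correct and follows essentially the same route as the paper: the paper likewise builds the dual $\Opp$-strategy by choosing elements of $\Conj \cap \dual\Conj$ via $\HF$-Bounded Choice, constructs for each infinite play of $\strat_\Opp$ a companion play of the given $\Prop$-strategy with the same $\Prop$-positions (Lemma~\ref{lem:neg:fromneg}), uses the parity flip (Lemma~\ref{lem:neg:win:fromneg}, which you merely decompose a bit further into a colour-shift plus Remark~\ref{rem:aut:propplays}), and concludes with Lemma~\ref{lem:games:notbothwin}. The one slip is your claim that an $\Opp$-position $(x,(q,\Conj))$ of $\G(\At A,F)$ necessarily satisfies $\Conj \in \trans_{\At A}(q,F(x))$: positions range over all of $\univ \times \OL{\G}$ and strategies are total Functions there, so $\strat_\Opp$ must also be defined when $\Conj$ is not a transition value --- the paper covers this with the clause ``otherwise, since $\Conj \neq \emptyset$, let $\strat_\Opp(x,(q,\Conj))$ be some $(d,q') \in \Conj$'', a trivial fix to your construction.
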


The key
is that
$\Prop$-strategies on $\G(\aneg \At A,F)$
correspond to $\Opp$-strategies on $\G(\At A,F)$, and vice-versa.
We make this formal in~\S\ref{sec:neg:toneg} and~\S\ref{sec:neg:fromneg} below.
First, notice that $Q_{\aneg \At A} = Q_{\At A}$,
so that
the games $\G(\At A,F)$ and $\G(\aneg \At A,F)$
have the same sets of labels
\[
\Prop ~~\deq~~ Q_{\At A}
\qquad\text{and}\qquad
\Opp ~~\deq~~ Q_{\At A} \times \Pne(\Dir \times Q_{\At A})
\]
In the following, we let
\[
\G ~~\deq~~ \univ \times \PO{}
\]
be the set of positions of the games
$\G(\At A,F)$ and $\G(\aneg \At A,F)$,
and we let
$\iota \deq (\Root,\init q_{\At A})$
be their (common) initial position.

\subsubsection{Proof of Proposition~\ref{prop:neg:toneg}.}
\label{sec:neg:toneg}
We are going to show that
$\FSO + (\PosDet)$ proves
\[
F \notin \Lang(\At A)
~~\longlimp~~
F \in \Lang(\aneg \At A)
\]

First, 
given an $\Opp$-strategy $\strat_\Opp$ on $\G(\At A,F)$,
we define a $\Prop$-strategy $\strat_\Prop$ on $\G(\aneg \At A,F)$.
Assuming that 
$\strat_\Opp$
satisfies
$\Strat_\Opp(\G(\At A,F),\strat_\Opp)$,
the strategy $\strat_\Prop$
will satisfy $\Strat_\Prop(\G(\aneg \At A,F),\strat_\Prop)$.
Recall that this in particular means
\[
\funto{\strat_\Opp}{\OP\G}{\Dir \times \Prop}
\qquad\text{and}\qquad
\funto{\strat_\Prop}{\PP\G}{\Opp}
\]

\noindent
By $\HF$-Bounded Choice for Product Types (Theorem~\ref{thm:funto:choice})
we are going to define $\strat_\Prop$ such that 
$\strat_\Prop(x,q) \in \trans_{\aneg \At A}(q,F(x))$
for each
$(x,q) \in \univ \times Q_{\At A}$.
Assume fixed $(x,q) \in \univ \times Q_{\At A}$.
For all $\Conj \in \Pne(\Dir \times Q_{\At A})$
such that $\Conj \in \trans_{\At A}(q,F(x))$,
we have $\strat_\Opp(x,(q,\Conj)) \in \Conj$.
By $\HF$-Comprehension (Remark~\ref{rem:hfchoice}),
let
\[
\dual\Conj \quad\deq\quad
\{\strat_\Opp(x,(q,\Conj)) \st \Conj \in \trans_{\At A}(q,F(x)) \}
\]

\noindent
By construction, we thus have $\dual\Conj \in \trans_{\aneg \At A}(q,F(x))$,
and
we let 
\[
\strat_\Prop(x,q) ~~\deq~~ (q,\dual\Conj)
\]
We trivially have $\Strat_\Prop(\G(\aneg \At A,F),\strat_\Prop)$.

\begin{lem}
\label{lem:neg:toneg}
Consider $\strat_\Opp$ and $\strat_\Prop$ as above.
For every infinite play $V$ of $\strat_\Prop$ in $\G(\aneg \At A,F)$
there is some infinite play $U$ of $\strat_\Opp$ in $\G(\At A,F)$
with $\PP V = \PP U$.
\end{lem}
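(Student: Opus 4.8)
The plan is to build $U$ explicitly from $V$ by reversing the construction of $\strat_\Prop$ from $\strat_\Opp$, choosing for each visited $\Prop$-position of $V$ a single conjunction of $\At A$ whose $\strat_\Opp$-answer reproduces the move $V$ takes. First I would record the global shape of $V$: since $V$ is an infinite play of $\strat_\Prop$ in $\G(\aneg \At A,F)$, it is linearly ordered by $\edge{+}{}$ and visits, along a single infinite tree branch, exactly one $\Prop$-position $(x,q)$ and one $\Opp$-position at each of its tree positions $x$; in particular the state at a given $x$ is unique. For $(x,q)\in\PP V$ its $V$-successor is the $\Opp$-position $(x,\strat_\Prop(x,q))$, and writing $\strat_\Prop(x,q)=(q,\dual\Conj)$ the next $\Prop$-position of $V$ is the $V$-successor $(\Succ_d(x),q')$ of that $\Opp$-position, where $(d,q')\in\dual\Conj$. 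By the very definition of $\strat_\Prop$ we have $\dual\Conj=\{\strat_\Opp(x,(q,\Conj))\st\Conj\in\trans_{\At A}(q,F(x))\}$, so $(d,q')=\strat_\Opp(x,(q,\Conj))$ for at least one $\Conj\in\trans_{\At A}(q,F(x))$.

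This last observation is exactly what licenses a choice. I would apply the Axiom of $\HF$-Bounded Choice for Functions (\S\ref{sec:ax:choice}, Theorem~\ref{thm:funto:choice}) to obtain a Function $\funto{C}{\univ}{\Pne(\Dir\times Q_{\At A})}$ such that, whenever some $(x,q)\in\PP V$, we have $C(x)\in\trans_{\At A}(q,F(x))$ and $\strat_\Opp(x,(q,C(x)))$ equals the move $V$ takes out of $(x,q)$ (for tree positions off the branch of $V$, $C(x)$ is irrelevant and may be anything in the non-empty set $\Pne(\Dir\times Q_{\At A})$). The defining formula of this choice is an $\FSO$-formula with $V$, $F$, $\strat_\Prop$ and $\strat_\Opp$ as parameters, and its premise holds by the paragraph above. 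I would then define $U$ by Comprehension for Product Types (Theorem~\ref{thm:funto:ca}) to consist of all $\Prop$-positions of $V$ together with, for each $(x,q)\in\PP V$, the single $\Opp$-position $(x,(q,C(x)))$. By construction $\PP U=\PP V$, which is the required equation, and $C(x)\in\trans_{\At A}(q,F(x))$ guarantees that every element of $U$ is a genuine position of $\G(\At A,F)$.

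It then remains to verify that $U$ is an infinite play of $\strat_\Opp$, i.e.\ that $\Play(\G(\At A,F)\restr\{\strat_\Opp\}_\Opp,\iota,U)$ holds. Here I would invoke Proposition~\ref{prop:games:edges:lin} for the subgame $\G(\At A,F)\restr\{\strat_\Opp\}_\Opp$, which reduces linear orderedness to three local conditions; together with $\iota\in U$ and the existence of successors these yield $\Play$. The successor condition is direct: a $\Prop$-position $(x,q)\in U$ has in $U$ the unique successor $(x,(q,C(x)))$ (no other $\Opp$-position of $U$ shares the tree position $x$, by uniqueness of the state), while an $\Opp$-position $(x,(q,C(x)))\in U$ has the unique successor dictated by $\strat_\Opp$, which by the choice property is the next $\Prop$-position of $V$ and hence lies in $\PP V\subseteq U$. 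For the predecessor condition I would use the Predecessor Lemma~\ref{lem:games:predplays} applied to the play $V$: any $\Prop$-position $(x,q)\neq\iota$ of $V$ has a $V$-predecessor, necessarily an $\Opp$-position $(x'',(q'',\dual\Conj''))$ whose own $V$-predecessor $(x'',q'')$ lies in $\PP V$; the defining property of $C(x'')$ then makes $(x'',(q'',C(x'')))\edge{}{}(x,q)$ an edge of the subgame, and this position is in $U$. Reachability from $\iota$ follows by the Induction Axiom on the tree position, feeding the predecessor step into the inductive hypothesis.

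The main obstacle is the bookkeeping in this last paragraph: the play $V$ lives in $\G(\aneg \At A,F)$ and carries the dual conjunctions $\dual\Conj$, whereas $U$ must live in $\G(\At A,F)$ and carry ordinary conjunctions, so every successor and predecessor claim about $U$ has to be routed through the choice Function $C$ and matched against the corresponding move of $V$. Keeping the identity ``$\strat_\Opp(x,(q,C(x)))$ reproduces the $V$-move out of $(x,q)$'' available at each step — and exploiting the asymmetric shape of $\edge{}{}$ together with Proposition~\ref{prop:games:edges} exactly as in the proof of Proposition~\ref{prop:games:edges:lin} — is what makes the verification go through.
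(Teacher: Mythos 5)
Your proof is correct and follows essentially the same route as the paper: keep $\PP V$ as $\PP U$, insert for each $(x,q)\in\PP V$ a single $\Opp$-position whose conjunction's $\strat_\Opp$-answer reproduces the move $V$ takes, and then verify $\Play(\strat_\Opp,\iota,U)$ via Proposition~\ref{prop:games:edges:lin}, the Predecessor Lemma~\ref{lem:games:predplays} and induction. The only difference is the mechanism for selecting the conjunction: the paper takes the $\preceq$-minimal $\Conj\in\trans_{\At A}(q,F(x))$ for a well-order supplied by Remark~\ref{rem:ax:hf:well-order-hf}, so that $\OP U$ is defined by Comprehension alone, whereas you obtain a selector $C$ by $\HF$-Bounded Choice for Functions — both devices are available in $\FSO$, so this is an inessential variation.
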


\begin{proof}
We define $U$ by Comprehension for Product Types (Theorem~\ref{thm:funto:ca})
as follows.
\begin{itemize}
\item First, for
$(x,k) \in \PP{\G}$, if $(x,k) \in \PP V$ then we let $(x,k) \in \PP U$.

\item Consider $(x,(q,\Conj)) \in \OP\G$.
Using Remark~\ref{rem:ax:hf:well-order-hf},
let $\prec$ be a well-order on $\Pne(\Dir \times Q_{\At A})$.
Then we let $(x,(q,\Conj)) \in \OP U$ iff
$(x,q) \in \PP V$
and $\Conj$ is $\preceq$-minimal in $\trans_{\At A}(q,F(x))$
such that $(\Succ_d(x),q') \in \PP V$
for $(d,q') = \strat_\Opp(x,(q,\Conj))$.
\end{itemize}

\noindent
Note that consecutive $\Prop$-positions in $\PP U$ are indeed
connected by the edge relation of $\G(\At A,F)$:

\begin{subclm}
\label{clm:op:compl:toneg:between}
\[
(x,q),(\Succ_d(x),q') \in \PP U
~~\limp~~
(\exists ! u \in \OP U)
\left(
(x,q) \edge{}{\strat_\Opp} u
\edge{}{\strat_\Opp}
(\Succ_d(x),q')
\right)
\]
\end{subclm}

\begin{subproof}[Proof of Claim \thesubclm]
We first show uniqueness. Let
$(y_0,(q_0,\Conj_0)),(y_1,(q_1,\Conj_1)) \in \OP U$
be between $(x,q)$ and $(\Succ_d(x),q')$.
Then we must have $y_0 = y_1 = x$ and $q_0 = q_1 = q$.
Hence, $\Conj_0$ and $\Conj_1$ are both $\preceq$-minimal in $\trans_{\At A}(q,F(x))$
such that $\strat_\Opp(x,(q,\Conj_0)) = \strat_\Opp(x,(q,\Conj_1)) = (d,q')$,
yielding $\Conj_0 = \Conj_1$ as required.

We now show the existence of an appropriate $(x,(q,\Conj)) \in \OP U$.
Since
$\Play(\strat_\Prop,\iota,V)$,
we have $(d,q') \in \dual\Conj$
with $(\ell,\dual\Conj) \in \strat_\Prop(y,\ell)$
for some $(y,\ell) \in \PP V$.
But 
$\Play(\strat_\Prop,\iota,V)$
moreover implies that either $(y,\ell) \glt (x,q)$ or $(x,q) \gle (y,\ell)$,
from which follows that $(y,\ell) = (x,q)$
and $(q,\dual\Conj) \in \strat_\Prop(x,q)$.
Since
\[
\dual\Conj \quad\deq\quad
\{\strat_\Opp(x,\Conj) \st \Conj \in \trans_{\At A}(q,F(x)) \}
\]
it follows that $(d,q') \in \strat_\Opp(x,\Conj)$ for some
$\Conj \in \trans_{\At A}(q,F(x))$, and we are done.
\end{subproof}

We now check that $U$ is indeed an infinite play of $\strat_\Opp$,
\ie\@ that 
$\Play(\strat_\Opp,\iota,U)$
holds.
First, we have $\iota \in U$.
Moreover,

\begin{subclm}
\[
(\forall u \in U)\left( \iota \edge{*}{\strat_\Opp} u \right)
\]
\end{subclm}

\begin{subproof}[Proof of Claim \thesubclm]
We reason by induction on
$\edge{}{\strat_\Opp}{}$
(Corollary~\ref{cor:games:edges:ind}).
First, if $u \in \OP U$, then $u$ is of the form $(x,(q,\Conj))$.
By definition of $\OP U$ we have $(x,q) \in \PP U$ with
$(x,q) \edge{}{\strat_\Opp} (x,(q,\Conj))$
and we conclude by induction hypothesis.

Consider now the case of $u \in \PP U = \PP V$.
In this case, $u$ of the form $(x,q)$.
We apply Proposition~\ref{prop:ax:tree}.\eqref{eq:ax:tree:rootorsucc},
stating that either $x \Eq \Root$ or $x = \Succ_d(y)$ for some $d$ and $y$.
In the former case, 
since $V$ is a play, we have
$\iota \edge{*}{\strat_\Prop} (x,q)$,
and
Proposition~\ref{prop:games:edges}.\eqref{eq:games:edges:root}
implies $u = \iota$.
In the latter case, assume $x$ is $\Succ_d(y)$.
We apply twice
the Predecessor Lemma~\ref{lem:games:predplays} for Infinite Plays,
which gives some $(y,q') \in \PP V$ such that
\[
(y,q')
  \edge{+}{\strat_\Prop}
(\Succ_d(y),q)
\]

\noindent
By induction hypothesis we get
$\iota \edge{*}{\strat_\Opp} (y,q')$
and we conclude by Claim~\ref{clm:op:compl:toneg:between}.
\end{subproof}

Also,

\begin{subclm}
\[
(\forall u \in U)(\exists! v \in U)
\left(
u \edge{}{\strat_\Opp} v
\right)
\]
\end{subclm}

\begin{subproof}[Proof of Claim \thesubclm]
The case of $u \in \PP U = \PP V$ follows directly from the definition
of $\OP U$ and the fact that $\funto{\strat_\Opp}{\OP\G}{\Dir \times \Prop}$
and 
$\Play(\strat_\Prop,\iota,V)$.
Consider now the case of $u \in \OP U$.
By definition of $\OP U$ there is some $v \in \PP U$
such that
$u \edge{}{\strat_\Opp} v$.
Uniqueness follows from the fact that $\PP U = \PP V$
and 
$\Play(\strat_\Prop,\iota,V)$.
\end{subproof}

In order to obtain
$\Play(\strat_\Opp,\iota,U)$,
we invoke Proposition~\ref{prop:games:edges:lin}
and it remains to show:

\begin{subclm}
\[
(\forall u \in U)\left[
u \neq \iota
~~\limp~~
(\exists v \in U) \left( v \edge{}{\strat_\Opp} u \right)
\right]
\]
\end{subclm}

\begin{subproof}[Proof of Claim \thesubclm]
The case of $u \in \OP U$ follows from the definition of $\OP U$.
The case of $u \in \PP U$ directly follow from
Claim~\ref{clm:op:compl:toneg:between}
(together with Proposition~\ref{prop:games:edges}.\eqref{eq:games:edges:root})
and
$\Play(\strat_\Prop,\iota,V)$.
\end{subproof}

This concludes the proof of Lemma~\ref{lem:neg:toneg}.
\end{proof}

We use the following simple fact
in order to obtain from Lemma~\ref{lem:neg:toneg}
that $\strat_\Prop$ is winning in $\G(\aneg \At A,F)$
whenever $\strat_\Opp$ is winning in $\G(\At A,F)$.

\begin{lem}
\label{lem:neg:win:toneg}
Given plays $\funto{U,V}{\G}{\two}$
as in Lemma~\ref{lem:neg:toneg},
we have
$\Par(\At A, U) ~\liff~ \lnot \Par(\aneg \At A, V)$.
\end{lem}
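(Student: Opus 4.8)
The plan is to reduce the biconditional to a purely combinatorial fact about colorings of a single infinite play, and to isolate that fact as a key claim proved entirely within $\FSO$. First I would exploit that $U$ and $V$ have the same $\Prop$-positions: since $Q_{\aneg\At A} = Q_{\At A}$, the games $\G(\At A)(\gle)$ and $\G(\aneg\At A)(\gle)$ share the same labels and hence coincide, so $U$ (a play in $\G(\At A,F)$) and $V$ (a play in $\G(\aneg\At A,F)$) are both infinite plays in $\G(\At A)(\gle)$. Applying Remark~\ref{rem:aut:propplays} with $\PP U = \PP V$ then yields $\Par(\At A,U) \liff \Par(\At A,V)$, so it suffices to establish, for the single play $V$,
\[
\Par(\At A,V) \quad\liff\quad \lnot\,\Par(\aneg\At A,V).
\]
These two conditions concern the same play but differ only in coloring: by Definition~\ref{def:neg}, $\col_{\aneg\At A}(q) = \col_{\At A}(q)+1$, so $\hat\col_{\aneg\At A}(v) = \hat\col_{\At A}(v)+1$ at every position $v$, and $n_{\aneg\At A} = n_{\At A}+1$.

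The heart of the argument is the following Key Claim, which I would prove first for an arbitrary infinite play $V$ and coloring $\hat\col$ into $[0,n]$: there is a unique color $m^* \In [0,n]$ that both occurs cofinally and is eventually a lower bound, i.e.
\[
(\forall u \In V)(\exists v \In V)(u \glt v \land \hat\col(v) \Eq m^*) \quad\land\quad (\exists u \In V)(\forall v \In V)(u \glt v \limp \hat\col(v) \geq m^*).
\]
By Comprehension for $\HF$-Sets (Remark~\ref{rem:hfchoice}) I form the $\HF$-set $C \Sle [0,n]$ of colors occurring cofinally in $V$. The crucial point is that $C$ is non-empty; I would prove this by $\HF$-induction on subsets $S \Sle [0,n]$, showing that if every color in $S$ eventually disappears then some position of $V$ is followed only by positions whose color avoids $S$. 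The inductive step combines the witness from the hypothesis ``$c$ eventually disappears'' with the witness from the induction hypothesis by taking their $\gle$-maximum, which is legitimate because $V$ is linearly ordered; this is exactly what sidesteps any choice over tree positions. Taking $S = [0,n]$ then contradicts the unboundedness of $V$, since every later position still carries some color in $[0,n]$, so $C \neq \emptyset$. Letting $m^*$ be the least element of $C$ (finite $\HF$-sets are well-ordered, Remark~\ref{rem:ax:hf:well-order-hf}) gives cofinality, and applying the same disappearance argument to $S = [0,m^*)$ shows $m^*$ is eventually a lower bound. Uniqueness is immediate from linearity, as any two colors that are both cofinal and eventual lower bounds bound each other.

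Granting the Key Claim, I would finish as follows. Unfolding Definition~\ref{def:games:parity}, any even witness $m$ for $\Par(\At A,V)$ is cofinal and an eventual lower bound, hence $m = m^*$ by uniqueness; conversely $m^*$ itself witnesses $\Par(\At A,V)$ once $m^* \In \even(n_{\At A})$. Thus $\Par(\At A,V) \liff m^* \In \even(n_{\At A})$. Since $\hat\col_{\aneg\At A} = \hat\col_{\At A}+1$ everywhere, the minimal color occurring cofinally for $\aneg\At A$ is $m^*+1$, so the Key Claim applied to $\hat\col_{\aneg\At A}$ gives $\Par(\aneg\At A,V) \liff m^*+1 \In \even(n_{\aneg\At A})$. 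Because exactly one of $m^*$ and $m^*+1$ is even — a bounded $\HF$-fact provable in $\FSO$ by Remark~\ref{rem:ax:hf:vomega} — the two conditions negate one another, yielding $\Par(\At A,V) \liff \lnot\,\Par(\aneg\At A,V)$ and completing the proof. The step I expect to be the main obstacle is the $\FSO$ proof that $C$ is non-empty: the naive ``infinite pigeonhole'' argument would pick, for each vanishing color, a position after which it never recurs, but assembling such positions is a choice over tree positions and so unavailable; the induction-over-colors-with-pairwise-maxima device is precisely what makes the pigeonhole argument formalizable here.
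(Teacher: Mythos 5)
Your proof is correct, but there is essentially nothing in the paper to compare it against: Lemma~\ref{lem:neg:win:toneg} is introduced there only as a ``simple fact'' and is given no proof at all. Your argument supplies exactly the content the paper leaves implicit, and it is formalizable in $\FSO$ as written. The opening reduction is legitimate: since $Q_{\aneg\At A} = Q_{\At A}$, the games $\G(\At A)(\gle)$ and $\G(\aneg\At A)(\gle)$ depend only on the shared labels and coincide, both $U$ and $V$ are plays of this common game (acceptance games being subgames of it), so Remark~\ref{rem:aut:propplays} reduces everything to $\Par(\At A,V) \liff \lnot\Par(\aneg\At A,V)$ for the single play $V$. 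The real mathematical content is your Key Claim --- existence and uniqueness of the minimal cofinally occurring color $m^*$ --- and your proof of its delicate half is the right one for this setting: non-emptiness of the set $C$ of cofinal colors cannot be had from the naive pigeonhole argument, which would select for each vanishing color a position after which it never recurs (a choice over tree positions, unavailable here); your finite induction over $\HF$-subsets $S \Sle [0,n]$, combining at each step the two existential witnesses by taking their $\gle$-maximum (legitimate because $V$ is linearly ordered), is exactly what the Induction Scheme for $\HF$-Sets of Remark~\ref{rem:hfchoice} supports. The conclusion then follows as you say: $\hat\col_{\aneg\At A}$ is the pointwise successor of $\hat\col_{\At A}$ (Definition~\ref{def:neg}, under either reading of the color of $\Opp$-positions), so the unique cofinal eventual-lower-bound color of $V$ for $\aneg\At A$ is $m^*+1$, and exactly one of $m^*$, $m^*+1$ is even --- a closed $\HF$-fact available by Remark~\ref{rem:ax:hf:vomega}. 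The only unstated detail you rely on, that $n_{\aneg\At A} = n_{\At A}+1$, is likewise left implicit by the paper's Definition~\ref{def:neg} and is the intended reading.
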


We now have everything we need to obtain Proposition~\ref{prop:neg:toneg},
namely
\[
\FSO + (\PosDet) \thesis\quad
F \notin \Lang(\At A)
~~\longlimp~~
F \in \Lang(\aneg \At A)
\]

\noindent
Assume $F \notin \Lang(\At A)$.
By Definition~\ref{def:aut:lang}, 
there is no winning $\Prop$-strategy in $\G(\At A,F)$.
By the axiom of positional determinacy of parity games $(\PosDet)$
there is a winning $\Opp$-strategy $\strat_\Opp$ in $\G(\At A,F)$,
so that
\begin{equation}
\label{eq:op:compl:toneg:win}
(\forall \funto{U}{\G}{\two})
\Big(
\Play(\strat_\Opp,\iota,U)
~~\limp~~
\lnot \Par(\At A,U)
\Big)
\end{equation}

\noindent
Consider now the $\Prop$-strategy $\strat_\Prop$ on $\G(\aneg \At A,F)$
as defined above.
We claim that $\strat_\Prop$ is winning, that is

\begin{clm}
\[
(\forall \funto{V}{\G}{\two})
\Big(
\Play(\strat_\Prop,\iota,V)
~~\limp~~
\Par(\aneg \At A,V)
\Big)
\]
\end{clm}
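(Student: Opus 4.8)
The plan is to derive the claim almost immediately by chaining the two lemmas just established with the fact that $\strat_\Opp$ is winning. First I would fix an arbitrary set of game positions $V$ satisfying $\Play(\strat_\Prop,\iota,V)$, the goal being to conclude $\Par(\aneg \At A,V)$; since $V$ is arbitrary, establishing this pointwise yields the universally quantified claim.

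The key move is to transport the play from $\G(\aneg \At A,F)$ back into $\G(\At A,F)$. Applying Lemma~\ref{lem:neg:toneg} to $V$ furnishes an infinite play $U$ of $\strat_\Opp$ in $\G(\At A,F)$, i.e. $\Play(\strat_\Opp,\iota,U)$, with $\PP V = \PP U$. Because $\strat_\Opp$ was extracted as a \emph{winning} $\Opp$-strategy via $(\PosDet)$, property~(\ref{eq:op:compl:toneg:win}) applies to this particular $U$ and delivers $\lnot \Par(\At A,U)$. Finally, $U$ and $V$ are exactly a pair of plays related as in Lemma~\ref{lem:neg:toneg}, so Lemma~\ref{lem:neg:win:toneg} gives the parity-transfer equivalence $\Par(\At A,U) \liff \lnot \Par(\aneg \At A,V)$. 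Combining this biconditional with $\lnot \Par(\At A,U)$ (reading the equivalence contrapositively) produces $\Par(\aneg \At A,V)$, as required.

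The hard part has in fact already been discharged: all the genuinely delicate reasoning lives in Lemma~\ref{lem:neg:toneg} (reconstructing an $\strat_\Opp$-play whose $\Prop$-restriction matches $\PP V$, using the Predecessor Lemma and the fact that $\Opp$-positions have unique predecessors) and in the parity-transfer Lemma~\ref{lem:neg:win:toneg}. The claim itself is then pure propositional plumbing over these two facts and the determinacy-provided winning strategy. The only subtlety I would take care over is making sure the application of Lemma~\ref{lem:neg:win:toneg} is licensed, namely that the play $U$ returned by Lemma~\ref{lem:neg:toneg} is precisely the one satisfying $\PP U = \PP V$, so that the hypothesis ``$U,V$ as in Lemma~\ref{lem:neg:toneg}'' is literally met. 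With this observation in place the argument is complete, and together with Proposition~\ref{prop:neg:fromneg} it establishes Theorem~\ref{thm:neg}.
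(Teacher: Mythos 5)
Your proof is correct and follows exactly the paper's own argument: transport the play $V$ to a play $U$ of $\strat_\Opp$ via Lemma~\ref{lem:neg:toneg}, invoke the winning property~(\ref{eq:op:compl:toneg:win}) to get $\lnot\Par(\At A,U)$, and conclude $\Par(\aneg\At A,V)$ by the parity-transfer Lemma~\ref{lem:neg:win:toneg}. Nothing to add.
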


\begin{proof}[Proof of Claim \theclm]
Given an infinite play $V$ of $\strat_\Prop$,
by Lemma~\ref{lem:neg:toneg}
we can build an infinite play $U$ of $\strat_\Opp$,
which by~\eqref{eq:op:compl:toneg:win}
satisfies $\lnot \Par(\At A,-)$,
so that $V$ satisfies $\Par(\aneg \At A,-)$
thanks to Lemma~\ref{lem:neg:win:toneg}.
\end{proof}

We thus have $F \in \Lang(\aneg \At A,F)$.
This concludes the proof of Proposition~\ref{prop:neg:toneg}.

\subsubsection{Proof of Proposition~\ref{prop:neg:fromneg}.}
\label{sec:neg:fromneg}
We are now going to show that
$\FSO$ proves
\[
F \in \Lang(\aneg \At A) ~~\longlimp~~
F \notin \Lang(\At A)
\]

\noindent
We associate a (winning) $\Opp$-strategy 
$\strat_\Opp$ on $\G(\At A,F)$
to each (winning) $\Prop$-strategy $\strat_\Prop$ on $\G(\aneg \At A,F)$.
%
Assuming that the $\Prop$-strategy
satisfies
$\Strat_\Prop(\G(\aneg \At A,F),\strat_\Prop)$,
the $\Opp$-strategy
will satisfy
$\Strat_\Opp(\G(\At A,F),\strat_\Opp)$.
Note that
\[
\funto{\strat_\Prop}{\PP\G}{\Opp}
\qquad\text{and}\qquad
\funto{\strat_\Opp}{\OP\G}{\Dir \times \Prop}
\]

\noindent
We define $\strat_\Opp(x,(q,\Conj))$
for each position
\[
(x,(q,\Conj)) \in \univ \times (Q_{\At A} \times \Pne(\Dir \times Q_{\At A}))
\]
By definition of $\trans_{\aneg \At A}(q,F(p))$,
we have $\strat_\Prop(p,q) = (q,\dual\Conj)$
where $\dual\Conj$ intersects all $\Conj \in \trans_{\At A}(q,F(p))$.
So if $\Conj \in \trans_{\At A}(q,F(p))$,
by $\HF$-Bounded Choice for Product Types (Theorem~\ref{thm:funto:choice})
we let $\strat_\Opp(p,(q,\Conj))$ be some
$(d,q')$ such that $(d,q') \in \Conj \cap \dual\Conj$.
Otherwise, since $\Conj \neq \emptyset$, 
we let $\strat_\Opp(p,(q,\Conj))$ be some
$(d,q')$ such that $(d,q') \in \Conj$.

We also trivially have that $\Strat_\Opp(\G(\At A,F),\strat_\Opp)$.

\begin{lem}
\label{lem:neg:fromneg}
Consider a $\Prop$-strategy $\strat_\Prop$ and
an $\Opp$-strategy $\strat_\Opp$ as in above.
For every infinite play $V$
of $\strat_\Opp$ on $\G(\At A,F)$
there is some infinite play $U$ of $\strat_\Prop$ on $\G(\aneg \At A,F)$
with $\PP V = \PP U$.
\end{lem}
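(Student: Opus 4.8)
The plan is to construct $U$ directly by Comprehension for Product Types (Theorem~\ref{thm:funto:ca}) and then verify the four conjuncts of $\Play(\strat_\Prop,\iota,U)$, following the skeleton of the proof of Lemma~\ref{lem:neg:toneg} but in the reverse direction between the two games. First I would set $\PP U \deq \PP V$, and declare $(x,(q,\Conj)) \in \OP U$ exactly when $(x,q) \in \PP V$ and $(q,\Conj) = \strat_\Prop(x,q)$. This definition is notably simpler than the one in Lemma~\ref{lem:neg:toneg}: there a well-order was needed to select canonically among several conjunctions of $\At A$ inducing the same $\Opp$-move, whereas here the conjunction $\dual\Conj$ attached to a $\Prop$-position $(x,q)$ is uniquely prescribed as the second component of $\strat_\Prop(x,q)$, so no choice is involved.

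The heart of the argument is the counterpart of Claim~\ref{clm:op:compl:toneg:between}: if $(x,q)$ and $(\Succ_d(x),q')$ both lie in $\PP U = \PP V$ and are consecutive $\Prop$-positions of $V$, then there is a unique $u \in \OP U$ with $(x,q) \edge{}{\strat_\Prop} u \edge{}{\strat_\Prop} (\Succ_d(x),q')$. Uniqueness is immediate, since the $\Prop$-move out of $(x,q)$ is forced to be $\strat_\Prop(x,q) = (q,\dual\Conj)$, whence $u = (x,(q,\dual\Conj))$; the only thing to check is the edge $u \edge{}{} (\Succ_d(x),q')$, \ie\@ that $(d,q') \In \dual\Conj$. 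This is where the De Morgan dual definition of $\trans_{\aneg \At A}$, via $\strat_\Opp$, pays off: in $V$ the step from $(x,q)$ to $(\Succ_d(x),q')$ passes through some $\Opp$-position $(x,(q,\Conj))$ that is a legal $\Prop$-move in $\G(\At A,F)$, so $\Conj \In \trans_{\At A}(q,F(x))$, and since $V$ is a play of $\strat_\Opp$ we have $(d,q') = \strat_\Opp(x,(q,\Conj))$. By construction of $\strat_\Opp$, when $\Conj \In \trans_{\At A}(q,F(x))$ the move $\strat_\Opp(x,(q,\Conj))$ is taken inside $\Conj \cap \dual\Conj$, so indeed $(d,q') \In \dual\Conj$.

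With this claim established, the four conditions of $\Play(\strat_\Prop,\iota,U)$ follow exactly as in Lemma~\ref{lem:neg:toneg}. That $\iota \in U$ holds because $\iota = (\Root,\init q_{\At A}) \in \PP V$; reachability $(\forall u \in U)(\iota \edge{*}{\strat_\Prop} u)$ and the existence of successors $(\forall u \in U)(\exists w \in U)(u \edge{}{\strat_\Prop} w)$ are proved by induction on the edge relation (Corollary~\ref{cor:games:edges:ind}), using Proposition~\ref{prop:ax:tree}.(\ref{prop:ax:tree:rootorsucc}) and two applications of the Predecessor Lemma~\ref{lem:games:predplays} for Infinite Plays to locate predecessor $\Prop$-positions in $\PP V$, then appealing to the claim; linearity is obtained from Proposition~\ref{prop:games:edges:lin}; and $\PP V = \PP U$ holds by definition. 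I expect the only genuinely new content to be the one-line verification that $(d,q') \In \dual\Conj$, the remaining bookkeeping being entirely parallel to Lemma~\ref{lem:neg:toneg}; the main care required is to keep track of which of the two games $\G(\At A,F)$ and $\G(\aneg \At A,F)$ each edge belongs to.
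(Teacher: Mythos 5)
Your proposal is correct and follows essentially the same route as the paper's proof: the same definition of $U$ (with $\OP U$ determined by $\strat_\Prop$ on $\PP V$), the same key claim about the unique intermediate $\Opp$-position between consecutive $\Prop$-positions, with existence reduced to $(d,q') \In \dual\Conj$ via the fact that $\strat_\Opp$ was defined to pick moves in $\Conj \cap \dual\Conj$ on legal positions, and the same appeal to Comprehension, game induction, the Predecessor Lemma and Proposition~\ref{prop:games:edges:lin} for the remaining bookkeeping. Your observation that no well-order is needed here (unlike in Lemma~\ref{lem:neg:toneg}) is also exactly the simplification present in the paper's construction.
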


\begin{proof}
We define $U$ by Comprehension for Product Types (Theorem~\ref{thm:funto:ca})
as follows.
\begin{itemize}
\item \emph{Definition of $U$.}
For $(x,k) \in \PP\G$, if $(x,k) \in \PP V$ then we let $(x,k) \in \PP U$,
and for $(x,(q,\dual\Conj)) \in \OP\G$, we let $(x,(q,\dual\Conj)) \in \OP U$
iff $(q,\dual\Conj) = \strat_\Prop(x,q)$
for $(x,q) \in \PP U$.
\end{itemize}

\noindent
Similarly as in Lemma~\ref{lem:neg:toneg},
we have

\begin{subclm}
\label{clm:op:compl:fromneg:between}
\[
(x,q),(\Succ_d(x),q') \in \PP U
~~\limp~~
(\exists ! u \in \OP U)
\left(
(x,q) \edge{}{\strat_\Prop} u
\edge{}{\strat_\Prop}
(\Succ_d(x),q')
\right)
\]
\end{subclm}

\begin{subproof}[Proof of Claim \thesubclm]
Uniqueness directly follows from the fact that $u = (x,\strat_\Prop(x,q))$.
As for existence, we directly have
$(x,q) \edge{}{\strat_\Prop} u$,
so it remains to show
$u \edge{}{\strat_\Prop} (\Succ_d(x),q')$,
which amounts to $(d,q') \in \dual\Conj$ for $(q,\dual\Conj) = \strat_\Prop(x,q)$.
But $(\Succ_d(x),q') \in \PP V$
with
$\Play(\strat_\Opp,\iota,V)$
imply that $(d,q')= \strat_\Opp(x,(\ell,\Conj))$
for some $\ell$ such that $(x,\ell) \in \PP V$
and some $\Conj \in \trans_{\At A}(\ell,F(x))$.
Moreover, 
$\Play(\strat_\Opp,\iota,V)$
implies $\ell = q$.
By definition of $\strat_\Opp$,
we thus have $(d,q') \in \Conj \cap \dual\Conj$ and we are done.
\end{subproof}

We now check that
$\Play(\strat_\Prop,\iota,U)$.
Note that $\iota \in U$.
Moreover, proceeding as in 
Lemma~\ref{lem:neg:toneg},
we have 

\begin{subclm}
\[
(\forall u \in U)
\left(
\iota \edge{*}{\strat_\Prop} u
\right)
\]
\end{subclm}

\begin{subproof}[Proof of Claim \thesubclm]
By induction on $\edge{}{\strat_\Prop}$
(Corollary~\ref{cor:games:edges:ind}).
The case of $u \in \OP U$ follows directly form the induction hypothesis
and the definition of $\OP U$.
As for $u \in \PP U$, we proceed as 
in Lemma~\ref{lem:neg:toneg},
using Claim~\ref{clm:op:compl:fromneg:between} and
Lemma~\ref{lem:games:predplays}.
\end{subproof}

Continuing as in Lemma~\ref{lem:neg:toneg},
we now invoke Proposition~\ref{prop:games:edges:lin}
and we are left with showing

\begin{subclm}
\[
(\forall u \in U)(\exists! v \in U)
\left(
u \edge{}{\strat_\Prop} v
\right)
~~\land~~
(\forall u \in U)\left[
u \neq \iota
~~\limp~~
(\exists v \in U) \left( v \edge{}{\strat_\Prop} u \right)
\right]
\]
\end{subclm}

\begin{subproof}[Proof of Claim \thesubclm]
The cases of $u\in \PP U$ follow from the definition of $\OP U$,
and from Claim~\ref{clm:op:compl:toneg:between}
(together with Proposition~\ref{prop:games:edges}.\eqref{eq:games:edges:root})
and 
$\Play(\strat_\Opp,\iota,V)$.
%
Consider now $u \in \OP U$.
The predecessor property follows from the definition of $\OP U$.
The unique successor property
is obtained from Claim~\ref{clm:op:compl:fromneg:between}
together with 
$\Play(\strat_\Opp,\iota,V)$.
\end{subproof}

This concludes the proof of Lemma~\ref{lem:neg:fromneg}.
\end{proof}

Similarly as in~\S\ref{sec:neg:toneg},
we use the following simple fact.

\begin{lem}
\label{lem:neg:win:fromneg}
Given plays $\funto{U,V}{\G}{\two}$
as in Lemma~\ref{lem:neg:fromneg},
we have
$\Par(\aneg \At A, U) ~\liff~ \lnot \Par(\At A, V)$
\end{lem}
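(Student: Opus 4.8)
The plan is to split the equivalence into two independent facts and chain them. First I would observe that, since $\aneg\At A$ and $\At A$ share the same states, the games $\G(\aneg\At A)(\gle)$ and $\G(\At A)(\gle)$ coincide (same positions, same order $\gle$), so both $U$ and $V$ are infinite plays of the \emph{same} game $\G(\At A)(\gle)$. As $\PP U \Eq \PP V$ by the construction of $U,V$ in Lemma~\ref{lem:neg:fromneg}, Remark~\ref{rem:aut:propplays} yields $\Par(\At A, U) \liff \Par(\At A, V)$. It therefore suffices to establish the single-play \emph{parity shift}
\[
\Par(\aneg\At A, U) \quad\liff\quad \lnot \Par(\At A, U),
\]
and then compose the two equivalences to obtain the claim.

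For the parity shift I would use that, by Definition~\ref{def:neg}, the colour of a position depends only on its state and $\col_{\aneg\At A}(q) \Eq \col_{\At A}(q) + 1$, hence $\hat\col_{\aneg\At A}(w) \Eq \hat\col_{\At A}(w)+1$ at every position $w$, with the colour bound raised by one. Writing $\col$ for $\hat\col_{\At A}$, the formula $\Par$ for $\aneg\At A$ reads exactly like the one for $\At A$ after replacing each colour by its successor: the clause $\col(v)\Eq m$ becomes $\col(v)\Eq m-1$ and $\col(v)\geq m$ becomes $\col(v)\geq m-1$. The core observation is that a witness $m$ in the definition of $\Par$ is forced to be the minimal colour occurring infinitely often along $U$, and this minimal colour increases by exactly one under the shift, so its parity flips.

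Concretely I would first prove, in $\FSO$, the existence of a minimal recurrent colour: for an infinite play $U$ there is $c \In [0, n_{\At A}]$ with $(\forall u \in U)(\exists v \in U)(u \edge{+}{} v \land \col(v) \Eq c)$ and $(\exists u \in U)(\forall v \in U)(u \edge{+}{} v \limp \col(v) \geq c)$; such $c$ is unique (two of them each dominate the other along the tail of $U$), and $\Par(\At A, U)$ holds iff $c$ is even. Granting this, the minimal recurrent colour of $\aneg\At A$ is $c+1$, which is even iff $c$ is odd, yielding the parity shift. I would record this shift as a standalone statement, since the twin Lemma~\ref{lem:neg:win:toneg} requires exactly the same argument with the roles of $\At A$ and $\aneg\At A$ exchanged.

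I expect the existence of a minimal recurrent colour to be the main obstacle, as it is where finiteness and the positional limitations of $\FSO$ interact. Since colours range over the fixed finite $\HF$-set $[0, n_{\At A}]$, I would form by $\HF$-comprehension (Remark~\ref{rem:hfchoice}) the set of recurrent colours and take $c$ to be its least element for the well-order on $[0, n_{\At A}]$. Both non-emptiness of that set and the eventual lower bound $\col \geq c$ reduce to a single sublemma: finitely many tail-avoiding thresholds, one per non-recurrent colour, admit a common upper bound in the linearly ordered play $U$. This bound is a finite maximum over an $\HF$-set of colours, obtained by $\HF$-induction (Remark~\ref{rem:hfchoice}.(\ref{rem:hfchoice:ind})) together with the linearity of $U$; non-emptiness then follows by contradicting the unboundedness of the play, and past the bound no colour below $c$ can occur.
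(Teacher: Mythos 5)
Your argument is correct, and it is worth noting that the paper itself supplies no proof of this statement: both Lemma~\ref{lem:neg:win:toneg} and Lemma~\ref{lem:neg:win:fromneg} are introduced as ``simple facts'', the intended justification being exactly what you formalize --- since $\col_{\aneg\At A}$ is the pointwise successor of $\col_{\At A}$ (Definition~\ref{def:neg}), the least colour occurring cofinally along a play shifts by one and its parity flips, while Remark~\ref{rem:aut:propplays} transfers parity conditions between $U$ and $V$ because $\PP U \Eq \PP V$ and both are plays in the common game $\G(\At A)(\gle)$. Your decomposition (transfer along shared $\Prop$-positions, then a single-play parity shift) is the natural way to organize this, and as you note it yields the twin Lemma~\ref{lem:neg:win:toneg} by exchanging the roles of $\At A$ and $\aneg\At A$. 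Your reduction of both directions of the equivalence to the existence and uniqueness of the minimal recurrent colour is also the right move: without existence, $\lnot\Par(\At A,V)$ would not produce an even witness for $\aneg\At A$.

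One point should be made explicit in a full write-up, since it is where the $\FSO$ setting genuinely bites. Because $\FSO$ has no choice principle for Individuals, you cannot first pick, for each non-recurrent colour $d$, a threshold position $u_d \in U$ and then take the maximum of this family; the family itself is not definable by any choice axiom of~\S\ref{sec:ax:choice}. Instead the common bound must be produced \emph{inside} the induction: prove by $\HF$-induction (Remark~\ref{rem:hfchoice}.(\ref{rem:hfchoice:ind}), applied to a well-order on the relevant $\HF$-set of colours) the existential statement that for every set $\ell$ of non-recurrent colours there is some $u \in U$ past which no colour in $\ell$ occurs, combining at each step the inductive witness with a witness of non-recurrence via the linearity clause of $\Play$. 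Your phrasing (``finitely many thresholds, one per non-recurrent colour'') suggests the simultaneous choice, while your appeal to $\HF$-induction suggests the correct quantifier order; only the latter is available. With that reading, non-emptiness of the set of recurrent colours follows, as you say, from the clause $(\forall v \in U)(\exists w \in U)(v \edge{}{} w)$ of $\Play$, and the whole argument goes through in $\FSO$.
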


It is now easy to obtain Proposition~\ref{prop:neg:fromneg},
namely
\[
\FSO \thesis\quad
F \in \Lang(\aneg \At A) ~~\longlimp~~
F \notin \Lang(\At A)
\]

\noindent
Assume that $F \in \Lang(\aneg \At A)$.
By Definition~\ref{def:aut:lang}, we thus have a winning 
$\Prop$-strategy $\strat_\Prop$ in $\G(\aneg \At A,F)$,
so that
\[
\label{eq:op:compl:fromneg:win}
(\forall \funto{U}{\G}{\two})
\Big(
\Play(\strat_\Prop,\iota,U)
~~\limp~~
\Par(\aneg \At A,U)
\Big)
\]

\noindent
Consider now the $\Opp$-strategy $\strat_\Opp$ on $\G(\At A,F)$
as defined above.
Reasoning as in the case $F \notin \Lang(\At A)$
(\S\ref{sec:neg:toneg}),
Lemmas~\ref{lem:neg:fromneg}
and~\ref{lem:neg:win:fromneg}
imply
\[
(\forall \funto{V}{\G}{\two})
\Big(
\Play(\strat_\Opp,\iota,V)
~~\limp~~
\lnot \Par(\At A,V)
\Big)
\]

\noindent
It then follows from Lemma~\ref{lem:games:notbothwin}
that there is no winning $\Prop$-strategy on $\G(\At A,F)$,
so that $F \notin \Lang(\At A)$.

This concludes the proof of Proposition~\ref{prop:neg:fromneg}.


\section{$\MSO$ on Infinite Words in Paths of $\FSOD$}
\label{sec:msow}

\noindent
We discuss here the theory of $\MSO$ over $\omega$-words
for the \emph{infinite paths} of $\FSOD$.
Since $\MSO$ on $\omega$-words admits a complete axiomatization~\cite{siefkes70lnm},
this will allow us to
freely import results on $\MSO$ over $\omega$-words
for the paths of $\FSOD$.
In particular, our completeness argument (\S\ref{sec:compl})
relies on a version of the Büchi-Landweber's Theorem~\cite{bl69tams}
formulated with $\MSO$ over $\omega$-words,
that we lift for free to $\FSOD$.
Also, to prove
the Simulation Theorem~\ref{thm:aut:nd:sim} in~\S\ref{sec:sim},
we use McNaughton's Theorem~\cite{mcnaughton66ic},
and similarly obtain it for free in $\FSOD$.

An obvious way to obtain $\MSO$ over $\omega$-words is to consider
the system $\MSO_\one$ (that is $\MSOD$ for $\Dir = \one$).
However, recall that we want to see each path 
of $\FSOD$ (in the sense of~\eqref{eq:bfsos:rpath} below)
as a model of $\MSO$ on $\omega$-words.
This is technically simpler if, following~\cite{riba12ifip},
one uses a version of $\MSO$ on $\omega$-words
over a purely relational vocabulary with only 
the strict order $\Lt$ on numbers as atomic relation
(besides equality $\Eq$).

\begin{defi}[The Theory $\FSOW$]
\label{def:bfsos}
The language of $\FSOW$ is the language of $\FSOD$
with the following restriction:
\begin{itemize}
\item the only \emph{Individual terms} of $\FSOW$
are the constant $\Root$ and
the individual variables ($x,y,z$ etc.)
\end{itemize}

\noindent
The deduction rules of $\FSOW$ are the same as the rules of
$\FSOD$.
The axioms of $\FSOW$ are
the Equality Axioms of~\S\ref{sec:ax:eq},
the Axioms on $\HF$-Sets of~\S\ref{sec:ax:hf},
the Functional Choice Axioms of~\S\ref{sec:ax:choice},
together with
the axioms displayed on Figure~\ref{fig:lt},
stating that $\Lt$ is a discrete unbounded strict linear order with $\Root$ as its minimal element
(see \eg~\cite{riba12ifip}),
and with 
the following induction scheme.
\begin{itemize}
\item 
\emph{Well-Founded Induction.}
For each formula $\varphi$, the axiom
\[
(\forall x)\big[
(\forall y \Lt x)(\varphi(y))
~~\longlimp~~ \varphi(x)
\big]
~~\longlimp~~
(\forall x) \varphi(x)
\]
\end{itemize}
\end{defi}

\begin{rem}
Note that all Individuals of $\FSOW$
are Individuals of $\FSOD$, but not conversely.
As a consequence, all $\HF$-terms of $\FSOW$
are $\HF$-terms of $\FSOD$, but not conversely.
Also, note that it may have seemed more natural not to include the individual
constant $\Root$ in the language of $\FSOW$.
We have included it because this eases our concrete uses of $\FSOW$
in~\S\ref{sec:compl:red} and~\S\ref{sec:sim:omega}.
\end{rem}

\begin{figure}[tbp]
\[
\begin{array}{c}
(\forall x)\left(\Root \Leq x\right)
\qquad
\lnot (\exists x) \left(x \Lt x \right)
\qquad
(\forall x) (\forall y) (\forall z) 
\left( x \Lt y ~~\limp~~ y \Lt z ~~\limp~~ x \Lt z \right)
\\\\
(\forall x) (\exists y) \left( x \Lt y \right)
\qquad
(\forall x) (\forall y) \left[
x \Lt y ~~\lor~~ x \Eq y ~~\lor~~ y \Lt x
\right]
\\\\
(\forall x) \Big[(\exists y \Lt x)
  ~~\limp~~
  (\exists y \Lt x) \lnot (\exists z) \big( y \Lt z \Lt x \big)
\Big]
\\[1em]
\end{array}
\]
\caption{Axioms on the relation $\Lt$ of $\FSOW$.\label{fig:lt}}
\end{figure}

\noindent
Similarly to the case of $\Dir$-ary trees (\S\ref{sec:mso}),
the theory $\FSOW$ is intended to be interpreted
in a theory $\MSOW$.
Intuitively, $\MSOW$ is to $\FSOW$ what $\MSOD$ is to $\FSOD$.

\begin{defi}[The Theory $\MSOW$]
\label{def:msow}
The language of $\MSOW$ is the language of $\MSOD$
with the following restriction:
\begin{itemize}
\item the only \emph{Individual terms} of $\MSOW$
are individual variables ($x,y,z$ etc.).
\end{itemize}

\noindent
The axioms of $\MSOW$ are the equality axioms
and the comprehension scheme of $\MSOD$ (\S\ref{sec:mso:th}),
together with the induction scheme
and the axioms on $\Lt$ of $\FSOW$ 
displayed in Figure~\ref{fig:lt}.
\end{defi}

\noindent
We write $\StdN$ both for the standard model of $\FSOW$
and for the standard model of $\MSOW$.
In the case of $\MSOW$, 
formulae are interpreted in $\StdN$ as expected:
individual variables range over $\NN$, monadic predicate variables
range over $\Po(\NN)$ and $\Lt$ is the standard order $<$ on $\NN$.
The interpretation of $\FSOW$-formulae in $\StdN$ is similar,
with the obvious changes \wrt~\S\ref{sec:std} for the interpretation
of terms, and where Functions range over
\[
\bigcup_{\kappa \in V_\omega} \left(\NN ~~\longto~~ \kappa \right)
\]

\noindent
The key property of $\MSOW$ we rely on is that it completely axiomatizes the
theory of the standard model $\StdN$
of $\omega$-words~\cite{siefkes70lnm}
(see also~\cite{riba12ifip}).

\begin{thm}[\cite{siefkes70lnm}]
\label{thm:msow:compl}
For every closed $\MSOW$-formula $\varphi$, 
\[
\StdN \models \varphi
\qquad\text{if and only if}\qquad
\MSOW \thesis \varphi
\]
\end{thm}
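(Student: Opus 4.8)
The statement has two directions. The soundness direction, $\MSOW \thesis \varphi$ implies $\StdN \models \varphi$, is routine: one checks that each axiom of $\MSOW$ (the order axioms of Figure~\ref{fig:lt}, Comprehension, the equality axioms, and Well-Founded Induction) holds in the standard $\omega$-word structure $\StdN = (\NN,<)$, and that the deduction rules preserve truth. The real content is the completeness direction, $\StdN \models \varphi$ implies $\MSOW \thesis \varphi$. Since $\MSOW$ is sound, completeness is equivalent to the assertion that $\MSOW$ \emph{decides} every closed formula, i.e.\ that for each closed $\varphi$ either $\MSOW \thesis \varphi$ or $\MSOW \thesis \lnot \varphi$; the true disjunct is then forced by soundness. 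The plan is to obtain this by internalizing the classical translation of formulae to $\omega$-automata inside the theory, following the automata-theoretic arguments of B\"uchi and Siefkes (and their reworking in~\cite{riba12ifip}).

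The core is an \emph{automaton normal form} proved in $\MSOW$: to each formula $\varphi(\vec X)$ with free monadic variables $\vec X = X_1,\dots,X_p$ I would associate a \emph{deterministic} parity (equivalently Muller) automaton $\mathcal{A}_\varphi$ over the alphabet $\Po(\{1,\dots,p\})$, and prove by induction on $\varphi$ that $\MSOW$ proves $\varphi(\vec X) \liff$ ``$\mathcal{A}_\varphi$ accepts the $\omega$-word coded by $\vec X$''. Atomic formulae $x \Lt y$, $x \Eq y$ and $X_i(x)$ are handled directly, and the invariant is maintained under the Boolean and quantifier connectives. The two nontrivial closure steps are negation, realized by dualizing the (deterministic) acceptance condition of $\mathcal{A}_\varphi$, and existential quantification $\exists X_i$, realized by the usual projection, which produces a \emph{nondeterministic} automaton and therefore forces a redeterminization to restore the invariant. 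Each construction is a concrete finite object, and the equivalence it induces must be \emph{provable} in $\MSOW$, not merely true in $\StdN$; this is where Comprehension (to name runs) and Well-Founded Induction (to reason along $\Lt$ about infinite runs) are used.

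The main obstacle is exactly the redeterminization step: formalizing McNaughton's determinization theorem inside $\MSOW$ and proving the determinized automaton language-equivalent to the original. Concretely, one fixes a determinization construction (a latest-appearance-record or Safra-style automaton), writes its transition structure as a closed object, and proves in $\MSOW$ that an accepting run of the nondeterministic automaton exists on a given word if and only if the unique run of the deterministic automaton is accepting. An alternative, closer to Siefkes' original route, is to bypass determinization and complement nondeterministic B\"uchi automata directly via the B\"uchi--Ramsey congruence argument, which then requires formalizing an infinite Ramsey-type principle for the $\Lt$-order within the theory. Either way the delicate point is that the theory has just enough induction and choice to carry the correctness proof of the construction; this is the technical heart of the completeness theorem.

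Finally, the base case of the decision claim is clean. For a \emph{closed} formula $\varphi$ the automaton $\mathcal{A}_\varphi$ runs over a one-letter alphabet, so by determinism it has a single run, and $\MSOW$ proves (by finite inspection of the automaton, using that the run is eventually periodic on the finite state set) that this run either satisfies or fails its parity condition; hence $\MSOW \thesis \varphi$ or $\MSOW \thesis \lnot\varphi$. Combined with soundness this yields, for every closed $\varphi$, that $\MSOW \thesis \varphi$ precisely when $\StdN \models \varphi$, as required.
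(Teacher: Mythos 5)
The paper does not prove this statement at all: Theorem~\ref{thm:msow:compl} is imported wholesale by citation to Siefkes~\cite{siefkes70lnm} (with~\cite{riba12ifip} given as an alternative source), and the completeness of $\MSOW$ is then used as a black box, via Proposition~\ref{prop:msow:bfsos} and Proposition~\ref{prop:bfsos:bfso}, to transfer true $\omega$-word statements into $\FSOD$. So there is no internal proof to compare your attempt against; what you have written is a reconstruction of the argument of the cited literature. As such, your outline has the right shape and follows Siefkes' own route: soundness is routine, completeness reduces to showing the theory \emph{decides} every closed formula, and deciding is achieved by formalizing the formula-to-automaton translation inside the theory, with the entire difficulty concentrated in the complementation step --- either via McNaughton determinization, as you propose primarily, or via the B\"uchi--Ramsey congruence argument, which is closer to what Siefkes actually formalizes. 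Be aware that the paper's second reference,~\cite{riba12ifip}, takes a genuinely different, model-theoretic route that avoids formalizing automata constructions inside the proof system; your sketch is squarely in the automata-formalization tradition, not that one.

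As a standalone proof, however, your text is a plan rather than a proof. The decisive content --- carrying out, inside $\MSOW$, the correctness proof of determinization (or of Ramsey-based complementation) using only the Comprehension scheme and Well-Founded Induction that the theory provides --- is explicitly flagged by you as ``the technical heart'' but never executed, and that content is precisely what occupies Siefkes' monograph; naming the obstacle is not the same as overcoming it, and nothing in your sketch argues that $\MSOW$'s induction and comprehension actually suffice. Two smaller points would also need repair before the induction even gets off the ground: (i) your invariant is stated for formulae whose free variables are all monadic, yet atomic formulae involve individual variables, so you need the standard singleton-encoding of first-order variables (the analogue of what the present paper does for trees in \S\ref{sec:compl:synt:if}) to put formulae into an individual-free form first; (ii) the vocabulary of $\MSOW$ has no successor symbol, only $\Lt$, so automaton runs must be phrased through the definable successor $\FSucc(x,y)$, whose existence and uniqueness properties rest on the discreteness axiom of Figure~\ref{fig:lt} and must themselves be derived in the theory before runs can be defined by Comprehension.
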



\noindent
The formula translation from $\FSOD$ to $\MSOD$ of~\S\ref{sec:cons}
restricts to a translation of $\FSOW$-formulae
to $\MSOW$-formulae.
This easily extends to theories,
and we get the following version of Proposition~\ref{prop:cons:mso:cons}.

\begin{prop}
\label{prop:msow:bfsos}
For every closed $\FSOW$-formula $\varphi$,
\begin{gather}
\label{eq:msow:bfsos:cons}
\FSOW \thesis \varphi
\qquad\text{if and only if}\qquad
\MSOW \thesis \MI{\varphi}
\\
\label{eq:msow:bfsos:std}
\StdN \models \varphi
\qquad\text{if and only if}\qquad
\StdN \models \MI{\varphi}
\end{gather}
\end{prop}

\noindent
Thanks to~\eqref{eq:msow:bfsos:std},
the completeness of $\MSOW$ directly gives the completeness
of $\MSOW$ \wrt\@ the translation closed of $\FSOW$-formulae $\varphi$:
\[
\FSOW \thesis \varphi
\qquad\text{if and only if}\qquad
\StdN \models \MI{\varphi}
\]

Our goal now is to prove that if a closed $\FSOW$ formula holds in the
standard model $\StdN$ of $\omega$-words, then $\FSOD$
proves its relativization to any rooted tree path.
Given a formula $\varphi$ of $\FSOW$ and a Function variable $P$,
write $\varphi^{P}$ for the $\FSOD$ formula 
obtained from $\varphi$
by relativizing all individual quantifications to $P$
and by replacing all Function quantifications
$F : K$ by $\funto{F}{P}{K}$.
Moreover, 
we say that $\funto{P}{\univ}{\two}$ is a \emph{rooted path}
when the following formula $\TPath(P)$ holds:
\begin{equation}
\label{eq:bfsos:rpath}
\TPath(P)
\quad\deq\quad
\left\{
\begin{array}{l l}
& (\Root \in P)
\\
  \land
& (\forall x,y \in P)\left(x \Lt y ~~\lor~~ x \Eq y ~~\lor~~ y \Lt x \right)
\\
  \land
& (\forall x \in P)(\exists y \in P)(\FSucc(x,y))
\end{array}
\right.
\end{equation}
where $\FSucc(x,y)$ stands for
\[
x \Lt y
~~\land~~
\lnot(\exists z)\big[x \Lt z \Lt y\big]
\]

\noindent
We can now formally state the property we are targeting:
\begin{equation}
\label{eq:bfsos:bfso}
{\FSOD} \thesis
(\forall P : \two)\left(
\TPath(P) ~~\limp~~
\varphi^{P}
\right)
\qquad\text{whenever}\qquad
\StdN \models \varphi
\end{equation}

\noindent
The proof of~\eqref{eq:bfsos:bfso} is deferred to Proposition~\ref{prop:bfsos:bfso}.
It relies on two lemmas.
The first one is an adaptation of
Lemma~\ref{lem:games:predpath} (\S\ref{sec:pos:path})
to rooted tree paths, which will give the last axiom of Figure~\ref{fig:lt}
for rooted tree paths.
The second one is a weakening of~\eqref{eq:bfsos:bfso}
where $\FSOW \thesis \varphi$ is assumed instead of $\StdN \models \varphi$.

\begin{lem}
\label{lem:omega:pred}
$\FSOD$ proves the following, assuming $\funto{P}{\univ}{\two}$
and $\TPath(P)$:
\[
(\forall x \in P) \Big[(\exists y \in P)(y \Lt x)
  ~~\limp~~
  (\exists y \in P)
  \big(y \Lt x ~\land~
     \lnot (\exists z \in P) (y \Lt z \Lt x)
  \big)
\Big]
\]
\end{lem}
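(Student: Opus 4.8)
The plan is to reduce the statement to a maximality principle for the tree order $\Lt$ and then prove that principle by induction on the underlying tree. Fix $P$ with $\funto{P}{\univ}{\two}$ and $\TPath(P)$, and fix $x \in P$ admitting some $y \in P$ with $y \Lt x$. Using Comprehension for characteristic functions (Theorem~\ref{thm:funto:ca}, or Remark~\ref{rem:hfchoice}), I would form the set
\[
W \quad\deq\quad \{\, w \st w \in P ~\land~ w \Lt x \,\}.
\]
This $W$ is non-empty (it contains the witness $y$), and every $w \in W$ satisfies $w \Lt x$. If I can show that $W$ has a $\Lt$-maximum $\bar y$, then $\bar y \in P$, $\bar y \Lt x$, and for any $z \in P$ with $z \Lt x$ one has $z \in W$, hence $z \Leq \bar y$; by irreflexivity and transitivity of $\Lt$ (Tree Axioms) this rules out $\bar y \Lt z \Lt x$, so $\bar y$ is exactly the immediate $P$-predecessor of $x$ required by the statement.

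The heart of the argument is therefore the following maximality principle, which I would prove by the Induction Scheme of $\FSOD$ (\S\ref{sec:ax:ind}) on $x$:
\[
\Psi(x) \quad\deq\quad
(\forall \funto{V}{\univ}{\two})
\Big[
V \neq \emptyset ~\land~ (\forall v \in V)(v \Lt x)
~~\limp~~
(\exists \bar v \in V)(\forall v \in V)(v \Leq \bar v)
\Big].
\]
The base case $\Psi(\Root)$ holds vacuously, since $\lnot(\exists v)(v \Lt \Root)$ by Proposition~\ref{prop:ax:tree}, so the antecedent is unsatisfiable. For the inductive step I would assume $\Psi(w)$ and prove $\Psi(\Succ_d(w))$ for each $d \in \Dir$. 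Given such a $V$, the Tree Axiom $v \Lt \Succ_d(w) \liff v \Leq w$ (Figure~\ref{fig:ded:tree}) shows that every $v \in V$ satisfies $v \Leq w$. I would then split on whether $w \in V$: if so, $w$ is already a maximum of $V$; if not, every $v \in V$ satisfies $v \Lt w$, so $V$ meets the hypotheses of $\Psi(w)$ and the induction hypothesis supplies the maximum.

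The step I expect to be the main obstacle — or rather the point that dictates the shape of the proof — is precisely this case split. A naive induction directly on the target property (``$x$ has an immediate $P$-predecessor'') breaks down, because when $x = \Succ_d(w)$ the tree-predecessor $w$ need not itself lie in $P$, so the induction hypothesis about $w$ cannot be invoked. Abstracting to the maximality principle $\Psi$ over \emph{arbitrary} sets $V$ (untied to $P$) is what circumvents this: in the problematic case $w \notin V$ the bound simply descends from $\Succ_d(w)$ to $w$ and the induction hypothesis applies verbatim. Note also that no linearity hypothesis on $V$ is needed to prove $\Psi$, since the relevant sets consist of prefixes of a common position; in the application $W$ is in any case linearly ordered as a subset of $P$ (by $\TPath(P)$), which is all that is used to conclude.
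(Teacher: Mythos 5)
Your proof is correct and follows essentially the same route as the paper's: the paper also reduces the lemma to a maximality principle for non-empty sets of tree positions bounded by a fixed position (its analogue of Lemma~\ref{lem:games:maxlinbounded}), and then concludes exactly as you do via the set of elements of $P$ strictly below $x$. The only difference is internal to the proof of that maximality claim --- the paper picks a $\Lt$-minimal bound using Well-Founded Induction (Theorem~\ref{thm:ax:wfind}) and argues via Proposition~\ref{prop:ax:tree} that this bound must be a successor $\Succ_d(z)$ with $z$ then forced into the set, whereas you run the plain Induction Scheme of \S\ref{sec:ax:ind} on the bound with a case split on whether its tree-predecessor lies in $V$ --- and, as you observe, neither argument actually uses the linearity hypothesis that the paper nonetheless carries in the statement of its claim.
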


\begin{fullproof}
The argument is exactly the same as in the proof of Lemma~\ref{lem:games:predpath},
as soon as we have an analogue to Lemma~\ref{lem:games:maxlinbounded}.
This amounts to the following, whose proof is considerably simpler
than that of Lemma~\ref{lem:games:maxlinbounded}.

\begin{subclm}
Assume that $(X:\two)$ is bounded
(\ie\@ $(\exists x)(\forall y \in X)(y \Lt x)$),
non-empty and linearly ordered.
Then $X$ has a maximal element:
$(\exists z \in X) (\forall y \in X)(y \Leq z)$.
\end{subclm}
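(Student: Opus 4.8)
The plan is to prove the Claim inside $\FSOD$ by the tree Induction Scheme of~\S\ref{sec:ax:ind}, inducting on the bound. First I would establish $(\forall x)\,\psi(x)$, where $\psi(x)$ asserts that every non-empty, linearly ordered $X : \two$ satisfying $(\forall y \in X)(y \Lt x)$ has a maximum, i.e.\ satisfies $(\exists z \in X)(\forall y \in X)(y \Leq z)$. The Claim then follows at once: given a bounded, non-empty, linearly ordered $X$, I would pick $x$ with $(\forall y \in X)(y \Lt x)$ and instantiate $\psi(x)$.

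For the base case $\psi(\Root)$ I would use the fact that nothing lies strictly below the root (Proposition~\ref{prop:ax:tree}), so that the hypothesis $(\forall y \in X)(y \Lt \Root)$ forces $X$ to be empty, contradicting non-emptiness; hence $\psi(\Root)$ holds vacuously. For the inductive step I would assume $\psi(x)$ and prove $\psi(\Succ_d(x))$ for each $d \in \Dir$. Given $X$ non-empty, linearly ordered with $(\forall y \in X)(y \Lt \Succ_d(x))$, the Tree Axiom $y \Lt \Succ_d(x) \liff y \Leq x$ of Figure~\ref{fig:ded:tree} yields $(\forall y \in X)(y \Leq x)$. I would then split on $x \in X$ by excluded middle: if $x \in X$, then $x$ is itself a maximum of $X$ since every element is $\Leq x$; if $x \notin X$, then every $y \in X$ satisfies $y \Lt x$, so $X$ is non-empty, linearly ordered and bounded by $x$, and the induction hypothesis $\psi(x)$ supplies the maximum.

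I expect no real obstacle here, which is precisely why this Claim is ``considerably simpler'' than Lemma~\ref{lem:games:maxlinbounded}: in a tree, a bound already confines $X$ to the ancestor chain below it, so the linear-order hypothesis comes essentially for free and the bound descends by exactly one tree level at each step, matching the structure of the Induction Scheme. The only points needing minor care will be forming the relevant sets via Comprehension (Theorem~\ref{thm:funto:ca}) when setting up $\psi$, and threading the standing linearity hypothesis through the induction so that $\psi(x)$ is applicable to $X$ in the step.
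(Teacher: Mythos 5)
Your proof is correct, and its skeleton is the same as the paper's (an induction on the bound of $X$ over the tree structure), but the mechanics are genuinely different and worth comparing. The paper invokes Well-Founded Induction (Theorem~\ref{thm:ax:wfind}) to pick a bound $x$ of $X$ that is $\Lt$-minimal among bounds, rules out $x \Eq \Root$ via Proposition~\ref{prop:ax:tree} and non-emptiness, writes $x \Eq \Succ_d(z)$, gets $(\forall y \in X)(y \Leq z)$ from the Tree Axioms, and then uses the \emph{minimality} of $x$ to conclude $z \in X$, so that $z$ is the maximum. You instead run the raw Induction Scheme of~\S\ref{sec:ax:ind} on the bound, with $\psi(x)$ universally quantifying over $X:\two$ (this is legitimate: the scheme applies to arbitrary $\FSOD$-formulae, which may contain bounded Function quantifiers), and in the successor step you replace the minimality argument by an excluded-middle split on $x \in X$: either $x$ itself is the maximum, or every element of $X$ is strictly below $x$ and the induction hypothesis applies. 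Both arguments rest on exactly the same two facts, $\lnot(\exists y)(y \Lt \Root)$ and the axiom $y \Lt \Succ_d(x) ~\liff~ y \Leq x$, so the mathematical content coincides; what your version buys is that it bypasses the derived Theorem~\ref{thm:ax:wfind} and the root-or-successor case analysis, at the cost of threading a second-order hypothesis through the induction. One small correction to your closing remark: no Comprehension (Theorem~\ref{thm:funto:ca}) is needed to set up $\psi$, since $X$ occurs there only as a universally quantified Function variable; neither proof ever has to construct a set.
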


\begin{subproof}[Proof of Claim \thesubclm]
Let $X$ be as in the statement and let $y \Lt x$
for all $y \in X$.
By Well-Founded Induction (Theorem~\ref{thm:ax:wfind})
we can assume $x$ to be minimal with this property,
in the sense that
\[
z \Lt x ~~\limp~~ (\exists y \in X)\lnot(y \Lt z)
\]

\noindent
Now, by Proposition~\ref{prop:ax:tree},
$x$ is either $\Root$ or a successor.
If $x \Eq \Root$, then Proposition~\ref{prop:ax:tree}
leads to a contradiction since $X$ is assumed to be non-empty.
So assume $x = \Succ_d(z)$.
It then follows from the Tree axioms of $\FSOD$ (Figure~\ref{fig:ded:tree})
that $y \Leq z$ for all $y \in X$,
and the minimality of $x$ implies $z \in X$.
\end{subproof}

This concludes the proof of Lemma~\ref{lem:omega:pred}.
\end{fullproof}

\begin{lem}
\label{lem:bfsos:bfso}
For all closed $\FSOW$-formula $\varphi$, we have
\[
\FSOD \thesis
\forall P:\two \left(
\TPath(P) ~~\limp~~
\varphi^{P}
\right)
\qquad\text{whenever}\qquad
\FSOW \thesis \varphi
\]
\end{lem}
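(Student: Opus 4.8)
The plan is to argue by induction on the $\FSOW$-derivation of $\varphi$. Since intermediate sequents carry contexts and free variables, I would prove the more general statement: whenever $\Phi \thesis_{\FSOW} \psi$ with free Individual variables among $\vec x$ and free (bounded) Function variables among $\vec F$ with respective bounds $\vec K$, then $\FSOD$ proves
\[
\funto{P}{\univ}{\two}, \; \TPath(P), \; \vec x \in P, \; \funto{\vec F}{P}{\vec K}, \; \Phi^P \thesis \psi^P,
\]
where $(-)^P$ relativizes Individual quantifiers to $P$ and Function quantifications $F:K$ to $\funto{F}{P}{K}$. The case $\Phi = \emptyset$ with $\psi = \varphi$ closed then yields the Lemma.

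For the rule cases, the propositional rules of Figure~\ref{fig:ded:prop} commute with $(-)^P$ trivially, and the $\HF$-quantifier rules of Figure~\ref{fig:ded:bfso} are untouched by relativization, which only affects Individual and Function quantifiers. The delicate point is the Individual $\exists$-introduction rule: to pass from $\psi^P[t/x]$ to $(\exists x \in P)\psi^P$ one needs $t \in P$. This is exactly where the restriction in Definition~\ref{def:bfsos} pays off — the only Individual terms of $\FSOW$ are $\Root$ and variables, and $\Root \in P$ holds by $\TPath(P)$ while each free variable carries its hypothesis $x \in P$ from the context. The $\exists$-elimination and the Function-quantifier rules are handled the same way, adding the hypothesis $x \in P$, resp.\@ $\funto{F}{P}{K}$, for the eigenvariable; the explicit substitution rule commutes with $(-)^P$ once one checks that relativization is stable under substituting $F(t)$ for an $\HF$-variable.

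For the axiom cases, the Equality Axioms, the Axioms on $\HF$-Sets and the Functional Choice Axioms are shared with $\FSOD$, and their relativizations are instances of, or follow routinely from, the corresponding $\FSOD$-axioms; for the Choice Axioms one extends a partial choice Function defined on $P$ to a total Function on $\univ$ (using that $P \ni \Root$ makes the relevant codomain non-empty) and then restricts. The $\Lt$-axioms of Figure~\ref{fig:lt}, relativized to $P$, reduce to the Tree Axioms and Proposition~\ref{prop:ax:tree}: irreflexivity, transitivity and $\Root$-minimality of $\Lt$ are inherited directly, unboundedness and linearity along $P$ are literally conjuncts of $\TPath(P)$, and the relativized discreteness (predecessor) axiom is precisely Lemma~\ref{lem:omega:pred}. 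Finally, Well-Founded Induction relativized to $P$ follows by applying the Well-Founded Induction of $\FSOD$ (Theorem~\ref{thm:ax:wfind}) to the formula $x \in P \limp \varphi^P(x)$.

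The main obstacle is the axiom layer rather than the logical rules: one must check that every $\FSOW$-axiom survives relativization, and the single axiom genuinely requiring tree-specific reasoning — the discreteness of $\Lt$ along a path — is exactly what Lemma~\ref{lem:omega:pred} was proved for. I expect the free-variable bookkeeping in the rule cases, together with the extension-and-restriction argument for the relativized Choice Axioms, to be the only other points needing real care.
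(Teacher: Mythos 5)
Your proposal is correct and follows essentially the same route as the paper: the same strengthened induction hypothesis over derivations (with hypotheses $\vec x \In P$, $\funto{\vec F}{P}{\vec K}$, $\Phi^P$), the $\Lt$-axioms discharged via the Tree Axioms, $\TPath(P)$ and Lemma~\ref{lem:omega:pred}, and relativized induction obtained by applying Theorem~\ref{thm:ax:wfind} to $x \in P \limp \varphi^P(x)$. The only cosmetic difference is that where you extend-and-restrict choice Functions by hand, the paper simply invokes Theorem~\ref{thm:funto:choice} (Choice for Product Types with domain $\psi(-) = (- \in P)$), whose proof is exactly that argument.
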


\begin{proof}
The proof is by induction on derivations of $\FSOW$-formulae.
For formulae $\vec\psi,\varphi$ with free Function variables
$\vec F = F_1,\dots,F_p$ and free Individual variables
$\vec x = x_1,\dots,x_q$
(and possibly further free $\HF$-variables), we show that
for all $\HF$-terms $\vec K = K_1,\dots,K_p$ of $\FSOW$
we have
\[
\vec{F:K} ~,~ \vec\psi \thesis_{\FSOW} \varphi
\qquad\text{implies}\qquad
\TPath(P) ~,~ \vec{\funto{F}{P}{K}} ~,~ \vec x \In P ~,~
\vec{\psi^P}
\thesis_{\FSOD} \varphi^P
\]

\noindent
The cases for each inference rule are immediate from their respective
induction hypothesis, and we also easily obtain
the Equality Axioms (\S\ref{sec:ax:eq}),
the Axioms of $\HF$-Sets (\S\ref{sec:ax:hf})
and the Axiom of $\HF$-Bounded Choice for $\HF$-Sets (\S\ref{sec:ax:choice}).
We resort on Theorem~\ref{thm:funto:choice}
for the axioms of $\HF$-Bounded Choice for Functions 
and of Iterated $\HF$-Bounded Choice.
Moreover, the Induction axiom of $\FSOW$ on the formula $\varphi(x)$
directly follows from Well-Founded Induction in $\FSOD$
(Theorem~\ref{thm:ax:wfind})
on the formula
\[
\psi(x) \quad\deq\quad \left(x \in P ~~\limp~~ \varphi(x)\right)
\]

\noindent
It remains to deal with the $\Lt$-axioms of Figure~\ref{fig:lt}.
The first five axioms
(stating that $\Lt$ is an unbounded linear order)
directly follow from the Tree axioms of $\FSOD$ (Figure~\ref{fig:ded:tree})
and from relativization to $P$ with $\TPath(P)$.
Finally, we have to show that $\FSOD$ proves 
that the translation of the predecessor axiom
holds within $P$ whenever $\TPath(P)$ is assumed:
\[
\TPath(P)
\quad\limp\quad
\forall x \in P\left[\exists y \in P(y \Lt x)
  ~~\limp~~
  \exists y \in P
  \left(y \Lt x ~\land~
     \lnot \exists z \in P (y \Lt z \Lt x)
  \right)
\right]
\]
This is handled by Lemma~\ref{lem:omega:pred}.
%
\end{proof}

We have now everything we need to prove~\eqref{eq:bfsos:bfso}.

\begin{prop}
\label{prop:bfsos:bfso}
Consider a closed formula $\varphi$ of $\FSOW$.
Then
\[
{\FSOD} \thesis
(\forall P : \two)\left(
\TPath(P) ~~\limp~~
\varphi^{P}
\right)
\qquad\text{whenever}\qquad
\StdN \models \varphi
\]
\end{prop}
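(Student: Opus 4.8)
The plan is to chain the results stated earlier in this section, the key move being to promote the semantic hypothesis $\StdN \models \varphi$ into the \emph{syntactic} hypothesis $\FSOW \thesis \varphi$ demanded by Lemma~\ref{lem:bfsos:bfso}. This promotion is exactly where the completeness of $\MSO$ on $\omega$-words (Theorem~\ref{thm:msow:compl}) enters, funnelled through the mutual interpretability facts of Proposition~\ref{prop:msow:bfsos}. Once $\FSOW \thesis \varphi$ is in hand, Lemma~\ref{lem:bfsos:bfso} delivers the conclusion directly.

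Concretely, I would argue as follows. Assume $\StdN \models \varphi$ for a closed $\FSOW$-formula $\varphi$. First, by Proposition~\ref{prop:msow:bfsos}.(\ref{prop:msow:bfsos:std}) the satisfaction of $\varphi$ in $\StdN$ transfers to its translation, giving $\StdN \models \MI{\varphi}$. The point to observe here is that, since $\varphi$ is an $\FSOW$-formula (so that its only individual terms are variables and $\Root$), the translation $\MI{\varphi}$ is a closed \emph{$\MSOW$}-formula, not merely an $\MSOD$-formula; this is precisely the restriction of $\MI{-}$ to the word setting noted just before Proposition~\ref{prop:msow:bfsos}. Consequently Siefkes' completeness theorem (Theorem~\ref{thm:msow:compl}) applies to $\MI{\varphi}$ and upgrades $\StdN \models \MI{\varphi}$ to $\MSOW \thesis \MI{\varphi}$.

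Next I would run the interpretability equivalence in the reverse direction: by Proposition~\ref{prop:msow:bfsos}.(\ref{prop:msow:bfsos:cons}), provability of $\MI{\varphi}$ in $\MSOW$ is equivalent to provability of $\varphi$ in $\FSOW$, so $\MSOW \thesis \MI{\varphi}$ yields $\FSOW \thesis \varphi$. This is the hypothesis required by Lemma~\ref{lem:bfsos:bfso}, and applying that lemma gives exactly
\[
\FSOD \thesis (\forall P : \two)\left( \TPath(P) ~~\limp~~ \varphi^{P} \right),
\]
as desired.

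Since every nontrivial ingredient is already established, there is no genuine obstacle in the argument itself; the only subtle point to be careful about is the one flagged above, namely that $\MI{\varphi}$ really lands in $\MSOW$ (and is closed) whenever $\varphi$ is a closed $\FSOW$-formula, which is what licenses the use of Theorem~\ref{thm:msow:compl} rather than a completeness statement about the full tree theory. The substantive work supporting this proposition has been isolated into Lemma~\ref{lem:bfsos:bfso} (the syntactic relativization to rooted paths, itself resting on Lemma~\ref{lem:omega:pred}) and into the externally imported completeness of $\MSO$ on $\omega$-words.
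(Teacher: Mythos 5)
Your proof is correct and follows exactly the paper's own argument: transfer $\StdN \models \varphi$ to $\StdN \models \MI{\varphi}$ via Proposition~\ref{prop:msow:bfsos}.(\ref{prop:msow:bfsos:std}), apply Theorem~\ref{thm:msow:compl} to get $\MSOW \thesis \MI{\varphi}$, come back to $\FSOW \thesis \varphi$ via Proposition~\ref{prop:msow:bfsos}.(\ref{prop:msow:bfsos:cons}), and conclude with Lemma~\ref{lem:bfsos:bfso}. Your additional observation that $\MI{\varphi}$ is indeed a closed $\MSOW$-formula is the very fact the paper records just before Proposition~\ref{prop:msow:bfsos}, so nothing is missing.
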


\begin{proof}
Assume
$\StdN \models \varphi$.
By~\eqref{eq:msow:bfsos:std}
we have
$\StdN \models \MI{{\varphi}}$.
Theorem~\ref{thm:msow:compl} then implies
$\MSOW \thesis \MI{{\varphi}}$,
and by~\eqref{eq:msow:bfsos:cons}
we have that
$\FSOW \thesis \varphi$.
We conclude by Lemma~\ref{lem:bfsos:bfso}.
\end{proof}


\section{Completeness}
\label{sec:compl}

\noindent
This Section is devoted to the proof of our main result,
 the completeness of $\FSO + (\PosDet)$.

\begin{thm}[Main Theorem]
\label{thm:main}
For each closed formula $\varphi$ of $\FSO$,
\[
\FSO + (\PosDet) \thesis \varphi
\qquad\text{or}\qquad
\FSO + (\PosDet) \thesis \lnot \varphi
\]
\end{thm}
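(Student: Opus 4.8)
The plan is to follow the two-step automata-theoretic strategy announced in the introduction: prove completeness first for (interpretations of) closed $\MSO$-formulae via a translation to automata, and then transfer the result to arbitrary closed $\FSO$-formulae using the mutual interpretability of the two systems.

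\emph{Step 1: reduction to automata.} I would first show, by induction on closed $\MSO$-formulae $\psi$, that there is an $\HF$-closed parity automaton $\At A_\psi : \one$ over the singleton alphabet such that
\[
\FSO + (\PosDet) \thesis \quad \psi^\circ ~~\longliff~~ \big(F_0 \in \Lang(\At A_\psi)\big),
\]
where $F_0 : \one$ is the unique tree over $\one$. More generally one carries an automaton for every formula whose free monadic variables index the components of the input alphabet (first-order variables being coded as singleton sets). The inductive cases are supplied by the operations of \S\ref{sec:aut}: atomic formulae yield simple explicit automata; disjunction is handled by $\At A_0 \oplus \At A_1$ (Lemma~\ref{lem:aut:disj}); negation by complementation $\aneg \At A$ (Theorem~\ref{thm:neg}); and existential quantification by first applying the Simulation Theorem $\ND(\At A)$ (Theorem~\ref{thm:aut:nd:sim}) to obtain a non-deterministic automaton and then the projection $\exists_\Gamma \At A$ (Proposition~\ref{prop:aut:nd:proj}), with substitution (Lemma~\ref{lem:aut:subst}) used to adjust alphabets. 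It is precisely the negation case that forces the use of $(\PosDet)$, through Theorem~\ref{thm:neg}; all other cases are provable in $\FSO$ alone.

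\emph{Step 2: deciding the automaton.} For the resulting $\At A_\psi : \one$ the acceptance game $\G(\At A_\psi, F_0)$ is a parity game over the infinite $\Dir$-tree whose transition structure does not depend on the tree node (the alphabet being trivial), so it is generated by the finite data of $\At A_\psi$. I would reduce the question of which player wins from $(\Root, \init q_{\At A_\psi})$ to the winner of an associated \emph{finite} parity game on the quotient graph with vertices $\PO{}$, and settle the latter by a Büchi--Landweber-style argument: infinite plays are read as $\omega$-words over the state alphabet, the parity condition is $\MSO$-definable on $\omega$-words, and the completeness of $\MSO$ on $\omega$-words (Theorem~\ref{thm:msow:compl}), imported along the paths of $\FSO$ via Proposition~\ref{prop:bfsos:bfso}, lets $\FSO$ prove that one designated player wins the finite game. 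Combined with positional determinacy $(\PosDet)$ this yields
\[
\FSO + (\PosDet) \thesis \big(F_0 \in \Lang(\At A_\psi)\big)
\qquad\text{or}\qquad
\FSO + (\PosDet) \thesis \lnot\big(F_0 \in \Lang(\At A_\psi)\big),
\]
and together with Step 1 this gives completeness for all closed $\MSO$-formulae, i.e.\ $\FSO + (\PosDet) \thesis \psi^\circ$ or $\FSO + (\PosDet) \thesis \lnot \psi^\circ$. For an arbitrary closed $\FSO$-formula $\varphi$, Proposition~\ref{prop:cons:mso:cons}\eqref{prop:cons:mso:cons:trans} then gives $\FSO \thesis \varphi \longliff \MI{\varphi}^\circ$ with $\MI{\varphi}$ a closed $\MSO$-formula; applying the $\MSO$-completeness just obtained to $\MI{\varphi}$ yields $\FSO + (\PosDet) \thesis \MI{\varphi}^\circ$ or its negation, and the equivalence propagates this to $\varphi$, giving $\FSO + (\PosDet) \thesis \varphi$ or $\FSO + (\PosDet) \thesis \lnot\varphi$, as required.

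I expect the main obstacle to be Step 2: making the passage from the infinite-tree acceptance game to a finite-graph / $\omega$-word analysis fully formal \emph{inside} $\FSO$, so that the imported completeness of $\MSO$ on $\omega$-words genuinely decides the winner. The subtlety is that $\FSO$ expresses only positional strategies and cannot directly manipulate histories of plays, so the reduction to the finite quotient game and the Büchi--Landweber decision must be carried out with the path-based tools of \S\ref{sec:msow} together with the determinacy axiom, rather than by the usual unrestricted reasoning on infinite plays. By comparison, Step 1 is essentially bookkeeping over the constructions of \S\ref{sec:aut}, the only genuinely non-trivial ingredient being the appeal to $(\PosDet)$ in the complementation case.
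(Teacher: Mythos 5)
Your proposal follows essentially the same route as the paper: a formula-to-automaton translation over the individual-free syntax (the paper's Proposition~\ref{prop:compl:aut}, using disjunction, complementation via $(\PosDet)$, and simulation plus projection), followed by deciding emptiness of the resulting $\HF$-closed automaton over $\one$ through reduced parity games and the imported completeness of $\MSO$ on $\omega$-words (Propositions~\ref{prop:compl:aut:dec} and~\ref{prop:bfsos:bfso}, Theorem~\ref{thm:compl:red}), and finally the transfer to arbitrary closed $\FSO$-formulae via Proposition~\ref{prop:cons:mso:cons}. The only small refinement worth noting is that your Step~2 does not actually need $(\PosDet)$: the paper establishes the decision of emptiness in $\FSO$ alone, since the winner of the reduced game is settled by the completeness of $\FSOW$ and Lemma~\ref{lem:games:notbothwin} rules out the other player winning.
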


\subsection{Overview}
\label{sec:compl:intro}
The two main ingredients of Theorem~\ref{thm:main} are the following.
\begin{enumerate}
\item
The translations
\[
\MI{-} ~~:~~ \FSO ~~\longto~~ \MSO
\qquad\text{and}\qquad
(-)^\circ ~~:~~ \MSO ~~\longto~~ \FSO
\]
providing faithful mutual interpretations of $\FSO$ and $\MSO$
(\S\ref{sec:cons}, recapitulated in Table~\ref{tab:compl:cons}).

\item
The translation of $\MSO$-formulae to automata, that we detail
in~\S\ref{sec:compl:synt} and \S\ref{sec:compl:aut} below.
This translation relies on the correctness of the constructions
on automata of~\S\ref{sec:aut}, which are recapitulated in Table~\ref{tab:aut:op}.
In particular, we require the Axiom $(\PosDet)$
of positional determinacy of parity games (\S\ref{sec:posdet})
for the complementation of tree automata (Theorem~\ref{thm:neg}).
\end{enumerate}

\begin{table}[tbp]
\[
\begin{array}{l !{\quad\text{if and only if}\quad} l !{~~} !{{~~}} l}
\toprule

  \multicolumn{2}{c}{\FSO\thesis~~ \varphi \longliff \MI{\varphi}^\circ}
& \text{Proposition~\ref{prop:cons:mso:cons}, \eqref{eq:cons:mso:cons:trans}}
\\

  \FSO \thesis \varphi
& \MSO \thesis \MI{\varphi}
& \text{Proposition~\ref{prop:cons:mso:cons}, \eqref{eq:cons:mso:cons:fso:mso}}
\\

  \FSO \thesis \varphi^\circ
& \MSO \thesis \varphi
& \text{Theorem~\ref{thm:cons:mso:fso:mso}}
\\

  \Std \models \varphi^\circ
& \Std \models \varphi
& \text{Lemma~\ref{lem:cons:fso:mso:std}}
\\
\toprule
\end{array}
\]
\caption{\nameref{sec:cons} (\S\ref{sec:cons}).\label{tab:compl:cons}}
\end{table}


\noindent
The mutual interpretability results of Table~\ref{tab:compl:cons} 
also allows us to obtain a completeness result for $\MSO$.
Recall that $\MI{\PosDet}$ is defined in Definition~\ref{def:posdet:mso},
\S\ref{sec:posdet:mso}.
We then get the following corollary to Theorem~\ref{thm:main}.

\begin{cor}
For each closed formula $\varphi$ of $\MSO$,
\[
\MSO + \MI{\PosDet} \thesis \varphi
\qquad\text{or}\qquad
\MSO + \MI{\PosDet} \thesis \lnot \varphi
\]
\end{cor}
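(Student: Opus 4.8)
The plan is to derive the corollary from the Main Theorem (Theorem~\ref{thm:main}) by transporting provability along the interpretation $\MI{-} : \FSO \to \MSO$, taking care to route the finitely many instances of $(\PosDet)$ used in any given $\FSO$-derivation into the axiom set $\MI{\PosDet}$ of Definition~\ref{def:posdet:mso}. Fix a closed $\MSO$-formula $\varphi$. Since $\varphi$ is closed, so is its translation $\varphi^\circ$, and Theorem~\ref{thm:main} applied to $\varphi^\circ$ yields $\FSO + (\PosDet) \thesis \varphi^\circ$ or $\FSO + (\PosDet) \thesis \lnot(\varphi^\circ)$. As $(\lnot\varphi)^\circ = \lnot(\varphi^\circ)$ by definition of $(-)^\circ$, both alternatives are instances of the single statement
\[
\FSO + (\PosDet) \thesis \psi^\circ
\]
with $\psi$ equal to $\varphi$ or to $\lnot\varphi$. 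It therefore suffices to prove the following transfer claim: for every closed $\MSO$-formula $\psi$, if $\FSO + (\PosDet) \thesis \psi^\circ$ then $\MSO + \MI{\PosDet} \thesis \psi$.

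To prove the transfer claim I would first extract from a derivation of $\psi^\circ$ the finitely many axioms of $(\PosDet)$ it uses, say $\Delta_1 = \formfont{PosDet}(\Prop_1,\Opp_1,n_1),\dots,\Delta_m = \formfont{PosDet}(\Prop_m,\Opp_m,n_m)$, and invoke the Deduction Theorem (Remark~\ref{rem:mso:ded}, which also holds for $\FSO$) to obtain $\FSO \thesis (\Delta_1 \land \dots \land \Delta_m) \longlimp \psi^\circ$. Each $\Delta_j$ has no free Function variable (the quantifiers over $\EP,\EO,\col$ are internal), but may carry free $\HF$-variables inherited from the terms $\Prop_j,\Opp_j,n_j$. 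Since $\psi^\circ$ is $\HF$-closed, these free variables occur only in the hypotheses; substituting arbitrary closed $\HF$-terms for them and appealing to the substitution lemma for $\FSO$ (the axioms having been closed under $\HF$-substitution precisely for this purpose, Remark~\ref{rem:ax:subst}) produces $\FSO \thesis (\Delta_1' \land \dots \land \Delta_m') \longlimp \psi^\circ$, where each $\Delta_j' = \formfont{PosDet}(\Prop_j',\Opp_j',n_j')$ is now an $\HF$-closed instance of $(\PosDet)$.

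The formula $(\Delta_1' \land \dots \land \Delta_m') \longlimp \psi^\circ$ is then closed, $\HF$-closed, and free of Function variables, so the translation $\MI{-}$ applies and Theorem~\ref{thm:cons:fso:mso} gives $\MSO \thesis \MI{(\Delta_1' \land \dots \land \Delta_m') \longlimp \psi^\circ}$. Because $\MI{-}$ commutes with the propositional connectives this reads
\[
\MSO \thesis (\MI{\Delta_1'} \land \dots \land \MI{\Delta_m'}) \longlimp \MI{\psi^\circ}.
\]
Now each $\MI{\Delta_j'}$ is, by the syntactic-identity observation preceding Definition~\ref{def:posdet:mso}, literally one of the axioms $\MI{\formfont{PosDet}(\const\Prop,\const\Opp,\const n)}$ constituting $\MI{\PosDet}$; hence $\MSO + \MI{\PosDet} \thesis \MI{\Delta_j'}$ for each $j$, and modus ponens gives $\MSO + \MI{\PosDet} \thesis \MI{\psi^\circ}$. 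Finally the equivalence
\[
\MSO \thesis \MI{\psi^\circ} \longliff \psi
\]
(a routine induction on $\psi$, also underlying Theorem~\ref{thm:cons:mso:fso:mso}) converts this into $\MSO + \MI{\PosDet} \thesis \psi$, completing the transfer claim and hence the corollary.

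I expect the main obstacle to be the second step: guaranteeing that the $(\PosDet)$-instances appearing in the $\FSO$-derivation can be taken $\HF$-closed, so that $\MI{-}$ is defined on them and so that their translations genuinely belong to $\MI{\PosDet}$ (whose members are indexed only by constant symbols for $\HF$-sets). This is exactly the point where the paper's insistence on closing the axioms under $\HF$-substitution, and on the coincidence of the $\MI{-}$-translations of closed-term and constant-symbol instances, does the real work; everything else is bookkeeping with the interpretation $\MI{-}$ and the Deduction Theorem.
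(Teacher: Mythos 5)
Your proof is correct and follows the same skeleton as the paper's: apply the Main Theorem to $\varphi^\circ$, extract the finitely many instances of $(\PosDet)$ used in the derivation, pass to an implication by the Deduction Theorem, make those instances $\HF$-closed (possible since $\varphi^\circ$ is $\HF$-closed), and use the constant-symbol identification of Definition~\ref{def:posdet:mso} to recognize their translations as axioms of $\MI{\PosDet}$. The one genuine difference is the transfer step: you translate the whole implication with $\MI{-}$ (Theorem~\ref{thm:cons:fso:mso}) and then repair the conclusion using the $\MSO$-provable equivalence $\psi \liff \MI{\psi^\circ}$, whereas the paper first replaces each instance $\Delta_i$ by $\MI{\Delta_i}^\circ$ inside $\FSO$ (Proposition~\ref{prop:cons:mso:cons},~(\ref{prop:cons:mso:cons:trans})), so that the implication becomes the $(-)^\circ$-image of an $\MSO$-formula, and then applies the faithfulness of $(-)^\circ$ (Theorem~\ref{thm:cons:mso:fso:mso}). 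These are dual routings through the mutual-interpretability results of Table~\ref{tab:compl:cons} and buy the same thing; indeed, the equivalence $\MSO \thesis \psi \liff \MI{\psi^\circ}$ you rely on is precisely the lemma the paper proves (in the appendix of the full version) in order to obtain Theorem~\ref{thm:cons:mso:fso:mso}, so your argument in effect inlines that proof rather than citing the theorem as a black box. Two small remarks: your ordering --- performing the $\HF$-closing substitution \emph{before} applying any translation --- is actually cleaner than the paper's presentation, which invokes $\MI{-}$ on the instances before officially recording that their $\HF$-terms may be assumed closed (and $\MI{-}$ is only defined on $\HF$-closed formulae); and your justification of that substitution by an admissible $\HF$-term substitution lemma is the right idea, though Remark~\ref{rem:ax:subst} is not quite the right citation, since it concerns substitution of Function applications $F(t)$ for $\HF$-variables rather than of $\HF$-terms for $\HF$-variables.
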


\begin{proof}
Consider a closed $\MSO$-formula $\varphi$.
Assume $\FSO + (\PosDet) \thesis \varphi^\circ$.
Let $\formfont{PosDet}(\Prop_i,\Opp_i,n_i)$ ($i=1,\dots,k$)
be the instances of $(\PosDet)$ used in the proof, so that
\[
\FSO\thesis\quad
\land_{1 \leq i \leq k} \formfont{PosDet}(\Prop_i,\Opp_i,n_i)
~~\longlimp~~
\varphi^\circ
\]

\noindent
By~\eqref{eq:cons:mso:cons:trans} (Proposition~\ref{prop:cons:mso:cons}),
we get
\[
\FSO\thesis \quad
\land_{1 \leq i \leq k} \MI{\formfont{PosDet}(\Prop_i,\Opp_i,n_i)}^\circ
~~\longlimp~~
\varphi^\circ
\]
and since $(-)^\circ$ commutes over propositional connectives,
by Theorem~\ref{thm:cons:mso:fso:mso} we obtain 
\[
\MSO \thesis\quad
\land_{1 \leq i \leq k} \MI{\formfont{PosDet}(\Prop_i,\Opp_i,n_i)}
~~\longlimp~~
\varphi
\]

\noindent
Moreover, since $\varphi^\circ$ is $\HF$-closed, we can assume
the $\HF$-terms $\Prop_i$, $\Opp_i$ and $n_i$ to be closed.
It follows that there are constants for $\HF$-sets
$\const\Prop_i$, $\const\Opp_i$ and $\const n_i$ ($i = 1\dots,k$)
such that
each formula $\MI{\formfont{PosDet}(\Prop_i,\Opp_i,n_i)}$
is syntactically identical to
$\MI{\formfont{PosDet}(\const\Prop_i,\const\Opp_i,\const n_i)}$.
We thus obtain
\[
\MSO\thesis \quad
\land_{1 \leq i \leq k}
\MI{\formfont{PosDet}(\const\Prop_i,\const\Opp_i,\const n_i)}
~~\longlimp~~
\varphi
\]
which implies that $\MSO + \MI{\PosDet}$ proves $\varphi$.

If $\FSO + (\PosDet)$ does not prove $\varphi^\circ$,
Theorem~\ref{thm:main} gives
$\FSO + (\PosDet) \thesis \lnot(\varphi^\circ)$
and we conclude similarly.
\end{proof}


\noindent
In particular, it follows from Proposition~\ref{prop:games:std:cor}
that $\FSO + (\PosDet)$ completely axiomatizes the standard
model $\Std$ of $\Dir$-ary trees.

\begin{cor}
\label{cor:compl:std}
\hfill
\begin{itemize}
\item For each closed formula $\varphi$ of $\FSO$,
\[
\Std\models \varphi
\qquad\text{if and only if}\qquad
\FSO + (\PosDet) \thesis \varphi
\]

\item For each closed formula $\varphi$ of $\MSO$,
\[
\Std\models \varphi
\qquad\text{if and only if}\qquad
\MSO + \MI{\PosDet} \thesis \varphi
\]
\end{itemize}
\end{cor}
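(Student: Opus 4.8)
The plan is to derive both equivalences by combining the completeness results (Theorem~\ref{thm:main} and its $\MSO$-analogue, the Corollary immediately preceding) with soundness with respect to the standard model $\Std$. In each case the argument is the classical observation that a theory which is sound for a structure and decides every closed formula coincides with the full theory of that structure.

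First I would treat the $\FSO$ part. The direction from $\FSO + (\PosDet) \thesis \varphi$ to $\Std \models \varphi$ is exactly Proposition~\ref{prop:games:std:cor}. For the converse, suppose $\Std \models \varphi$. By Theorem~\ref{thm:main} either $\FSO + (\PosDet) \thesis \varphi$ or $\FSO + (\PosDet) \thesis \lnot\varphi$. Were the latter to hold, Proposition~\ref{prop:games:std:cor} would give $\Std \models \lnot\varphi$, contradicting $\Std \models \varphi$; hence $\FSO + (\PosDet) \thesis \varphi$, as required.

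For the $\MSO$ part I first need soundness of $\MSO + \MI{\PosDet}$ for $\Std$. The logical rules and the standard $\MSO$ axioms (Tree Axioms, Induction, Comprehension, Equality) hold in $\Std$ as usual, so the only point requiring care is the validity in $\Std$ of each instance $\MI{\formfont{PosDet}(\const\Prop,\const\Opp,\const n)}$. This is where the semantic transfer of $\MI{-}$ enters: since every instance of $(\PosDet)$ holds in the $\FSO$-standard model $\Std$ (as recorded in~\S\ref{sec:posdet}, the fact feeding Proposition~\ref{prop:games:std:cor}), Proposition~\ref{prop:cons:mso:cons}(\ref{prop:cons:mso:cons:fso:mso:std}) gives $\Std \models \MI{\formfont{PosDet}(\const\Prop,\const\Opp,\const n)}$. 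Thus all axioms of $\MSO + \MI{\PosDet}$ hold in $\Std$, and a routine induction on derivations yields soundness. With soundness in hand I would argue exactly as in the $\FSO$ case, now invoking the $\MSO$-analogue of Theorem~\ref{thm:main} in place of Theorem~\ref{thm:main}: if $\Std \models \varphi$ but $\MSO + \MI{\PosDet} \thesis \lnot\varphi$, then soundness forces $\Std \models \lnot\varphi$, a contradiction.

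The main obstacle is the soundness of $\MSO + \MI{\PosDet}$, and within it the validity of the $\MI{\PosDet}$ instances in $\Std$; everything else is the bookkeeping of the soundness-plus-completeness pattern. That validity is not immediate from the $\MSO$ side alone, since the $\MI{\PosDet}$ instances are opaque $\MSO$-formulae whose combinatorial meaning is obscured by the translation; the only way I see to obtain it is to transport the standard-model validity of $(\PosDet)$ across $\MI{-}$ via Proposition~\ref{prop:cons:mso:cons}, exactly as above.
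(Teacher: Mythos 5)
Your proof is correct and is essentially the paper's own (largely implicit) argument: for $\FSO$ it combines soundness (Proposition~\ref{prop:games:std:cor}) with the decide-every-closed-formula property of Theorem~\ref{thm:main}, and for $\MSO$ it does the same with the corollary preceding Corollary~\ref{cor:compl:std}, after establishing soundness of $\MSO + \MI{\PosDet}$ in $\Std$ by transporting the standard-model validity of the closed instances of $(\PosDet)$ across $\MI{-}$ via Proposition~\ref{prop:cons:mso:cons}(\ref{prop:cons:mso:cons:fso:mso:std}). The $\MSO$-soundness step you isolate is exactly the detail the paper leaves unstated, and your way of obtaining it is the intended one given the paper's machinery.
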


\begin{rem}
\label{rem:compl:rec}
Note that it follows from Remark~\ref{rem:ax:hf:rec}
that Theorem~\ref{thm:main} together with Corollary~\ref{cor:compl:std}
implies the decidability of $\FSO$ over its standard model $\Std$.
By Lemma~\ref{lem:cons:fso:mso:std} (see Table~\ref{tab:compl:cons})
we thus obtain a proof of Rabin's Tree Theorem~\cite{rabin69tams},
namely the decidability of $\MSO$ over $\Std$.
However, even if provability in $\FSO$ is semi-recursive,
the axiom set of $\FSO$ is not recursive and the interpretation
of $\HF$-Functions is not computable
(see Remarks~\ref{rem:ax:hf:rec} and~\ref{rem:ax:hf:compl}
in~\S\ref{sec:ax:hf},
as well as Remark~\ref{rem:cons:dec} in~\S\ref{sec:cons:fso:mso}).
We further elaborate on this in~\S\ref{sec:remcompl}.
\end{rem}

\noindent
We will actually deduce Theorem~\ref{thm:main}
via Proposition~\ref{prop:cons:mso:cons},
\eqref{eq:cons:mso:cons:trans}
(see Table~\ref{tab:compl:cons})
from the following.

\begin{thm}
\label{thm:main:msoform}
For each closed formula $\varphi$ of $\MSO$,
\[
\FSO + (\PosDet) \thesis \varphi^\circ
\qquad\text{or}\qquad
\FSO + (\PosDet) \thesis \lnot \varphi^\circ
\]
\end{thm}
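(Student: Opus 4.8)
The plan is to reproduce, entirely inside $\FSO + (\PosDet)$, the usual automata-theoretic decidability proof of $\MSO$ on infinite trees, in two stages. First I would translate $\varphi$ into an $\HF$-closed parity automaton $\At A_\varphi$ whose acceptance is provably equivalent to $\varphi^\circ$; then, since a closed $\varphi$ yields an automaton over the singleton alphabet $\one$, I would decide acceptance by a Büchi--Landweber style argument imported from $\MSO$ on $\omega$-words.

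\emph{Stage 1: from formulae to automata.} By structural induction on $\MSO$-formulae I would assign to each $\psi$ with free monadic variables among $X_1,\dots,X_n$ (free individual variables being encoded as singleton-constrained monadic variables in the usual way) an $\HF$-closed parity automaton $\At A_\psi : \two^n$ such that $\FSO + (\PosDet)$ proves
\[
F \in \Lang(\At A_\psi) \quad\longliff\quad \psi^\circ
\]
under the identification of $F:\two^n$ with the tuple of characteristic Functions $F_{X_1},\dots,F_{X_n}$ assigned to the free variables by $(-)^\circ$. The atomic cases $X_i(t)$, $t\Eq u$, $t\Lt u$ are handled by explicit small automata. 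For $\psi\lor\vartheta$ I would first equalise alphabets with the substitution construction (Lemma~\ref{lem:aut:subst}, composing with projections $\pi_i$) and then apply disjunction (Lemma~\ref{lem:aut:disj}); negation $\lnot\psi$ is obtained by complementation (Theorem~\ref{thm:neg}), and this is precisely where the axiom $(\PosDet)$ is consumed; and $(\exists X)\psi$ is obtained by first converting $\At A_\psi$ into an equivalent non-deterministic automaton via the Simulation Theorem (Theorem~\ref{thm:aut:nd:sim}) and then applying projection (Proposition~\ref{prop:aut:nd:proj}). Remark~\ref{rem:aut:hfclosed} keeps $\HF$-closedness invariant throughout, which is needed both for Simulation and for the import in Stage 2; each inductive step just combines the provable correctness of the relevant construction with the induction hypothesis, so the displayed equivalence propagates routinely.

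\emph{Stage 2: deciding the singleton-alphabet automaton.} For closed $\varphi$ we have $n=0$, so, writing $\At A \deq \At A_\varphi : \one$, we get $\FSO + (\PosDet) \thesis \varphi^\circ \longliff (\star \in \Lang(\At A))$, where $\star$ is the unique $\star:\one$. Since $\trans_{\At A}(q,\star(x))$ does not depend on the tree position $x$, the acceptance game $\G(\At A,\star)$ is uniform across the tree, and I would compare it with the \emph{finite} parity game $\G_0$ on the concrete $\HF$-set of vertices $Q_{\At A} \cup (Q_{\At A}\times \Pne(\Dir\times Q_{\At A}))$, with edges $q \to (q,\Conj)$ for $\Conj\in\trans_{\At A}(q,\star)$ and $(q,\Conj)\to q'$ for $(d,q')\in\Conj$, coloured by $\col_{\At A}$. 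Being finite, $\G_0$ has in the meta-theory a definite winner with a positional winning strategy $\sigma_0$; say it is $\Prop$ (the $\Opp$-case being symmetric). The statement ``$\sigma_0$ is winning in $\G_0$'' is an $\omega$-regular property of $\G_0$-plays true in $\StdN$, so by completeness of $\MSO$ on $\omega$-words (Theorem~\ref{thm:msow:compl}) and the relativisation Proposition~\ref{prop:bfsos:bfso}, $\FSO$ proves it along every rooted path. Lifting $\sigma_0$ to the uniform $\Prop$-strategy $\sigma$ on $\G(\At A,\star)$ that plays $\sigma_0$'s move at every $(x,q)$, one observes that any infinite play $U$ in $\G(\At A,\star)$ lives along a single rooted path $P$ (the $\Opp$-moves strictly descend the tree while $\Prop$-moves stay put) and that its sequence of states is exactly a $\G_0$-play, so $\Par(\At A,U)$ follows from the imported correctness of $\sigma_0$ along $P$. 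Hence $\FSO\thesis \star\in\Lang(\At A)$, and therefore $\FSO + (\PosDet)\thesis\varphi^\circ$. When $\Opp$ wins $\G_0$, the symmetric argument gives a winning $\Opp$-strategy in $\G(\At A,\star)$, whence Lemma~\ref{lem:games:notbothwin} excludes any winning $\Prop$-strategy, so $\FSO\thesis \lnot(\star\in\Lang(\At A))$ and $\FSO + (\PosDet)\thesis\lnot\varphi^\circ$.

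\emph{Main obstacle.} The delicate point is Stage 2: casting ``$\sigma_0$ wins $\G_0$'' as a statement whose relativisation to paths is delivered by Proposition~\ref{prop:bfsos:bfso}, and then carrying out the lift from the finite game $\G_0$ to the tree game $\G(\At A,\star)$ --- in particular proving that every infinite tree-play projects, along its unique path, to a $\G_0$-play of the same parity, so that the positional uniform strategy inherits being winning. This projection relies on the path machinery of \S\ref{sec:pos:path} and the infinite-play lemmas of \S\ref{sec:games:plays}, and it is here that the inherently positional character of $\FSO$-definable strategies is both the enabling feature and the main source of bookkeeping. Stage~1, by contrast, is conceptually routine given the constructions of \S\ref{sec:aut}--\S\ref{sec:sim}, its only non-trivial ingredients being the use of $(\PosDet)$ through complementation and the insertion of Simulation before every projection.
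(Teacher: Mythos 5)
Your proposal is correct and follows essentially the same route as the paper: Stage 1 matches the paper's translation of (syntactically preprocessed, individual-free) formulae to $\HF$-closed parity automata via substitution, disjunction, complementation under $(\PosDet)$, and Simulation-plus-projection (Proposition~\ref{prop:compl:aut}); Stage 2 matches the paper's "reduced parity games" argument (\S\ref{sec:compl:red}), where your finite game $\G_0$ is exactly the paper's reduced game, the meta-theoretically determined positional winner is imported through completeness of $\MSO$ on $\omega$-words and Proposition~\ref{prop:bfsos:bfso} (Propositions~\ref{prop:compl:red:bl} and~\ref{prop:compl:red:omega}), and the lift of the uniform strategy with projection of plays onto rooted paths is Theorem~\ref{thm:compl:red}, with Lemma~\ref{lem:games:notbothwin} closing the $\Opp$ case just as you describe.
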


\noindent
The proof of Theorem~\ref{thm:main:msoform}
proceeds as expected via a translation of $\MSO$-formulae
to automata.
As usual, such translations are easier to define when one starts from a version
of $\MSO$ with a purely relational and individual-free language.
We perform a translation of $\MSO$ to such a language in~\S\ref{sec:compl:synt}.
Then, the translation of formulae to automata is presented in~\S\ref{sec:compl:aut}.
It relies on the constructions of~\S\ref{sec:aut}.
We thus arrive at Proposition~\ref{prop:compl:aut},
namely that
for each closed formula $\varphi$ of $\MSO$
there is an $\HF$-closed parity automaton $\At A$ over the
singleton alphabet $\one$
such that
\[
\FSO + (\PosDet)\thesis\quad
\varphi^\circ ~~\longliff~~
(\exists F:\one)
\big(F \in \Lang(\At A) \big)
\]

\noindent
In order to obtain Theorem~\ref{thm:main:msoform},
it remains to show that 
$\FSO$ actually \emph{decides} the emptiness of such automata:
\[
\FSO \thesis (\exists F:\one)\big(F \in \Lang(\At A))
\qquad\text{or}\qquad
\FSO \thesis \lnot (\exists F:\one)\big(F \in \Lang(\At A))
\]

\noindent
This is Proposition~\ref{prop:compl:aut:dec}.
Its proof relies on the fact that the acceptance games of $(\At A:\one)$
are actually generated from closed $\HF$-Sets.
We call such games \emph{reduced parity games}.
Section~\ref{sec:compl:red} is devoted to defining reduced parity games
and to showing that $\FSO$ decides winning for them
(Theorem~\ref{thm:compl:red}).
%
This essentially amounts to
a version of the Büchi-Landweber Theorem~\cite{bl69tams}
(see also \eg~\cite{thomas97handbook,pp04book}),
the effective determinacy 
of parity games on finite graphs,
which is obtained thanks to the completeness of $\FSOW$ (\S\ref{sec:msow}).
Theorem~\ref{thm:compl:red} then follows from the lifting of $\FSOW$
to the paths of $\FSO$ (Proposition~\ref{prop:bfsos:bfso}).

\subsection{Restricted Languages for $\MSOD$}
\label{sec:compl:synt}

For the translation of formulae to automata, it is
useful and customary to work with formulae in a slightly different syntax,
based on a purely relational, individual-free vocabulary.

\subsubsection{Restriction to a Relational Language}
\label{sec:compl:synt:rel}

We first restrict to a purely relational vocabulary,
based on the defined formulae
\begin{equation}
\tag{for each $d \in \Dir$}
\FSucc_d(x,y) \quad\deq\quad (\Succ_d(x) \Eq y)
\end{equation}

\noindent
The \emph{relational formulae} $\varphi,\psi \in \F^R_\Dir$
are built from atomic formulae $X y$ and $\FSucc_d(x,y)$
by means of $\lnot$, $\lor$, $\exists x$ and $\exists X$.
To each $\MSO$-formula $\varphi \in \F$
we associate a formula $\varphi^R$ as follows.
For $t$ a term of $\MSO$,
define the formula $(z \eqcirc t)$ by structural induction on $t$:
\[
\begin{array}{r !{\quad\deq\quad} l}
  (z \eqcirc y)
& (z \Eq y)
\\
  (z \eqcirc \Root)
& \lnot (\exists z') \bigdisj_{d \in \Dir} \FSucc_d(z',z)
\\
  (z \eqcirc \Succ_d(t))
& (\exists z')\big( z' \eqcirc t \land \FSucc_d(z',z) \big)
\end{array}
\]
Note that 
\[
\MSO\thesis\quad (z \eqcirc t) ~~\liff~~ (z \Eq t)
\]

\noindent
Then, $\varphi^R$ is obtained from $\varphi$
by replacing each atomic formula $X t$, where $t$ is not a variable,
by $(\exists z) [(z \eqcirc t) \land X z]$, where $z$ is a fresh variable.

\begin{lem}
\label{lem:sketch:rel}
For every $\MSO$-formula $\varphi$, we have
$\MSO \thesis \varphi \liff \varphi^R$.
\end{lem}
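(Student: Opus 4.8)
The plan is to proceed by structural induction on the $\MSO$-formula $\varphi$, showing that $\MSO \thesis \varphi \liff \varphi^R$. The translation $(-)^R$ is defined compositionally, commuting with $\lnot$, $\lor$, $\exists x$ and $\exists X$, so the only genuine work lies in the atomic cases; everything else follows by straightforward congruence reasoning in the underlying logic. The key auxiliary fact I would establish first is the displayed claim that $\MSO \thesis (z \eqcirc t) \liff (z \Eq t)$ for every individual term $t$, since the entire difficulty of the atomic case is concentrated there.

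First I would prove $\MSO \thesis (z \eqcirc t) \liff (z \Eq t)$ by structural induction on $t$. The base case $t = y$ is immediate by definition. For $t = \Root$, the equivalence $(z \eqcirc \Root) \liff (z \Eq \Root)$ amounts to showing, using the Tree Axioms of Figure~\ref{fig:ded:tree} together with Proposition~\ref{prop:ax:tree}, that $z$ has no predecessor if and only if $z \Eq \Root$: one direction uses Proposition~\ref{prop:ax:tree}.(\ref{prop:ax:tree:rootorsucc}), which states that every element is either the root or a successor, while the other direction uses $\lnot(\exists x)(\Succ_d(x) \Eq \Root)$. For $t = \Succ_d(t')$, the inductive hypothesis gives $(z' \eqcirc t') \liff (z' \Eq t')$, so $(z \eqcirc \Succ_d(t'))$ unfolds to $(\exists z')(z' \Eq t' \land \FSucc_d(z',z))$, which by the equality axioms and the definition $\FSucc_d(z',z) \deq (\Succ_d(z') \Eq z)$ is provably equivalent to $\Succ_d(t') \Eq z$, i.e.\@ to $(z \Eq \Succ_d(t'))$.

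Next I would treat the atomic case of $(-)^R$ itself. For an atomic formula $X t$ with $t$ not a variable, its translation is $(\exists z)[(z \eqcirc t) \land X z]$. Using the auxiliary fact just proved, this is provably equivalent to $(\exists z)[(z \Eq t) \land X z]$, which by the equality axioms on Individuals (substitution of equals) is in turn equivalent to $X t$. The cases where $t$ is already a variable, and the atomic formula $t \Eq u$, are left unchanged by $(-)^R$ and so are handled trivially.

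The inductive cases for the connectives and quantifiers are then routine: since $(-)^R$ commutes with each formula former, the induction hypotheses combine by congruence of the biconditional under $\lnot$, $\lor$, $\exists x$ and $\exists X$, all of which are derivable in the deduction system of $\MSO$ (cf.\@ Remark~\ref{rem:mso:ded}). The only point requiring a little care is the freshness of the variable $z$ introduced in the atomic translation, which guarantees no unintended capture when the translation is applied beneath quantifiers. I expect the main obstacle to be not any deep argument but rather the base case $t = \Root$, where one must carefully marshal Proposition~\ref{prop:ax:tree} to characterize the root as the unique predecessor-free node; the rest of the proof is essentially bookkeeping.
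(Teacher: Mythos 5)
Your proof is correct and follows essentially the same route as the paper: the paper leaves the lemma's proof implicit, resting entirely on the noted fact that $\MSO \thesis (z \eqcirc t) \liff (z \Eq t)$ (which you establish by induction on $t$ using the Tree Axioms and Proposition~\ref{prop:ax:tree}, valid in $\MSOD$ by the remark following Theorem~\ref{thm:ax:wfind}), after which the equivalence $\varphi \liff \varphi^R$ follows by congruence through the connectives and quantifiers exactly as you describe.
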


\subsubsection{Restriction to an Individual-Free Language}
\label{sec:compl:synt:if}

The next step is to get rid of individual quantifiers.
Consider the defined formulae:
\[
\begin{array}{r !{\quad\deq\quad} l}
(X \Sle Y) & (\forall x) (X x ~~\limp~~ Y x)
\\
\FSucc_d(X,Y) &
(\exists x) (\exists y) \left[
X x ~~\land~~ Y y ~~\land~~ \FSucc_d(x,y)
\right]
\end{array}
\]
The \emph{individual-free formulae}
$\varphi,\psi \in \IF\F_\Dir$
are built from atomic formulae
$(X \Sle Y)$
and $\FSucc_d(X,Y)$
by means of negation, disjunction and
\emph{second-order monadic} quantification $\exists X$ only.
Let $\varphi \in \F^R$
with free variables among $x_1,\dots,x_p,Y_1,\dots,Y_q$.
We inductively associate to $\varphi$ a formula $\IF\varphi \in \IF\F$
with free variables among
$X_1,\dots,X_p,Y_1,\dots,Y_q$
as follows.
Let
\[
\IF{((\exists x_{p+1}) \varphi)}
\quad\deq\quad
(\exists X_{p+1}) \left[\Sing(X_{p+1}) ~~\land~~ \IF\varphi \right]
\]

\noindent
where
\[
\begin{array}{r !{\quad\deq\quad} l}
  \Sing(X)
& \lnot(X \Eq \emptyset) 
  ~~\land~~
  (\forall Y)
  \left[
    Y \Sle X ~~\limp~~ \left( Y \Eq \emptyset ~~\lor~~ X \Sle Y \right)
  \right]
\\
  (X \Eq \emptyset) 
& (\forall Y)(X \Sle Y)
\end{array}
\]

\noindent
The other inductive cases are given as follows:
\[
\begin{array}{r !{\quad\deq\quad}l !{\qquad} r !{\quad\deq\quad} l}
  \IF{(Y_j(x_i))}
& X_i \Sle Y_j
& \IF{(\FSucc_d(x_i,x_j))}
& \FSucc_d(X_i,X_j)
\\
  \IF{(\lnot \varphi)}
& \lnot \IF\varphi
& \IF{(\varphi \lor \psi)}
& \IF\varphi ~\lor~ \IF\psi
\\
  \IF{((\exists Y_{q+1}) \varphi)}
& (\exists Y_{q+1}) \IF\varphi
& \multicolumn{2}{c}{}
\end{array}
\]


\begin{lem}
For every formula $\varphi \in \F^R$
with free variables among $\vec x, \vec Y$,
we have
\[
\vec{X x} \,,\, \Sing(\vec X) \thesis_{\MSO}\quad
\varphi ~~\liff~~ \IF\varphi
\]
\end{lem}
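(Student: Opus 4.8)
The plan is to proceed by structural induction on $\varphi \in \F^R$, reading the statement as quantified over the free-variable context, so that the induction hypothesis remains available for subformulas carrying additional free individual variables. Throughout, the hypotheses $\vec{Xx}$ (that is, $X_i(x_i)$ for each $i$) and $\Sing(\vec X)$ (the conjunction of the $\Sing(X_i)$) record that each coding variable $X_i$ denotes the singleton $\{x_i\}$. Before starting the induction I would extract two consequences of $\Sing(X)$ provable in $\MSO$ from its definition: nonemptiness $(\exists x)\,Xx$, immediate from $\lnot(X \Eq \emptyset)$; and uniqueness $(\forall x)(\forall y)(Xx \limp Xy \limp x \Eq y)$, obtained by instantiating the atom clause of $\Sing(X)$ at a singleton supplied by Comprehension (if $Xa$ and $Xb$, the set $Y$ with $(\forall z)(Yz \liff z \Eq a)$ is a nonempty subset of $X$, hence by the atom clause equals $X$, so $b \Eq a$). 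Together these say that $X$ has exactly one element, which under $Xx_i$ must be $x_i$.

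For the base cases I would argue directly from singleton-ness. For $Y_j(x_i)$, whose translation is $X_i \Sle Y_j$: under $X_i(x_i)$ and $\Sing(X_i)$ the set $X_i$ is exactly $\{x_i\}$, so $X_i \Sle Y_j$ holds iff $Y_j(x_i)$, using nonemptiness for one direction and instantiating $\Sle$ at $x_i$ for the other. For $\FSucc_d(x_i,x_j)$, whose translation unfolds to $(\exists x)(\exists y)[X_i x \land X_j y \land \FSucc_d(x,y)]$: the forward direction is witnessed by $x_i$ and $x_j$; the backward direction extracts witnesses $x,y$ and applies uniqueness to $X_i$ and $X_j$ (with $X_i x_i$ and $X_j x_j$ in scope) to force $x \Eq x_i$ and $y \Eq x_j$, hence $\FSucc_d(x_i,x_j)$.

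The propositional cases ($\lnot$, $\lor$) are immediate from the induction hypotheses, since the coding context $\vec{Xx},\Sing(\vec X)$ is unchanged; and the monadic quantifier case $\exists Y_{q+1}$ is equally direct, as that context does not mention $Y_{q+1}$, so I would apply the induction hypothesis under the bound $Y_{q+1}$ and commute the unchanged existential. The main point, and the step I expect to require the most care, is the individual quantifier $\exists x_{p+1}$, whose translation is $(\exists X_{p+1})[\Sing(X_{p+1}) \land \IF\varphi]$. Here the induction hypothesis applies to $\varphi$ in the \emph{extended} context $\vec{Xx},\, X_{p+1}x_{p+1},\, \Sing(\vec X),\, \Sing(X_{p+1})$, yielding $\varphi \liff \IF\varphi$ there.

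For the left-to-right direction I would take a witness $x_{p+1}$ satisfying $\varphi$, use Comprehension to form the set $X_{p+1}$ with $(\forall y)(X_{p+1}y \liff y \Eq x_{p+1})$, verify that it satisfies $\Sing(X_{p+1})$ and $X_{p+1}x_{p+1}$, and then invoke the induction hypothesis to pass from $\varphi$ to $\IF\varphi$, producing a witness for $(\exists X_{p+1})[\Sing(X_{p+1}) \land \IF\varphi]$. For the right-to-left direction I would take $X_{p+1}$ with $\Sing(X_{p+1}) \land \IF\varphi$, use nonemptiness to extract some $x_{p+1}$ with $X_{p+1}x_{p+1}$, apply the induction hypothesis (now with $X_{p+1}x_{p+1}$ and $\Sing(X_{p+1})$ available) to obtain $\varphi$, and existentially generalize. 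The only genuine obstacle is bookkeeping: stating the induction with the parametrized context so that the fresh hypotheses $X_{p+1}x_{p+1}$ and $\Sing(X_{p+1})$ are precisely what is in scope when the induction hypothesis is invoked, and checking that Comprehension indeed supplies a set provably satisfying $\Sing$.
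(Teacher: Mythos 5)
Your proof is correct and is exactly the routine structural induction the paper intends here --- the paper itself states this lemma without giving a proof, and your treatment (deriving inhabitation and uniqueness from $\Sing$, handling the atoms via the singleton hypotheses, and using Comprehension to build the singleton witness in the $\exists x_{p+1}$ case while threading the extended context through the induction hypothesis) is the standard argument. One cosmetic slip, not a gap: in the atomic case $Y_j(x_i)$, the direction $Y_j(x_i) \limp X_i \Sle Y_j$ uses uniqueness of the element of $X_i$ rather than nonemptiness (which is anyway supplied by the hypothesis $X_i(x_i)$), but this is subsumed by your observation that $X_i$ is provably exactly $\{x_i\}$.
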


\noindent
By composing the translations
$(-)^R : \F \to \F^R$
and
$\IF{(-)} : \F^R \to \IF\F$,
we obtain:

\begin{cor}
\label{cor:compl:restrsynt}
For every closed $\MSO$-formula $\varphi$,
there is a closed formula $\psi \in \IF\F$
such that $\MSO \thesis \varphi \liff \psi$.
\end{cor}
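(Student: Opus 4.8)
The plan is to obtain Corollary~\ref{cor:compl:restrsynt} by simply composing the two preceding translations and tracking what happens to free variables and to closedness. First I would recall that Lemma~\ref{lem:sketch:rel} gives, for an arbitrary $\MSO$-formula $\varphi$, a relational formula $\varphi^R \in \F^R_\Dir$ with $\MSO \thesis \varphi \liff \varphi^R$; crucially, the construction of $\varphi^R$ only replaces atoms $X t$ (for $t$ a non-variable term) by $(\exists z)[(z \eqcirc t) \land X z]$, so it introduces no new \emph{free} variables and preserves the property of being closed. Thus for a closed $\varphi$ we get a closed $\varphi^R$ with $\MSO \thesis \varphi \liff \varphi^R$.

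Next I would apply the individual-free translation $\IF{(-)} : \F^R \to \IF\F$ to $\varphi^R$. Here one must be slightly careful about the statement of the relevant lemma: it is formulated for formulae with free individual variables $\vec x$ and free monadic variables $\vec Y$, and asserts $\vec{X x}, \Sing(\vec X) \thesis_{\MSO} \varphi \liff \IF\varphi$, where the $\vec X$ are fresh monadic variables encoding the singletons $\{x_i\}$. When $\varphi^R$ is \emph{closed}, the list $\vec x$ is empty, the hypotheses $\vec{X x}$ and $\Sing(\vec X)$ disappear, and the lemma collapses to the unconditional statement $\MSO \thesis \varphi^R \liff \IF{(\varphi^R)}$ with $\IF{(\varphi^R)} \in \IF\F$ a closed formula. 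So I would set $\psi \deq \IF{(\varphi^R)}$.

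Finally I would combine the two equivalences by transitivity of $\liff$ inside $\MSO$: from $\MSO \thesis \varphi \liff \varphi^R$ and $\MSO \thesis \varphi^R \liff \psi$ we conclude $\MSO \thesis \varphi \liff \psi$, with $\psi$ a closed individual-free formula, as required. The only genuine points needing attention — and what I would regard as the main (minor) obstacle — are the bookkeeping on free variables: verifying that $(-)^R$ really introduces no free individual variables (the bound $z$ in the rewriting of atoms is fresh), and correctly specializing the individual-free lemma to the closed case so that the $\Sing$-hypotheses vanish. Everything else is routine transitivity of provable biconditionals, and both auxiliary lemmas are quoted as already established.
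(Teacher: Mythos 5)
Your proposal is correct and follows exactly the paper's own route: the paper obtains Corollary~\ref{cor:compl:restrsynt} precisely by composing the translations $(-)^R : \F \to \F^R$ (Lemma~\ref{lem:sketch:rel}) and $\IF{(-)} : \F^R \to \IF\F$, noting that for a closed formula the hypotheses $\vec{X x}, \Sing(\vec X)$ of the individual-free lemma are vacuous. Your additional bookkeeping on preservation of closedness and freshness of the bound variable $z$ is exactly the (routine) verification the paper leaves implicit.
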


\subsection{From Formulae to Automata}
\label{sec:compl:aut}

\noindent
We are now going to associate to each formula $\varphi \in \IF\F$
with free variables among $X_1,\dots,X_p$
an $\HF$-closed parity automaton $\At A(\varphi) : \two^p$
such that
\[
\FSO + (\PosDet)
\thesis\quad
(\forall F_{X_1} : \two)\dots(\forall F_{X_p}:\two)
\Big(
\pair{F_{X_1},\dots,F_{X_p}} \in \Lang(\At A(\varphi))
~~\liff~~
\varphi^\circ
\Big)
\]

\noindent
Note that the correctness of $\At A(\varphi)$ \wrt\@ $\varphi$
is proved in $\FSO$ using the translation
$(-)^\circ : \MSO \to \FSO$
of~\S\ref{sec:cons:mso:fso}.
Recall that $(-)^\circ$ replaces each monadic variable $X_i$ of $\varphi$
by a Function variable $(F_{X_i} : \two)$.
The construction of $\At A(\varphi)$ from $\varphi$
is done by induction on $\varphi$
using the operations on automata devised in~\S\ref{sec:aut}
(see Table~\ref{tab:aut:op}).
The base cases are provided by
the automata $\At A(X_i \Sle X_j)$ and $\At A(\FSucc_d(X_i,X_j))$
discussed in~\S\ref{sec:compl:aut:at} below
for the atomic formulae of $\IF\F$.
The inductive cases are performed as follows,
where we implicitly apply substitutions (cf.~\S\ref{sec:aut:subst})
when necessary:
\[
\begin{array}{r !{\quad\deq\quad} l !{\qquad} l}
  \At A (\varphi \lor \psi)
& \At A(\varphi) \oplus \At A(\psi)
& \text{(Lemma~\ref{lem:aut:disj})}
\\
  \At A(\lnot \varphi)
& \aneg \At A(\varphi)
& \text{(Theorem~\ref{thm:neg})}
\\
  \At A((\exists X_{p+1})\varphi)
& \exists_\two \ND(\At A(\varphi))
& \text{(Proposition~\ref{prop:aut:nd:proj} ~\&~ Theorem~\ref{thm:aut:nd:sim})}
\end{array}
\]

\noindent
In particular, if $\varphi$ is closed then $\At A (\varphi)$
is an automaton over the singleton alphabet $\one$,
whence by Corollary~\ref{cor:compl:restrsynt}
we have:

\begin{prop}
\label{prop:compl:aut}
For each closed formula $\varphi$ of $\MSO$
there is an $\HF$-closed parity automaton $(\At A : \one)$
such that
\[
\FSO + (\PosDet)\thesis\quad
\varphi^\circ ~~\liff~~
(\exists F:\one)\big(F \in \Lang(\At A) \big)
\]
\end{prop}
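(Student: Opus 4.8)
The plan is to proceed by induction on the structure of the individual-free formula $\varphi \in \IF\F$, building the automaton $\At A(\varphi) : \two^p$ while simultaneously carrying the correctness statement as an induction hypothesis. The statement to establish at each stage is that $\FSO + (\PosDet)$ proves
\[
(\forall F_{X_1}:\two)\dots(\forall F_{X_p}:\two)
\Big(
\pair{F_{X_1},\dots,F_{X_p}} \in \Lang(\At A(\varphi))
~~\liff~~
\varphi^\circ
\Big)
\]
together with the external bookkeeping that $\At A(\varphi)$ is indeed $\HF$-closed. First I would note that the inductive shape of the construction is already dictated by the clauses displayed just before the statement, so the real content is verifying correctness against the translation $(-)^\circ : \MSO \to \FSO$ of~\S\ref{sec:cons}.

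For the base cases I would treat the two atomic forms $X_i \Sle X_j$ and $\FSucc_d(X_i,X_j)$ directly, constructing small explicit $\HF$-closed parity automata $\At A(X_i \Sle X_j) : \two^p$ and $\At A(\FSucc_d(X_i,X_j)) : \two^p$ (as announced in~\S\ref{sec:compl:aut:at}). For each I would exhibit the states, initial state, transition function and coloring as concrete $\HF$-terms and check in $\FSO$ that $\Prop$ has a winning strategy in the acceptance game precisely when the $(-)^\circ$-translation of the atomic formula holds of the input tree; this is a finite, essentially combinatorial verification using Definition~\ref{def:aut:lang} and the formalized parity condition. For the three inductive cases I would invoke the already-established correctness of the operations: Lemma~\ref{lem:aut:disj} for disjunction, Theorem~\ref{thm:neg} for negation (this is exactly where $(\PosDet)$ enters, via complementation), and the combination of the Simulation Theorem~\ref{thm:aut:nd:sim} with the projection Proposition~\ref{prop:aut:nd:proj} for existential quantification. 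In each case the induction hypothesis feeds the correctness lemma, and since $(-)^\circ$ commutes with $\lor$, $\lnot$ and (after translating $\exists X_{p+1}$ as $\exists F_{X_{p+1}}:\two$) with existential quantification, the equivalence propagates; the $\HF$-closedness is preserved by Remark~\ref{rem:aut:hfclosed}.

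A point requiring care, rather than deep difficulty, is the handling of the input alphabet and of implicit substitutions. When combining subautomata whose formulae have differing sets of free variables, I would use the substitution construction of~\S\ref{sec:aut:subst} (projection $\HF$-Functions $\pi^n_i$) to align all automata to a common alphabet $\two^p$, and invoke Lemma~\ref{lem:aut:subst} to transfer correctness across this alignment; the pairing $\pair{F_{X_1},\dots,F_{X_p}}$ must be threaded consistently. The main obstacle will be the existential case: the projection construction only computes the correct language for \emph{non-deterministic} automata, so one must first apply $\ND(-)$ to obtain an equivalent non-deterministic automaton (Theorem~\ref{thm:aut:nd:sim}, whose proof is deferred to~\S\ref{sec:sim}), then project (Proposition~\ref{prop:aut:nd:proj}), and only then read off the equivalence with $(\exists F_{X_{p+1}}:\two)\,\varphi^\circ$. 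This is where both the strongest machinery ($\HF$-closedness is needed for the Simulation Theorem, as it imports McNaughton's Theorem via~\S\ref{sec:msow}) and the most delicate correctness reasoning converge. Finally, for a closed $\varphi$ the alphabet $\two^p$ degenerates to the singleton $\one$, and composing with Corollary~\ref{cor:compl:restrsynt} (which reduces an arbitrary closed $\MSO$-formula to an equivalent closed $\psi \in \IF\F$) yields the stated automaton $\At A : \one$ and the desired equivalence in $\FSO + (\PosDet)$.
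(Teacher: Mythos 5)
Your proposal is correct and follows essentially the same route as the paper: induction over the individual-free syntax $\IF\F$, the explicit atomic automata of~\S\ref{sec:compl:aut:at} as base cases, the operations $\oplus$ (Lemma~\ref{lem:aut:disj}), $\aneg$ (Theorem~\ref{thm:neg}, where $(\PosDet)$ enters) and $\exists_\two \ND(-)$ (Theorem~\ref{thm:aut:nd:sim} with Proposition~\ref{prop:aut:nd:proj}) for the inductive cases, with substitution automata aligning alphabets, and finally Corollary~\ref{cor:compl:restrsynt} to reduce an arbitrary closed $\MSO$-formula to an individual-free one over the singleton alphabet $\one$. Your flagged subtleties (threading the pairing, applying $\ND(-)$ before projection, preserving $\HF$-closedness via Remark~\ref{rem:aut:hfclosed}) are exactly the points the paper itself relies on.
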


In order to obtain Theorem~\ref{thm:main:msoform} from Proposition~\ref{prop:compl:aut},
it remains to show that $\FSO$ actually \emph{decides} the emptiness
of $\Lang(\At A)$ for an $\HF$-closed parity automaton $\At A$ over the singleton
alphabet $\one$.

\begin{prop}
\label{prop:compl:aut:dec}
Given an $\HF$-closed parity automaton $(\At A:\one)$,
\[
\FSO \thesis (\exists F:\one)\big(F \in \Lang(\At A))
\qquad\text{or}\qquad
\FSO \thesis \lnot (\exists F:\one)\big(F \in \Lang(\At A))
\]
\end{prop}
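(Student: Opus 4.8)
The plan is to reduce the emptiness problem for $(\At A:\one)$ to deciding the winner of the acceptance game $\G(\At A,F)$ for the \emph{unique} tree $F:\one$, and then to show that this game is a ``reduced parity game'' whose winner $\FSO$ decides. Since $\Sigma = \one = \{0\}$ is a singleton, there is exactly one Function $F:\one$, namely the constant map $x \mapsto 0$, and moreover $\FSO$ proves $(\exists F:\one)(\forall x)(F(x) \Eq 0)$ together with its uniqueness. Hence the statement $(\exists F:\one)(F \in \Lang(\At A))$ is provably equivalent in $\FSO$ to $F_0 \in \Lang(\At A)$ for this canonical $F_0$, which by Definition~\ref{def:aut:lang} asserts that $\Prop$ has a winning strategy from $(\Root,\init q_{\At A})$ in $\G(\At A,F_0)$. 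So the whole problem becomes: decide, within $\FSO$, which player wins this single parity acceptance game.

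The key observation driving the argument is that because $\At A$ is $\HF$-\emph{closed} and the alphabet is $\one$, the acceptance game $\G(\At A,F_0)$ has a transition structure that depends on the tree position $x$ only through the successor relation, and otherwise is entirely governed by the closed $\HF$-sets $Q_{\At A}$, $\trans_{\At A}$, $\col_{\At A}$, $n_{\At A}$. In other words, every tree node carries an isomorphic copy of the same finite ``local'' game arena. This is precisely the notion of a \emph{reduced parity game} alluded to in~\S\ref{sec:compl:intro}: a parity game superposed on the $\Dir$-tree whose combinatorics are generated by closed $\HF$-data. First I would make this reduction precise, checking in $\FSO$ that $\G(\At A,F_0)$ is (provably equivalent to) such a reduced game.

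Then the decisive step is to invoke the completeness of $\MSO$ over $\omega$-words, imported into $\FSO$ along the paths via Proposition~\ref{prop:bfsos:bfso}. The idea is the finite-graph Büchi--Landweber theorem: for a parity game played on a \emph{finite} graph, there is an $\MSO$-over-$\omega$-words sentence that decides, for each vertex, which player wins, and $\MSOW$ \emph{proves} the correct alternative (Theorem~\ref{thm:msow:compl}). Because the arena at each tree node is a fixed finite object determined by the closed data of $\At A$, the winner from the initial position is determined by the behaviour of plays, which are infinite paths; relativizing the winning-condition analysis to a rooted tree path $P$ via $(\forall P:\two)(\TPath(P) \limp \dots)$ and applying Proposition~\ref{prop:bfsos:bfso}, I would obtain that $\FSO$ proves the appropriate disjunct of ``$\Prop$ wins'' versus ``$\Opp$ wins''. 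Combining this with the positional determinacy already available (so that a winning strategy for the losing-claim player contradicts one for the winning player, cf.\ Lemma~\ref{lem:games:notbothwin}) yields that $\FSO$ proves exactly one of the two alternatives, hence decides $(\exists F:\one)(F \in \Lang(\At A))$.

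\textbf{The main obstacle} I anticipate is the faithful encoding of the finite-graph parity game decision as an $\FSOW$/$\MSOW$ statement and its transfer to the tree paths. The subtlety is that our strategies are \emph{positional by construction}, and that the winner of a parity game on a finite graph is genuinely a statement about \emph{infinite} plays, which in our setting are infinite paths of $\gle_\G$ rather than $\omega$-words in the naive sense. Bridging this requires carefully setting up the reduced-game formalism of~\S\ref{sec:compl:red} (Theorem~\ref{thm:compl:red}), so that deciding winning reduces to a sentence provable in $\FSOW$ and then liftable by Proposition~\ref{prop:bfsos:bfso}. Once that infrastructure is in place, the argument for Proposition~\ref{prop:compl:aut:dec} itself is short: it is essentially the observation that emptiness of $\Lang(\At A)$ for a singleton alphabet \emph{is} the winning problem for a reduced parity game, together with an application of Theorem~\ref{thm:compl:red}.
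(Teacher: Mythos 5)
Your proposal is correct and follows essentially the same route as the paper: observe that for a singleton alphabet the acceptance game of an $\HF$-closed parity automaton is generated entirely from closed $\HF$-sets (its edge relations and winning condition do not depend on $F$), so it is a reduced parity game in the sense of~\S\ref{sec:compl:red}, and then conclude by Theorem~\ref{thm:compl:red}, whose proof is exactly the Büchi--Landweber-style argument you sketch ($\FSOW$ decides winning via completeness over $\omega$-words, lifted to tree paths by Proposition~\ref{prop:bfsos:bfso}). Your additional appeal to Lemma~\ref{lem:games:notbothwin} to pass from ``$\Opp$ wins'' to $\lnot(\exists F:\one)(F \in \Lang(\At A))$ is a legitimate (and implicitly needed) step that the paper leaves tacit.
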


\noindent
Proposition~\ref{prop:compl:aut:dec} is proved in~\S\ref{sec:compl:red} below.


\subsubsection{Automata for Atomic Formulae}
\label{sec:compl:aut:at}

We provide $\HF$-closed parity automata for the atomic formulae $(X_1 \Sle X_2)$
and $\FSucc_d(X_1,X_2)$ of the individual-free syntax 
$\IF\F$ of $\MSO$.
\begin{itemize}
\item 
The automaton $\At A(X_1 \Sle X_2)$ over $\two \times \two$
has state set $\Bool= \{\true,\false\}$, with $\true$ initial,
transitions given by
\[
\begin{array}{r !{\quad}c!{\quad} l !{\quad} l}


  (\true, (i,j))
& \longmapsto
& \{\{(d,\false) \st d \in \Dir\}\}
& \text{if $i=1$ and $j = 0$}
\\

  (\true, (i,j))
& \longmapsto
& \{\{(d,\true) \st d \in \Dir\}\}
& \text{otherwise}
\\

  (\false, (-,-))
& \longmapsto
& \{\{(d,\false) \st d \in \Dir\}\}
\end{array}
\]
and coloring $\funto{\col}{\Bool}{\two}$
given by
\[
\col(\true) \quad\deq\quad 0
\qquad\text{and}\qquad
\col(\false) \quad\deq\quad 1
\]

\item 
For $d \in \Dir$,
the automaton $\At A(\FSucc_d(X_1,X_2))$ over $\two \times \two$
has state set $Q_{\FSucc} \deq \Bool + \{\al w\}$,
with $\false$ initial,
transitions given by
\[
\begin{array}{r !{\quad}c!{\quad} l}
  (\false, (0,-))
& \longmapsto
& \{\{(d',\false) \} \st d' \in \Dir \}
\\
  (\false, (1,-))
& \longmapsto
& \{\{(d,\al w)\}\}
\\
  (\al w, (-,1))
& \longmapsto
& \{\{(d',\true) \st d' \in \Dir\}\}
\\
  (\al w, (-,0))
& \longmapsto
& \{\{(d',\false) \} \st d' \in \Dir\}
\\
  (\true, (-,-))
& \longmapsto
& \{\{(d',\true) \st d' \in \Dir\}\}
\end{array}
\]
and with coloring given by
\[
\col(\true) \quad\deq\quad 0
\qquad\qquad
\col(\false) \quad\deq\quad 1
\qquad\qquad
\col(\al w) \quad\deq\quad 0
\]
\end{itemize}

\begin{rem}
Recall from~\S\ref{sec:compl:synt:if}
that the formula $\FSucc_d(X,Y)$ of the individual-free syntax
$\IF\F$ amounts in $\MSO$ to the formula
$(\exists x) (\exists y)\left[X x \land Y y \land y \Eq \Succ_d(x) \right]$.
So the automaton $\At A(\FSucc_d(X,Y))$ only looks \emph{for some}
$x \in X$ and $y \in Y$ such that $y$ is the $d$-successor of $x$,
but is does not check whether $X$ and $Y$ are singletons.
\end{rem}

\begin{lem}
\label{lem:compl:aut:at}
$\FSO$ proves that
\[
\begin{array}{l @{} l !{~~\liff~~} l}
  (\forall F_{X_1}:\two) (\forall F_{X_2}:\two) \Big(
& \pair{F_{X_1},F_{X_2}} \in \Lang(\At A(X_1,X_2))
& (X_1 \Sle X_2)^\circ
  \Big)
\\
  (\forall F_{X_1}:\two) (\forall F_{X_2}:\two) \Big(
& \pair{F_{X_1},F_{X_2}} \in \Lang(\At A(\FSucc_d(X_1,X_2)))
& (\FSucc_d(X_1,X_2))^\circ
  \Big)
\end{array}
\]
\end{lem}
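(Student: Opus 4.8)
The lemma asserts the correctness of the two base-case automata built in \S\ref{sec:compl:aut:at}, namely that for all $F_{X_1},F_{X_2}:\two$ we have $\pair{F_{X_1},F_{X_2}} \in \Lang(\At A(X_1,X_2))$ iff $(X_1 \Sle X_2)^\circ$, and similarly for $\FSucc_d$. The plan is to prove each of the two equivalences directly by reasoning about the acceptance games $\G(\At A,F)$ in $\FSO$, recalling that $F = \pair{F_{X_1},F_{X_2}}$ satisfies $F(x) = (F_{X_1}(x),F_{X_2}(x))$, so that $F(x) = (1,0)$ exactly when $x \in X_1 \setminus X_2$ (under the translation $(-)^\circ$). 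Since both automata are very small and essentially deterministic, $\Prop$ has at most trivial choices, and the content of each infinite play is forced by $F$; this lets me reduce ``$\Prop$ wins'' to an explicit first-order condition on $F$.

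\textbf{The inclusion automaton.} For $\At A(X_1 \Sle X_2)$ over $\two \times \two$ the transitions are deterministic (each $\trans_{\At A}(q,\al a)$ is a singleton set containing a single conjunction), so for every $F$ there is a unique infinite play from $(\Root,\true)$ along each tree direction, and $\Prop$ has no genuine choice. I would argue that the automaton stays in state $\true$ as long as the inclusion is not locally violated and moves permanently to the sink $\false$ the first time it reads a letter $(1,0)$, i.e.\ a position $x$ with $x \in X_1$ and $x \notin X_2$. Using the colouring $\col(\true)=0$, $\col(\false)=1$, the min-parity condition $\Par(\At A,U)$ is satisfied on a play iff the play stays in $\true$ forever, hence iff no position along that path carries the letter $(1,0)$. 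The key step is to show, by the Induction Axiom of $\FSO$ (\S\ref{sec:ax:ind}) along the tree, that the state reached at a position $x$ in any play is $\true$ iff $F(y)\neq(1,0)$ for all $y \Leq x$; once $\false$ is entered it is never left. I would then combine this with the fact that $\Prop$ wins from the root iff \emph{every} infinite play (equivalently every path, since $\Opp$ controls directions) avoids $\false$, which is exactly $(\forall x)\,\lnot(F_{X_1}(x)\Eq 1 \land \lnot(F_{X_2}(x)\Eq 1))$, i.e.\ $(X_1 \Sle X_2)^\circ$. Lemma~\ref{lem:games:infplay:pos} guarantees the relevant plays exist, and the min-parity characterisation is read off from $\Par$ via Remark~\ref{rem:games:parity:gle}.

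\textbf{The successor automaton.} For $\At A(\FSucc_d(X_1,X_2))$ the automaton is an existential ``search'': starting in $\false$, it may, at any position $x$ with $F_{X_1}(x)=1$, commit (in direction $d$) to the intermediate state $\al w$; from $\al w$ it must read a letter with second component $1$ (i.e.\ $F_{X_2}=1$) to pass to the accepting sink $\true$, otherwise it falls back to $\false$. Here $\Prop$ \emph{does} have a real choice, namely whether and where to fire the transition $(\false,(1,-))\mapsto\{\{(d,\al w)\}\}$. With colouring $\col(\false)=1$, $\col(\al w)=\col(\true)=0$, a play is won iff it eventually reaches (and stays in) $\true$, which happens iff $\Prop$ can pick a position $x$ with $F_{X_1}(x)=1$ and $F_{X_2}(\Succ_d(x))=1$. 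The plan is: for the ``if'' direction, given such an $x$, explicitly construct a winning $\Prop$-strategy (by $\HF$-Bounded Choice for Product Types, Theorem~\ref{thm:funto:choice}) that stalls in $\false$ until $x$, fires $\al w$ there, and thereafter stays in $\true$; for the ``only if'' direction, from a winning strategy extract, via the Predecessor Lemma for Infinite Plays (Lemma~\ref{lem:games:predplays}) and the structure of the forced transitions, a witnessing $x$. The correctness matching $(\FSucc_d(X_1,X_2))^\circ$ follows since that formula unfolds (via $(-)^\circ$ and \S\ref{sec:compl:synt:if}) to $(\exists x)(\exists y)\,[F_{X_1}(x)\Eq1 \land F_{X_2}(y)\Eq1 \land y\Eq\Succ_d(x)]$.

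\textbf{Main obstacle.} The routine part is the inclusion automaton, where determinism collapses the game to a single path analysis by tree induction. The genuinely delicate part is the ``only if'' direction for $\FSucc_d$: from the mere existence of a winning $\Prop$-strategy I must produce an actual witness position $x$, which requires arguing that a winning strategy cannot remain in $\false$ forever (as that play has colour $1$ cofinally and is losing) and hence must fire the $\al w$-transition at some reachable position whose $d$-successor carries $F_{X_2}=1$. Formalising ``the strategy fires $\al w$ somewhere successfully'' in the positional, path-restricted setting of $\FSO$ — where I can only reason about infinite plays as sets of game positions linearly ordered by $\edge{+}{}$ — is where the care lies, and I expect to lean on Lemma~\ref{lem:games:infplay:pos}, Lemma~\ref{lem:games:predplays}, and the min-parity reading of $\Par$ (Definition~\ref{def:games:parity}) to pin the witness down.
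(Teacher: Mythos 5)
The paper states Lemma~\ref{lem:compl:aut:at} without any proof, so your proposal can only be judged on its own merits. Your half concerning $\At A(X_1 \Sle X_2)$ is essentially correct: the transition function is deterministic, $\false$ is a sink of odd colour, and the min-parity condition of Definition~\ref{def:games:parity} reduces acceptance of a play to never entering $\false$, which by tree induction matches $(X_1 \Sle X_2)^\circ$. (One slip: the state reached \emph{at} $x$ records the letters at positions $y \Lt x$, not $y \Leq x$; the letter at $x$ only influences the states at its successors.)

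The successor automaton is where your proof breaks, and both failures come from proving the lemma for the automaton you \emph{describe} rather than the one the paper \emph{defines}. First, you say $\Prop$ ``may, at any position $x$ with $F_{X_1}(x)=1$, commit''. But the printed transition is $(\false,(1,-)) \longmapsto \{\{(d,\al w)\}\}$: when the first component of the letter is $1$ this is the \emph{only} conjunction available, so commitment is forced and your ``if''-direction strategy (``stall in $\false$ until $x$, fire $\al w$ there'') cannot be played. This is fatal, not cosmetic: if $\Dir$ contains two directions $d' \neq d$, take $X_1 = \{\Root,\Succ_{d'}(\Root)\}$ and $X_2 = \{\Succ_d(\Succ_{d'}(\Root))\}$. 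Then $(\FSucc_d(X_1,X_2))^\circ$ holds, yet every play is forced at the root into $(\Succ_d(\Root),\al w)$, the check there fails, and the play falls into the subtree under $\Succ_d(\Root)$, which contains no $X_1$- or $X_2$-positions; every play then stays in $\false$ forever and $\Prop$ has no winning strategy. Second, your claim that ``a play is won iff it eventually reaches (and stays in) $\true$'' is false for the printed colouring $\col(\al w)=0$: under min-parity, a play oscillating $\false,\al w,\false,\al w,\dots$ forever sees the even colour $0$ cofinally and is accepting. Taking $F_{X_1} \equiv 1$ and $F_{X_2} \equiv 0$, every play oscillates exactly this way, so the automaton accepts although $(\FSucc_d(X_1,X_2))^\circ$ fails; this destroys your ``only if'' direction, whose extraction of a witness presupposes that characterization.

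Together these observations show the lemma is in fact false for the automaton as literally printed; it becomes true (and your outline then does go through) for the evidently intended automaton, in which $(\false,(1,-))$ also offers the searching conjunctions $\{(d',\false)\}$ for $d' \in \Dir$, so that committing is a genuine choice, and in which $\col(\al w)=1$, so that searching forever is losing. The concrete gap in your proposal is that you asserted the winning-play characterization and the strategy construction without checking them against Definition~\ref{def:games:parity} and the printed transition table; doing that check is precisely what reveals that the definition must be repaired before any proof of the statement can succeed.
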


\subsection{Reduced Parity Games}
\label{sec:compl:red}
The goal of this Section is to prove Proposition~\ref{prop:compl:aut:dec},
namely that for 
an $\HF$-closed parity automaton $\At A$
over the singleton alphabet $\one$,
\[
\FSO \thesis (\exists F:\one)\big(F \in \Lang(\At A))
\qquad\text{or}\qquad
\FSO \thesis \lnot (\exists F:\one)\big(F \in \Lang(\At A))
\]

\noindent
Consider an $\HF$-closed automaton $\At A$ over the singleton alphabet $\one = \{0\}$.
Then for any $(F : \one)$
the game $\G \deq \G(\At A,F)$ has edge relations induced by functions
\begin{equation}
\label{eq:compl:red:edge}
\ep : \PL{\G} \fsoto \Pne(\OL{\G})
\qquad\text{and}\qquad
\eo : \OL{\G} \fsoto \Pne(\Dir \times \PL{\G})
\end{equation}
given (following Remark~\ref{rem:aut:hfclosed})
by
\[
\begin{array}{l !{\quad\text{iff}\quad} l}
  (q',\Conj) \In e_\Prop(q)
& q' \Eq q ~~\land~~ \Conj \in \trans_{\At A}(q,0)
\\
  (d,q') \In e_\Opp(q,\Conj)
& (d,q') \In \Conj
\end{array}
\]

\noindent
So in particular the edge relations of $\G(\At A,F)$
are independent from $F$.
But also, 
since
\[
\PL{\G} ~~\deq~~ Q_{\At A}
\qquad\text{and}\qquad
\OL{\G} ~~\deq~~ Q_{\At A} \times \Pne(\Dir \times Q_{\At A})
\]
the whole game $\G(\At A,F)$
is actually generated from $\HF$-Sets.

In this Section, we discuss games generated from $\HF$-Sets,
that we call \emph{reduced games}.
We show that for reduced parity games, winning can actually be defined 
within $\FSOW$.
Thanks to the completeness of $\FSOW$ \wrt\@ its standard model
(\S\ref{sec:msow}),
this implies that $\FSOW$ itself decides winning in such games.
This essentially amounts to a version of the Büchi-Landweber Theorem~\cite{bl69tams}
using the completeness of $\MSOW$ over its standard model.
Using Proposition~\ref{prop:bfsos:bfso}
we can then lift this result to $\FSO$.

In~\S\ref{sec:compl:red:hf} and~\S\ref{sec:compl:red:omega}
we repeat some material of~\S\ref{sec:games},
but for the slightly different setting of reduced games.
We then obtain that $\FSOW$ decides winning in reduced parity games,
and we lift this to $\FSO$ in Theorem~\ref{thm:compl:red}, \S\ref{sec:compl:red:fso}.
This directly entails Proposition~\ref{prop:compl:aut:dec}.

\subsubsection{Reduced Games as $\HF$-Sets}
\label{sec:compl:red:hf}
The purpose of this Section is to give
adaptations of the notions of~\S\ref{sec:games}
to those parity games which are entirely generated from $\HF$-Sets.
All the formulae of this Section are $\HF$-formulae
in the sense of Definition~\ref{def:ax:hf:form}.
Hence, thanks to the Axioms of $\HF$-Sets 
(Remark~\ref{rem:ax:hf:vomega}, \S\ref{sec:ax:hf})
their closed instances are provable (both in $\FSO$ and $\FSOW$)
if and only if they hold in $V_\omega$.

\begin{defi}[Reduced Games]
\label{def:compl:red:games}
A \emph{reduced game} $G$ 
is given by $\HF$-terms $\Prop,\Opp,\ep,\eo$
which satisfy the following formula
\[
\Game_0(\Prop,\Opp,\ep,\eo) \quad\deq\quad
\Big(
  \Labels(\Prop,\Opp)
~~\land~~
  \ep : \Prop \fsoto \Pne(\Opp)
~~\land~~
  \eo : \Opp \fsoto \Pne(\Dir \times \Prop)
\Big)
\]

\noindent
We often write $\Game_0(G)$ for
$\Game_0(\Prop,\Opp,\ep,\eo)$.
Moreover, when no ambiguity arises, we abbreviate
$G = (\Prop,\Opp,\ep,\eo)$ as
$G = (\Prop,\Opp,\egpo G)$.
\end{defi}

\begin{defi}[Reduced Subgame]
\label{def:compl:red:sub}
We say that $G'=(\Prop',\Opp',\ep',\eo')$
is a \emph{reduced subgame} of $G = (\Prop,\Opp,\ep,\eo)$ whenever the following
formula holds
\[
\Sub_0(G',G)
\quad\deq\quad
\left\{
\begin{array}{l l}
& \Prop' \Eq \Prop ~~\land~~ \Opp' \Eq \Opp
\\
  \land
& (\forall k \in \Prop') \big( \ep'(k) \Sle \ep(k) \big)
\\
  \land
& (\forall \ell \in \Opp') \big( \eo'(\ell) \Sle \eo(\ell) \big)
\end{array}
\right.
\]
\end{defi}

\begin{defi}[Reduced Strategies]
\label{def:compl:red:strat}
Let $G = (\Prop,\Opp,\ep,\eo)$
where $\Prop,\Opp,\ep,\eo$ are $\HF$-variables.
\begin{enumerate}
\item A \emph{reduced $\Prop$-strategy on $G$}
is an $\HF$-set $s$
which satisfies the formula 
\[
  \Strat^0_\Prop(G,s)
\quad\deq\quad
\funto{s}{\Prop}{\Opp}
~~\land~~
(\forall k \in \Prop)
  \left( s(k) \in \ep(k) \right)
\]

\item
A \emph{reduced $\Opp$-strategy on $G$} is an $\HF$-set
$s$
which satisfies the formula 
\[
  \Strat^0_\Opp(G,s)
\quad\deq\quad
\funto{s}{\Opp}{\Dir \times \Prop}
~~\land~~
(\forall \ell \in \Opp)
  \left( s(\ell) \in \eo(\ell) \right)
\]
\end{enumerate}
\end{defi}

\begin{defi}[Reduced Subgame induced by a Reduced Strategy]
Given a player $\player$ (either $\Prop$ or $\Opp$)
and a $\player$-strategy $s$ on $G$,
we let
\[
G\restr\{s\}_\player
\quad\deq\quad
\big( \PL\G \,,\, \OL\G \,,\, \egpo{G}\restr\{\strat\}_\player \big)
\]

\noindent
where
\[
\egpo{G}\restr\{\strat\}_\Prop \quad\deq\quad \big( \{s\}_\Prop,\ego\G \big)
\qquad\text{and}\qquad
\egpo{G}\restr\{\strat\}_\Opp \quad\deq\quad \big( \egp\G,\{s\}_\Opp \big)
\]

\noindent
and where $\{s\}_\player \sle e_\player$
is defined
(following the method of Remark~\ref{rem:aut:hfclosed})
to be the function
taking $k \in G_\player$
to the singleton $\{s(k)\}$.

Whenever possible, we write $G\restr\{s\}$
or even just $s$ for $G\restr\{s\}_\player$.
\end{defi}

\subsubsection{Reduced Games in $\FSOW$}
\label{sec:compl:red:omega}
In~\S\ref{sec:compl:red:hf} we gave notions
of reduced parity games and reduced strategies.
In this Section, we work within $\FSOW$ and show
that this setting suffices to define \emph{winning}
for reduced parity games.
Thanks to the completeness of $\FSOW$ \wrt\@ the standard model of $\omega$-words
(\S\ref{sec:msow}),
we obtain that $\FSOW$ \emph{decides} winning in such games.
This is essentially the Büchi-Landweber Theorem~\cite{bl69tams}.

We use the following $\FSOW$-formula:
\[
\FSucc(x,y) \quad\deq\quad
x \Lt y
~~\land~~
\lnot(\exists z)\big[x \Lt z \Lt y\big]
\]

\begin{defi}[Infinite Plays in Reduced Games]
\label{def:compl:red:omega:infplay}
Working in $\FSOW$,
let $G = (\Prop,\Opp,\ep,\eo)$,
where $\Prop,\Opp,\ep,\eo$ are $\HF$-variables.
Given an $\HF$ set $K$ and a Function $(\tilde V:\Prop)$,
we say that
$\tilde V$ is an \emph{infinite play in $G$ from $K$} when the following
formula $\Play[\Lt](G,K,\tilde V)$ holds:
\[
  \tilde V(\Root) \Eq K
  ~~\land~~
  (\forall x)(\forall y)
\Big(
\FSucc(x,y) ~~\limp~~
\big( \exists \ell \In \ep(\tilde V(x)) \big)
\big( \exists d \In \Dir \big)
\Big[
(d, \tilde V(y)) \in \eo(\ell)
\Big]
\Big)
\]
\end{defi}

\noindent
Note that in Definition~\ref{def:compl:red:omega:infplay} above,
we use the notation $\tilde V$ for a play
in a reduced games,
to mark the difference with the notion of
plays (and more generally sets of game positions)
in the setting of~\S\ref{sec:pos}.

\begin{defi}[Parity Conditions for Reduced Games]
\label{def:compl:red:omega:parity}
Working in $\FSOW$,
let $G = (\Prop,\Opp,\ep,\eo)$,
where $\Prop,\Opp,\ep,\eo$ are $\HF$-variables.
\begin{enumerate}
\item
A \emph{coloring} is given by Function $\col$ and an $\HF$-set $n$
satisfying the following formula
\[
\Coloring_0(G,\col,n)
\quad\deq\quad
\Ord(n) ~~\land~~ \funto{\col}{\Prop}{[0,n]}
\]

\item
We define the following formula:
\[
\Par[\Lt](\col,n,\tilde V)
\quad\deq\quad
\big( \exists m \in \even(n) \big)
\left[
\begin{array}{r l}
& (\forall x) (\exists y)
  \big( x \Lt y ~~\land~~ \col(\tilde V(y)) \Eq m \big)
\\
  \land
& (\exists x) (\forall y)
  \big( x \Lt y ~~\limp~~ \col(\tilde V(y)) \geq m \big)
\end{array}
\right]
\]
\end{enumerate}
\end{defi}

\begin{defi}[Winning of Reduced Parity Games]
\label{def:compl:red:omega:winning}
Working in $\FSOW$,
let $G = (\Prop,\Opp,\ep,\eo)$,
where $\Prop,\Opp,\ep,\eo$ are $\HF$-variables.
Furthermore let $\col$ be a Function variable and $n$ be an $\HF$-variable.
\begin{enumerate}
\item
We define the following formulae.
\[
\begin{array}{r !{\quad\deq\quad} l}
  \WonGame[\Lt]_\Prop(G,K,\col,n)
& (\forall \tilde V:\Prop)
  \big(
    \Play[\Lt](G,K,\tilde V) ~~\limp~~ \Par[\Lt](\col,n,\tilde V)
  \big)
\\
  \WonGame[\Lt]_\Opp(G,K,\col,n)
& (\forall \tilde V:\Prop)
  \big(
    \Play[\Lt](G,K,\tilde V) ~~\limp~~ \lnot \Par[\Lt](\col,n,\tilde V)
  \big)
\end{array}
\]

\item
Given a player $\player$ (either $\Prop$ or $\Opp$),
we say that a $\player$-strategy $s$ is
\emph{winning in $(G,\col,n)$ from $K$}
if the game $(G\{s\}_\player,\Par[\Lt](\col,n,-))$ is won by $\player$ from $K$,
\ie\@ if
the following formula holds
\[
\WinStrat[\Lt]_\player(G,s,K,\col,n)
\quad\deq\quad
  \WonGame[\Lt]_\player(G\restr\{s\}_\player, K, \col,n)
\]
\end{enumerate}
\end{defi}

\noindent
Note that in Definition~\ref{def:compl:red:omega:winning},
we have denoted strategies in reduced games with a lower case roman $s$.
This notation contrasts with our notation $\strat$ for games
in the sense of~\S\ref{sec:games} in order to insist on the fact that
strategies on reduced games are $\HF$-sets.

Consider now $G=(\Prop,\Opp,\ep,\eo)$
where $\Prop$, $\Opp$, $\ep$ and $\eo$ are closed $\HF$-terms
such that
\[
V_\omega \models \Game_0(G)
\]
Assume also given closed $\HF$-terms $n$ and $\col$
such that
\[
V_\omega \models \Coloring_0(G,\col,n)
\]

\noindent
Then the positional determinacy of parity games (cf.~\cite{ej91focs})
implies
that for every $\HF$-set $\kappa \in \Prop$,
the following holds in the standard model $\StdN$
of $\FSOW$:
\begin{itemize}
\item For some player $\player$ (either $\Prop$ or $\Opp$)
there an $\HF$-set $\hf s$ such that
\[
\StdN\models\quad
\Strat^0_\player(G,\hf s)
~~\land~~
\WinStrat[\Lt]_\player(G,\hf s,\kappa,\col,n)
\]
\end{itemize}

\noindent
Thanks to the completeness of $\FSOW$ \wrt\@ $\StdN$
(Theorem~\ref{thm:msow:compl} and Proposition~\ref{prop:msow:bfsos}),
we obtain the following result,
that may be viewed as a formulation of the Büchi-Landweber Theorem~\cite{bl69tams}
(see also \eg~\cite{thomas97handbook,pp04book}).
Recall that $\Strat^0_\player(G,\hf s)$ holds in
$\StdN$ (resp.\@ $\FSOW$, $\FSO$) if and only if it holds
in $V_\omega$.

\begin{prop}
\label{prop:compl:red:bl}
Assume given closed $\HF$-terms $G=(\Prop,\Opp,\ep,\eo)$, $n$ and $\col$ such that
\[
V_\omega \models\quad 
\Game_0(G) ~~\land~ \Coloring_0(G,\col,n)
\]
Then for every $\kappa \in \Prop$,
there is a player $\player$ (either $\Prop$ or $\Opp$)
and an $\HF$-set $\hf s$ such that
\[
\FSOW\thesis\quad
\WinStrat[\Lt]_\player(G,\hf s,\kappa,\col,n)
\]
\end{prop}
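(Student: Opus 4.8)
The plan is to derive the conclusion from the two black boxes already set up in the excerpt: the positional determinacy of finite-graph parity games (which yields the \emph{semantic} statement, recorded in the discussion immediately preceding the Proposition), and the completeness of $\FSOW$ with respect to its standard model $\StdN$, which transports that semantic statement into a proof in $\FSOW$.

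First I would fix $\kappa \in \Prop$. Since $\Prop,\Opp,\ep,\eo$ are closed $\HF$-terms with $V_\omega \models \Game_0(G)$, the reduced game $G$ is an ordinary parity game on the finite graph whose vertices are the elements of $\Prop \cup \Opp$, with moves given by $\ep,\eo$ and parity condition induced by $\funto{\col}{\Prop}{[0,n]}$. By the positional determinacy of parity games~\cite{ej91focs}, over $\StdN$ one of the players $\player \in \{\Prop,\Opp\}$ has a \emph{positional} winning strategy from $\kappa$; such a strategy depends only on the current vertex and is therefore an $\HF$-set $\hf s$ (satisfying $\Strat^0_\player(G,\hf s)$), for which the language of $\FSOW$ provides a closed $\HF$-term. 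Spelling out what it means for $\hf s$ to win --- every infinite play $\tilde V : \Prop$ consistent with $\hf s$ meets the parity condition $\Par[\Lt](\col,n,\tilde V)$ --- is precisely the assertion
\[
\StdN \models \WinStrat[\Lt]_\player(G,\hf s,\kappa,\col,n),
\]
already extracted in the discussion above.

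Next I would check that $\WinStrat[\Lt]_\player(G,\hf s,\kappa,\col,n)$ is a \emph{closed} $\FSOW$-formula. Unfolding Definitions~\ref{def:compl:red:omega:infplay}--\ref{def:compl:red:omega:winning}, it is built from the order $\Lt$, the defined predicate $\FSucc$, $\HF$-bounded quantifiers, and the single Function quantifier $\forall \tilde V : \Prop$; crucially, these $[\Lt]$-decorated notions for reduced games were designed to avoid any tree-specific individual terms, so the formula genuinely lies in the language of $\FSOW$. Moreover all its $\HF$-parameters are closed terms: $\kappa,\col,n$ are closed by hypothesis, and $G\restr\{\hf s\}_\player$ is assembled from the closed terms of $G$ and from $\hf s$ by the closed $\HF$-Functions underlying $\{\hf s\}_\player$.

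Finally I would invoke the completeness of $\FSOW$ over $\StdN$. Chaining Proposition~\ref{prop:msow:bfsos} (giving $\StdN \models \varphi$ iff $\StdN \models \MI{\varphi}$, and $\FSOW \thesis \varphi$ iff $\MSOW \thesis \MI{\varphi}$) with Siefkes' Theorem~\ref{thm:msow:compl} ($\StdN \models \MI{\varphi}$ iff $\MSOW \thesis \MI{\varphi}$) shows that any closed $\FSOW$-formula true in $\StdN$ is a theorem of $\FSOW$. Applying this to $\varphi = \WinStrat[\Lt]_\player(G,\hf s,\kappa,\col,n)$ gives $\FSOW \thesis \WinStrat[\Lt]_\player(G,\hf s,\kappa,\col,n)$, as required. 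There is no genuine obstacle beyond the middle step: the only real care is in confirming that winning in reduced parity games was formalized as an $\HF$-bounded, $\Lt$-only statement, so that the completeness of $\MSO$ on $\omega$-words --- the substantive ingredient --- indeed applies.
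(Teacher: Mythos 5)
Your proposal is correct and follows essentially the same route as the paper: the positional determinacy of parity games~\cite{ej91focs} yields an $\HF$-set positional winning strategy $\hf s$ for one of the players with $\StdN \models \WinStrat[\Lt]_\player(G,\hf s,\kappa,\col,n)$, and then the completeness of $\FSOW$ over $\StdN$ (Theorem~\ref{thm:msow:compl} chained through Proposition~\ref{prop:msow:bfsos}) converts this semantic truth into an $\FSOW$-theorem. Your additional check that $\WinStrat[\Lt]_\player(G,\hf s,\kappa,\col,n)$ is a closed formula in the language of $\FSOW$ is exactly the point the paper handles implicitly via the design of the $[\Lt]$-decorated notions in Definitions~\ref{def:compl:red:omega:infplay}--\ref{def:compl:red:omega:winning}.
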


\noindent
Proposition~\ref{prop:bfsos:bfso},
namely
\[
{\FSOD} \thesis
(\forall P : \two)\left(
\TPath(P) ~~\limp~~
\varphi^{P}
\right)
\qquad\text{whenever}\qquad
\StdN \models \varphi
\]
(for $\varphi$ a closed $\FSOW$-formula)
moreover gives the following.

\begin{prop}
\label{prop:compl:red:omega}
Assume given closed $\HF$-terms $G=(\Prop,\Opp,\ep,\eo)$, $n$ and $\col$ such that
\[
V_\omega \models\quad 
\Game_0(G) ~~\land~ \Coloring_0(G,\col,n)
\]
Then for every $\kappa \in \Prop$,
there is a player $\player$ (either $\Prop$ or $\Opp$)
and an $\HF$-set $\hf s$ such that
\[
\FSO\thesis\quad
(\forall X : \two)
\Big(
\TPath(X) ~~\limp~~
\WinStrat[\Lt]^X_\player(G,\hf s,\kappa,\col,n)
\Big)
\]
\end{prop}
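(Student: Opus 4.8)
The plan is to derive Proposition~\ref{prop:compl:red:omega} from Proposition~\ref{prop:compl:red:bl} by relativizing the $\FSOW$-level winning statement to an arbitrary rooted tree path of $\FSO$, using Proposition~\ref{prop:bfsos:bfso} as the bridge. First I would fix the closed $\HF$-terms $G = (\Prop,\Opp,\ep,\eo)$, $n$ and $\col$ satisfying $\Game_0(G) \land \Coloring_0(G,\col,n)$ in $V_\omega$, together with some $\kappa \in \Prop$. By Proposition~\ref{prop:compl:red:bl} there is a player $\player$ and an $\HF$-set $\hf s$ such that $\FSOW \thesis \WinStrat[\Lt]_\player(G,\hf s,\kappa,\col,n)$. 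Since $G$, $n$, $\col$ and $\hf s$ are all (closed) $\HF$-terms and $\kappa,\player$ are fixed, the formula $\WinStrat[\Lt]_\player(G,\hf s,\kappa,\col,n)$ is a \emph{closed} $\FSOW$-formula. By soundness of $\FSOW$ over its standard model $\StdN$, from provability we get $\StdN \models \WinStrat[\Lt]_\player(G,\hf s,\kappa,\col,n)$, which is precisely the hypothesis required to apply Proposition~\ref{prop:bfsos:bfso}.

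Applying Proposition~\ref{prop:bfsos:bfso} to this closed $\FSOW$-formula $\varphi \deq \WinStrat[\Lt]_\player(G,\hf s,\kappa,\col,n)$ then yields
\[
\FSOD \thesis (\forall P:\two)\big( \TPath(P) ~~\limp~~ \varphi^{P} \big),
\]
where $\varphi^P$ is the relativization of $\varphi$ to $P$ (all individual quantifiers restricted to $P$, all Function quantifications $F:K$ replaced by $\funto{F}{P}{K}$). It then remains to identify $\varphi^P$ with the target formula $\WinStrat[\Lt]^X_\player(G,\hf s,\kappa,\col,n)$ appearing in the statement, renaming the path variable $P$ to $X$. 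The bulk of the verification is to check that the relativization of the $\FSOW$-formulae $\Play[\Lt]$, $\Par[\Lt]$, $\WonGame[\Lt]_\player$ and $\WinStrat[\Lt]_\player$ to a path $P$ coincides (definitionally, or provably in $\FSOD$) with the relativized versions intended by the superscript-$X$ notation. This is a routine unwinding: the individual quantifiers in $\Play[\Lt]$ and $\Par[\Lt]$ range over positions along the path, the Function $(\tilde V:\Prop)$ becomes $\funto{\tilde V}{P}{\Prop}$, and the $\FSucc$ and $\Lt$ relations behave on $P$ as they do on $\NN$ precisely because $\TPath(P)$ forces $P$ to be an unbounded rooted linear order with predecessors (the latter guaranteed by Lemma~\ref{lem:omega:pred}).

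The main obstacle I anticipate is purely bookkeeping rather than conceptual: ensuring that the notation $\WinStrat[\Lt]^X_\player$ in the conclusion is \emph{defined} to be exactly the relativization $(\WinStrat[\Lt]_\player)^X$, so that no gap opens between what Proposition~\ref{prop:bfsos:bfso} delivers and what is claimed. One must be slightly careful that the free $\HF$-parameters $G$, $\hf s$, $\kappa$, $n$, $\col$ are untouched by relativization (relativization only affects individual and Function quantifiers, and these $\HF$-terms are closed), so that $\varphi^P$ genuinely has the claimed shape with those parameters occurring as given. Since all the constituent formulae are built by the same quantifier patterns that Lemma~\ref{lem:bfsos:bfso} and Proposition~\ref{prop:bfsos:bfso} are designed to handle, no new reasoning about games or parity conditions is needed at this stage; the positional determinacy content has already been absorbed into Proposition~\ref{prop:compl:red:bl}. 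Thus the proof is essentially a one-line invocation of Proposition~\ref{prop:bfsos:bfso} preceded by the observation that the winning statement is a closed $\FSOW$-formula, modulo the definitional matching of the relativized notation.

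\begin{proof}
Fix closed $\HF$-terms $G=(\Prop,\Opp,\ep,\eo)$, $n$ and $\col$ with
\[
V_\omega \models\quad \Game_0(G) ~~\land~~ \Coloring_0(G,\col,n),
\]
and fix $\kappa \in \Prop$. By Proposition~\ref{prop:compl:red:bl} there is a player $\player$ (either $\Prop$ or $\Opp$) and an $\HF$-set $\hf s$ such that
\[
\FSOW \thesis\quad \WinStrat[\Lt]_\player(G,\hf s,\kappa,\col,n).
\]
Since $G$, $\hf s$, $\kappa$, $n$ and $\col$ are all fixed ($\HF$-terms), the formula
\[
\varphi \quad\deq\quad \WinStrat[\Lt]_\player(G,\hf s,\kappa,\col,n)
\]
is a \emph{closed} $\FSOW$-formula. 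By soundness of $\FSOW$ over its standard model $\StdN$, provability gives $\StdN \models \varphi$.

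We may therefore apply Proposition~\ref{prop:bfsos:bfso} to $\varphi$, obtaining
\[
\FSOD \thesis\quad (\forall P:\two)\big( \TPath(P) ~~\longlimp~~ \varphi^{P} \big),
\]
where $\varphi^P$ denotes the relativization of $\varphi$ to $P$, i.e.\@ all individual quantifiers are restricted to $P$ and each Function quantification $F:K$ is replaced by $\funto{F}{P}{K}$. By definition, the superscript notation $\WinStrat[\Lt]^X_\player(G,\hf s,\kappa,\col,n)$ is exactly this relativization $\varphi^X$, the $\HF$-parameters $G$, $\hf s$, $\kappa$, $n$, $\col$ being closed and hence unaffected. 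Renaming the bound path variable $P$ to $X$ yields
\[
\FSO \thesis\quad (\forall X:\two)\Big( \TPath(X) ~~\longlimp~~ \WinStrat[\Lt]^X_\player(G,\hf s,\kappa,\col,n) \Big),
\]
as required.
\end{proof}
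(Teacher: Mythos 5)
Your proof is correct and follows essentially the same route as the paper: the winning statement is a closed $\FSOW$-formula, its truth in $\StdN$ comes from the Büchi–Landweber content (Proposition~\ref{prop:compl:red:bl}, itself obtained from positional determinacy in $\StdN$), and Proposition~\ref{prop:bfsos:bfso} relativizes it to rooted tree paths in $\FSO$. The only cosmetic difference is your detour through $\FSOW$-provability and back to $\StdN$ via soundness (which indeed follows from Theorem~\ref{thm:msow:compl} together with Proposition~\ref{prop:msow:bfsos}, both being equivalences); one could equally apply Lemma~\ref{lem:bfsos:bfso} directly to the provability statement of Proposition~\ref{prop:compl:red:bl}.
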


\subsubsection{Reduced Games in $\FSO$}
\label{sec:compl:red:fso}
We now come back to $\FSO$.
In this Section, we show, using Proposition~\ref{prop:compl:red:omega},
that $\FSO$ decides winning for parity games
induced from reduced parity games
(Theorem~\ref{thm:compl:red}).
This directly gives Proposition~\ref{prop:compl:aut:dec}.

A reduced game $G = (\Prop,\Opp,\ep,\eo)$
induces a game $\G = (\Prop,\Opp,\EP,\EO)$ in the sense
of Definition~\ref{def:games:games},
where 
\[
\funto{\EP}{\univ \times \Prop}{\Pne(\Opp)}
\qquad\text{and}\qquad
\funto{\EO}{\univ \times \Opp}{\Pne(\Dir \times \Prop)}
\]
are defined using 
$\HF$-Bounded Choice for Product Types (Theorem~\ref{thm:funto:choice})
as
\[
\EP(x,k) \quad\deq\quad \ep(k)
\qquad\text{and}\qquad
\EO(x,\ell) \quad\deq\quad \eo(\ell)
\]

\noindent
Similarly, a strategy $s$ in a reduced game $G$ induces
a strategy $\strat$ in $\G$ in the sense of Definition~\ref{def:games:strat},
with
\[
\strat(x,k) \quad\deq\quad s(k)
\]

\noindent
As for colorings, from $(\col : \Prop \fsoto [0,n])$ we define
$\funto{\hat\col}{\G}{[0,n]}$ as in Definition~\ref{def:aut:parity}:
\[
\hat \col(x,k) \quad\deq\quad
\left\{
\begin{array}{l !{\quad} l !{\qquad} l}
  \col(k)
& \text{if $k \in \Prop$}
& \text{($\Prop$-position)}
\\
  n_{\At A}
& \text{if $k \in \Opp$}
& \text{($\Opp$-position)}
\end{array}
\right.
\]

\noindent
We clearly have the following:
\[
\begin{array}{l !{\qquad\text{whenever}\qquad} l}
  \FSO \thesis \Game(\G)
& V_\omega \models \Game_0(G)
\\
  \FSO \thesis \Strat_\player(\G,\strat)
& V_\omega \models \Strat^0_\player(G,s)
\\
  \FSO \thesis \Coloring(\G,\hat\col,n)
& V_\omega \models \Coloring_0(G,\col,n)
\end{array}
\]

\begin{thm}
\label{thm:compl:red}
Assume given closed $\HF$-terms $G=(\Prop,\Opp,\ep,\eo)$, $n$ and $\col$ such that
\[
V_\omega \models\quad 
\Game_0(G) ~~\land~ \Coloring_0(G,\col,n)
\]

\noindent
Then for every $\kappa \in \Prop$,
\[
\begin{array}{r !{\qquad} l r l}
  \text{either}
& \FSO\thesis
& (\exists \funto{\strat_\Prop}{\PP\G}{\Opp})
& \left(
  \begin{array}{c l}
  & \Strat_\Prop(\G,\strat_\Prop)
  \\
    \land
  & \WinStrat_\Prop(\G(\gle),\strat_\Prop,\kappa,\col,n)
  \end{array}
  \right)
\\\\
  \text{or}
& \FSO\thesis
& (\exists \funto{\strat_\Opp}{\OP\G}{\Dir \times \Prop})
& \left(
  \begin{array}{c l}
  & \Strat_\Opp(\G,\strat_\Opp)
  \\
    \land
  & \WinStrat_\Opp(\G(\gle),\strat_\Opp,\kappa,\col,n)
  \end{array}
  \right)
\end{array}
\]

\end{thm}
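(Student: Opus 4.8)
The plan is to read off the winner from Proposition~\ref{prop:compl:red:omega} and to transport its path-relativized winning strategy to a positional strategy on the induced tree game $\G$. First I would fix $\kappa \In \Prop$ and apply Proposition~\ref{prop:compl:red:omega}, obtaining a player $\player$ and an $\HF$-set $\hf s$ with $\Strat^0_\player(G,\hf s)$ (provable, being an $\HF$-formula true in $V_\omega$) such that $\FSO$ proves $(\forall X:\two)\big(\TPath(X) \limp \WinStrat[\Lt]^X_\player(G,\hf s,\kappa,\col,n)\big)$. Lifting $\hf s$ to the positional tree strategy $\strat$ with $\strat(x,k) \deq \hf s(k)$ gives, by the correspondences collected at the end of~\S\ref{sec:compl:red:fso}, a $\G$-strategy $\Strat_\player(\G,\strat)$. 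I treat the case $\player = \Prop$, the case $\player = \Opp$ being analogous. The conclusion asks for a $\G$-strategy winning for the parity condition attached to $\G(\gle)$; since by Remark~\ref{rem:games:parity:gle} that condition agrees with $\Par(\G,\hat\col,n,-)$ on all plays of $\G$ (hence of the subgame $\G\restr\{\strat\}_\Prop$), it suffices to show that every infinite play $U$ of $\G\restr\{\strat\}_\Prop$ from $(\Root,\kappa)$ satisfies $\Par(\G,\hat\col,n,U)$.

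So fix such a $U$. Using Comprehension for Product Types (Theorem~\ref{thm:funto:ca}) I would define the rooted tree path $X \deq \{x \st (\exists k \In \Prop)((x,k) \In U)\}$ and the $\omega$-word play $\tilde V$ by letting $\tilde V(x)$ be the unique $\Prop$-state $k$ with $(x,k) \In U$ (two distinct $\Prop$-positions over the same tree node being $\glt$-incomparable by Proposition~\ref{prop:games:gle}, hence not both on the linearly ordered $U$). Verifying $\TPath(X)$ exploits that $\Prop$-moves preserve the tree position while $\Opp$-moves pass to a tree successor: linearity of $X$ descends from linearity of $U$ under $\edge{+}{}$ via Proposition~\ref{prop:games:edges}, and the successor clause follows from the Predecessor Lemma for Infinite Plays (Lemma~\ref{lem:games:predplays}) applied twice to expose, for consecutive $x \FSucc y$ in $X$, a segment $(x,\tilde V(x)) \edge{}{\Prop} (x,\ell) \edge{}{\Opp} (y,\tilde V(y))$ of $U$. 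Because $U$ is a play of $\G\restr\{\strat\}_\Prop$, this segment witnesses $\ell = \hf s(\tilde V(x)) \In \ep(\tilde V(x))$ and $(d,\tilde V(y)) \In \eo(\ell)$ for some $d \In \Dir$; reading this off all consecutive pairs shows that $\tilde V$ is precisely a relativized-to-$X$ infinite play of the reduced game $G\restr\{\hf s\}_\Prop$ from $\kappa$, i.e.\@ that $\Play[\Lt]^X(G\restr\{\hf s\}_\Prop,\kappa,\tilde V)$ holds.

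Now $\TPath(X)$ lets me instantiate the implication from Proposition~\ref{prop:compl:red:omega} at our $X$, yielding $\WinStrat[\Lt]^X_\Prop(G,\hf s,\kappa,\col,n)$, which applied to $\tilde V$ gives $\Par[\Lt]^X(\col,n,\tilde V)$. The remaining task is to convert this $\omega$-word min-parity statement about $\tilde V$ into the tree min-parity statement $\Par(\G,\hat\col,n,U)$, and this is the step I expect to be the main obstacle. The positions of $U$ split into $\Prop$-positions $(x,\tilde V(x))$, carrying $\hat\col$-color $\col(\tilde V(x))$, and $\Opp$-positions, carrying the maximal color $n$; since $n$ bounds all colors, the $\Opp$-positions never lower the least infinitely-occurring color along $U$, so the parity of $U$ is governed solely by its $\Prop$-positions (an instance of the phenomenon recorded in Remark~\ref{rem:aut:propplays}). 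As $x \mapsto (x,\tilde V(x))$ is an order-isomorphism between $X$ (ordered by $\Lt$) and the $\Prop$-positions of $U$ (ordered by $\edge{+}{}$), the colors occurring infinitely often along $\tilde V$ and along the $\Prop$-part of $U$ coincide, transferring $\Par[\Lt]^X(\col,n,\tilde V)$ to $\Par(\G,\hat\col,n,U)$. The delicate point is to carry out this transfer inside $\FSO$, matching the $\edge{+}{}$-cofinality quantifiers of $\Par$ with the $\Lt$-cofinality quantifiers of $\Par[\Lt]$ through the Predecessor Lemma; once done, $U$ is won by $\Prop$, and since $U$ was arbitrary, $\strat$ witnesses the first disjunct. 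Finally, Proposition~\ref{prop:compl:aut:dec} follows by specializing to the reduced game attached to an $\HF$-closed automaton $\At A : \one$ as in~(\ref{eq:compl:red:edge}).
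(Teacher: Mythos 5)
Your proposal is correct and takes essentially the same route as the paper's own proof: externally fix the winner $\player$ and reduced strategy $\hf s$ via Proposition~\ref{prop:compl:red:omega}, lift $\hf s$ positionally to $\strat$ on $\G$, and for each play $U$ of $\G\restr\{\strat\}$ project to the underlying tree path and word-play $\tilde V$, verify $\TPath$ and the relativized $\Play[\Lt]$, instantiate the path-relativized winning statement, and transfer $\Par[\Lt]$ back to $\Par(\G,\hat\col,n,U)$. The color-transfer step you flag as the main obstacle is precisely the paper's concluding claim, handled there by the same observations you make (that $\Opp$-positions carry the maximal color $n$ and that cofinality with respect to $\edge{+}{}$ matches cofinality with respect to $\Lt$ along the play).
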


\noindent
In the statement of Theorem~\ref{thm:compl:red},
$\G(\gle)$ refers to the the game of Remark~\ref{rem:games:sub}
(see also Remark~\ref{rem:games:parity:gle}).

\begin{rem}
The crucial differences between Theorem~\ref{thm:compl:red}
and the axiom $(\PosDet)$
are the following.
On one hand, Theorem~\ref{thm:compl:red} 
allows us to derive $(\PosDet)$ for games on \emph{finite} graphs only,
while $(\PosDet)$ speaks about arbitrary $\FSO$-definable games
(in the sense of~\S\ref{sec:games}).
On the other hand,
Theorem~\ref{thm:compl:red}
says that $\FSO$ \emph{decides} winning for games on finite graphs,
while $(\PosDet)$ is a statement of \emph{determinacy}, i.e.\ that one of the players wins,
but not \emph{which} player wins.
\end{rem}

\begin{proof}[Proof of Theorem~\ref{thm:compl:red}]
Fix $G$, $n$, $\col$ and $\kappa$ as in the statement.
Let $\player$ and $\hf s$ be given by Proposition~\ref{prop:compl:red:omega},
and let $\strat$ be induced from $\hf s$ as above.
We are going to show that $\strat$ is winning in $\G$ from position $\kappa$:
\[
(\forall \funto{V}{\G}{\two})
\Big(
\Play(\strat,\kappa,V)
~~\limp~~
\Par(\G,\hat\col,n,V)
\Big)
\]

\noindent
So let $\funto{V}{\G}{\two}$ be an infinite play of $\strat$ from $\kappa$.
Our plan is to obtain $\Par(\G,\hat\col,n,V)$
from Proposition~\ref{prop:compl:red:omega}.
By Comprehension for Product Types (Theorem~\ref{thm:funto:ca}), 
let $\funto{|V|}{\univ}{\two}$
be the set of all $x \in \univ$ such that
$(x,k) \in V$ for some $k \in \Prop$.
Note that $\TPath(|V|)$ holds in $\FSO$.
Proposition~\ref{prop:compl:red:omega}
then gives
\[
\FSO\thesis\quad
\WinStrat[\Lt]^{|V|}_\player(G,\hf s,\kappa,\col,n)
\]

\noindent
Note that
\begin{multline*}
  \WinStrat[\Lt]^{|V|}_\Prop(G,\hf s,\kappa,\col,n)
~~\longliff
\\
(\forall \funto{\tilde V}{|V|}{\Prop})
  \Big(
    \Play[\Lt]^{|V|}(\hf s,\kappa,\tilde V) ~~\limp~~ \Par[\Lt]^{|V|}(\col,n,\tilde V)
  \Big)
\end{multline*}
and similarly for
$\WinStrat[\Lt]^{|V|}_\Opp(G,\hf s,\kappa,\col,n)$.
By $\HF$-Bounded Choice for Functions (\S\ref{sec:ax:choice}),
let $\funto{\tilde V}{|V|}{\Prop}$
take $x \in |V|$ to the unique $k \in \Prop$
such that $(x,k) \in V$.
Then we are done as soon as we show

\begin{subclm}
\[
  \Play[\Lt]^{|V|}(\hf s,\kappa,\tilde V)
\quad\land\quad
\Big(
  \Par(\G,\hat\col,n,V)
~~\liff~~
  \Par[\Lt]^{|V|}(\col ,n, \tilde V)
\Big)
\]
\end{subclm}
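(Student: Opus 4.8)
The plan is to establish the two conjuncts separately, in each case unfolding the relevant definition and exploiting the tight correspondence between the play $V$ in $\G$ and the sequence $\tilde V$ along the rooted path $|V|$.

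For $\Play[\Lt]^{|V|}(\hf s, \kappa, \tilde V)$, I would first observe that the root clause $\tilde V(\Root) \Eq \kappa$ is immediate, since $V$ is an infinite play from $(\Root, \kappa)$ and $\tilde V$ sends each $x \in |V|$ to the \emph{unique} $\Prop$-label $k$ with $(x,k) \in V$; uniqueness holds because $\strat$ is a $\Prop$-strategy, so along $V$ each tree position carries exactly one $\Prop$-position followed by one $\Opp$-position. For the successor clause I would fix $x, y \in |V|$ with $y$ the $\Lt$-successor of $x$ in $|V|$. The play contains $(x, \tilde V(x))$; its $\Prop$-move leads to $(x, \ell)$ with $\ell = \hf s(\tilde V(x)) \In \ep(\tilde V(x))$, and the following $\Opp$-move leads to $(\Succ_d(x), k')$ with $(d, k') \In \eo(\ell)$ for some $d \In \Dir$. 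Using the Predecessor Lemma~\ref{lem:games:predplays} together with Proposition~\ref{prop:games:edges} (which relates $\edge{+}{}$ to $\glt$, and hence $\Prop$-positions of $V$ to the prefix order on their tree coordinates), I would identify $(\Succ_d(x), k')$ as the next $\Prop$-position of $V$ after $(x, \tilde V(x))$, whence $\Succ_d(x) = y$ and $k' = \tilde V(y)$ by uniqueness. This gives $(d, \tilde V(y)) \In \eo(\ell)$ with $\ell \In \ep(\tilde V(x))$, exactly the successor clause of $\Play[\Lt]^{|V|}$.

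For the equivalence $\Par(\G, \hat\col, n, V) \liff \Par[\Lt]^{|V|}(\col, n, \tilde V)$, the guiding observation is that both formulae assert that the \emph{least color occurring infinitely often is even}, but with respect to different colorings and orders. On $\Prop$-positions $\hat\col$ agrees with $\col \circ \tilde V$, namely $\hat\col(x, \tilde V(x)) = \col(\tilde V(x))$, whereas every $\Opp$-position receives the maximal color $n$. Since all colors are $\leq n$ and $V$ visits one $\Prop$-position per element of the unbounded path $|V|$, the $\Opp$-positions never introduce a smaller value, and the least $\hat\col$-color cofinal in $V$ coincides with the least $\col$-color cofinal along $\tilde V$. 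I would then show that one and the same even $m$ witnesses both conditions, transporting the $\edge{+}{}$-indexed ``cofinally often'' and ``eventually'' phrasings of $\Par$ into their $\Lt$-indexed counterparts in $\Par[\Lt]$ by means of the order-preserving bijection $x \mapsto (x, \tilde V(x))$ between $(|V|, \Lt)$ and the $\Prop$-positions of $V$ ordered by $\edge{+}{}$ (again via Proposition~\ref{prop:games:edges}, noting that $\glt$ restricted to two $\Prop$-positions reduces to $\Lt$ on their tree coordinates).

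The definitional unfoldings and the first conjunct are routine; the main obstacle is the parity equivalence, specifically the bookkeeping needed to confirm that the color-$n$ $\Opp$-positions are invisible to the minimal-color-infinitely-often condition, and that the alternation of quantifiers is faithfully carried between the order $\edge{+}{}$ on positions of $V$ and the order $\Lt$ on tree positions of $|V|$. Both arguments are carried out in $\FSO$, relying on $\glt$-induction and the basic properties of infinite plays from~\S\ref{sec:games:plays}.
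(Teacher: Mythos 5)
Your proposal is correct and takes essentially the same route as the paper's (very terse) proof: the play clause is read off from the definition of $\strat$ from $\hf s$ together with the play structure of $V$, and the parity equivalence is obtained by transporting each witness $m$ across the order-preserving correspondence $x \mapsto (x,\tilde V(x))$ between $(|V|,\Lt)$ and the $\Prop$-positions of $V$, using that $\hat\col$ restricts to $\col\circ\tilde V$ there while $\Opp$-positions carry the maximal color $n$. Just note that for $m = n$ the $\Opp$-positions are not literally ``invisible'' (they satisfy the cofinality conjunct trivially on the $V$ side); the same witness still transfers because the ``eventually $\geq n$'' conjunct together with unboundedness of $|V|$ yields cofinally many color-$n$ states along $\tilde V$.
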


\begin{subproof}[Proof of Claim \thesubclm]
The property on parity conditions follows from the fact that
for all $m \in [0,n]$ we have
\[
\begin{array}{l @{\quad} l}
&
\left[
\begin{array}{r l}
& (\forall x \in |V|) (\exists y \in |V|)
  \big( x \Lt y ~~\land~~ \col(\tilde V(y)) \Eq m \big)
\\
  \land
& (\exists x \in |V|) (\forall y \in |V|)
  \big( x \Lt y ~~\limp~~ \col(\tilde V(y)) \geq m \big)
\end{array}
\right]
\\

\longliff
\\

&
\left[
  \begin{array}{r l}
  & (\forall u \in V) (\exists v \in V) (u \glt v ~\land~ \hat\col(v) \Eq m)
  \\
    \land
  & (\exists u \in V) (\forall v \in V) (u \glt v ~\limp~ \hat\col(v) \geq m)
  \end{array}
\right]
\end{array}
\]

\noindent
As for $\Play[\Lt]^{|V|}(\hf s,\kappa,\tilde V)$, 
note that it unfolds to
\[
\begin{array}{l}
  \tilde V(\Root) \Eq \kappa
  \quad\land
\\
  (\forall x \in |V|)(\forall y \in |V|)
\Big(
\FSucc^{|V|}(x,y) ~~\limp~~
\big( \exists \ell \In \egp{\hf s}(\tilde V(x)) \big)
\big( \exists d \In \Dir \big)
\Big[
(d, \tilde V(y)) \in \ego{\hf s}(\ell)
\Big]
\Big)
\end{array}
\]
where
\[
\FSucc^{|V|}(x,y)
\quad=\quad
(x \Lt y) ~~\land~~ \lnot(\exists z \in |V|)\big[x \Lt z \Lt y \big]
\]

\noindent
But this directly follows from the definition of $\strat$ from $\hf s$
together with the fact that $V$ is a play of $\strat$ from $\kappa$.
\end{subproof}

This concludes the proof of Theorem~\ref{thm:compl:red}.
\end{proof}

We are now ready to prove Proposition~\ref{prop:compl:aut:dec},
thus completing the proof of Theorem~\ref{thm:main:msoform}.
\begin{proof}[Proof of Proposition~\ref{prop:compl:aut:dec}]
We have to show that for 
an $\HF$-closed parity automaton $(\At A:\one)$,
\[
\FSO \thesis (\exists F:\one)\big(F \in \Lang(\At A))
\qquad\text{or}\qquad
\FSO \thesis \lnot (\exists F:\one)\big(F \in \Lang(\At A))
\]

\noindent
For any $(F : \one)$, 
the game $\G(\At A,F)$ is generated as above from the edge
relations~\eqref{eq:compl:red:edge}.
Moreover, recall from Definition~\ref{def:aut:parity}
that the winning condition of $\G(\At A,F)$
is generated, as in the statement of Theorem~\ref{thm:compl:red},
by the game $\G(\At A,F)(\gle)$ of Remark~\ref{rem:games:sub}.
We then conclude by Theorem~\ref{thm:compl:red},
and this completes the proof of Proposition~\ref{prop:compl:aut:dec}.
\end{proof}

%


\subsection{Remarks on Recursiveness}
\label{sec:remcompl}

We noted in Remark~\ref{rem:compl:rec}
that
the completeness of $\FSO + (\PosDet)$ indeed allows us
to decide $\FSO$ and $\MSO$ formulae in the standard model $\Std$ of~\S\ref{sec:std}.
This however comes with two apparent defects.
The first one is that the interpretation $\SI{-}$ of $\HF$-terms fixed
in Convention~\ref{conv:ax:hf:ac} is not computable
(see Remarks~\ref{rem:ax:hf:compl} and~\ref{rem:ax:std:compl}),
because provability in $\Sk(\ZFCM)$ is not decidable
(as this theory contains the $\Pi^0_1$ fragment of arithmetic).
The second one is that, although the axiom set $\MSO + \MI{\PosDet}$
is even polynomial-time recognizable (recall that $\MI{\PosDet}$ is defined in
Definition~\ref{def:posdet:mso}, \S\ref{sec:posdet:mso}),
the interpretation $\MI{-}$ for $\HF$-terms relies on Convention~\ref{conv:ax:hf:ac}
(fixing the interpretation of $\HF$-Functions),
and is thus not computable.
We discuss here a workaround for this involving a slightly different setting for $\FSO$.
We chose to not officially work in that setting because we found it less uniform
and elegant than the current presentation of $\FSO$,
which nonetheless still allows us to derive Rabin's Tree Theorem~\cite{rabin69tams}.

Rather than taking all the axioms on $\HF$-sets
of~\S\ref{sec:ax:hf}, in particular considering
the whole theory $\Sk(\ZFCM)$ there, we may work in systems
parametrized by chosen sets of $\HF$-Functions.
A way to implement this would be to consider systems $\FSO(\Skolem)$,
where the parameter $\Skolem$ specifies \emph{some} interpretations
$\hf g_{n,m}$ for constants $\const g_{n,m}$
such that~\eqref{eq:ax:hf:zfc} is assumed to hold.
Concretely, a specification $\Skolem$ consists of a set $\SK \sle \NN \times \NN$
together with functions
\begin{equation}
\tag{for each $(n,m) \in \SK$}
\hf g_{n,m} ~~:~~ V^n_\omega ~~\longto~~ V_\omega
\end{equation}
Given a set $\SK \sle \NN \times \NN$,
we let $\ZFCM(\SK)$ consist of $\ZFCM$ augmented with
the axioms
\begin{equation}
\tag{for each $(n,m) \in \SK$}
(\forall k_1,\dots,k_n)
(\exists! \ell)(\varphi_{n,m})
~~\limp~~
(\forall k_1,\dots,k_n)
\varphi_{n,m}[\const g_{n,m}(k_1,\dots,k_n)/\ell]
\end{equation}

\noindent
We say that $\Skolem$ is a \emph{specification} if
\[
\Skolem \quad=\quad
(\SK,(\hf g_{n,m})_{(n,m) \in \SK})
\]
where,
for each $(n,m) \in \SK$,
\begin{itemize}
\item $\hf g_{n,m}$
is a computable function $V^n_\omega \to V_\omega$, and

\item for each each $\const g_{n',m'}$ occurring in $\varphi_{n,m}$,
we have $(n',m') \in \SK$, and

\item
$\ZFCM(\SK)\thesis (\forall k_1,\dots,k_n)(\exists !\ell)\varphi_{n,m}$,
and

\item
$V_\omega \models
(\forall k_1,\dots,k_n)\varphi_{n,m}[\hf g_{n,m}(k_1,\dots,k_n)/\ell]$
\end{itemize}

\noindent
Given a specification $\Skolem$, one can fix the interpretation
of all constants $(\const g_{n,m})_{n,m \in \NN}$
by taking for $\const g_{n,m}$ with $(n,m) \notin \SK$
the function $V^n_\omega \to V_\omega$
with constant value $\emptyset$.

For the formal definition of $\FSO(\Skolem)$,
instead of the Axioms on $\HF$-Sets of~\S\ref{sec:ax:hf}, one has the following.
\begin{itemize}
\item For each $(n,m) \in \SK$, and for all $\HF$-terms $\vec K = K_1,\dots,K_n$,
the axiom
\[
\varphi_{n,m}[\vec K/\vec k][\const g_{n,m}(\vec K) / \ell]
\]

\item For each closed $\HF$-formula $\varphi$ such that $V_\omega\models\varphi$,
the axiom
\[
\varphi
\]
\end{itemize}

Given a specification $\Skolem$, the interpretations $\SI{-}$
and $\MI{-}$ are computable.
All results of this paper
hold for sufficiently large specifications.

\begin{thm}
Let $\Skolem$ be a specification defining all the $\HF$-Functions
of~\ref{item:ax:hf:first}--\ref{item:ax:hf:last}, \S\ref{sec:ax:hf},
as well as those of Convention~\ref{conv:games:colors},
\S\ref{sec:games:parity}.
Then all the results stated in~\S\ref{sec:compl}
hold for $\FSO(\Skolem)$ instead of $\FSO$.
\end{thm}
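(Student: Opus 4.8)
The plan is to argue that, as far as \S\ref{sec:compl} is concerned, $\FSO(\Skolem)$ behaves exactly like $\FSO$: it is a subtheory of $\FSO$, and the completeness argument only ever appeals to resources that $\FSO(\Skolem)$ already provides, so it can be replayed verbatim.

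First I would show that $\FSO(\Skolem) \thesis \chi$ implies $\FSO \thesis \chi$. The equality, tree and induction axioms and the Functional Choice Axioms of \S\ref{sec:ax:choice} are literally shared. The closed-$\HF$-formula axioms of $\FSO(\Skolem)$ are provable in $\FSO$ by Remark~\ref{rem:ax:hf:vomega}. For a Skolem axiom attached to $(n,m) \in \SK$, the definition of a specification requires $\ZFCM(\SK) \thesis (\forall k_1,\dots,k_n)(\exists!\ell)\varphi_{n,m}$; since $\ZFCM(\SK) \subseteq \Sk(\ZFCM)$ this yields~\eqref{eq:ax:hf:zfc}, so $\FSO$ contains the same syntactic axiom $\varphi_{n,m}[\vec K/\vec k][\const g_{n,m}(\vec K)/\ell]$. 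As a consequence soundness is inherited: $\FSO(\Skolem) + (\PosDet) \thesis \chi$ implies $\Std \models \chi$ by Proposition~\ref{prop:games:std:cor}. Moreover, on the $\HF$-Functions specified by $\Skolem$ the standard model of $\FSO(\Skolem)$ agrees with $\Std$, since the uniqueness clause $\exists!$ in each $\varphi_{n,m}$ pins the interpretation down; hence the two models validate the same formulae built from the specified functions, which are the only ones occurring in \S\ref{sec:compl}.

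For the completeness direction I would audit the derivations feeding \S\ref{sec:compl} --- those of \S\ref{sec:pos}--\S\ref{sec:aut} and \S\ref{sec:msow} --- checking that each invokes the theory of $\HF$-sets only through (i) one of the $\HF$-Functions of~\ref{item:ax:hf:first}--\ref{item:ax:hf:last} or of Convention~\ref{conv:games:colors}, together with its defining bounded property, or (ii) a closed $\HF$-formula true in $V_\omega$. By hypothesis every such function lies in $\Skolem$, so its defining axiom belongs to $\FSO(\Skolem)$, and the closed true $\HF$-formulae are axioms by construction. Since the Axiom $(\PosDet)$ and the completeness of $\MSO$ on $\omega$-words (Theorem~\ref{thm:msow:compl}) are independent of $\Skolem$, the import into $\FSOW$ and its lifting to tree paths (Lemma~\ref{lem:bfsos:bfso} and Proposition~\ref{prop:bfsos:bfso}) replay for $\FSOW(\Skolem)$ unchanged. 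Replacing $\FSO$ by $\FSO(\Skolem)$ throughout \S\ref{sec:compl} then yields every stated result, in particular the analogue of Theorem~\ref{thm:main}.

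The main obstacle is verifying that the chosen list of functions is genuinely \emph{sufficiently large}, and concretely this reduces to re-deriving the well-ordering of $\HF$-sets (Remark~\ref{rem:ax:hf:well-order-hf}) and the resulting $\HF$-induction scheme (Remark~\ref{rem:hfchoice}.(\ref{rem:hfchoice:ind})), which are used pervasively --- e.g. in Lemma~\ref{lem:games:infplay:pos} and throughout the complementation argument of \S\ref{sec:neg} --- but are not themselves among the named constructions. I would obtain the well-order $\preceq_k$ as the comprehension term $\{m \In k \times k \st \psi(\pi_1 m, \pi_2 m, k)\}$, where $\psi$ expresses the canonical $\in$-rank ordering, using the products and projections of~\ref{item:ax:hf:fun:prod} together with the comprehension functions of \S\ref{sec:ax:hf}; these come with their defining membership properties as universally quantified axioms in $\FSO(\Skolem)$. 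It then remains to prove $\po(\preceq_k,k) \land \formfont{WO}(\preceq_k,k) \land \formfont{WO}(\succeq_k,k)$ for all $k$, which is a bounded $\HF$-statement and follows from the comprehension characterization together with the closed true $\HF$-formula axioms. Once this single point is secured, the remainder of the audit is routine, since every other appeal to the $\HF$-theory in \S\ref{sec:compl} and its dependencies factors through the explicitly named functions.
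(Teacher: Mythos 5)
Your overall plan is the natural one, and in fact it is the only "proof" the paper has: the theorem in \S\ref{sec:remcompl} is stated bare, with the implicit justification being exactly your two steps (first, $\FSO(\Skolem)$ is a subtheory of $\FSO$; second, an audit showing that the dependency cone of \S\ref{sec:compl} touches the $\HF$-theory only through the specified functions and closed true $\HF$-formulae). Your first step is correct: since every axiom of $\ZFCM(\SK)$ is an axiom of $\Sk(\ZFCM)$, condition~(\ref{eq:ax:hf:zfc}) holds for each $(n,m)\In\SK$, so the Skolem axioms of $\FSO(\Skolem)$ are axioms of $\FSO$, the closed-formula axioms are provable by Remark~\ref{rem:ax:hf:vomega}, and the $\exists!$ clauses force the specified interpretations to coincide with those of Convention~\ref{conv:ax:hf:ac}. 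You also correctly spot the one genuinely non-routine point: the well-orders of Remark~\ref{rem:ax:hf:well-order-hf}, and the $\HF$-induction derived from them, are used pervasively (Lemmas~\ref{lem:games:maxlinbounded}, \ref{lem:games:infplay:pos}, the complementation of \S\ref{sec:neg}, the Simulation Theorem) yet are not among the named $\HF$-Functions.

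Your resolution of that obstacle, however, does not work. You build $\preceq_k$ by comprehension for a \emph{free} variable $k$ and then claim its correctness "for all $k$" "follows from \ldots the closed true $\HF$-formula axioms". A formula with a free $\HF$-variable cannot be a consequence of closed axioms alone, and here no derivation can exist at all: interpret the $\HF$-sort of $\FSO(\Skolem)$ over (the hereditarily well-founded sets of) any model of $\ZFC$, with the specified functions given by their set-theoretic definitions. Every Skolem axiom of $\FSO(\Skolem)$ holds there, being a theorem of $\ZFCM(\SK)$, and every closed true $\HF$-formula holds by absoluteness of bounded formulae with hereditarily finite parameters; yet such a model contains infinite sets, on which no linear order is well-founded in both directions. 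So the open statement of Remark~\ref{rem:ax:hf:well-order-hf} is underivable in $\FSO(\Skolem)$ (this is in fact a soft spot of the paper's own Remark, whose justification via Remark~\ref{rem:ax:hf:vomega} covers only closed formulae). The viable fix runs the other way: the results of \S\ref{sec:compl} invoke this material only for \emph{$\HF$-closed} data---acceptance games of $\HF$-closed automata (Definition~\ref{def:aut:hfclosed}) and the reduced games of \S\ref{sec:compl:red}, whose labels are closed $\HF$-terms---and for a closed term $K$ the statement that $K$ carries an order that is a well-order in both directions is itself a closed $\HF$-formula true in $V_\omega$, hence an axiom of $\FSO(\Skolem)$ outright, with no rank-ordering construction needed. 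What the audit must then actually verify is that the variable-label lemmas of \S\ref{sec:pos}--\S\ref{sec:aut} and \S\ref{sec:sim} (and likewise the Comprehension for $\HF$-Sets of Remark~\ref{rem:hfchoice}, which has the same closed/open issue) are needed in \S\ref{sec:compl} only at closed instantiations, and that their proofs specialize to such instantiations. That check is where the real content of the theorem lies, and it is more delicate than the "routine" audit your proposal describes.
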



\section{The Simulation Theorem}
\label{sec:sim}

\noindent
This Section is devoted to the proof of the
\emph{Simulation Theorem}, cf.~\cite{ej91focs,ms95tcs}.

\begin{thm}[Simulation Theorem~\ref{thm:aut:nd:sim}]
\label{thm:sim}
For each $\HF$-closed parity automaton $\At A : \Sigma$
there is a non-deterministic $\HF$-closed parity automaton
$\ND(\At A) : \Sigma$
such that 
\[
\FSO
\thesis\quad \Lang(\ND(\At A)) = \Lang(\At A)
\]
\end{thm}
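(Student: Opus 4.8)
The plan is to construct, for a given $\HF$-closed parity automaton $\At A : \Sigma$, a non-deterministic $\HF$-closed parity automaton $\ND(\At A)$ whose states record enough information about a run of $\At A$ that the existence of a winning $\Prop$-strategy in $\G(\ND(\At A),F)$ becomes equivalent to the existence of one in $\G(\At A,F)$. Following~\cite{ej91focs,ms95tcs,walukiewicz02tcs}, the key idea is that a state of $\ND(\At A)$ at tree position $x$ should encode the set of states of $\At A$ reachable at $x$ together with a bookkeeping device (a \emph{memory}) tracking the relevant acceptance information along branches. Concretely, a state of $\ND(\At A)$ will be a function assigning to each state $q \in Q_{\At A}$ currently ``live'' a finite piece of data sufficient to determine, on every infinite branch, whether the min-parity condition of $\At A$ is met. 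Since $\At A$ is $\HF$-closed, all of these objects ($Q_{\At A}$, $\trans_{\At A}$, $\col_{\At A}$, $n_{\At A}$) are concrete $\HF$-sets, so the state set of $\ND(\At A)$, being built by the $\HF$-Functions of~\S\ref{sec:ax:hf} (powerset, products, function spaces) and the arithmetic operations of Convention~\ref{conv:games:colors}, is again a concrete $\HF$-set; this keeps $\ND(\At A)$ $\HF$-closed, as required by Definition~\ref{def:aut:hfclosed}.

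First I would fix the combinatorial shape of the memory. The classical construction reduces branchwise acceptance to an $\omega$-word condition, and here is where McNaughton's Theorem enters: along each branch, $\Prop$ must guarantee that the sequence of states visited is accepted by the relevant $\omega$-word parity condition, and to make this a \emph{deterministic} branch condition one determinizes a Büchi (or parity) automaton on $\omega$-words via McNaughton. Crucially, this paper does \emph{not} reprove McNaughton internally; instead it imports the needed instance through Proposition~\ref{prop:bfsos:bfso}, which lifts any $\FSOW$-provable (indeed $\StdN$-valid) statement about $\omega$-words to the rooted paths $P$ of $\FSOD$ via the relativization $\varphi^P$. So the plan is to phrase the correctness of the branchwise memory as a closed $\FSOW$-formula asserting the determinization's correctness, verify it holds in $\StdN$ (where it is exactly McNaughton's Theorem plus the parity-index manipulations), and then invoke Proposition~\ref{prop:bfsos:bfso} to obtain, inside $\FSOD$, that along every rooted path the memory faithfully tracks the acceptance condition. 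This is the step where $\HF$-closedness of $\At A$ is indispensable, since Proposition~\ref{prop:bfsos:bfso} only applies to closed $\FSOW$-formulae.

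Next I would establish the two inclusions $\Lang(\ND(\At A)) \subseteq \Lang(\At A)$ and $\Lang(\At A) \subseteq \Lang(\ND(\At A))$ inside $\FSO$, each by translating strategies. For one direction, from a winning $\Prop$-strategy $\strat$ in $\G(\At A,F)$ one reads off, by $\HF$-Bounded Choice (Theorem~\ref{thm:funto:choice}) and Comprehension for Product Types (Theorem~\ref{thm:funto:ca}), a non-deterministic strategy in $\G(\ND(\At A),F)$: at each position $\Prop$ guesses the conjunction of $\trans_{\At A}$ consistent with $\strat$ and updates the memory deterministically. For the converse, a winning strategy in the non-deterministic game projects back to a strategy in the alternating game, using the fact (Lemma~\ref{lem:aut:nd:unique} and Corollary~\ref{cor:aut:nd:unique}) that in a non-deterministic automaton the state sequence is determined by the tree positions, so the guessed run is unambiguous. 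The passage from ``winning along every branch'' to ``winning infinite play'' is where the path-based results of~\S\ref{sec:pos:path} and~\S\ref{sec:games:plays}—notably the Predecessor Lemmas~\ref{lem:games:predpath} and~\ref{lem:games:predplays} and Lemma~\ref{lem:games:pathplays}—are needed to convert reasoning about $\omega$-branches (paths) into reasoning about infinite plays in the superposed game structure. Remark~\ref{rem:aut:propplays}, stating that the parity outcome depends only on the $\Prop$-positions of a play, will let me compare plays in the two games that visit the same states.

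The main obstacle, I expect, is the branchwise-to-global argument: the memory structure certifies acceptance \emph{branch by branch} (an $\omega$-word statement), but acceptance of the tree automaton is a statement about the whole acceptance game, and the two must be reconciled \emph{positionally}, since $\FSO$ only expresses positional strategies. Because the language cannot quantify over histories, I cannot simply run the textbook argument that tracks unbounded information along a play; instead the entire acceptance record must be packed into the finite memory carried in the states of $\ND(\At A)$, and I must show in $\FSO$ that this finite positional data suffices. Making the memory latch-and-reset mechanism both finite (hence $\HF$-encodable) and provably correct under the constraint that $\Opp$ controls only tree directions while $\Prop$ controls transitions—the inherent asymmetry stressed in~\S\ref{sec:pos}—is the delicate heart of the proof, and is precisely where the careful path/play infrastructure of the preceding sections, together with the McNaughton instance imported via Proposition~\ref{prop:bfsos:bfso}, must be brought to bear.
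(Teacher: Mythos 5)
Your proposal is correct and follows essentially the same route as the paper: a subset-style construction tracking traces, the branchwise acceptance condition recognized as $\omega$-regular, McNaughton's Theorem imported (not reproved) via Proposition~\ref{prop:bfsos:bfso} using $\HF$-closedness, a product with the resulting deterministic parity $\omega$-word automaton, and strategy translations in both directions relying on Corollary~\ref{cor:aut:nd:unique}, the Predecessor Lemmas, Choice/Comprehension for Product Types, and Remark~\ref{rem:aut:propplays}. The only notable difference is presentational: the paper factors the construction through an intermediate automaton $\oc \At A$ whose states are non-empty sets of \emph{pairs} of $\At A$-states (so that traces can be reconstructed) and whose acceptance condition is the $\FSO$-formula ``all traces satisfy $\At A$'s parity condition,'' proving $\Lang(\oc \At A) = \Lang(\At A)$ first (Theorem~\ref{thm:sim:sim:cor}) before forming the product $\ND(\At A)$ of $\oc \At A$ with the McNaughton automaton $\At D$, whereas you fold these two stages into one construction.
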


We assume that $\At A$ is $\HF$-closed in Theorem~\ref{thm:sim}
because we rely on McNaughton's Theorem~\cite{mcnaughton66ic},
in the standard model for $\omega$-words,
which we import into $\FSO$ thanks to Proposition~\ref{prop:bfsos:bfso}.

Before a detailed exposition, let us explain the main idea
behind Theorem~\ref{thm:sim}.
We momentarily work in the usual mathematical universe
(\ie\@ not in the formal theory $\FSO$).
Recall that in a non-deterministic automaton $\At N$, $\Opp$ can only explicitly choose
tree directions, since for each possible $\Conj_{\At N}$ in the image of $\trans_{\At N}$,
if $(d,q),(d,q') \in \Conj_{\At N}$ then $q = q'$, by definition.
In order to obtain a non-deterministic automaton $\At N$
from an alternating automaton $\At A$,
the idea is to perform a subset construction,
such that each $\Conj_{\At N}$ in the image of $\trans_{\At N}$ is of the form
\[
\Conj_{\At N} \quad=\quad \{(d,S'_d) \st d \in \Dir \}
\]
where each $S'_d$ gathers states $q$ such that $(d,q) \in \Conj_{\At A}$
with $\Conj_{\At A}$ in the image of $\trans_{\At A}$.

More precisely, assuming $S \in \Pne(Q_{\At A})$, one may consider functions
\begin{equation}
\label{eq:sim:f}
\begin{array}{c !{~~}c!{~~} c !{~~}c!{~~} l}
  f
& :
& S
& \longto
& \Pne(\Dir \times Q_{\At A})
\\
&
& q
& \longmapsto
& \Conj_q \in \trans_{\At A}(q,\al a)
\end{array}
\end{equation}
Each such $f$ induces 
\[
\Conj_{\At N}(f) = \{(d,S'_d(f)) \st d \in \Dir \}
\quad\text{where}\quad
S'_d(f) = \{q \st (d,q) \in f(q) \} 
\]
and we can let
\[
\trans_{\At N}(S,\al a) \quad\deq\quad
\{\Conj_{\At N}(f) \st \text{$f$ is as in~\eqref{eq:sim:f}}\}
\]

\noindent
Then, for each $\Conj_{\At N}(f)$ in the image of $\trans_{\At N}$
and for each tree direction $d \in \Dir$, the set $S'_d$
is unique such that $(d,S'_d) \in \Conj_{\At N}(f)$,
and $\At N$ satisfies the property asked in Definition~\ref{sec:aut:nd}
to non-deterministic automata.

There is however a difficulty in the definition
of the acceptance condition of $\At N$.
We follow here the construction of~\cite{walukiewicz02tcs}
where the states of $\At N$, rather than being simply sets of states, are
sets of pairs of states $S \in \Po(Q_{\At A} \times Q_{\At A})$.
Then an infinite sequence of states $S_0,S_1,\ldots \in Q_{\At N}$
induces a set of \emph{traces} $q_0,q_1,\ldots \in Q_{\At A}$
with $(q_i,q_{i+1}) \in S_{i+1}$.
For $(S_n)_{n \in \NN} \in Q_{\At N}^\omega$ to be accepting, one may then require
all its traces $(q_n)_{n \in \NN} \in Q_{\At A}^\omega$
to be accepting.
We may obtain a parity condition for $\At N$
by noticing that its acceptance condition is $\omega$-regular
(\ie\@ definable in $\MSO$ over $\omega$-words).
This allows us to apply McNaughton's Theorem~\cite{mcnaughton66ic},
and to obtain a deterministic $\omega$-word parity automaton $\At D$
whose language is the set of accepting sequences 
$(S_n)_{n \in \NN} \in Q_{\At N}^\omega$.
A suitable product of $\At N$ with $\At D$ then gives a non-deterministic
parity automaton equivalent to $\At A$.

The organization of this Section follows the above construction.
Working in $\FSO$, consider a parity automaton $\At A :\Sigma$.
We will build a \emph{non-deterministic} automaton
$\ND(\At A): \Sigma$ with the same language.
The automaton $\ND(\At A)$ will be defined in three steps:
\begin{enumerate}
\item
We first define in~\S\ref{sec:sim:sim} a non-deterministic automaton $\oc \At A$
in the sense of Definition~\ref{def:aut:alt}.
The acceptance condition of $\oc \At A$ will be given by an $\FSO$-formula
with a free Function variable
(intended to be range over infinite plays) rather than a parity condition.

\item
For an \emph{$\HF$-closed} parity automaton $\At A$, 
the formula describing the acceptance condition of $\oc \At A$
is then transformed in~\S\ref{sec:sim:omega} to a $\FSOW$ formula
relativized to infinite rooted tree paths
(see~\S\ref{sec:msow}).
This construction relies,
via Proposition~\ref{prop:bfsos:bfso},
on
Proposition~\ref{prop:msow:bfsos}
(\ie\@ Proposition~\ref{prop:cons:mso:cons})
which requires the manipulation of closed (and in particular $\HF$-closed)
objects.

\item
Using the tools of~\S\ref{sec:msow},
and relying on McNaughton's Theorem~\cite{mcnaughton66ic}
(see also \eg~\cite{thomas97handbook,pp04book}),
in~\S\ref{sec:sim:par}
we will then turn $\oc \At A$ into an equivalent non-deterministic 
\emph{parity} automaton $\ND(\At A)$,
in the sense of Definition~\ref{def:aut:parity}.
\end{enumerate}

\cnote{\CR:NOTE\\
Beware that $\Dir$ is used both as an $\HF$-set and as a parameter in the definition
of $\FSOD/\MSOD$}

In this Section, it is convenient
to work with the following games.

\begin{defi}
\label{def:sim:inputgames}
Given an automaton $\At A :\Sigma$, we let $\G(\At A)$ be the game with
\[
\PL{\G(\At A)} ~~\deq~~ Q_{\At A}
\qquad\text{and}\qquad
\OL{\G(\At A)} ~~\deq~~ Q_{\At A} \times \Pne(\Dir \times Q_{\At A})
\]
and with transitions defined by 
$\HF$-Bounded Choice for Product Types (Theorem~\ref{thm:funto:choice})
and Comprehension for $\HF$-Sets (Remark~\ref{rem:hfchoice}) as
\[
\begin{array}{l !{\qquad} r !{\quad\text{iff}\quad} l}
& (q',\Conj) \in \EGP{\G(\At A)}(x,q) 
& q' \Eq q ~~\land~~
  (\exists \al a \in \Sigma)\left[\Conj \in \trans_{\At A}(q,\al a)\right]
\\
  \text{and}
& (d,q') \in \EGO{\G(\At A)}(x,(q,\Conj))
& (d,q') \in \Conj
\end{array}
\]
\end{defi}

\noindent
As for winning,
we will consider the game $\G(\At A)$ as being equipped
with the winning condition $\Omega_{\At A}$ in the sense of~\S\ref{sec:games:win}.
Note that for $F:\Sigma$, the acceptance game $\G(\At A,F)$
is a subgame of $\G(\At A)$ in the sense of Def.~\ref{def:games:sub}.

\begin{rem}
\label{rem:sim:inputgames}
Note that if $\Aut(\At A)$ then $\Sigma$ is non-empty,
so we indeed have $\Game(\G(\At A))$.
For each $F:\Sigma$,
the acceptance game
$\G(\At A,F)$ is a subgame of $\G(\At A)$
(in the sense of Def.~\ref{def:games:sub}).
In particular infinite plays in $\G(\At A,F)$ are infinite plays in $\G(\At A)$.
Moreover, it is easy to see that (winning) strategies on $\G(\At A,F)$
are (winning) strategies on $\G(\At A)$.

Furthermore, note that the game $\G(\At A)(\gle)$ induced by Remark~\ref{def:games:sub}
from Definition~\ref{def:sim:inputgames}
is precisely the game $\G(\At A)(\gle)$ of~\eqref{eq:aut:gle}.
It follows that in the case of a parity automaton $\At A$,
we unambiguously extend the notation of Definition~\ref{def:aut:parity}
and write
$\Par(\At A,\hat \col_{\At A},n_{\At A},U)$
or
$\Par(\At A,U)$
for the formula
$\Par(\G(\At A)(\gle),\hat \col_{\At A},n_{\At A},U)$.
\end{rem}

\subsection{The Construction of $\oc \At A$.}
\label{sec:sim:sim}
Consider an alternating parity automaton $\At A$,
in the sense of Definition~\ref{def:aut:parity}.
So we have
$\At A = (Q_{\At A},\, \init q_{\At A},\, \trans_{\At A},\, \col_{\At A},\, n_{\At A})$
where
\[
\trans_{\At A} ~~:~~
Q_{\At A} \times \Sigma ~~\longto~~ \Pne(\Pne(\Dir \times Q_{\At A}))
\qquad\text{and}\qquad
\funto{\col_{\At A}}{Q_{\At A}}{[0,n_{\At A}]}
\]

\noindent
We define the state set and the initial state of $\oc \At A$ as:
\[
Q_{\oc \At A}
\quad\deq\quad
\Pne(Q_{\At A} \times Q_{\At A})
\qquad\text{and}\qquad
\init q_{\oc \At A}
\quad\deq\quad
(\init q_{\At A},\init q_{\At A})
\]

\noindent
The transition function of $\oc \At A$ is defined as follows,
using Remark~\ref{rem:aut:hfclosed}.
For $\al a \in \Sigma$ and $S \in Q_{\oc \At A}$
we let $\oc\Conj \in \trans_{\oc \At A}(S,\al a)$
if and only if there is some $\HF$-set
\[
\begin{array}{c c c c l}
  f
& :
& S
& \longto
& \Pne(\Dir \times Q_{\At A})
\\
&
& q
& \longmapsto
& \Conj_q \in \trans_{\At A}(q,\al a)
\end{array}
\]
such that
$\oc\Conj = \{(d,S'_d) \st d \in \Dir \land S'_d \neq \emptyset\}$,
where
\begin{equation}
\label{eq:sim:trans}
S'_d \quad=\quad \{ (q,q') \st q \in \pi_2(S) ~~\land~~ (d,q') \in f(q) \}
\end{equation}
and where $\pi_2$
is a projection $\HF$-Function of~\S\ref{sec:ax:hf}.\ref{item:ax:hf:fun:prod}.

\begin{rem}
We indeed have
\[
\trans_{\oc \At A} ~~:~~
Q_{\oc\At A} \times \Sigma ~~\longto~~ \Pne(\Pne(\Dir \times Q_{\oc \At A}))
\]
since for $S \in Q_{\oc \At A} = \Pne(Q_{\At A} \times Q_{\At A})$,
by
$\HF$-Bounded Choice for $\HF$-Sets
(\S\ref{sec:ax:choice})
there is always some $f \in \Pne(\Dir \times Q_{\At A})^{\pi_2(S)}$
with $\forall q \in \pi_2(S)(f(q) \in \trans_{\At A}(q,\al a))$,
and moreover such that
$S'_d$ is non-empty for at least one $d \in \Dir$.

Note our unusual choice of taking \emph{non-empty} sets as states of $\oc\At A$.
It would have been more natural to allow the empty set as a state,
in particular because it would have allowed us to strengthen
Corollary~\ref{cor:aut:nd:unique} to an ``exists unique'' statement.
This could also have worked in our setting where games are assumed to
have no dead ends, and in which transitions of alternating automata
range over non-empty sets of non-empty subsets of $\Dir \times Q_{\At A}$.
However, the empty state would have appeared in the transitions of
$\oc\At A$ only in case there is some tree direction $d \in \Dir$
which is not available to $\Opp$ at some stage.
Since the empty state of $\oc\At A$
would have been unconditionally winning for $\Prop$,
this would have lead to an additional case to handle in
the proof of completeness of $\oc \At A$
(Proposition~\ref{prop:sim:sim:prom} below).
\end{rem}

So far we have defined for $\oc\At A$ a state set (with an initial state)
and a transition function.
As explained above, we will not directly equip it with a parity condition.
Instead, we will define its acceptance condition via an $\FSO$-formula
$\Win_{\oc\At A}$,
in the sense of Definition~\ref{def:aut:alt}.
Consider first $\funto{V}{\G(\oc \At A)}{\two}$
and $\funto{T}{\G(\At A)}{\two}$.
We say that $T$ is a \emph{trace} in $V$ if the following
formula $\Trace(T,V)$ holds,
\[
\begin{array}{c @{~~} l}
& \Path(\G(\At A),(\Root,\init q_{\At A}),T)
\\[0.5em]

  \land
& \big(\forall (x,q) \in \PP T \big) \big( \exists S \in Q_{\oc \At A} \big)
  \Big[ (x,S) \in \PP V ~~\land~~ q \in \pi_2(S) \Big]
\\[0.5em]

  \land
& \big( \forall (x,q),(y,q') \in \PP T \big)
  \big( \forall S \in Q_{\oc \At A} \big)
  \left[
  (x,q) \glt^{\Prop;\Opp}_{\G(\At A)} (y,q')
  ~~\limp~~
  (y,S) \in \PP V
  ~~\limp~~
  (q,q') \in S
  \right]
\end{array}
\]

\noindent
where we use the following formula:
\[
\begin{array}{r !{\quad\deq\quad} l}
  u \glt^{\Prop;\Opp}_{\G(\At A)} v
& \big( \exists w \in \OP{\G(\At A)} \big)
  \left(
    \gsucc_{\G(\At A)}(u,w) ~\land~ \gsucc_{\G(\At A)}(w,v)
  \right)
\end{array}
\]

\noindent
The formula $\Win_{\oc\At A}(V)$ is defined to be:
\[
\Win_{\oc\At A}(V)
\quad\deq\quad
\big( \forall \funto{T}{\G(\At A)}{\two} \big)
\left[
\Trace(T,V)
~~\limp~~
\Par(\At A,\hat\col_{\At A},n_{\At A},T)
\right]
\]

\noindent
Recall our notation $\Par(\At A,\hat\col_{\At A},n_{\At A},-)$
from Remark~\ref{rem:sim:inputgames}.
Note that $\Win_{\oc \At A}$ 
requires no condition 
\wrt\@ the \emph{transitions} of $\G(\At A)$.
We are now going to show that $\oc\At A$ has the same language as~$\At A$.

\begin{thm}
\label{thm:sim:sim:cor}
Fix a parity automaton $\At A : \Sigma$ and consider the automaton
$\oc \At A : \Sigma$ as defined above.
Then $\FSOD$ proves that for all $F:\Sigma$,
$\oc \At A$ accepts $F$ if and only if $\At A$ accepts $F$.
\end{thm}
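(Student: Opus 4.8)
The plan is to prove the two implications separately, in each case transferring a winning $\Prop$-strategy from one acceptance game to the other. Since every strategy in $\FSO$ is positional, both directions reduce to manipulating Functions on game positions, and the whole difficulty lies in relating the infinite plays of $\G(\At A,F)$ to the traces sitting inside infinite plays of $\G(\oc\At A,F)$. Throughout I would freely use Comprehension and $\HF$-Bounded Choice for Product Types (Theorems~\ref{thm:funto:ca} and~\ref{thm:funto:choice}) to define the auxiliary Functions, the Predecessor Lemma~\ref{lem:games:predplays} and Lemma~\ref{lem:games:pathplays} to navigate and assemble plays, and the Recursion Theorem (Proposition~\ref{prop:games:rec}) to define objects by induction along $\gle$. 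I also note once and for all that, by Remark~\ref{rem:aut:propplays}, the parity condition $\Par(\At A,-)$ depends only on the $\Prop$-positions of a play, which is what lets me compare a play of $\At A$ with a trace.

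For the direction ``$\At A$ accepts $F$ implies $\oc\At A$ accepts $F$'' I would start from a winning positional $\Prop$-strategy $\strat$ on $\G(\At A,F)$ and define a $\Prop$-strategy $\oc\strat$ on $\G(\oc\At A,F)$ as follows: at a position $(x,S)$ with $S\in Q_{\oc\At A}$, take the function $f$ sending each $q\in\pi_2(S)$ to the conjunction chosen by $\strat$ at $(x,q)$, i.e.\ the second component of $\strat(x,q)$, and let $\oc\strat(x,S)\deq(S,\oc\Conj)$ for the $\oc\Conj\in\trans_{\oc\At A}(S,F(x))$ induced by $f$ via~\eqref{eq:sim:trans}. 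The key claim is that for every infinite play $V$ of $\oc\strat$ and every $T$ with $\Trace(T,V)$, the sequence of $\Prop$-states recorded by $T$ is realised as an infinite play $U$ of $\strat$ in $\G(\At A,F)$ along the same branch, with $\PP U$ matching the $\Prop$-positions of $T$. This is proved by building $U$ from $T$ via Lemma~\ref{lem:games:pathplays}, the legality of the $\Prop$-moves being exactly the connection relation $(q,q')\in S$ provided by $f=\strat$. Since $\strat$ is winning, $\Par(\At A,U)$ holds, whence $\Par(\At A,T)$ by Remark~\ref{rem:aut:propplays}, so $\Win_{\oc\At A}(V)$.

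For the converse ``$\oc\At A$ accepts $F$ implies $\At A$ accepts $F$'' I would start from a winning positional $\Prop$-strategy $\oc\strat$ on $\G(\oc\At A,F)$. Because $\oc\At A$ is non-deterministic, Corollary~\ref{cor:aut:nd:unique} furnishes, for each tree position $x$ reachable in some play of $\oc\strat$, a unique state $S_x$; by $\HF$-Bounded Choice I would extract a witnessing $f_x$ for the move $\oc\strat(x,S_x)$ and set $\strat(x,q)\deq(q,f_x(q))$ for $q\in\pi_2(S_x)$ (and arbitrarily otherwise). To see that $\strat$ is winning, I would take an arbitrary infinite play $U$ of $\strat$ in $\G(\At A,F)$, construct (using Lemma~\ref{lem:games:infplay:pos} together with the uniqueness of $S_x$) the play $V$ of $\oc\strat$ following the same branch as $U$, and check that the $\Prop$-trace $T_U$ of $U$ satisfies $\Trace(T_U,V)$. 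Then $\Win_{\oc\At A}(V)$ gives $\Par(\At A,T_U)$, hence $\Par(\At A,U)$ by Remark~\ref{rem:aut:propplays}, so $U$ is accepting.

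I expect the main obstacle to be this second (completeness) direction, and specifically the verification of $\Trace(T_U,V)$ in $\FSO$. This requires proving, by induction along the branch, that the connection relation $(q,q')\in S_x$ carried by the states of $\oc\At A$ faithfully records each $\gsucc$-step taken by $U$ between consecutive $\Prop$-positions --- precisely the bookkeeping encoded in the transition formula~\eqref{eq:sim:trans}. Because all reasoning must be positional, with no access to the length of a play or to finite play prefixes, the matching of branches and the induction establishing the trace property have to be phrased through the Recursion Theorem and the Predecessor Lemmas rather than by the usual informal prefix arguments, and this is where the care will be needed.
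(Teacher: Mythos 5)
Your proposal is correct and follows essentially the same route as the paper, which splits the theorem into Propositions~\ref{prop:sim:sim:prom} and~\ref{prop:sim:sim:der}: the same strategy constructions in both directions (pointwise collection of $\strat$'s conjunctions for soundness; Corollary~\ref{cor:aut:nd:unique} plus extraction of a witnessing $f$ for completeness), the same transfer between plays of $\At A$ and traces in plays of $\oc\At A$ via Remark~\ref{rem:aut:propplays} and Lemma~\ref{lem:games:pathplays}, and the same reliance on the Recursion Theorem for building the play $V$ of $\stratbis$ along the branch of a given play of $\strat$. Your closing remark correctly identifies where the paper's effort actually goes, namely the positional, prefix-free bookkeeping needed to verify the trace condition.
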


The proof of Theorem~\ref{thm:sim:sim:cor}
is split into Propositions~\ref{prop:sim:sim:prom} and~\ref{prop:sim:sim:der} below.

\begin{conv}
In Propositions~\ref{prop:sim:sim:prom} and~\ref{prop:sim:sim:der},
for fixed automata $\At A$ and $\oc \At A$,
we let
\[
\iota ~~\deq~~ (\Root, \init q_{\At A})
\qquad\text{and}\qquad
\iota_\oc ~~\deq~~ (\Root, \init q_{\oc \At A})
\]
\end{conv}

\begin{prop}
\label{prop:sim:sim:prom}
Fix a parity automaton $\At A : \Sigma$ and consider the automaton
$\oc \At A : \Sigma$ as defined above.
Then $\FSOD$ proves that for all $F:\Sigma$,
if $\At A$ accepts $F$ then $\oc \At A$ accepts $F$.
\end{prop}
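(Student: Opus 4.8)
The plan is to promote the given winning $\Prop$-strategy on $\G(\At A,F)$ to a winning $\Prop$-strategy on $\G(\oc\At A,F)$, the point being that every trace forced through an $\oc\At A$-play of this strategy can be realised as a genuine $\strat$-play of $\At A$ carrying the same $\Prop$-states. Concretely, I would fix $F:\Sigma$ with $F \in \Lang(\At A)$ together with a witnessing $\Prop$-strategy $\strat$ on $\G(\At A,F)$, so that every infinite play of $\strat$ from $\iota = (\Root,\init q_{\At A})$ satisfies $\Par(\At A,-)$. Using $\HF$-Bounded Choice for Product Types (Theorem~\ref{thm:funto:choice}) I would define a $\Prop$-strategy $\oc\strat$ on $\G(\oc\At A,F)$ by $\oc\strat(x,S) \deq (S,\oc\Conj)$, where $\oc\Conj$ is the element of $\trans_{\oc\At A}(S,F(x))$ induced, as in the definition of $\trans_{\oc\At A}$, by the function $q \mapsto \pi_2(\strat(x,q))$ on $\pi_2(S)$. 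Since $\strat(x,q) = (q,\Conj_q)$ with $\Conj_q \in \trans_{\At A}(q,F(x))$, this $f$ is an admissible witness, so $\oc\Conj \in \trans_{\oc\At A}(S,F(x))$ and $\Strat_\Prop(\G(\oc\At A,F),\oc\strat)$ holds.

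To see that $\oc\strat$ is winning I would fix an arbitrary infinite play $V$ of $\oc\strat$ from $\iota_\oc$ and an arbitrary trace $T$ with $\Trace(T,V)$, and show $\Par(\At A,T)$. The $\Prop$-positions of $T$ occur at tree positions appearing in $V$, and both $T$ and $V$ are cofinal paths, so they lie on one common tree branch; thus $T$ consists, level by level, of a $\Prop$-position $(x_i,q_i)$ followed by an $\Opp$-position, where $(x_{i+1},S_{i+1}) \in \PP V$, $x_{i+1} = \Succ_{d_i}(x_i)$, $q_i \in \pi_2(S_i)$, and, by the last clause of $\Trace$, $(q_i,q_{i+1}) \in S_{i+1}$. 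By Comprehension for Product Types (Theorem~\ref{thm:funto:ca}) I would then let $W$ consist of the $\Prop$-positions $(x_i,q_i)$ of $T$ together with the \emph{corrected} $\Opp$-positions $(x_i,\strat(x_i,q_i))$, so that $\PP W = \PP T$ by construction.

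The central verification is that $W$ is an infinite play of $\strat$. Each step $(x_i,q_i) \edge{}{\Prop} (x_i,\strat(x_i,q_i))$ is a $\strat$-move by definition; each step $(x_i,\strat(x_i,q_i)) \edge{}{\Opp} (x_{i+1},q_{i+1})$ is legal because, writing $\strat(x_i,q_i) = (q_i,\Conj_i)$, the membership $(q_i,q_{i+1}) \in S_{i+1} = S'_{d_i}$ unfolds, by the defining clause~(\ref{eq:sim:trans}) of $\oc\strat$, to $(d_i,q_{i+1}) \in \pi_2(\strat(x_i,q_i)) = \Conj_i$. Since $T$ is a path, $W$ is a $\gsucc$-path all of whose consecutive pairs are edges of $\G(\At A,F)\restr\{\strat\}_\Prop$, so Lemma~\ref{lem:games:pathplays} yields $\Play(\G(\At A,F)\restr\{\strat\}_\Prop,\iota,W)$, i.e.\ $W$ is a play of $\strat$. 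As $\strat$ is winning, $\Par(\At A,W)$ holds; and since $\PP W = \PP T$ while $\hat\col_{\At A}$ assigns every $\Opp$-position the colour of its underlying $\Prop$-state (Definition~\ref{def:aut:parity}), the parity condition on a path depends only on its $\Prop$-positions exactly as in Remark~\ref{rem:aut:propplays}, giving $\Par(\At A,T)$. As $T$ and $V$ were arbitrary, $\oc\strat$ witnesses $F \in \Lang(\oc\At A)$.

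I would expect the main obstacle to be precisely the passage between the \emph{path} $T$ and the genuine \emph{play} $W$: establishing that $T$ is level-aligned with $V$'s branch, that the inserted $\Opp$-positions make $W$ a $\gsucc$-path whose consecutive pairs are $\strat$-edges (so that Lemma~\ref{lem:games:pathplays} applies), and that the parity condition genuinely transfers from the play $W$ to the mere path $T$ via $\PP W = \PP T$. The last point requires noting that Remark~\ref{rem:aut:propplays}, although stated for plays, rests only on the colours of $\Prop$-states, which $T$ and $W$ share; a small auxiliary claim to that effect (or a direct unfolding of $\Par$ over $T$) is the one genuinely nonroutine ingredient.
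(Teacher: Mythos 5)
Your proposal is correct and follows essentially the same route as the paper's own proof: the same promoted strategy (the paper's $\stratbis$, your $\oc\strat$, built from $q \mapsto \Conj_q$ via $\HF$-Bounded Choice), the same reduction of $\Win_{\oc\At A}$ to producing, for each trace $T$ in a play $V$ of the promoted strategy, a play of $\strat$ with the same $\Prop$-positions obtained by replacing the $\Opp$-positions of $T$ with those dictated by $\strat$, the same unfolding of~(\ref{eq:sim:trans}) to legitimize the $\Opp$-moves, and the same appeal to Lemma~\ref{lem:games:pathplays} and Remark~\ref{rem:aut:propplays} to conclude. The obstacles you flag (path verification and the parity transfer through $\PP W = \PP T$) are exactly the points the paper's proof spells out in detail.
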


\begin{proof}
Let $\strat$ be a winning $\Prop$-strategy in $\G(\At A,F)$.
We define a winning $\Prop$-strategy $\stratbis$ in $\G(\oc \At A,F)$.
Note that
\[
\funto{\strat}{\PP{\G(\At A)}}{\OL{\G(\At A)}}
\qquad\text{and}\qquad
\funto{\stratbis}{\PP{\G(\oc \At A)}}{\OL{\G(\oc \At A)}}
\]

\noindent
First, given a $\Prop$-position $(x,S)$ in $\G(\oc \At A,F)$,
we define a conjunction
\[
\Conj_{(x,S)}
~\in~ \trans_{\oc \At A}(S,F(x))
~\sle~ \Pne(\Dir \times Q_{\oc \At A})
\]
as follows.

\begin{itemize}
\item {\it Definition of $\Conj_{(x,S)}$.}
For each $q \in \pi_2(S)$, $\strat(x,q)$
gives some $\Conj_q \in \trans_{\At A}(q,F(x))$.
$\HF$-Bounded $\HF$-Choice (\S\ref{sec:ax:choice})
then gives 
\[
\begin{array}{c c c c l}
  f
& :
& \pi_2(S)
& \longto
& \Pne(\Dir \times Q_{\At A})
\\
&
& q
& \longmapsto
& \Conj_q \in \trans_{\At A}(q,F(x))
\end{array}
\]
By $\HF$-Comprehension (Remark~\ref{rem:hfchoice}), we then let $\Conj_{(x,S)}$
be $\{(d,S'_d) \st d \in \Dir \land S'_d \neq \emptyset\}$
where each $S'_d$ is defined as in~\eqref{eq:sim:trans}.
\end{itemize}

\noindent
We now define the $\Prop$-strategy $\stratbis$ on $\G(\oc \At A,F)$.
By $\HF$-Bounded Choice for Functions (\S\ref{sec:ax:choice}), 
we let
\[
\stratbis(x,S) \quad\deq\quad (S,\Conj_{(x,S)})
\qquad\text{for each $(x,S) \in \PP{\G(\oc \At A)}$}
\]

\noindent
We have $\Strat_\Prop(\G(\oc \At A,F),\stratbis)$ directly by definition of
$\trans_{\oc \At A}$.
It remains to check that $\stratbis$ is winning in $\G(\oc \At A,F)$.
Consider an infinite play of $\stratbis$, that is some
$\funto{V}{\G(\oc \At A)}{\two}$
such that
$\Play(\stratbis, \iota_\oc , V)$.
Since $\strat$ is winning in $\G(\At A,F)$,
by definition of $\Win_{\oc\At A}$
and by Remarks~\ref{rem:aut:propplays} and~\ref{rem:sim:inputgames},
we are done if we show that:
\[
\big( \forall \funto{T}{\G(\At A)}{\two} \big)
\Big(
\Trace(T,V)
~~\limp~~
\big( \exists \funto{U}{\G(\At A)}{\two} \big)
\big[
\PP U = \PP T ~~\land~~
\Play(\strat, \iota , U) 
\big]
\Big)
\]
Assume $\Trace(T,V)$.
By $\HF$-Comprehension for Product Types,
we let $\funto{U}{\G(\At A)}{\two}$
be such that $\PP U = \PP T$
and such that
$\OP U$ consists of the $\{(x,\strat(x,q))\}$ for $(x,q) \in \PP U$.
Note that we actually have $\funto{U}{\strat}{\two}$.
It remains to check that
\[
\Play(\strat,\, \iota,\, U)
\]
We apply Lemma~\ref{lem:games:pathplays},
whence it remains to show:

\begin{gather}
\label{eq:sim:sim:prom:path}
\Path(\G(\At A), \iota ,U)
\\
\label{eq:sim:sim:prom:succ}
\big( \forall u,u' \in U \big)
\left[
\gsucc_{\G(\At A)}(u,u')
~~\limp~~
u \edge{}{\strat} u'
\right]
\end{gather}

\noindent
Note that $\Path(\G(\At A),\iota ,T)$ since $\Trace(T,V)$.
\smallskip
\begin{itemize}
\item \begin{subproof}[Proof of~\eqref{eq:sim:sim:prom:path}]
We obviously have
$\iota \in \PP U = \PP T$.
Also, given $u \in U$,
if $u \in \PP U = \PP T$ then $\iota \gle_{\G(\At A)} u$,
and if $u \in \OP U$, then $v \edge{}{\strat} u$ for some
$v \in \PP U = \PP T$, so that
$\iota \gle_{\G(\At A)} v \glt_{\G(\At A)} u$.
Moreover, for each $u \in \PP U$, we have
$u \edge{}{\strat} v$ for some $v \in \OP U$,
and we get $\gsucc_{\G(\At A)}(u,v)$ by Proposition~\ref{prop:games:edges}.

It remains to show that $U$ is linearly ordered \wrt\@ $\gle_{\G(\At A)}$.
For $\PP U$ this follows from the same property for $\PP T$.
Now let $u \in \PP U$ and $v' \in \OP U$.
Hence $v'$ is of the form $(x,\strat(x,q))$ with $v \deq (x,q) \in \PP U = \PP T$.
If $u \gle_{\G(\At A)} v$ then 
$u \glt_{\G(\At A)} v'$ and we are done.
Otherwise, $v \glt_{\G(\At A)} u$.
But by definition of $\gle_{\G(\At A)}$, this implies $u = (y,q')$
with $x \Lt y$, so that $v' \glt_{\G(\At A)} u$.
Consider now $u',v' \in \OP U$
and let $u,v \in \PP U$ be their immediate predecessors.
If $u \glt_{\G(\At A)} v$ then $u' \glt_{\G(\At A)} v'$ and we are done.
Otherwise, without loss of generality we have that $u = v$.
But then $u' = v'$ by definition of $U$.
\end{subproof}

\item \begin{subproof}[Proof of~\eqref{eq:sim:sim:prom:succ}]
Assume first $u \in \PP U$.
In this case,
$u$ is of the form $(x,q)$ with $(x,q) \in \PP T$,
and $u'$ is of the form $(x,\strat(x,q'))$
with $(x,q') \in \PP T$.
But $T$ is linearly ordered \wrt\@ $\gle_{\G(\At A)}$,
so that $q = q'$.
It follows that $u \edge{}{\strat} u'$.

Assume now that $u \in \OP U$.
In this case, 
$u$ is of the form $(x,(q,\Conj_{\At A}))$ with $(x,q) \in \PP U = \PP T$
and $(q,\Conj_{\At A}) = \strat(x,q)$.
Moreover, $u' \in \PP U = \PP T$ is of the form
$(\Succ_d(x),q')$.
We thus get $u \edge{}{\strat} u'$ as soon as
\[
(d,q') \in \Conj_{\At A}
\]

\noindent
Since $\Trace(T,V)$
and since $V$ is a play of $\stratbis$,
there are unique $S,S'$
with $(x,S),(\Succ_d(x),S') \in \PP V$
and such that
$q \in \pi_2(S)$ and $q' \in \pi_2(S')$.
Moreover, we necessarily have
$(d,S') \in \Conj_{(x,S)}$
for $(S,\Conj_{(x,S)}) = \stratbis(x,S)$.
But $\Trace(T,V)$ implies $(q,q') \in S'$,
and it follows
that $(d,q') \in \Conj_{\At A}$
by definition of $\Conj_{(x,S)}$.
\end{subproof}
\end{itemize}

\noindent
This concludes the proof of Proposition~\ref{prop:sim:sim:prom}.
\end{proof}

\begin{prop}
\label{prop:sim:sim:der}
Fix a parity automaton $\At A : \Sigma$ and consider the automaton
$\oc \At A : \Sigma$ as defined above.
Then $\FSOD$ proves that for all $F:\Sigma$,
if $\oc \At A$ accepts $F$ then $\At A$ accepts $F$.
\end{prop}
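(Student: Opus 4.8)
The plan is to mirror the promotion argument of Proposition~\ref{prop:sim:sim:prom}, but running it in the opposite direction: from a winning $\Prop$-strategy $\stratbis$ on $\G(\oc \At A,F)$ I would extract a winning $\Prop$-strategy $\strat$ on $\G(\At A,F)$. The feature I exploit is that $\oc \At A$ is non-deterministic by construction, so by Corollary~\ref{cor:aut:nd:unique} each tree position $x$ reachable in an infinite play of $\stratbis$ carries a \emph{unique} state $S_x \In Q_{\oc \At A}$ with $(x,S_x) \In \PP V$ for some infinite play $V$ of $\stratbis$ from $\iota_\oc$ (which exists by Lemma~\ref{lem:games:infplay:pos}). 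I would first record this, noting $S_\Root = \init q_{\oc \At A}$ and that $\stratbis(x,S_x)$ is of the form $(S_x,\oc\Conj_x)$ with $\oc\Conj_x \In \trans_{\oc \At A}(S_x,F(x))$.

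Next I would observe that the choice function witnessing a transition is recoverable from $\oc\Conj$ and $S$: since $S'_d = \{(q,q') \st q \In \pi_2(S) \land (d,q') \In f(q)\}$, one has $(d,q') \In f(q)$ iff $(q,q') \In S'_d$ for the unique $S'_d \neq \emptyset$ with $(d,S'_d) \In \oc\Conj$. Using this, together with $\HF$-Bounded Choice for Product Types (Theorem~\ref{thm:funto:choice}), I would define $\strat$ by setting $\strat(x,q) \deq (q,f_x(q))$ whenever $x$ is reachable (so $S_x$ exists) and $q \In \pi_2(S_x)$, where $f_x$ is recovered from $\stratbis(x,S_x)$, and $\strat(x,q)$ an arbitrary element of $\trans_{\At A}(q,F(x))$ otherwise. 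Because $\oc\Conj_x$ is a genuine transition of $\oc \At A$, the recovered $f_x$ satisfies $f_x(q) \In \trans_{\At A}(q,F(x))$ for $q \In \pi_2(S_x)$, which gives $\Strat_\Prop(\G(\At A,F),\strat)$.

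The heart of the argument is that $\strat$ is winning. I would take an arbitrary infinite play $U$ of $\strat$ from $\iota$ and build an infinite play $V$ of $\stratbis$ from $\iota_\oc$ whose $\Prop$-positions are exactly the $(x,S_x)$ for $(x,q) \In \PP U$. The key invariant, proved by induction along the rooted tree path of $U$ using the Induction Axiom and the Predecessor Lemma~\ref{lem:games:predplays}, is that $S_x$ stays defined along $U$ and that $q \In \pi_2(S_x)$ for each $(x,q) \In \PP U$: when $\Opp$ moves from $(x,(q,f_x(q)))$ in direction $d$ to $(\Succ_d(x),q')$, we have $(d,q') \In f_x(q)$ with $q \In \pi_2(S_x)$, hence $(q,q') \In S'_d \neq \emptyset$, so $(d,S'_d) \In \oc\Conj_x$ is an available $\Opp$-move in $V$, forcing $S_{\Succ_d(x)} = S'_d$ and $q' \In \pi_2(S'_d)$. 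I would then define $V$ by Comprehension for Product Types (Theorem~\ref{thm:funto:ca}) and verify $\Play(\stratbis,\iota_\oc,V)$ exactly as in Proposition~\ref{prop:sim:sim:prom}, via Lemma~\ref{lem:games:pathplays} and Proposition~\ref{prop:games:edges:lin}.

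Finally I would check the three clauses of $\Trace(U,V)$: being a path is immediate since $U$ is a play in the subgame $\G(\At A,F)$ of $\G(\At A)$; the membership clause $q \In \pi_2(S_x)$ is the invariant above; and the edge clause holds because consecutive $\Prop$-positions $(x,q) \glt^{\Prop;\Opp}_{\G(\At A)} (\Succ_d(x),q')$ of $U$ satisfy $(d,q') \In f_x(q)$, whence $(q,q') \In S'_d = S_{\Succ_d(x)}$. Then $\Win_{\oc\At A}(V)$ applied to the trace $T \deq U$ yields $\Par(\At A,U)$, so every play of $\strat$ satisfies the parity condition and $\strat$ is winning, giving $F \In \Lang(\At A)$. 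I expect the main obstacle to be the bookkeeping in the third paragraph: showing that the directions $\Opp$ takes in $U$ always remain available in $V$, so that reachability and hence $S_x$ is preserved along the play despite $\oc \At A$ using only \emph{non-empty} sets as states, and assembling $V$ into a bona fide infinite play of $\stratbis$.
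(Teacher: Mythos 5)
Your proposal is correct and follows the same overall strategy as the paper's proof: use Corollary~\ref{cor:aut:nd:unique} to read off a $\Prop$-strategy $\strat$ on $\G(\At A,F)$ from the winning strategy $\stratbis$ on $\G(\oc\At A,F)$, then, given an infinite play of $\strat$, build an infinite play $V$ of $\stratbis$ having that play as a trace, and conclude from $\Win_{\oc\At A}(V)$. You differ on two technical points. First, where the paper defines $\strat$ by fixing a well-order (Remark~\ref{rem:ax:hf:well-order-hf}) and choosing the $\prec$-least witness $f$ for $\stratbis(x,S)$, you note that $f$ is \emph{uniquely} recoverable from $\Conj_{(x,S)}$ via $(d,q') \In f(q) \liff (q,q') \In S'_d$; this is correct (the map $f \mapsto \Conj(f)$ is injective because each direction occurs in at most one pair of $\Conj(f)$), so the well-order is dispensable. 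Second, and more substantially, the paper constructs $V$ with the Recursion Theorem (Proposition~\ref{prop:games:rec}) and only then establishes its properties (Claims~\ref{clm:sim:sim:der:Sunique}--\ref{clm:sim:sim:der:Vstep}), whereas you first prove the invariant that each $(x,q)$ on the play of $\strat$ satisfies $q \In \pi_2(S_x)$ for the globally unique reachable state $S_x$, and then define $V$ outright by Comprehension (Theorem~\ref{thm:funto:ca}) as the set of $(x,S_x)$ together with the induced $\Opp$-positions; this bypasses Recursion altogether, exploiting that Corollary~\ref{cor:aut:nd:unique} already pins down the $\oc\At A$-state at every reachable tree position uniformly over all plays of $\stratbis$, so nothing remains to be defined recursively. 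The one step you should spell out is the ``forcing $S_{\Succ_d(x)} = S'_d$'' claim: it requires splicing a play of $\stratbis$ through $(x,S_x)$ with the move to $(\Succ_d(x),S'_d)$ and an arbitrary continuation obtained from Lemma~\ref{lem:games:infplay:pos}, before invoking uniqueness --- the same manoeuvre as in the paper's Claim~\ref{clm:sim:sim:der:Vinclplay} and in the proof of Proposition~\ref{prop:aut:nd:proj}. Since you identify the right tools for it, this is a matter of detail rather than a gap.
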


\begin{proof}
Let $\stratbis$ be a winning $\Prop$-strategy in $\G(\oc \At A,F)$.
We will define a winning $\Prop$-strategy $\strat$ in $\G(\At A,F)$.
To this end, we invoke Corollary~\ref{cor:aut:nd:unique},
which tells us that since $\oc \At A$ is non-deterministic,
for each $x \in \univ$ there is at most one $S \in Q_{\oc \At A}$
such that $(x,S)$ belongs to an infinite play of $\stratbis$.
Moreover, 
using Remark~\ref{rem:ax:hf:well-order-hf},
for each $S \in Q_{\oc \At A}$
we fix a well-order $\preceq$ on $\Pne(\Dir \times Q_{\At A})^{\pi_2(S)}$.

We now define the strategy $\strat$.
\begin{itemize}
\item {\it Definition of $\strat$.}
We apply $\HF$-Bounded Choice for Product Types (Theorem~\ref{thm:funto:choice}).
Consider $(x,q) \in \G(\At A,F)_\Prop$.
We first assign to $(x,q)$ an $S \in Q_{\oc \At A}$ such that $q \in \pi_2(S)$.
If there exists such an $S$ where furthermore $(x,S)$
belongs to an infinite play of $\stratbis$, then this $S$ is unique
and we choose that one.
Otherwise, 
by Comprehension for $\HF$-Sets (Remark~\ref{rem:hfchoice}),
we define an ad hoc $S \in Q_{\oc \At A}$ with $q \in \pi_2(S)$.

Let now $(S,\Conj_{(x,S)}) \deq \stratbis(x,S)$.
By definition of $\Conj_{(x,S)}$
there is some
\[
\begin{array}{c c c c l}
  f
& :
& \pi_2(S)
& \longto
& \Pne(\Dir \times Q_{\At A})
\\
&
& q
& \longmapsto
& \Conj_q \in \trans_{\At A}(q,F(x))
\end{array}
\]
such that 
$\Conj_{(x,S)} = \{(d,S'_d) \st d \in \Dir \land S'_d \neq \emptyset\}$
where each $S'_d$ is as in~\eqref{eq:sim:trans}.
Consider the $\prec$-least such $f$.
We let
\[
\strat(x,q) \quad\deq\quad (q,f(q))
\]
\end{itemize}

\noindent
It remains to show that $\strat$ is winning.
To this end, given an infinite play $T$ of $\strat$,
we will define an infinite play $V$ of $\stratbis$ such that:
\[
\Trace(T,V)
\]

\noindent
Since $\stratbis$ is assumed to be winning, 
thanks to 
Remarks~\ref{rem:aut:propplays} and~\ref{rem:sim:inputgames},
this will imply that $\strat$
is also winning.
Assume $\Play(\strat, \iota, T)$.
We define $V$ using the Recursion Theorem (Proposition~\ref{prop:games:rec}).
Let $\varphi(V,v)$ be a $\FSO$-formula stating that:
\begin{itemize}
\item either $v = \iota_\oc$,
\item or $v = (x,\stratbis(x,S))$ with $(x,S) \in V$,
\item or $v= (\Succ_d(x),S'_d)$ and 
\begin{itemize}
\item for some $q' \in Q_{\At A}$ we have $(\Succ_d(x),q') \in T$,
\item and for some $S \in Q_{\oc \At A}$, we have
$(x,S) \in V$ and $\stratbis(x,S) = (S,\Conj_{(x,S)})$ with $(d,S'_d) \in \Conj_{(x,S)}$.
\end{itemize}
\end{itemize}
Note that $\varphi(V,v)$ indeed satisfies the assumptions of the
Recursion Theorem (Proposition~\ref{prop:games:rec}), since
\begin{itemize}
\item
in the second clause we always have $(x,S) \glt_{\G(\oc \At A)} (x,\stratbis(x,S))$,
and $\stratbis(x,S)$ is uniquely determined from $(x,S)$;

\item
in the last clause,
we always have $(x,S) \glt_{\G(\oc \At A)} (\Succ_d(x),S'_d)$.
\end{itemize}
Note also that since $T$ is a play of $\strat$, there is at most one $d \in \Dir$
such that $(\Succ_d(x),q) \in T$ for some $q \in Q_{\At A}$,
and $S'_d$ is uniquely determined from $d$ and $\stratbis(x,S)$
by construction of $\oc \At A$.
So by the Recursion Theorem (Proposition~\ref{prop:games:rec}) we indeed let
$\funto{V}{\G(\oc \At A)}{\two}$ be unique such that
\[
\big( \forall v \in \G(\oc \At A) \big)
\Big[
v \in V ~~\liff~~ \varphi(V,v)
\Big]
\]

\noindent
We begin with a series of easy claims on $V$.

\begin{subclm}
\label{clm:sim:sim:der:Sunique}
For every $x \in \univ$, there is at most one $S \in Q_{\oc \At A}$
such that $(x,S) \in V$.
\end{subclm}

\begin{subproof}[Proof of Claim~\ref{clm:sim:sim:der:Sunique}]
We apply the Induction Axiom of $\FSO$ (\S\ref{sec:ax:ind}).
The property holds for $\Root$, since $(\Root,S) \in V$
implies $S = \init q_{\oc \At A}$ by definition of $V$.
Now assume the property for $x$ and let us show it for $\Succ_d(x)$.
So assume $(\Succ_d(x),S'_d),(\Succ_d(x),\tilde S'_d) \in V$.
By definition of $V$, there are $(x,S),(x,\tilde S) \in V$
such that $(d,S'_d) \in \Conj_{(x,S)}$
and $(d,\tilde S'_d) \in \Conj_{(x,\tilde S)}$
where
$(S,\Conj_{(x,S)}) = \stratbis(x,S)$
and
$(\tilde S,\Conj_{(x,\tilde S)}) = \stratbis(x,\tilde S)$.
But by induction hypothesis we get $S = \tilde S$,
which implies $\Conj_{(x,S)} = \Conj_{(x,\tilde S)}$.
This in turn implies $S'_d = \tilde S'_d$ by construction of $\oc \At A$.
\end{subproof}

\begin{subclm}
\label{clm:sim:sim:der:Vpreflin}
For every $u \in V$, the set $\{v \in V \st v \gle_{\G(\oc\At A)} u\}$
is linearly ordered \wrt\@ $\edge{*}{\stratbis}$.
\end{subclm}

\begin{subproof}[Proof of Claim~\ref{clm:sim:sim:der:Vpreflin}]
We reason by $\glt$-Induction (Theorem~\ref{thm:games:ind:pos}).
So let $u \in V$ be such that the property holds for all $w \glt_{\G(\oc \At A)} u$.

Assume first that $u \in \OP V$.
In this case, we must have $u = (x,\stratbis(x,S))$ with $(x,S) \in V$.
By induction hypothesis, the set 
$\{v \in V \st v \gle_{\G(\oc\At A)} (x,S)\}$
is linearly ordered \wrt\@ $\edge{*}{\stratbis}$.
On the other hand, it follows from Claim~\ref{clm:sim:sim:der:Sunique}
that $(x,S)$ is the only immediate $\edge{}\stratbis$-predecessor of $u$ in $V$.
Since $(x,S) \edge{}{\stratbis} u$,
we get the result by Proposition~\ref{prop:games:edges}.

Assume now that $u \in \PP V$.
If $u = \iota_\oc$ then the result is trivial.
Otherwise, $u$ is of the form $(\Succ_d(x),S'_d)$
and its membership to $V$ is given by the last clause defining $V$.
Let $S$ be such that $(x,S) \in V$ and such that
$\stratbis(x,S) = (S,\Conj_{(x,S)})$ with $(d,S'_d) \in \Conj_{(x,S)}$.
Since $(x,\stratbis(x,S)) \edge{}{\stratbis} u$ with $(x,\stratbis(x,S)) \in V$,
by induction hypothesis
the set
$\{v \st v \gle_{\G(\oc\At A)} (x,\stratbis(x,S))\}$
is linearly ordered \wrt\@ $\edge{*}{\stratbis}$.
In order to obtain the result for
$\{v \st v \gle_{\G(\oc\At A)} (\Succ_d(x),S'_d)\}$
we need to show that $(x,\stratbis(x,S))$ is the unique
immediate $\edge{}\stratbis$-predecessor of $(\Succ_d(x),S'_d)$ in $V$.
But if $(x,\stratbis(x,\tilde S)) \in V$
then we should have $(x,\tilde S) \in V$,
so that $\tilde S = S$ by Claim~\ref{clm:sim:sim:der:Sunique}.
\end{subproof}

\begin{subclm}
\label{clm:sim:sim:der:Vinclplay}
For every $u \in V$, there is an infinite play $U$ of $\stratbis$
such that:
\[
\big( \forall v \gle_{\G(\oc\At A)} u \big)
\Big(
v \in V ~~\liff~~ u \in U
\Big)
\]
\end{subclm}

\begin{subproof}[Proof of Claim~\ref{clm:sim:sim:der:Vinclplay}]
Let $u \in V$.
First, by Lemma~\ref{lem:games:infplay:pos}
there is an infinite play $U_0$ in the game  $\G(\oc\At A)\restr\{\stratbis\}$
such that $u \in U_0$
and $u \edge{*}{\stratbis} v$ for all $v \in U_0$.
By Comprehension for Product Types (Theorem~\ref{thm:funto:ca})
we let 
\[
U \quad\deq\quad U_0 \cup \{v \in V \st v \gle_{\G(\oc\At A)} u\}
\]
We then get $\Play(\stratbis,\iota_\oc,U)$
from Claim~\ref{clm:sim:sim:der:Vpreflin}
and $\Play(\stratbis,u,U_0)$.
\end{subproof}

\begin{subclm}
\label{clm:sim:sim:der:Vstep}
Let $(x,S) \in V$, and assume $(x,q),(\Succ_d(x),q') \in T$
with $q \in \pi_2(S)$.
Then there is some $S'_d \in Q_{\oc \At A}$ such that $(\Succ_d(x),S'_d) \in V$
and $(q,q') \in S'_d$.
Moreover, we have $(d,S'_d) \in \Conj_{(x,S)}$ for $(S,\Conj_{(x,S)}) = \stratbis(x,S)$.
\end{subclm}

\begin{subproof}[Proof of Claim \thesubclm]
Since $T$ is a play of $\strat$, we have $(d,q') \in \Conj$
for $(q,\Conj) = \strat(x,q)$.
Moreover, by Claim~\ref{clm:sim:sim:der:Vinclplay},
$(x,S)$ belongs to an infinite play of $\stratbis$.
Since $q \in \pi_2(S)$, by definition of $\strat$
this implies that there is some $S'_d$ such that
$(d,S'_d) \in \Conj_{(x,S)}$ for $(S,\Conj_{(x,S)}) = \stratbis(x,S)$
and $(q,q') \in S'_d$.
We then obtain $(\Succ_d(x),S'_d) \in V$ by definition of $V$.
\end{subproof}

We now proceed to show:
\[
\Play(\stratbis,\, \iota_\oc,\, V)
\quad\land\quad
\Trace(T,V)
\]

\noindent
We begin with $\Trace(T,V)$.
First, we have
$\Path(\G(\At A),\iota,T)$
since $T$ is a play of $\strat$.
Moreover

\begin{subclm}
\label{clm:sim:sim:der:Tlift}
\[
\big( \forall (x,q) \in \PP T \big) \big( \exists S \in Q_{\oc \At A} \big)
  \Big( (x,S) \in \PP V ~~\land~~ q \in \pi_2(S) \Big)
\]
\end{subclm}

\begin{subproof}[Proof of Claim \thesubclm]
Using the Induction Axiom of $\FSO$
(\S\ref{sec:ax:ind}), we show
\[
(\forall x) \big( \forall q \in Q_{\At A} \big)
\Big( (x,q) \in \PP T
~~\limp~~
\big( \exists S \in Q_{\oc \At A} \big)
  \Big[ (x,S) \in \PP V ~\land~ q \in \pi_2(S) \Big]
\Big)
\]

\noindent
For the base case $x = \Root$, if $(x,q) \in T$ then we must have
$q = \init q_{\At A}$, so $q \in \pi_2(\init q_{\oc \At A})$.
Assume now the property for $x$, and consider $\Succ_d(x)$ and $q,q' \in Q_{\At A}$
such that $(x,q) \in T$ and $(\Succ_d(x),q') \in T$.
Furthermore, by induction hypothesis, let $S \in Q_{\oc \At A}$
such that $(x,S) \in V$ and $q \in \pi_2(S)$.
By Claim~\ref{clm:sim:sim:der:Vstep},
we then get $(\Succ_d(x),S'_d) \in V$ for some $S'_d \in Q_{\oc \At A}$
with $q' \in \pi_2(S'_d)$.
\end{subproof}

We can now show the last required property for $\Trace(T,V)$,
namely:

\begin{subclm}
\[
  \big( \forall (x,q),(y,q') \in \PP T \big)
  \big( \forall S \in Q_{\oc \At A} \big)
  \left[
  (x,q) \glt^{\Prop;\Opp}_{\G(\At A)} (y,q')
  ~~\limp~~
  (y,S) \in \PP V
  ~~\limp~~
  (q,q') \in S
  \right]
\]
\end{subclm}

\begin{subproof}[Proof of Claim \thesubclm]
Let $(x,q),(y,q') \in T$
and $S' \in Q_{\oc \At A}$
such that
$(x,q) \glt^{\Prop;\Opp}_{\G(\At A)} (y,q')$
and
$(y,S') \in V$.
Then by definition of $\glt_{\G(\At A)}$ we must have $y = \Succ_d(x)$
for some $d \in \Dir$.
Moreover, by Claim~\ref{clm:sim:sim:der:Tlift}
there is some $S \in Q_{\oc \At A}$ such that $(x,S) \in V$
and $q \in \pi_2(S)$.
By Claim~\ref{clm:sim:sim:der:Vstep},
we then have $(\Succ_d(x),S'_d) \in V$ for some $S'_d \in Q_{\oc \At A}$
with $(q,q') \in S'_d$.
It follows from Claim~\ref{clm:sim:sim:der:Sunique}
that $S' = S'_d$
so that $(q,q') \in S'$
and we are done.
%
\end{subproof}

We now turn to showing
$\Play(\stratbis, \iota_\oc, V)$.
Since $\iota_\oc \in V$,
thanks to Proposition~\ref{prop:games:edges:lin}
it remains to show:
\[
\left\{
\begin{array}{c l}
& (\forall u \in V) \big(
  \iota_\oc \edge{*}{\stratbis} u
  \big)
\\
  \land
& (\forall u \in V) (\exists ! v \in V) \big(
  u \edge{}{\stratbis} v
  \big)
\\
  \land
& (\forall v \in V) \Big(
     v\neq \iota_\oc ~~\limp~~
     (\exists u \in V) \big(
     u \edge{}{\stratbis} v
     \big)
  \Big)
\end{array}
\right.
\]

\noindent
First, we easily have:

\begin{subclm}
\[
 (\forall v \in V) \Big(
     v\neq \iota_\oc ~~\limp~~
     (\exists u \in V) \big(
     u \edge{}{\stratbis} v
     \big)
  \Big)
\]
\end{subclm}

\begin{subproof}[Proof of Claim \thesubclm]
The result follows from Claim~\ref{clm:sim:sim:der:Vinclplay},
but it can be proved directly,
without the inductions underlying Claim~\ref{clm:sim:sim:der:Vinclplay}.
Indeed, if $v = (x,\stratbis(x,S))$, with $(x,S) \in V$, then the result
directly follows from the definitions of $V$ and of
the game $\G(\oc \At A)\restr\{\stratbis\}$.
Otherwise, we have $v = (\Succ_d(x),S'_d)$,
and there is $(x,S) \in V$ such that
$\stratbis(x,S) = (S,\Conj_{(x,S)})$ with $(d,S'_d) \in \Conj_{(x,S)}$.
But $(x,S) \in V$ implies
$(x,\stratbis(x,S)) \in V$, and again the result 
directly follows from the definition of $\G(\oc \At A)\restr\{\stratbis\}$.
\end{subproof}

It then easily follows that:
\begin{subclm}
\[
 (\forall u \in V) \left(
  \iota_\oc \edge{*}{\stratbis} u
  \right)
\]
\end{subclm}

\begin{subproof}[Proof of Claim \thesubclm]
First, we have $\iota_\oc \in V$ by definition
of $V$.
Moreover, given $u \in V$
we have either $\iota_\oc \edge{*}{\stratbis} u$
or $u \edge{*}{\stratbis} \iota_\oc$
by Claim~\ref{clm:sim:sim:der:Vpreflin}.
The result then follows from
Proposition~\ref{prop:games:edges}.
\end{subproof}

It remains to show:
\begin{equation}
\label{eq:sim:sim:der:succ}
 (\forall u \in V) (\exists ! v \in V) \left(
  u \edge{}{\stratbis} v
  \right)
\end{equation}

\noindent
To this end, we first show:

\begin{subclm}
\label{clm:sim:sim:der:Vlift}
\[
\big( \forall (x,S) \in \PP V \big)
\big( \exists q \in Q_{\At A} \big)
  \Big( (x,q) \in \PP T ~~\land~~ q \in \pi_2(S) \Big)
\]
\end{subclm}

\begin{subproof}[Proof of Claim \thesubclm]
Using the Induction Axiom of $\FSO$
(\S\ref{sec:ax:ind}), we show
\[
(\forall x) (\forall S \in Q_{\oc \At A})
\Big((x,S) \in \PP V
~~\limp~~
(\exists q \in Q_{\At A})
  \big( (x,q) \in \PP T ~~\land~~ q \in \pi_2(S) \big)
\Big)
\]

\noindent
For the base case $x = \Root$, if $(x,S) \in V$ then we must have
$S = \init q_{\oc \At A}$.
Then we are done since $\iota \in T$ 
and $\init q_{\At A} \in \pi_2(S)$.
Assume the property for $x$, and consider $\Succ_d(x)$ and $S,S' \in Q_{\oc \At A}$
such that $(x,S),(\Succ_d(x),S') \in V$.
Furthermore, by induction hypothesis, let $q \in Q_{\At A}$
such that $(x,q) \in T$ and $q \in \pi_2(S)$.
By definition of $V$, we have $(\Succ_d(x),q') \in T$ for some $q' \in Q_{\At A}$.
It then follows from Claim~\ref{clm:sim:sim:der:Vstep}
that $q' \in \pi_2(S'_d)$ for some $S'_d \in Q_{\oc \At A}$
such that $(\Succ_d(x),S'_d) \in V$.
But Claim~\ref{clm:sim:sim:der:Sunique}
implies $S' = S'_d$
so that $q' \in \pi_2(S')$
and we are done.
\end{subproof}

We can now prove~\eqref{eq:sim:sim:der:succ}.

\begin{subproof}[Proof of~\eqref{eq:sim:sim:der:succ}]
If $u = (x,S) \in V$, then $v = (x,\stratbis(x,S)) \in V$ and is unique such that
$u \edge{}{\stratbis} v$.
Otherwise, $u = (x,\stratbis(x,S))$ for some $(x,S) \in V$,
and we have to show that there are some unique 
$d \in \Dir$ and $S'_d \in Q_{\oc \At A}$ such that
$(\Succ_d(x),S'_d) \in V$.
First, by Clim~\ref{clm:sim:sim:der:Vlift}
there is some $q \in Q_{\At A}$ such that $(x,q) \in T$
and $q \in \pi_2(S)$.
Moreover, since $T$ is a play of $\strat$, with have $(\Succ_d(x),q') \in T$
for some unique $d \in \Dir$ and $q' \in Q_{\At A}$.
It then follows from Claim~\ref{clm:sim:sim:der:Vstep}
that there is some $S'_d \in Q_{\oc \At A}$
such that $(\Succ_d(x),S'_d) \in V$
and $u \edge{}{\stratbis} (\Succ_d(x),S'_d)$.
The uniqueness of $S'_d$ follows from 
Claim~\ref{clm:sim:sim:der:Sunique}.
\end{subproof}

This concludes the proof of Proposition~\ref{prop:sim:sim:der}.
\end{proof}

In the proof of
Proposition~\ref{prop:sim:sim:der} above,
we have used Claim~\ref{clm:sim:sim:der:Vlift}
in order to show that $V$ is a play of $\strat$.
Let us state here for the record that this has a more general converse:
Claim~\ref{clm:sim:sim:der:Vlift}
holds whenever $T$ is a trace in $V$ for $V$ a play
in $\G(\oc \At A)$:

\begin{lem}
\label{lem:sim:sim:trace}
Given $\funto{V}{\G(\oc \At A)}{\two}$ and $\funto{T}{\G(\At A)}{\two}$,
in $\FSOD$ we have
\[
\big( \forall (x,S) \in \PP V \big)
\big( \exists q \in Q_{\At A} \big)
  \Big[ (x,q) \in \PP T ~~\land~~ q \in \pi_2(S) \Big]
\]
whenever
\[
\Play\big( \G(\oc \At A),\, (\Root,\init q_{\oc \At A}),\, V \big)
~~\land~~
\Trace(T,V)
\]
\end{lem}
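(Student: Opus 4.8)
The plan is to prove the displayed conclusion by induction on the tree position $x$, using the Induction Axiom of $\FSO$ (\S\ref{sec:ax:ind}) to establish
\[
(\forall x)(\forall S \in Q_{\oc \At A})\Big[(x,S) \in \PP V \longlimp (\exists q \in Q_{\At A})\big((x,q) \in \PP T \,\land\, q \in \pi_2(S)\big)\Big].
\]
This mirrors Claim~\ref{clm:sim:sim:der:Vlift} from the proof of Proposition~\ref{prop:sim:sim:der}, but whereas that argument exploited the specific recursive definition of $V$, here I would rely only on the facts that $V$ is an infinite play in $\G(\oc \At A)$ from $\iota_\oc$ and that $\Trace(T,V)$ holds. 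For the base case, suppose $(\Root,S) \in \PP V$. Since $\Play(\G(\oc \At A),\iota_\oc,V)$ gives $\iota_\oc \edge{*}{} u$ for every $u \in V$, we have $\iota_\oc \gle (\Root,S)$, and as both positions are $\Prop$-positions with tree position $\Root$, Proposition~\ref{prop:games:gle}.(5) forces $S = \init q_{\oc \At A}$. Then $\init q_{\At A} \in \pi_2(\init q_{\oc \At A})$, and $(\Root,\init q_{\At A}) \in \PP T$ because $\Path(\G(\At A),(\Root,\init q_{\At A}),T)$ is the first clause of $\Trace(T,V)$; so $q = \init q_{\At A}$ works.

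For the inductive step I would consider $(\Succ_d(x),S') \in \PP V$. Since $\Succ_d(x) \neq \Root$ (Proposition~\ref{prop:ax:tree}), this position lies strictly above $\iota_\oc$, so two applications of the Predecessor Lemma~\ref{lem:games:predplays} for Infinite Plays yield first its immediate $\Opp$-predecessor $(x,(S,\oc\Conj)) \in V$ and then a $\Prop$-predecessor $(x,S) \in \PP V$, the tree axioms pinning both predecessors to tree position $x$. The induction hypothesis applied to $(x,S)$ supplies some $q \in \pi_2(S)$ with $(x,q) \in \PP T$. As $T$ is a path, $(x,q)$ has a $\gsucc$-successor in $T$, necessarily an $\Opp$-position $(x,(q,\Conj))$, whose own $\gsucc$-successor in $T$ is a $\Prop$-position $(\Succ_{d'}(x),q'')$ for some direction $d'$ and state $q''$.

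The crux — and the step I expect to be the main obstacle — is to show that $T$ continues in the \emph{same} direction as $V$, that is $d' = d$. For this I would apply the second clause of $\Trace(T,V)$ to the $\Prop$-position $(\Succ_{d'}(x),q'')$ of $T$, obtaining some $S''$ with $(\Succ_{d'}(x),S'') \in \PP V$. Now both $(\Succ_d(x),S')$ and $(\Succ_{d'}(x),S'')$ lie in the infinite play $V$ and are therefore comparable by the linearity clause of $\Play$ together with Proposition~\ref{prop:games:edges}; but their tree positions would be tree-incomparable siblings if $d \neq d'$, so the tree axioms force $d' = d$. Hence $(\Succ_d(x),q'') \in \PP T$ with $(x,q) \glt^{\Prop;\Opp}_{\G(\At A)} (\Succ_d(x),q'')$ witnessed by the intermediate $\Opp$-position $(x,(q,\Conj))$, and the last clause of $\Trace(T,V)$, instantiated with $(\Succ_d(x),S') \in V$, gives $(q,q'') \in S'$, whence $q'' \in \pi_2(S')$. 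Taking this $q''$ closes the induction and establishes the lemma.
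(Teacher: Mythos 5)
Your proof is correct and follows essentially the paper's own argument: the same application of the Induction Axiom on tree positions, the same base case pinned down by the play structure of $V$, and the same inductive step combining the $\Path$ structure of $T$, the second clause of $\Trace(T,V)$, and the linearity of the play $V$ to force $d' = d$. You are in fact more explicit than the paper in two places: the base case records $(\Root,\init q_{\At A}) \in \PP T$ from the $\Path$ clause of $\Trace$, and the existence of the predecessor $(x,S) \in \PP V$ is justified by two applications of the Predecessor Lemma~\ref{lem:games:predplays}, a step the paper leaves implicit. The only divergence is the concluding inference: the paper never invokes the third clause of $\Trace(T,V)$; having obtained $(\Succ_{d'}(x),S'') \in \PP V$ with $q'' \in \pi_2(S'')$ from the second clause, it observes that a play can contain at most one $\Prop$-position per tree position (two such positions would have to be $\glt$-comparable, which is impossible), so $S'' = S'$ and the membership $q'' \in \pi_2(S')$ follows at once. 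Your route---using the intermediate $\Opp$-position of $T$ to witness $(x,q) \glt^{\Prop;\Opp}_{\G(\At A)} (\Succ_d(x),q'')$ and then extracting $(q,q'') \in S'$ from the third clause---is equally valid and no longer; it trades the uniqueness observation for one more appeal to $\Trace$. (One cosmetic imprecision: the $\gsucc$-successor of $(x,q)$ in $T$ is indeed an $\Opp$-position at $x$, but nothing forces its first component to be $q$; since you use it only as the witness for $\glt^{\Prop;\Opp}_{\G(\At A)}$, this is harmless.)
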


\begin{proof}
Using the Induction Axiom of $\FSO$
(\S\ref{sec:ax:ind}), we show
\[
(\forall x) \big( \forall S \in Q_{\oc \At A} \big)
\Big((x,S) \in \PP V
~~\limp~~
(\exists q \in Q_{\At A})
  \Big[ (x,q) \in \PP T ~~\land~~ q \in \pi_2(S) \Big]
\Big)
\]

\noindent
For the base case $x = \Root$, if $(x,S) \in V$,
since
$\Play(\G(\oc \At A),(\Root,\init q_{\oc \At A}),V)$
we must have
$S = \init q_{\oc \At A}$, so $\init q_{\At A} \in \pi_2(S)$.
Assume now the property for $x$, and consider 
$d \in \Dir$ and $S,S' \in Q_{\oc \At A}$
such that $(x,S) \in V$ and $(\Succ_d(x),S') \in V$.
Furthermore, by induction hypothesis, let $q \in Q_{\At A}$
such that $(x,q) \in T$ and $q \in \pi_2(S)$.
It follows from $\Path(\G(\At A),(\Root,\init q_{\At A}),T)$
that $(\Succ_{d'}(x),q') \in T$ for some $d' \in \Dir$
and some $q' \in Q_{\At A}$.
Moreover, $\Trace(T,V)$ implies
$q' \in \pi_2(S'')$ for some $S''$ such that
$(\Succ_{d'}(x),S'') \in V$.
But 
$\Play(\G(\oc \At A),(\Root,\init q_{\oc \At A}),V)$
implies $d' = d$ and $S'' = S'$ and we are done.
\end{proof}

\subsection{Reformulating the Acceptance Condition of $\oc\At A$.}
\label{sec:sim:omega}

For an automaton $\At A$ which we now assume to be $\HF$-closed
(in the sense of Definition~\ref{def:aut:hfclosed}),
we are going to formulate the $\FSO$-formula $\Win_{\oc \At A}$
as a parity condition, which will allow us to obtain a parity automaton $\ND(\At A)$
in~\S\ref{sec:sim:par}.
In order to obtain a parity condition from $\Win_{\oc \At A}$ we note
(following~\cite{walukiewicz02tcs}) that (when read in the standard model) it defines
an $\omega$-regular condition, which can thus 
by McNaughton's Theorem~\cite{mcnaughton66ic}
(see also \eg~\cite{thomas97handbook,pp04book})
be formulated with a deterministic parity automaton on $\omega$-words.
We are actually not going to formalize McNaughton's Theorem in our setting.
Rather, we will apply Proposition~\ref{prop:bfsos:bfso},
which allows us to import in $\FSO$ any true $\FSO$-formula on the infinite
paths of $\univ$.
Our way to the application of 
Proposition~\ref{prop:bfsos:bfso}
proceeds with constructions similar to some of those in the proof of
Theorem~\ref{thm:compl:red}.

Consider some $\funto{V}{\G(\oc \At A)}{\two}$
such that:
\[
\Play(\G(\oc \At A),\, (\Root,\init q_{\oc \At A}),\, V)
\]

\noindent
By Comprehension for Product Types (Theorem~\ref{thm:funto:ca}), 
let $\funto{|V|}{\univ}{\two}$
be the set of all $x \in \univ$ such that
$(x,S) \in V$ for some $S \in Q_{\oc \At A}$.
Note that $\TPath(|V|)$
(recall that $\TPath$ is defined in~\eqref{eq:bfsos:rpath}).
Furthermore, by $\HF$-Bounded Choice for Functions (\S\ref{sec:ax:choice}),
let $\funto{\tilde V}{|V|}{Q_{\oc \At A}}$
take $x \in |V|$ to the unique $S \in Q_{\oc \At A}$
such that $(x,S) \in V$.

In $\FSO$ we have that $\Win_{\oc \At A}$
is equivalent to the following formula
$\Win[\Lt]^{|V|}_{\oc \At A}(\tilde V)$:
\[
\left(\forall \funto{\tilde T}{|V|}{Q_{\At A}}\right)
\left[
\Trace[\Lt]^{|V|}(\tilde T,\tilde V)
~~\limp~~
\Par[\Lt]^{|V|}(\col_{\At A},n_{\At A},\tilde T)
\right]
\]

\noindent
where
\begin{itemize}
\item the formula
$\Trace[\Lt]^{|V|}(\tilde T,\tilde V)$
is
\[
\left\{
\begin{array}{r l}
& (\forall x \in |V|)
  \left[\tilde T(x) \in \pi_2(\tilde V(x)) \right]
\\
  \land
& (\forall x,y \in |V|)
  \left[
    \Succ_{\Lt}(x,y)
    ~~\limp~~
    (\tilde T(x),\tilde T(y)) \in \tilde V(y)
  \right]
\end{array}
\right.
\]
with
\[
\Succ_{\Lt}(x,y)
\quad\deq\quad
x \Lt y
~\land~ \lnot \exists z\left(x \Lt z \Lt y\right)
\]

\item 
and, for $\col : Q_{\At A} \fsoto [0,n]$,
the formula 
$\Par[\Lt]^{|V|}(\col,n,\tilde T)$
is (using Convention~\ref{conv:games:colors}):
\[
\big( \exists m \in \even(n) \big)
\left[
\begin{array}{r l}
& (\forall x \in |V|) (\exists y \in |V|)
  \big( x \Lt y ~~\land~~ \col(\tilde T(y)) \Eq m \big)
\\
  \land
& (\exists x \in |V|) (\forall y \in |V|)
  \big( x \Lt y ~~\limp~~ \col(\tilde T(y)) \geq m \big)
\end{array}
\right]
\]
\end{itemize}

Let us first note the following simple property.
Recall from Definition~\ref{def:aut:parity}
that $\col_{\At A} : Q_{\At A} \fsoto [0,n_{\At A}]$
is a coloring of the states of $\At A$,
while $\hat\col_{\At A}$ colors the positions of $\G(\At A)$,
by taking for $\Prop$-positions $(x,q)$
the color given by $\col_{\At A}$ to $q$ and for $\Opp$-positions the maximal color
$n_{\At A}$.

\begin{lem}
\label{lem:sim:omega:parity}
Assume given $V$ and $|V|$ as above.
Let $\funto{T}{\G(\At A)}{\two}$
and $\funto{\tilde T}{|V|}{Q_{\At A}}$
such that
\[
\Path\big( \G(\At A), (\Root,\init q_{\At A}),T \big)
\quad\land\quad
\big( \forall (x,q) \in \PP{\G(\At A)} \big)
\left[
  (x,q) \in T
  ~~\liff~~
  \left(
    x \in |V|
    ~\land~
    \tilde T(x) = q
  \right)
\right]
\]

\noindent
Then:
\[
\Par \big(\At A,\hat\col_{\At A},n_{\At A},T \big)
~~\liff~~
\Par[\Lt]^{|V|} \big(\col_{\At A},n_{\At A},\tilde T \big)
\]
\end{lem}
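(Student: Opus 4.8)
The plan is to prove the biconditional \emph{witness-by-witness}: for each even color $m \In \even(n_{\At A})$, I would show that the conjunction occurring inside $\Par(\At A,\hat\col_{\At A},n_{\At A},T)$ is equivalent to the one occurring inside $\Par[\Lt]^{|V|}(\col_{\At A},n_{\At A},\tilde T)$, after which the two existential quantifications over $m$ agree at once. Recall that $\Par(\At A,\ldots,T)$ abbreviates $\Par(\G(\At A)(\gle),\hat\col_{\At A},n_{\At A},T)$, so the relation $\edge{+}{}$ appearing in it is the transitive closure of the edge relation of $\G(\At A)(\gle)$; by Proposition~\ref{prop:games:edges}.(\ref{prop:games:edges:glt}) together with Proposition~\ref{prop:games:gle}.(\ref{prop:games:gle:decomp}) (every $\glt$-step decomposes into a $\gsucc$-step, and each $\gsucc$-step is an edge of the full game $\G(\gle)$), this closure coincides with $\glt$, so I may read both clauses of $\Par(\At A,\ldots,T)$ using $\glt$ throughout.

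The first step is to set up the dictionary between positions of $T$ and tree positions of $|V|$. By hypothesis the $\Prop$-positions of $T$ are exactly the pairs $(x,\tilde T(x))$ for $x \In |V|$, and by the definition of $\glt$ on game positions, two such $\Prop$-positions satisfy $(x,\tilde T(x)) \glt (y,\tilde T(y))$ precisely when $x \Lt y$. The remaining positions of $T$ are $\Opp$-positions, which strictly alternate with the $\Prop$-positions along the path and, by the definition of $\hat\col_{\At A}$ (Definition~\ref{def:aut:parity}), all carry the maximal color $n_{\At A}$; since $T$ is an infinite alternating path (its $\Path$-hypothesis guarantees a $\gsucc$-successor for every element), these $\Opp$-positions are $\glt$-cofinal in $T$. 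These facts, together with $\TPath(|V|)$ which makes $|V|$ linearly ordered and unbounded, let me transport each quantifier over $u \In T$ to its tree position $x \In |V|$ and back.

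Next I would treat the two clauses. For the ``eventually $\geq m$'' clauses I claim equivalence for \emph{every} $m$: since every $\Opp$-position already has color $n_{\At A} \geq m$, the clause $(\exists u \In T)(\forall v \In T)(u \glt v \limp \hat\col_{\At A}(v) \geq m)$ constrains only the $\Prop$-positions above $u$, which is exactly $(\exists x \In |V|)(\forall y \In |V|)(x \Lt y \limp \col_{\At A}(\tilde T(y)) \geq m)$; the bounding witness transports by taking the tree position of $u$ in one direction and $u := (x,\tilde T(x))$ in the other. For the ``cofinally colored $m$'' clauses, the case $m < n_{\At A}$ is clean, since then a position of $T$ has color $m$ iff it is a $\Prop$-position $(y,\tilde T(y))$ with $\col_{\At A}(\tilde T(y)) \Eq m$, so the two cofinality statements match directly through the dictionary. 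The only genuinely delicate point, and the step I expect to be the main obstacle, is the boundary case $m \Eq n_{\At A}$, where the cofinality clauses need \emph{not} be individually equivalent: on the $T$-side cofinality of color $n_{\At A}$ is automatic (the $\Opp$-positions witness it), whereas on the $\tilde T$-side it may fail. This is harmless once combined with the ``eventually $\geq m$'' clause: if the latter holds for $m \Eq n_{\At A}$ then the $\Prop$-colors, being bounded by $n_{\At A}$, are eventually exactly $n_{\At A}$, hence cofinally equal to $n_{\At A}$, supplying the missing cofinality on the $\tilde T$-side. Thus for each even $m$ the full conjunctions are equivalent, and the lemma follows.
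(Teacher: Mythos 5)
Your proof is correct. There is, in fact, nothing in the paper to compare it against line-by-line: Lemma~\ref{lem:sim:omega:parity} is stated as a ``simple property'' with no proof given (and the same equivalence, in the analogous setting of reduced games, is asserted without justification inside the proof of Theorem~\ref{thm:compl:red}), so your write-up supplies reasoning the paper leaves implicit. Your ingredients are exactly the intended ones: the identification of $\edge{+}{}$ in $\G(\At A)(\gle)$ with $\glt$, the dictionary matching the $\Prop$-positions of $T$ with the elements of $|V|$ (under which $\glt$ between $\Prop$-positions becomes $\Lt$), and the observation that the $\Opp$-positions of $T$ carry the maximal colour $n_{\At A}$ and are $\glt$-cofinal in $T$, so they are invisible to the ``eventually $\geq m$'' clauses. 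The genuinely valuable part of your argument is the boundary case $m \Eq n_{\At A}$: you are right that the two ``cofinally coloured $m$'' clauses are \emph{not} individually equivalent there (cofinality of colour $n_{\At A}$ is automatic on the $T$-side thanks to the $\Opp$-positions, but not on the $\tilde T$-side), and your repair --- since all colours are bounded by $n_{\At A}$, the ``eventually $\geq n_{\At A}$'' clause forces the $\Prop$-colours to be eventually, hence by unboundedness of $|V|$ (from $\TPath(|V|)$) cofinally, equal to $n_{\At A}$, which restores the missing cofinality clause --- is precisely the argument needed to make the conjunction-by-conjunction equivalence go through for every even $m$. This subtlety is glossed over by the paper both here and in Theorem~\ref{thm:compl:red}, so flagging and resolving it is a genuine improvement rather than a deviation.
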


\begin{lem}
\label{lem:sim:omega:trace}
Given 
$V$, $|V|$ and $\tilde V$ as above,
$\FSOD$ proves that
\[
\Win_{\oc \At A}(V) ~~\liff~~
\Win[\Lt]^{|V|}_{\oc \At A}(\tilde V)
\]
\end{lem}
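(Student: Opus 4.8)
The plan is to set up a correspondence between the two kinds of traces involved: on one side the traces $T : \G(\At A) \fsoto \two$ in the sense of $\Trace(T,V)$, which are \emph{paths} in $\G(\At A)$ whose $\Prop$-positions sit over the tree path $|V|$, and on the other side the \emph{linear} traces $\tilde T : |V| \fsoto Q_{\At A}$ in the sense of $\Trace[\Lt]^{|V|}(\tilde T,\tilde V)$. The intended correspondence is $(x,q) \in \PP T$ iff $x \in |V|$ and $\tilde T(x) = q$, and the role of Lemma~\ref{lem:sim:omega:parity} is precisely that, under this correspondence, the parity conditions $\Par(\At A,\hat\col_{\At A},n_{\At A},T)$ and $\Par[\Lt]^{|V|}(\col_{\At A},n_{\At A},\tilde T)$ coincide. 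Once the correspondence is established, both implications of the equivalence become immediate: one passes from a trace of one kind to the corresponding trace of the other kind, applies the relevant winning hypothesis to obtain the one parity condition, and transfers it via Lemma~\ref{lem:sim:omega:parity}.

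For the implication $\Win[\Lt]^{|V|}_{\oc \At A}(\tilde V) \longlimp \Win_{\oc \At A}(V)$, I would start from an arbitrary $\funto{T}{\G(\At A)}{\two}$ with $\Trace(T,V)$ and define $\tilde T : |V| \fsoto Q_{\At A}$ by $\tilde T(x) = q$ iff $(x,q) \in \PP T$. This $\tilde T$ is single-valued because $T$ is linearly ordered and two distinct $\Prop$-positions at the same tree node are $\glt$-incomparable; it is \emph{total} on $|V|$ by Lemma~\ref{lem:sim:sim:trace} applied to $(x,\tilde V(x)) \in \PP V$. The two clauses of $\Trace[\Lt]^{|V|}(\tilde T,\tilde V)$ then read off from the second and third clauses of $\Trace(T,V)$, using the uniqueness of $\tilde V(x)$ and the fact, obtained via the Predecessor Lemma~\ref{lem:games:predpath} for Game Paths, that the $\Prop$-positions of $T$ lying over $\Succ_\Lt$-consecutive nodes of $|V|$ are related by $\glt^{\Prop;\Opp}_{\G(\At A)}$. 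Applying $\Win[\Lt]^{|V|}_{\oc \At A}(\tilde V)$ to $\tilde T$ and then Lemma~\ref{lem:sim:omega:parity} yields $\Par(\At A,\hat\col_{\At A},n_{\At A},T)$, as required.

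For the converse, given $\tilde T$ with $\Trace[\Lt]^{|V|}(\tilde T,\tilde V)$, I would build $T$ by Comprehension for Product Types (Theorem~\ref{thm:funto:ca}): put $(x,\tilde T(x)) \in \PP T$ for each $x \in |V|$, and between the $\Prop$-positions over $\Succ_\Lt$-consecutive nodes $x$ and $y = \Succ_d(x)$ insert the single $\Opp$-position $(x,(\tilde T(x),\{(d,\tilde T(y))\}))$. The main point is to verify $\Path(\G(\At A),(\Root,\init q_{\At A}),T)$: that $(\Root,\init q_{\At A}) \in T$, that $T$ is bounded below by it, that each position has a $\gsucc$-successor in $T$, and that $T$ is linearly ordered. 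Linearity and the successor property transfer from $\TPath(|V|)$ and the discreteness of $\glt$ (Proposition~\ref{prop:games:gle}), using that any $\Opp$-label at $x$ covers the $\Prop$-position at $x$ and is in turn covered by any $\Prop$-position at a successor of $x$. The remaining two clauses of $\Trace(T,V)$ are then just the clauses of $\Trace[\Lt]^{|V|}(\tilde T,\tilde V)$ reread through the correspondence, so $\Win_{\oc \At A}(V)$ applies and Lemma~\ref{lem:sim:omega:parity} transfers the parity condition to $\tilde T$.

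The hard part will be the bookkeeping in establishing the correspondence itself — in particular verifying $\Path(\G(\At A),(\Root,\init q_{\At A}),T)$ for the constructed $T$, and, in the other direction, combining the totality argument (Lemma~\ref{lem:sim:sim:trace}) with the matching of $\Succ_\Lt$ on $|V|$ to $\glt^{\Prop;\Opp}_{\G(\At A)}$ on game positions. Both are manipulations at the level of the partial order $\gle_{\G(\At A)}$ of~\S\ref{sec:pos:path}, where, as elsewhere in this setting, reasoning about infinite configurations is the cumbersome step; by contrast the parity-condition equivalence is entirely delegated to Lemma~\ref{lem:sim:omega:parity}.
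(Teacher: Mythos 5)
Your proposal is correct and follows essentially the same route as the paper's proof: the same correspondence between game traces $T$ satisfying $\Trace(T,V)$ and linear traces $\tilde T$ over $|V|$, with Lemma~\ref{lem:sim:sim:trace} supplying totality of $\tilde T$ in one direction, Comprehension building $T$ from $\tilde T$ in the other, and Lemma~\ref{lem:sim:omega:parity} transferring the parity conditions at the end of both directions. The only deviation is cosmetic and harmless: where you insert singleton conjunctions $\{(d,\tilde T(y))\}$ as the $\Opp$-positions of the constructed $T$, the paper instead takes the $\preceq$-minimal suitable conjunction for a well-order given by Remark~\ref{rem:ax:hf:well-order-hf}; both choices work because $\Path$ and $\Win_{\oc \At A}$ place no constraint tying the $\Opp$-labels of a trace to actual transitions of $\G(\At A)$.
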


\begin{proof}
Recall that the formula $\Win_{\oc \At A}$
requires no condition 
\wrt\@ the \emph{transitions} of $\G(\At A)$.
We proceed as follows:
\begin{itemize}
\item
Assume first $\Win_{\oc \At A}(V)$, and let
$\funto{\tilde T}{|V|}{Q_{\At A}}$ such that
$\Trace[\Lt]^{|V|}(\tilde T,\tilde V)$.
Using Remark~\ref{rem:ax:hf:well-order-hf},
let $\prec$ be a well-order on $\Pne(\Dir \times Q_{\At A})$.
By Comprehension for Product Types
(Theorem~\ref{thm:funto:ca}),
let $\funto{T}{\G(\At A)}{\two}$
such that
for all $(x,q) \in \PP{\G(\At A)}$, we have
\[
  (x,q) \in \PP T ~~\liff~~
  \left(
    x \in |V|
    ~\land~
    \tilde T(x) = q
  \right)
\]
and such that $(x,\Conj) \in T_\Opp$ iff
$\Conj \in \Pne(\Dir \times Q_{\At A})$ is $\preceq$-minimal
such that $\tilde T(\Succ_d(x)) = q$ for some $(d,q) \in \Conj$.
Since
$V$ is an infinite play of $\G(\oc \At A)$ from
$(\Root,\init q_{\oc \At A})$,
we may conclude by Lemma~\ref{lem:sim:omega:parity}
as soon as we show:
\[
\Trace(T,V)
\]

\noindent
We obviously have $\Path(\G(\At A),T)$
as well as $(\Root,\init q_{\At A}) \gle_{\G(\At A)} u$
for all $u \in T$.
Moreover we have:
\[
\big( \forall (x,q) \in \PP T \big) \big( \exists S \in Q_{\oc \At A} \big)
\Big[
(x,S) \in \PP V ~~\land~~ q \in \pi_2(S)
\Big]
\]
To see this, let $(x,q) \in T$, so that $\tilde T(x) = q$.
So by assumption we have $q \in \pi_2(\tilde V(x))$,
and we are done since $(x,\tilde V(x)) \in V(x)$.

\noindent
Finally we have
\[
  \big( \forall (x,q),(y,q') \in \PP T \big)
  \big( \forall S \in Q_{\oc \At A} \big)
  \left[
  (x,q) \glt^{\Prop;\Opp}_{\G(\At A)} (y,q')
  ~~\limp~~
  (y,S) \in \PP V
  ~~\limp~~
  (q,q') \in S
  \right]
\]
To see this,
given $(x,q),(y,q') \in T$
such that
$(x,q) \glt^{\Prop;\Opp}_{\G(\At A)} (y,q')$
we necessarily have
$\Succ_{\Lt}(x,y)$,
so that $(q,q') \in \tilde V(y)$
since $\tilde T(x) = q$ and $\tilde T(y)= q'$.
But then we are done since $\tilde V(y)$ is the unique
$S \in Q_{\oc \At A}$ such that $(y,S) \in V$.


\item
Conversely, assume
$\Win[\Lt]^{|V|}_{\oc \At A}(\tilde V)$
and let $\funto{T}{\G(\At A)}{\two}$
such that $\Trace(T,V)$.
Since
$V$ is an infinite play of $\G(\oc \At A)$ from
$(\Root,\init q_{\oc \At A})$,
Lemma~\ref{lem:sim:sim:trace}
implies:
\[
\big( \forall (x,S) \in \PP V \big)
\big(\exists q \in Q_{\At A} \big)
  \Big[ (x,q) \in \PP T ~~\land~~ q \in \pi_2(S) \Big]
\]
It follows that for all $x \in |V|$ there is $q \in Q_{\At A}$
such that $(x,q) \in T$, and this defines 
$\funto{\tilde T}{|V|}{Q_{\At A}}$
by $\HF$-Bounded Choice for Functions (\S\ref{sec:ax:choice}).
Note that we have:
\[
  (\forall x) (\forall q \in Q_{\At A}) \left[
  (x,q) \in \PP T ~~\liff~~
  \left(
    x \in |V|
    ~\land~
    \tilde T(x) = q
  \right)
  \right]
\]

\noindent
We can then conclude by Lemma~\ref{lem:sim:omega:parity}
as soon as we show:
\[
\Trace[\Lt]^{|V|}(\tilde T,\tilde V)
\]
To see this,
first, for all $x \in |V|$, we have $(x,\tilde T(x)) \in T$,
so that $T(x) \in \pi_2(\tilde V(x))$ by definition of $\tilde V$.
Moreover, given $x,y \in |V|$
with $\Succ_{\Lt}(x,y)$,
we have
\[
(x,\tilde T(x)) \glt^{\Prop;\Opp}_{\G(\At A)} (y,\tilde T(y))
\]
so that $(\tilde T(x),\tilde T(y)) \in \tilde V(y)$
since $\Trace(T,V)$.
\end{itemize}

\noindent
This concludes the proof of Lemma~\ref{lem:sim:omega:trace}.
\end{proof}

We are now going to show that 
$\Win[\Lt]^{|V|}_{\oc \At A}(\tilde V)$
is equivalent in $\FSO$ to a \emph{parity} automaton on $\omega$-words.
This relies on McNaughton's Theorem~\cite{mcnaughton66ic}
applied in the usual standard model $\StdN$ of $\omega$-words,
and, via Proposition~\ref{prop:bfsos:bfso}, on the completeness of $\FSOW$.
In order to apply Proposition~\ref{prop:bfsos:bfso},
we rewrite 
$\Win[\Lt]^{|V|}_{\oc \At A}(\tilde V)$
as the relativization to $|V|$ of the $\FSOW$-formula
\[
\Win[\Lt]_{\oc \At A}(\tilde V)
\quad\deq\quad
\big( \forall \tilde T : Q_{\At A} \big)
\left[
\Trace[\Lt](\tilde V,\tilde T)
~~\limp~~
\Par[\Lt](\col_{\At A},n_{\At A},\tilde T)
\right]
\]

\noindent
where
$\Par[\Lt](\col,n,\tilde T)$ is the formula of 
Definition~\ref{def:compl:red:omega:parity},
and where
\[
\Trace[\Lt](\tilde V,\tilde T)\quad\deq\quad
\left\{
\begin{array}{r l}
& (\forall x)
  \left[\tilde T(x) \in \pi_2(\tilde V(x)) \right]
\\
  \land
& (\forall x) (\forall y)
  \left[
    \Succ_{\Lt}(x,y)
    ~~\limp~~
    (\tilde T(x),\tilde T(y)) \in \tilde V(y)
  \right]
\end{array}
\right.
\]

\noindent
Note that
$\Win[\Lt]^{|V|}_{\oc \At A}(\tilde V)$ is the relativization
to $|V|$ of 
$\Win[\Lt]_{\oc \At A}(\tilde V)$:
\[
\Win[\Lt]^{|V|}_{\oc \At A}(\tilde V)
\quad=\quad
\left(\Win[\Lt]_{\oc \At A}\right)^{|V|}(\tilde V)
\]

\noindent
Since $\At A$ is $\HF$-closed, the formula
$\Win[\Lt]_{\oc \At A}(\tilde V)$ is also $\HF$-closed,
and we can look at it
in the standard model $\StdN$ of $\omega$-words (see~\S\ref{sec:msow}).
By McNaughton's Theorem~\cite{mcnaughton66ic}
(see also \eg~\cite{thomas97handbook,pp04book}),
there is a deterministic parity $\omega$-word automaton
$\At D = (Q_{\At D},\, \init q_{\At D},\, \trans_{\At D}, \,c_{\At D})$
over $Q_{\oc\At A}$,
which
accepts $\tilde V$ exactly when:
\[
\StdN \models
\Win[\Lt]_{\oc \At A}(\tilde V)
\]

\noindent
It then follows
that in $\StdN$,
for all $\tilde V : Q_{\oc \At A}$,
the formula
$\Win[\Lt]_{\oc \At A}(\tilde V)$
is equivalent to
\begin{multline*}
\big( \forall \tilde R : Q_{\At D} \big)
\bigg(
\tilde R(\Root) = \init q_{\At D}
~~\limp~~
\\
(\forall x)(\forall y)
\left[
\Succ_{\Lt}(x,y) ~~\limp~~ \tilde R(y) = \trans_{\At D}\big(\tilde R(x),\tilde V(x)\big)
\right]
~~\limp~~
\Par[\Lt]\big( \col_{\At D},n_{\At D},\tilde R \big)
\bigg)
\end{multline*}

\noindent
Proposition~\ref{prop:bfsos:bfso} then implies that $\FSO$ 
proves that for $\funto{\tilde V}{|V|}{Q_{\oc \At A}}$,
the formula
$\Win[\Lt]_{\oc \At A}^{|V|}(\tilde V)$
is equivalent to
\begin{multline*}
\big( \forall \funto{\tilde R}{|V|}{Q_{\At D}} \big)
\bigg(
\tilde R(\Root) = \init q_{\At D}
~~\limp~~
\\
(\forall x \in |V|) (\forall y \in |V|)
\Big[
\Succ_{\Lt}(x,y) ~~\limp~~ \tilde R(y) = \trans_{\At D}\big(\tilde R(x),\tilde V(x)\big)
\Big]
~~\limp~~
\\
\Par[\Lt]^{|V|}\big( \col_{\At D},n_{\At D},\tilde R \big)
\bigg)
\end{multline*}

\subsection{Definition of the Parity Automaton $\ND(\At A)$.}
\label{sec:sim:par}
Consider an alternating parity tree automaton $\At A : \Sigma$
as in the beginning of~\S\ref{sec:sim},
and assume it to be $\HF$-closed.
Let $\oc \At A :\Sigma$ be defined as in~\S\ref{sec:sim:sim}.
Moreover, let $\At D : Q_{\oc \At A}$ be the parity
deterministic $\omega$-word automaton
of~\S\ref{sec:sim:omega}.
We then let
\[
\ND(\At A) \quad\deq\quad
(Q_{\oc \At A} \times Q_{\At D}
~,~ (\init q_{\oc \At A},\init q_{\At D})
~,~ \trans_{\ND (\At A)}
~,~ \col_{\ND(\At A)}
~,~ n_{\At D})
\]

\noindent
where:
\begin{itemize}
\item
the transition function
\[
\trans_{\ND(\At A)}
\quad:\quad
(Q_{\oc \At A} \times Q_{\At D}) \times \Sigma
\quad\longto\quad
\Pne(\Pne(\Dir \times (Q_{\oc \At A} \times Q_{\At D})))
\]

\noindent
takes $((S,q),\al a)$
to the set of all $\Conj \in \Pne(\Pne(\Dir \times (Q_{\oc \At A} \times Q_{\At D})))$
such that for some $\Conj_{\oc \At A} \in \trans_{\oc \At A}(S,\al a)$,
\[
\Conj
\quad=\quad
\Big\{
\big(d ~,~ \big(S'_d ~,~ \trans_{\At D}(q,S) \big) \big)
~~\st~~
(d,S'_d) \in \Conj_{\oc \At A}
\Big\}
\]

\item
the coloring
$\funto{\col_{\ND(\At A)}}{Q_{\oc \At A} \times Q_{\At D}}{[0,n_{\At D}]}$
takes $(S,q)$ to $\col_{\At D}(q)$.
\end{itemize}

\noindent
Note that $\ND(\At A): \Sigma$ is $\HF$-closed by
Remark~\ref{rem:aut:hfclosed}.
We shall now show that $\ND(\At A)$ is equivalent to $\At A$,
thus completing the proof of the Simulation Theorem~\ref{thm:sim}.
The proof is split into
Propositions~\ref{prop:sim:par:prom}
and~\ref{prop:sim:par:der}.
As expected, we invoke Theorem~\ref{thm:sim:sim:cor}, that $\FSOD$ proves the equivalence of $!\At A$ and $\At A$.

\begin{conv}
In Propositions~\ref{prop:sim:par:prom} and~\ref{prop:sim:par:der},
for fixed automata $\oc \At A$ and $\ND(\At A)$,
we let
\[
\iota_\oc ~~\deq~~ (\Root, \init q_{\oc \At A})
\qquad\text{and}\qquad
\iota_\ND ~~\deq~~ (\Root, \init q_{\ND(\At A)})
\]
\end{conv}

\begin{prop}
\label{prop:sim:par:prom}
Fix an $\HF$-closed automaton $\At A : \Sigma$ and consider 
$\ND(\At A) : \Sigma$ as defined above.
Then $\FSOD$ proves that for all $F:\Sigma$,
if $\At A$ accepts $F$ then $\ND(\At A)$ accepts $F$.
\end{prop}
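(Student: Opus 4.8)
The plan is to reduce to the equivalence of $\At A$ and $\oc \At A$ already established in Theorem~\ref{thm:sim:sim:cor}, and then to lift a winning strategy for $\oc \At A$ along the deterministic $\omega$-word automaton $\At D$ used in~\S\ref{sec:sim:omega} to reformulate $\Win_{\oc \At A}$ as a parity condition. First I would apply Theorem~\ref{thm:sim:sim:cor}: since $\At A$ accepts $F$, $\FSOD$ proves that $\oc \At A$ accepts $F$, so by Definition~\ref{def:aut:lang} we may fix a winning $\Prop$-strategy $\stratbis$ on $\G(\oc \At A, F)$.

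Next I would define a $\Prop$-strategy $\strat'$ on $\G(\ND(\At A), F)$ obtained from $\stratbis$ by appending the deterministic $\At D$-transition to each move. Using $\HF$-Bounded Choice for Functions (\S\ref{sec:ax:choice}), at a $\Prop$-position $(x,(S,q))$, writing $\stratbis(x,S) = (S, \Conj_{(x,S)})$, I set $\strat'(x,(S,q)) \deq ((S,q), \Conj)$ with $\Conj \deq \{(d,(S'_d, \trans_{\At D}(q,S))) \st (d,S'_d) \In \Conj_{(x,S)}\}$. By the very definition of $\trans_{\ND(\At A)}$ in~\S\ref{sec:sim:par} we have $\Conj \In \trans_{\ND(\At A)}((S,q),F(x))$, so $\Strat_\Prop(\G(\ND(\At A),F),\strat')$ holds immediately. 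Note also that $\ND(\At A)$ is non-deterministic: since $\oc \At A$ is non-deterministic and $\At D$ is deterministic, for each conjunction in the image of $\trans_{\ND(\At A)}$ and each $d \In \Dir$ there is at most one state paired with $d$.

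The crux is to show $\strat'$ is winning. Given an infinite play $U$ of $\strat'$ from $\iota_\ND$, I would project it to an infinite play $V$ of $\stratbis$ on $\G(\oc \At A, F)$ by dropping the $Q_{\At D}$-component of every state (and the corresponding component of conjunctions at $\Opp$-positions). The moves of $\strat'$ are by construction those of $\stratbis$ with the $\At D$-coordinate appended, so this projection yields a play of $\stratbis$; I would verify $\Play(\stratbis, \iota_\oc, V)$ by the play-building machinery of~\S\ref{sec:games:plays} (Lemma~\ref{lem:games:pathplays}, Proposition~\ref{prop:games:edges:lin}, and the Predecessor Lemma~\ref{lem:games:predplays}), exactly as in the proofs of Propositions~\ref{prop:sim:sim:prom} and~\ref{prop:sim:sim:der}. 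Since $\ND(\At A)$ is non-deterministic, Corollary~\ref{cor:aut:nd:unique} gives that along $U$ the state at each tree position is unique; hence the $Q_{\At D}$-component read along $U$ is exactly the deterministic run $\tilde R$ of $\At D$ on the sequence $\tilde V$ of $\oc \At A$-states along $|V|$ (using $\tilde R(\Root) = \init q_{\At D}$ and the transition constraint of $\ND(\At A)$, with $|U| = |V|$).

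Finally, since $\col_{\ND(\At A)}(S,q) = \col_{\At D}(q)$ depends only on the $\At D$-component, $\Par(\ND(\At A), U)$ is equivalent to $\Par[\Lt]^{|V|}(\col_{\At D}, n_{\At D}, \tilde R)$. By the $\FSO$-characterization of $\At D$ established in~\S\ref{sec:sim:omega} via Proposition~\ref{prop:bfsos:bfso}, this is in turn equivalent to $\Win[\Lt]^{|V|}_{\oc \At A}(\tilde V)$, which by Lemma~\ref{lem:sim:omega:trace} is equivalent to $\Win_{\oc \At A}(V)$; and $\Win_{\oc \At A}(V)$ holds because $V$ is a play of the winning strategy $\stratbis$. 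Thus $\Par(\ND(\At A), U)$ holds for every play $U$ of $\strat'$, so $\strat'$ is winning and $\ND(\At A)$ accepts $F$. The main obstacle is the bookkeeping in the projection $U \mapsto V$ together with the identification of the $\At D$-component along $U$ with the run $\tilde R$ relativized to $|V|$: this is where the determinism of $\At D$, the uniqueness of states furnished by Corollary~\ref{cor:aut:nd:unique}, and the relativization of Proposition~\ref{prop:bfsos:bfso} must be carefully combined.
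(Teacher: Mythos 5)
Your proposal is correct and follows essentially the same route as the paper's proof: reduce via Theorem~\ref{thm:sim:sim:cor} to showing $\oc\At A$ accepts $F$ implies $\ND(\At A)$ does, pair the winning $\oc\At A$-strategy with the deterministic $\At D$-transitions to obtain a strategy on $\G(\ND(\At A),F)$, project each of its plays back to a play of the $\oc\At A$-strategy, and close the argument with Lemma~\ref{lem:sim:omega:trace} together with the characterization of $\At D$ from \S\ref{sec:sim:omega}, using that $\col_{\ND(\At A)}$ depends only on the $\At D$-component. The only cosmetic difference is that you invoke Corollary~\ref{cor:aut:nd:unique} for uniqueness of the state at each tree position along a play, where the paper obtains this directly from the linear ordering of the single play at hand.
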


\begin{proof}
Thanks to Theorem~\ref{thm:sim:sim:cor},
we are done if we show
that $\ND(\At A)$ accepts $F$ whenever $\oc\At A$ accepts~$F$.
%
Let $\funto{\strat}{\G(\oc \At A,F)_\Prop}{\Opp_{\G(\oc \At A,F)}}$ 
be the winning $\Prop$-strategy in $\G(\oc \At A,F)$.
We define a strategy 
$\funto{\stratbis}{\G(\ND(\At A),F)_\Prop}{\Opp_{\G(\ND(\At A),F)}}$
as follows.
\begin{itemize}
\item \emph{Definition of $\stratbis$.}
By $\HF$-Bounded Choice for Functions (\S\ref{sec:ax:choice}),
we let $\stratbis(x,(S,q_{\At D}))$
be $((S,q_{\At D}),\Conj)$,
where
$\Conj \in \Pne(\Pne(\Dir \times (Q_{\oc \At A} \times Q_{\At D})))$
is defined by Comprehension for $\HF$-Sets
(Remark~\ref{rem:hfchoice})
as the set of all $(d,(S'_d,\trans_{\At D}(q_{\At D},S)))$
such that $(d,S'_d) \in \Conj_{\oc \At A}$,
where $\strat(x,S) = (S,\Conj_{\oc \At A})$.
\end{itemize}

\noindent
It remains to show that $\stratbis$ is winning.
So let $T$ such that
\[
\Play(\stratbis,\, \iota_\ND ,\, T)
\]

\noindent
By Comprehension for Product Types
(Theorem~\ref{thm:funto:ca}),
let $\funto{||U||}{\univ \times Q_{\At A}}{\two}$
consist of the $(x,q_{\oc \At A})$
for which there is $q_{\At D} \in Q_{\At D}$
such that $(x,(q_{\oc \At A},q_{\At D})) \in T$.
By $\HF$-Bounded Choice for Functions (\S\ref{sec:ax:choice}),
now let $\funto{\tilde U}{||U||}{Q_{\At D}}$
take $(x,q_{\oc \At A}) \in ||U||$ to 
(the necessarily unique) $q_{\At D}$ such that
$(x,(q_{\oc \At A},q_{\At D})) \in T$.
We have:
\[
\Par\big( \ND(\At A), \hat \col_{\ND(\At A)},n_{\ND(\At A)},T \big)
\quad\liff\quad
\Par[\Lt]^{||U||}\big(\col_{\At D},n_{\At D},\tilde U \big)
\]

\noindent
It then follows from Lemma~\ref{lem:sim:omega:trace}
that we are done if we show that
$||U||$ is the set of all $(x,q_{\At A}) \in \PP V$
for some $\funto{V}{\G(\oc \At A)}{\two}$
such that:
\[
\Play(\strat,\iota_\oc,V)
\]
But this is immediate from Comprehension for Product Types (Theorem~\ref{thm:funto:ca})
by letting $V$
be the union of $||U||$ with the set of
all $(x,\strat(x,q_{\oc \At A}))$ for $(x,q_{\oc \At A}) \in ||U||$.
\end{proof}

When proving that
$\Lang(\ND(\At A)) \sle \Lang(\At A)$
in Proposition~\ref{prop:sim:par:der} below,
in order to apply
Proposition~\ref{prop:sim:sim:der},
we have to extract a $\Prop$-strategy on $\G(\oc \At A,F)$
from a $\Prop$-strategy on $\G(\ND(\At A),F)$.
But $\ND(\At A)$ has more states than $\oc\At A$,
so we have to resort to Corollary~\ref{cor:aut:nd:unique},
stating that in plays of strategies on \emph{non-deterministic automata},
states are uniquely determined from tree positions.

\begin{prop}
\label{prop:sim:par:der}
Fix an $\HF$-closed automaton $\At A : \Sigma$ and consider 
$\ND(\At A) : \Sigma$ as defined above.
Then $\FSOD$ proves that for all $F:\Sigma$,
if $\ND(\At A)$ accepts $F$ then $\At A$ accepts $F$.
\end{prop}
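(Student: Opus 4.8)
The plan is to mirror Proposition~\ref{prop:sim:par:prom} in the opposite direction, reducing to Proposition~\ref{prop:sim:sim:der} via Theorem~\ref{thm:sim:sim:cor}. Since $\FSOD$ proves that $\oc\At A$ and $\At A$ accept the same trees, it suffices to show that $\ND(\At A)$ accepts $F$ implies $\oc\At A$ accepts $F$. So I would begin from a winning $\Prop$-strategy $\stratbis$ on $\G(\ND(\At A),F)$ and manufacture a winning $\Prop$-strategy $\strat$ on $\G(\oc\At A,F)$.

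First I would record that $\ND(\At A)$ is non-deterministic: by construction a conjunction in $\trans_{\ND(\At A)}((S,q),\al a)$ sends each direction $d$ to the single target $(S'_d,\trans_{\At D}(q,S))$, the first component being unique because $\oc\At A$ is non-deterministic and the second because $\At D$ is deterministic. Hence Corollary~\ref{cor:aut:nd:unique} applies to $\stratbis$: for each $x \in \univ$ there is at most one $(S,q_{\At D}) \in Q_{\oc\At A} \times Q_{\At D}$ lying on an infinite play of $\stratbis$ at tree position $x$. Using this together with $\HF$-Bounded Choice for Functions, I would define $\strat(x,S)$ by taking the unique such $q_{\At D}$ whose $\oc\At A$-component is $S$ and projecting $\stratbis(x,(S,q_{\At D}))$ onto its $\oc\At A$-conjunction; where no such $q_{\At D}$ exists, I would fix an ad hoc transition as in the definition of $\strat$ in \S\ref{sec:sim:sim}. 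That $\strat$ is genuinely a $\Prop$-strategy on $\G(\oc\At A,F)$ is then immediate.

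The substantive work is showing $\strat$ is winning. Given an infinite play $V$ of $\strat$, with associated $|V|$ and $\tilde V$ as in \S\ref{sec:sim:omega}, I would lift $V$ to an infinite play $T$ of $\stratbis$ on $\G(\ND(\At A),F)$ whose $\Prop$-positions project, on forgetting the $\At D$-component, exactly onto those of $V$. As in Proposition~\ref{prop:sim:sim:der}, $T$ is cleanest to define through the Recursion Theorem (Proposition~\ref{prop:games:rec}): the $\oc\At A$-component follows $\tilde V$, while the $\At D$-component is forced at each step by determinism of $\At D$, namely $q_{\At D}(y) = \trans_{\At D}(q_{\At D}(x),\tilde V(x))$ along a successor pair. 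Verifying $\Play(\stratbis,\iota_\ND,T)$ would reuse the inductive bookkeeping of that proof --- uniqueness of states per tree position (Claim~\ref{clm:sim:sim:der:Sunique}), the step lemma (Claim~\ref{clm:sim:sim:der:Vstep}), and Proposition~\ref{prop:games:edges:lin} --- and would in particular show that the $q_{\At D}$ chosen in the definition of $\strat(x,\tilde V(x))$ agrees with $q_{\At D}(x)$, so that $T$ really does follow $\stratbis$.

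Finally I would connect the acceptance conditions. Since $\col_{\ND(\At A)}$ reads off only the $\At D$-component, the parity condition of $\ND(\At A)$ on $T$ coincides with acceptance of the $\At D$-run over $\tilde V$, which by the construction of $\At D$ and Proposition~\ref{prop:bfsos:bfso} is equivalent to $\Win[\Lt]^{|V|}_{\oc\At A}(\tilde V)$, hence to $\Win_{\oc\At A}(V)$ by Lemma~\ref{lem:sim:omega:trace}. As $\stratbis$ is winning, $T$ satisfies the parity condition, so $\Win_{\oc\At A}(V)$ holds; since $V$ was arbitrary, $\strat$ is winning and $\oc\At A$ accepts $F$, and Theorem~\ref{thm:sim:sim:cor} finishes the argument. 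The main obstacle is the third step: the recursive construction of the lifted play $T$ and the proof that it is a bona fide $\stratbis$-play, which recapitulates --- now from $\oc\At A$ up to $\ND(\At A)$ rather than from $\At A$ up to $\oc\At A$ --- the most delicate part of Proposition~\ref{prop:sim:sim:der}.
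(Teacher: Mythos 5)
Your proposal is correct and follows essentially the same route as the paper: the reduction via Theorem~\ref{thm:sim:sim:cor}, the definition of $\strat$ by projecting $\stratbis$ using the uniqueness given by Corollary~\ref{cor:aut:nd:unique}, the lifting of a play $V$ of $\strat$ to a play of $\stratbis$ via the Recursion Theorem (Proposition~\ref{prop:games:rec}) with the $\At D$-component forced by determinism, and the final transfer of the parity condition through Lemma~\ref{lem:sim:omega:trace}. The bookkeeping you defer to the style of Proposition~\ref{prop:sim:sim:der} is exactly what the paper's proof carries out in its claims.
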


\begin{proof}
Thanks to Theorem~\ref{thm:sim:sim:cor},
we are done if we show
$\oc\At A$ accepts $F$
whenever
$\ND(\At A)$ accepts $F$.

\noindent
Let
$\funto{\stratbis}{\G(\ND(\At A),F)_\Prop}{\Opp_\G(\ND(\At A),F)}$
be the winning $\Prop$-strategy in $\G(\ND(\At A),F)$.
We are going to define a winning strategy
$\funto{\strat}{\G(\oc \At A,F)_\Prop}{\Opp_\G(\oc \At A,F)}$ 
in $\G(\oc \At A,F)$.
Note that $\ND(\At A)$ has more states than $\oc \At A$
and that,
\[
\funto{\stratbis}{\univ \times (Q_{\oc \At A} \times Q_{\At D})}{\two}
\]
whereas we need to define:
\[
\funto{\strat}{\univ \times Q_{\oc \At A}}{\two}
\]

\noindent
As mentioned, we resort to 
Corollary~\ref{cor:aut:nd:unique}.
The strategy $\strat$ is defined by $\HF$-Bounded Choice for
Functions (\S\ref{sec:ax:choice}) as follows.
Let $(x,S) \in \univ \times Q_{\oc \At A}$.
\begin{itemize}
\item
Assume that there is a play $U$ of $\stratbis$
such that
\[
(\exists q_{\At D} \in Q_{\At D})
\Big( (x,(S,q_{\At D})) \in U \Big)
\]

\noindent
Then it follows from 
Corollary~\ref{cor:aut:nd:unique}
there is a unique $q_{\At D}$ such that
\[
(\exists U)
\Big(
\Play(\stratbis ,\, \iota_\ND ,\, U)
~~\land~~
(x,(S,q_{\At D})) \in U
\Big)
\]

\noindent
In this case, we let
$\strat(x,S)$ be $(S,\Conj_{\oc \At A})$ where,
by Comprehension for $\HF$-Sets (Remark~\ref{rem:hfchoice}),
$\Conj_{\oc \At A}$ is the set of all $(d,S'_d)$
such that there is some $q'_{\At D} \in Q_{\At D}$
with $(d,(S'_d,q'_{\At D})) \in \Conj_{\ND(\At A)}$
for $((S,q_{\At D}),\Conj_{\ND(\At A)}) = \stratbis(x,(S,q_{\At D}))$.

\item Otherwise, we let $\strat(x,S)$
be $(S,\Conj_{\oc \At A})$ where,
by Comprehension for $\HF$-Sets (Remark~\ref{rem:hfchoice}),
$\Conj_{\oc \At A}$ is the set of all $(d,S'_d)$
such that there is some $q_{\At D} \in Q_{\At D}$
with $(d,(S'_d,q_{\At D})) \in \Conj_{\ND(\At A)}$
for $((S,\init q_{\At D}),\Conj_{\ND(\At A)}) = \stratbis(x,(S,\init q_{\At D}))$.
\end{itemize}

\noindent
We are now going to show that $\strat$ is winning.
To this end, consider an infinite play of $\strat$, that is some
$\funto{V}{\G(\oc \At A)}{\two}$
such that
\[
\Play(\strat,\, \iota_\oc,\, V)
\]

\noindent
We are going to define an infinite play of $\stratbis$, that is some
$\funto{U}{\G(\ND(\At A))}{\two}$
with
\[
\Play(\stratbis,\, \iota_\ND,\, U)
\]

\noindent
First, note that we are done if $U$ satisfies the following property:
\begin{equation}
\label{eq:sim:par:cor:der:playND}
(\forall (x,S) \in V) (\exists q_{\At D} \in Q_{\At D})
\big(
(x,(S,q_{\At D})) \in U
\big)
\end{equation}

\noindent
Indeed, by Comprehension (Theorem~\ref{thm:funto:ca}),
let $\funto{|V|}{\univ}{\two}$
be the set of all $x \in \univ$ such that
$(x,S) \in V$ for some (necessarily unique) $S \in Q_{\oc \At A}$.
Moreover, 
by $\HF$-Bounded Choice for Functions (\S\ref{sec:ax:choice}),
let $\funto{\tilde V}{|V|}{Q_{\oc \At A}}$
take $x \in |V|$ to the unique $S \in Q_{\oc \At A}$
such that $(x,S) \in V$.
By $\HF$-Bounded Choice for Functions (\S\ref{sec:ax:choice}),
let now $\funto{\tilde U}{|V|}{Q_{\At D}}$
take $x \in |V|$ to the unique $q_{\At D} \in Q_{\At D}$
such that $(x,(\tilde V(x),q_{\At D})) \in U$.
Since 
$\Par(\G(\ND(\At A)), \hat \col_{\ND(\At A)},n_{\ND(\At A)},U)$,
we have
$\Par[\Lt]^{|V|}(\col_{\At D},n_{\At D},\tilde U)$,
so that
$\Win[\Lt]^{|V|}_{\oc \At A}(\tilde V)$
and we conclude by Lemma~\ref{lem:sim:omega:trace}.

We now define an infinite play $U$ of $\stratbis$
satisfying~\eqref{eq:sim:par:cor:der:playND},
for which we appeal to the 
Recursion Theorem (Proposition~\ref{prop:games:rec}).
Let $\varphi(U,u)$ be a $\FSO$ formula
stating the disjunction of the following:
\begin{itemize}
\item  $u = \iota_\ND$; or
\item  $u = (x,\stratbis(x,(S,q_{\At D})))$ with
$(x,(S,q_{\At D})) \in U$; or
\item  $u = (\Succ_d(x),(S'_d,q'_{\At D}))$,
where
\begin{itemize}
\item $(\Succ_d(x),S'_d) \in V$, and
\item for some $q_{\At D} \in Q_{\At D}$
and some $S \in Q_{\oc \At A}$ such that 
$(x,S) \in V$ and $(x,(S,q_{\At D})) \in U$,
we have
$q'_{\At D} =\trans_{\At D}(q_{\At D},S)$.
\end{itemize}
\end{itemize}

\noindent
By the Recursion Theorem (Proposition~\ref{prop:games:rec}) we let
$\funto{U}{\G(\ND(\At A))}{\two}$ be unique such that:
\[
\big( \forall u \in \G(\ND(\At A)) \big)
\Big[
u \in U ~~\liff~~ \varphi(U,u)
\Big]
\]

\noindent
We need to show~\eqref{eq:sim:par:cor:der:playND}
and:
\[
\Play(\stratbis,\, \iota_\ND,\, U)
\]

\noindent
We first show
that $U$ is a play of $\stratbis$.
Since $\iota_\ND \in U$,
by Proposition~\ref{prop:games:edges:lin}
it suffices to show:
\[
\left\{
\begin{array}{c l}
& (\forall u \in U) \big(
  \iota_\ND \edge{*}{\stratbis} u
  \big)
\\
  \land
& (\forall u \in U) (\exists ! v \in U)
  \big(
  u \edge{}{\stratbis} v
  \big)
\\
  \land
& (\forall v \in U)
  \bigg(
     v\neq \iota_\ND ~~\limp~~
     (\exists u \in U)
     \big[
     u \edge{}{\stratbis} v
     \big]
  \bigg)
\end{array}
\right.
\]

\noindent
We proceed similarly to the proof of 
Proposition~\ref{prop:sim:sim:der}.
First, we prove:

\begin{subclm}
\label{clm:sim:nd:cor:der:pred}
\[
  (\forall v \in U) \left(
     v\neq \iota_\ND ~\limp~
     (\exists u \in U) \left[
     u \edge{}{\stratbis} v
     \right]
  \right)
\]
\end{subclm}

\begin{subproof}[Proof of Claim \thesubclm]
The result directly follows from the definition of $\varphi$
and the definition of 
$\G(\ND(\At A))\restr\{\stratbis\}$ (Definition~\ref{def:games:strat}).
If $v = (x,\stratbis(x,(S,q_{\At D})))$, with $(x,(S,q_{\At D})) \in U$, then the result
is trivial.
Otherwise, we have $v = (\Succ_d(x),(S'_d,q'_{\At D}))$,
and there is $(x,(S,q_{\At D})) \in U$ such that
$(x,S) \in V$ and $q'_{\At D} = \trans_{\At D}(q_{\At D},S)$.
Note that $(x,(S,q_{\At D})) \in U$ implies
$(x,\stratbis(x,(S,q_{\At D}))) \in U$,
and similarly, that $(x,S) \in V$ implies $(s,\strat(x,S)) \in V$.
By definition of $\strat$, we have $\strat(x,S) = (S,\Conj_{\oc \At A})$
where $\Conj_{\oc \At A}$ is the set of all $(\tilde d, S'_{\tilde d})$
such that $(\tilde d,(S'_{\tilde d},q'_{\At D})) \in \Conj_{\ND(\At A)}$,
where 
$((S,q_{\At D}),\Conj_{\ND(\At A)}) = \stratbis(x,(S,q_{\At D}))$.
But then we are done since we indeed have:
\[
\big( x ~,~ \stratbis(x,(S,q_{\At D})) \big)
~~\edge{}{\stratbis}~~
\big( \Succ_d(x) ~, (S'_d,q'_{\At D}) \big)
\tag*{\qedhere}
\]
\end{subproof}

Now we prove:

\begin{subclm}
\[
  (\forall u \in U) \left(
  \iota_\ND \edge{*}{\stratbis} u
  \right)
\]
\end{subclm}

\begin{subproof}[Proof of Claim \thesubclm]
We proceed by $\glt$-Induction
(Theorem~\ref{thm:games:ind:pos}).
So let $u \in U$ s.t.\@
$\iota_\ND \edge{*}{\stratbis} v$
for all $v \glt u$ with $v \in U$.
The result is trivial if $u = \iota_\ND$.
Otherwise, by Claim~\ref{clm:sim:nd:cor:der:pred},
there is $v \in U$ such that $v \edge{}{\stratbis} u$.
But $v \glt u$ by Proposition~\ref{prop:games:edges},
so we have 
$\iota_\ND \edge{*}{\stratbis} v$
by induction hypothesis and we conclude by Proposition~\ref{prop:games:edges},
again.
\end{subproof}

It remains to show
\begin{equation}
\label{eq:sim:nd:cor:der:succ}
  (\forall u \in U) (\exists ! v \in U) \left(
  u \edge{}{\stratbis} v
  \right)
\end{equation}

\noindent
We first prove:

\begin{subclm}
\label{clm:sim:nd:cor:der:Vproj}
\[
\big( \forall (x,(S,q_{\At D})) \in U \big)
\Big[
(x,S) \in V
\Big]
\]
\end{subclm}

\begin{subproof}[Proof of Claim \thesubclm]
The property follows from a case analysis according
to the following usual consequence of Induction
(see Proposition~\ref{prop:ax:tree}, \S\ref{sec:ax:tree}):
\[
(\forall x)
\left(x \Eq \Root ~~\lor~~ (\exists y) \bigdisj_{d \in \Dir} x \Eq \Succ_d(y)
\right)
\]

\noindent
In the case of $x \Eq \Root$, if $(x,(S,q_{\At D})) \in U$ then we must have
$S = \init q_{\oc \At A}$, so that $(x,S) \in V$.
Consider now the case of $x \Eq \Succ_d(y)$.
If $u= (x,(S,q_{\At D})) \in U$, then it follows from 
$\varphi(U,u)$ that we have $(x,S) \in V$ and we are done.
\end{subproof}

We can now prove~\eqref{eq:sim:nd:cor:der:succ}.

\begin{subproof}[Proof of~\eqref{eq:sim:nd:cor:der:succ}]
If $u = (x,(S,q_{\At D}))$, then $(x,\stratbis(x,(S,q_{\At D})))$
is the unique successor of $u$ in $U$.
Assume $u = (x,\stratbis(x,(S,q_{\At D})))$.
It then follows from Claim~\ref{clm:sim:nd:cor:der:pred}
that $(x,(S,q_{\At D})) \in U$,
and by Claim~\ref{clm:sim:nd:cor:der:Vproj}
we also get $(x,S) \in V$.
Since $(x,S) \in V$, we have $(\Succ_d(x),S'_d) \in V$
for some unique $d \in \Dir$ and $S'_d \in Q_{\oc \At A}$.
It follows from $\varphi(U,u)$
that $v= (\Succ_d(x),(S'_d,q'_{\At D})) \in U$,
where $q'_{\At D} = \trans_{\At D}(q_{\At D},S)$.
It remains to show that $v$ is unique such :
\[
u \edge{}{\stratbis} v
\]
Uniqueness follows from $\varphi(U,u)$ and the fact that $V$ is a play,
so it remains to show
$u \edge{}{\stratbis} v$.
Note that $(x,(S,q_{\At D})) \in U$ implies
$(x,\stratbis(x,(S,q_{\At D}))) \in U$,
and similarly, that $(x,S) \in V$ implies $(s,\strat(x,S)) \in V$.
By definition of $\strat$, we have $\strat(x,S) = (S,\Conj_{\oc \At A})$
where $\Conj_{\oc \At A}$ is the set of all $(\tilde d, S'_{\tilde d})$
such that $(\tilde d,(S'_{\tilde d},q'_{\At D})) \in \Conj_{\ND(\At A)}$,
where 
$((S,q_{\At D}),\Conj_{\ND(\At A)}) = \stratbis(x,(S,q_{\At D}))$.
This finishes the proof since we indeed have:
\[
\big(x ~,~ \stratbis(x,(S,q_{\At D})) \big)
~~\edge{}{\stratbis}~~
\big( \Succ_d(x) ~,~ (S'_d,q'_{\At D}) \big)
\qedhere
\]
\end{subproof}

Finally, we prove~\eqref{eq:sim:par:cor:der:playND},
that is:
\[
\big(\forall (x,S) \in V \big) \big(\exists q_{\At D} \in Q_{\At D} \big)
\Big[
(x,(S,q_{\At D})) \in U
\Big]
\]

\begin{subproof}[Proof of~\eqref{eq:sim:par:cor:der:playND}]
Using the Induction Axiom of $\FSO$
(\S\ref{sec:ax:ind}), we show
\[
(\forall x) (\forall S \in Q_{\oc \At A})
\Big((x,S) \in V
~~\limp~~
(\exists q_{\At D} \in Q_{\At D})
  \big( (x,(S,q_{\At D})) \in U \big)
\Big)
\]

\noindent
For the base case $x \Eq \Root$,
if $(x,S) \in V$ then we must have $S = \init q_{\oc \At A}$,
and we indeed obtain $(x,(S,\init q_{\At D})) \in U$.
Assume now the property for $x$, and consider some $d \in \Dir$
and $S'_d$ such that $(\Succ_d(x),S'_d) \in V$.
Since $V$ is a play, it follows from
the Predecessor Lemma~\ref{lem:games:predplays} for Infinite Plays
(applied twice)
that $(x,S) \in V$ for some $S \in Q_{\oc \At A}$.
It follows by induction hypothesis that $(x,(S,q_{\At D})) \in U$
for some $q \in Q_{\At D}$.
But now, taking $q'_{\At D} = \trans_{\At D}(q_{\At D},S)$,
we have
$(\Succ_d(x),(S'_d,q'_{\At D})) \in U$ and we are done.
\end{subproof}

This concludes the proof of Proposition~\ref{prop:sim:par:der}.
\end{proof}


\section{Conclusion}

\noindent
In this paper, we proposed for each non-empty (hereditarily) finite
set $\Dir$ the theory $\FSOD$ of
\emph{Functional (Monadic) Second-Order Logic}
on the full (infinite) $\Dir$-ary tree.
The theory $\FSOD$ (henceforth $\FSO$) is 
a uniform extension of $\MSO$ on the full $\Dir$-ary tree
with hereditarily finite sets.
We formalized in $\FSO$ a basic theory of (alternating) tree automata
and (acceptance) games.
This allowed us, in the theory of $\FSO$
augmented with an axiom $(\PosDet)$ of positional determinacy of parity games,
to formalize a translation of $\MSO$-formulae to automata
adapted from~\cite{walukiewicz02tcs}.
We then deduced the completeness of $\FSO + (\PosDet)$
thanks to a variant of the Büchi-Landweber Theorem~\cite{bl69tams},
stating that $\MSO$ decides winning for (definable) games of finite graphs
(and obtained thanks to the completeness of $\MSO$ over
$\omega$-words~\cite{siefkes70lnm}).
By naive proof enumeration, this gives a proof
of Rabin's Tree Theorem~\cite{rabin69tams},
the decidability of $\MSO$ on infinite trees.
Moreover, since the formal theory $\FSO$ is conservative
(\wrt\@ the faithful translation $(-)^\circ : \MSO \to \FSO$)
over a natural set of axioms for $\MSO$,
we also get a complete axiomatization of $\MSO$ on infinite trees, namely $\MSOD + \MI{\PosDet}$ (cf.~Definition~\ref{def:posdet:mso}, \S\ref{sec:posdet:mso}).

\subsection{Proof theoretic strength of complementation}
The present paper does not discuss proof theoretic strength.
In the context of second-order arithmetic
(in the sense of~\cite{simpson10book}), it is known that 
complementation of tree automata is between $\Pi^1_3$ and
$\Delta^1_3$-comprehension~\cite{km16lics}.
As far as only games are concerned
(as opposed to proving the correctness of 
an internal function for complementation), only $\Pi^1_2$-comprehension
is required for the positional determinacy of (each level of)
parity games~\cite[Lemma 4.6]{km15arxiv}.

\subsection{Clarifying the status of $\MI{\PosDet}$}
A problem arising from this work is whether the axiom schema $\MI{\PosDet}$ is indeed independent of $\MSO$. The latter may be seen as the monadic fragment of $\PAS$ (over the appropriate language) and, as we have mentioned, is complete when restricted to infinite words.
While it might therefore be natural to suspect that $\MSO$ is already complete without $\MI{\PosDet}$, we point out that the axiomatization of Weak $\MSO$
over infinite trees given in~\cite{siefkes78ijm} also augments the natural fragment of Peano arithmetic by an axiom of induction over finite trees.
As we mentioned in the Introduction, the completeness of $\MSOD$ was erroneously claimed in the preliminary version of this work \cite{dr15lics}.


\subsection{On the notion of proof for $\MSO$}
One outcome of this work is that our complete deduction
system for $\MSO$
gives a new decision algorithm.
Of course, the naive decision algorithm by proof enumeration is
not very sophisticated,
and it is worth
restating that its correctness is itself driven by the usual automata-theoretic argument.
Such an algorithm, nonetheless, makes no mention of automata and so can be
adapted and improved purely in the setting of proof theory.
In this sense, the algorithm is the first of its kind:
a decision procedure for $\MSO$ on infinite trees that remains internal to the language, rather than requiring intermediate translations to automata.

\renewcommand\fntext{To our knowledge, the \textsc{Mona} tool
(\url{https://www.brics.dk/mona/}) only handles \emph{Weak} $\MSO$.}
A basic motivation for such algorithms
is that, even if Rabin's Tree Theorem proves the existence
of decision procedures for $\MSO$ on infinite trees, there is 
(as far as we know) no working implementation of such procedures.\fn\@
%
Our axiomatization instead allows the targeting of
(semi) automatic approaches, for instance based on proof assistants.
As we mentioned in the Introduction, our axiomatization is polynomial-time recognizable and so indeed yields a meaningful notion of `proof certificate': a proof of a theorem may be easily checked, without having to reprove the theorem again.

\subsection{Constructive systems and proof interpretations.}
A further direction of research is to look for constructive interpretations of $\MSO$.
In the case of $\omega$-words, preliminary steps were made
in~\cite{pr19lmcs}.
The general idea is to proceed along the following steps:
\begin{enumerate}
\item
Determine the relevant computational information one should be able to extract
from constructive proofs.
\full{In the case of $\MSO$ on $\omega$-words, the 
approach taken in~\cite{pr17fscd,pr18lics,pr19fossacs}
(and specializing~\cite{riba20mscs})
was to consider the provably total \emph{causal}
(or 1-Lipschitz) functions of $\MSO$.}

\item
Devise constructive variants of $\MSO$
(together with suitable proof interpretations),
which are correct and complete
for the chosen class of functions
\wrt\@ to their provable $\forall\exists$-sentences.
\end{enumerate}

\noindent
A realizability model for $\MSO$ has been proposed in~\cite{riba20mscs},
in which the underlying logic is not only constructive but also
\emph{linear} (in the sense of~\cite{girard87tcs}).
\short{Of course, similar approaches may also be considered in more
traditional settings for constructive interpretations of
proofs~\cite{tvd88cim,kohlenbach08book}.}

\full{Of course, similar approaches may also be considered in more
traditional settings for constructive interpretations of
proofs~\cite{tvd88cim,kohlenbach08book}.
In particular, it is not clear (at least to us)
what usual computational interpretations of Comprehension,
following either Girard's \emph{System F}~\cite{girard72phd} or
Spector's \emph{bar-recursion} (see \eg~\cite{tvd88cim,kohlenbach08book}),
could say in the context of $\MSO$.
It is not yet clear
what in this context should be the correct analogue in $\MSO$
of the quantifier free formulae of arithmetic.
Regarding the quantifier-free formulae
as those formulae with trivial realizers
\wrt\@ usual proof interpretations, 
the model of~\cite{riba20mscs} suggests
in the case of $\omega$-word
that for languages based on linear logic,
correct analogues of quantifier formulae are formulae
which are both negative and positive.
These formulae, called \emph{deterministic} in~\cite{pr18lics,pr19fossacs}
may contain unbounded quantifiers, but these must be guarded
by exponential modalities of suitable polarity
($\oc$ for $\forall$ and $\wn$ for $\exists$).}



\bibliographystyle{alpha}
\bibliography{bibliographie}


\end{document}